\newcolumntype{d}[1]{D{.}{.}{#1}}
\newtheorem{lem}{Lemma}
\newtheorem{assumption}{Assumption}
\newtheorem{prop}{Proposition}
\newtheorem{cor}{Corollary}
\newtheorem{thm}{Theorem}
\newtheoremstyle{remark2}{1ex}{1ex}%
      {}
      {}
      {\bf}
      {.}
      {5pt}
      {\thmname{#1}\thmnumber{ #2}\thmnote{ \slshape{(#3)}}} 
\theoremstyle{remark2}
\newtheorem{rem}{Remark}
\newtheoremstyle{remark3}{1ex}{1ex}%
      {\it}
      {}
      {\bf}
      {.}
      {5pt}
      {\thmname{#1}\thmnumber{.#2}\thmnote{ \slshape{(#3)}}} 
\theoremstyle{remark3}
\newcommand{\1}{\mathds{1}}
\renewenvironment{proof}[1][\bfseries\proofname]{\par
   \pushQED{\qed}%
   \normalfont \topsep6\p@\@plus6\p@\relax
   \trivlist
   \item[\hskip\labelsep
     #1\@addpunct{:}]\ignorespaces
}{%
   \popQED\endtrivlist\@endpefalse
}
\newcommand{\Comments}{1}
\newcommand{\mynote}[2]{\ifnum\Comments=1\textcolor{#1}{#2}\fi}
\newcommand{\mytodo}[2]{\ifnum\Comments=1%
  \todo[linecolor=#1!80!black,backgroundcolor=#1,bordercolor=#1!80!black]{#2}\fi}
\renewcommand\appendixpagename{Appendix}
\newcommand{\CoVaR}{\operatorname{CoVaR}}
\newcommand{\VIX}{\operatorname{VIX}}
\newcommand{\divp}{\operatorname{dp}}
\newcommand{\D}{\,\mathrm{d}}
\renewcommand{\E}{\mathbb{E}}
\renewcommand{\P}{\mathbb{P}}
\begin{document}

\baselineskip18pt
\renewcommand\floatpagefraction{.9}
\renewcommand\topfraction{.9}
\renewcommand\bottomfraction{.9}
\renewcommand\textfraction{.1}
\setcounter{totalnumber}{50}
\setcounter{topnumber}{50}
\setcounter{bottomnumber}{50}
\abovedisplayskip1.5ex plus1ex minus1ex
\belowdisplayskip1.5ex plus1ex minus1ex
\abovedisplayshortskip1.5ex plus1ex minus1ex
\belowdisplayshortskip1.5ex plus1ex minus1ex

\title{Persistence-Robust Break Detection in Predictive CoVaR Regressions\thanks{The author would like to thank Matei Demetrescu for his insightful comments that significantly improved the quality of the paper. Support of the Deutsche Forschungsgemeinschaft (DFG, German Research Foundation) through grants 460479886 and 531866675 is gratefully acknowledged.}
}

\author{
	Yannick Hoga\thanks{Faculty of Economics and Business Administration, University of Duisburg-Essen, Universit\"atsstra\ss e 12, D--45117 Essen, Germany, \href{mailto:yannick.hoga@vwl.uni-due.de}{yannick.hoga@vwl.uni-due.de}.}
}

\date{\today}
\maketitle

\begin{abstract}
	\noindent 
	Forecasting systemic risk (as measured by \citeauthor{AB16}'s \citeyearpar{AB16} CoVaR) is important in economics and finance.
	However, predictive relationships may be unstable over time.
	Therefore, this paper develops structural break tests in predictive CoVaR regressions.
	These tests can detect changes in the forecasting power of covariates, and are based on the principle of self-normalization. 
	We show that our tests are valid irrespective of whether the predictors are stationary or near-stationary, rendering the inference procedures suitable for a wide range of practical applications.
	Also, we derive the power of our tests under local alternatives, which shows that power is higher when predictors are near-stationary instead of stationary.
	As a final theoretical contribution, we propose unsupervised change point tests that are consistent even in the presence of arbitrarily many breaks in the predictive relationship.
	Simulations illustrate the good finite-sample properties of our methods.
	An empirical application to systemic risk forecasting models for the US banking system shows the usefulness of our tests by uncovering changes in the predictive content of the VIX.\\
	
	\noindent \textbf{Keywords:} Change Points, CoVaR, Mild Integration, Predictive Regressions, Quantiles \\
	\noindent \textbf{JEL classification:} C12 (Hypothesis Testing); C52 (Model Evaluation, Validation, and Selection); C58 (Financial Econometrics)
\end{abstract}


\section{Motivation}


Recent financial crises and their widespread impacts have heightened awareness of the \textit{systemic} nature of risk \citep{AB16,Aea17,VZ19}. 
While markets often experience higher levels of volatility during such turbulent periods---indicating greater overall risk---this does not automatically translate to increased systemic risk. 
For example, banks may exhibit more volatile returns, signaling higher individual risk, yet without a corresponding rise in their co-movements (i.e., no increase in systemic risk). 
However, it is this rise in commonality, that has become a key concern for regulators, as it suggests the potential for widespread financial distress and significant economic consequences \citep{GKP16}.

Therefore, researchers have investigated the predictive content of numerous variables for future levels of systemic risk.
The perhaps most important systemic risk measure in this context is the conditional Value-at-Risk (CoVaR) of \citet{AB16}.
Many researchers have related this measure to lagged covariates via a so-called \textit{CoVaR regression}.
Among the studied predictors are bank-level variables, such as (the log of) total assets \citep{BDP20,BRS20}, and financial variables, such as equity volatility \citep{AB16,BDP20,BRS20} or the VIX \citep{Hea16}. 
But also macroeconomic variables have been investigated, such as inflation \citep{BRS20}, the TED spread as a measure of liquidity risk \citep{AB16,BDP20}, and 10-year government bond rates \citep{BRS20}.

To this date, there are no methods that allow to test for the stability of these predictive relations.
Such change point tests for predictive CoVaR regressions are, however, important for at least two reasons. 
First, unstable models should be used with caution (if at all) for forecasting purposes \citep{PPP13}. 
Second, knowledge of structural breaks may improve modeling attempts and generate further research into the underlying causes of the instabilities.
For instance, \citet{ST21} carry out this program for predictive mean regressions.
Specifically, they show that suitably accounting for breaks improves the forecasting performance of several predictors of stock returns.
It is the main purpose of this paper to fill this gap in the literature by proposing suitable change point tests for predictive CoVaR regressions.

Indeed, the theoretical literature on CoVaR regressions seems underdeveloped.
Currently, there exist only theoretical results for standard inference on (constant) parameters in CoVaR regressions with stationary regressors \citep{Lea24,DH26}.
In particular, these papers do not consider the problem of structural break testing.
We stress that, while the theoretical literature on CoVaR regressions is limited, they are frequently applied in practice \citep{AB16,BRS20,BDP20}.

One key feature of our change point tests is their persistence-robustness in the sense that they allow predictors to be stationary or near-stationary.
This is useful because there are many applications of CoVaR regressions, where covariates exhibit different degrees of serial dependence.
For instance, among the predictors mentioned above, the short-term TED spread (i.e., the difference between the 3-month LIBOR rate and the 3-month secondary market Treasury bill rate) and equity volatility, are known to be reasonably persistent. 
Likewise, inflation as a predictor of systemic risk is strongly serially dependent.
Moreover, 10-year government bond rates, that account for the connection between sovereigns and banks, are a classic example of a highly persistent time series.
In light of this variety of predictors displaying different degrees of serial dependence, change point tests possessing some persistence-robustness---such as ours---seem to be called for.

We mention that structural break tests in regression contexts typically require knowledge of the persistence of the predictors.
For instance, in their change point tests for predictive mean regressions, \citet{Zea23a} show that asymptotic distributions differ between stationary and non-stationary covariates, such that the correct choice of critical values depends on whether the predictors are stationary or non-stationary.
Conveniently, this is not the case for our test.

Although our main emphasis in this paper is on developing change point tests for CoVaR regressions, as a corollary we also obtain change point tests for simple predictive quantile regressions (QRs).
The reason for this is that CoVaR regressions can only be estimated jointly with quantile regressions (\citealp{FH24}).
Of course, testing for changes in predictive QRs is equally important as doing so in CoVaR regressions, due to the significant interest that the former have garnered in the past (\citealp{Lee16,FL19,cai2022new,FLS23,Lea24a,MSK24,MWT25}).
The works most closely related to our persistence-robust change point tests for predictive QRs are those of \citet{Qu08} and \citet{OQ11}, who develop structural break tests for quantile regressions with exclusively stationary covariates.
Our test differs from theirs in the following respects.
First, our structural change test is robust to whether predictors are stationary or near-stationary.
What is more, under the alternative, we show that local power is higher when the predictors are highly persistent.
Second, our tests are based on the principle of self-normalization (SN), pioneered in a change point context by \citet{SZ10}.
Third, we also derive ``unsupervised'' tests in the spirit of \citet{ZL18}, which do not require \textit{a priori} knowledge of the number of change points.

All our persistence-robust change point tests work by comparing estimates of regression coefficients based on suitably chosen subsamples.
We show that these subsample estimates converge (in a functional sense) to Brownian motion both when predictors are stationary and near-stationary.
Because these Brownian motions only differ in their covariance matrices, our self-normalized structural break tests following \citet{SZ10} are valid for stationary and near-stationary predictors in the sense that no \textit{a priori} knowledge of their persistence is required.
The technical development relies on convexity results of \citet{Kat09} and recent results for stationary and near-stationary variables in \citet{MP20}.
Note that we only draw on \citet{MP20} to utilize certain properties of stationary and near-stationary predictors.
We use these results for a distinctly different end, viz.~\textit{(quantile, CoVaR)} regressions---much unlike \citet{MP20}, who consider (cointegrating) \textit{mean} regressions.
Moreover, we go beyond \citet{MP20} by developing \textit{functional} central limit theory for the regression estimators.

Apart from delivering persistence-robustness, SN also offers further advantages.
First, it is straightforward to implement, as it only requires recursively estimated coefficients as inputs.
Second, our functional results underlying the SN-based change point tests also facilitate ``off-the-shelf'' inference for the full-sample estimates when no break was detected; see Section~\ref{Exposure CoVaR} for an application.
Third, compared with approaches that directly estimate asymptotic variances, SN-based tests often offer more adequate size control in finite samples \citep{Sha10,Sha15}.
Fourth, ``standard'' change point tests that involve consistent estimation of asymptotic (long-run) variances often suffer from the problem of non-monotonic power, where power may even \textit{decrease} in the distance from the null \citep{Vog99}.
By a clever construction of the normalizer, SN sidesteps this problem and, hence, does not exhibit this non-monotonicity \citep{SZ10}.

Simulations show that our tests possess excellent finite-sample properties. 
For both stationary and near-stationary predictors, size is close to the nominal level for sample sizes typically encountered in practice.
This not only holds for the quantile regression but also for the CoVaR regression, where the effective sample size is severely reduced by the very definition of CoVaR as a systemic risk measure.
We also show that our tests have high power in detecting deviations from the null of a stable predictive relationship.


Our second main contribution is to apply our test to a systemic risk prediction model for the US financial sector.
As a predictor, we use the volatility index (VIX), which measures the short-term future volatility in the S\&P~500 expected by market participants \citep{Wha09}.
The VIX and other measures of market volatility are widely used as predictors of systemic risk \citep{AB16,Hea16,Ned20}.
We measure systemic risk for all global systemically important US banks with respect to the S\&P~500~Financials.
While throughout our sample a high VIX signals higher future systemic risk, our test shows that the precise predictive power of the VIX is liable to change over time.

The remainder of this paper is structured as follows.
We introduce basic notation and definitions in Section~\ref{Preliminaries}.
Then, Section~\ref{Main Results} presents our change point tests for (quantile, CoVaR) regressions.
Section~\ref{sec:Simulation} summarizes the results of extensive Monte Carlo simulations, and Section~\ref{The VIX as a Predictor for Systemic Risk} contains our systemic risk application.
The final Section~\ref{sec:Conclusion} concludes.
To promote flow in the main paper, we relegate the introduction of some assumptions and the asymptotic variance-covariance matrices of the parameter estimators to Appendices~\ref{Assumptions on the Quantile Regression}--\ref{Asymptotic Variance-Covariance Matrices}. 
The appendix further contains detailed proofs and an additional application in Sections~\ref{sec:thm1}--\ref{Quantile Predictability of the Equity Premium}.

\section{Preliminaries}\label{Preliminaries}

\subsection{Defining Quantiles and the CoVaR}\label{Defining Quantiles and the CoVaR}

Let $Z_t$ denote the variable of interest and let $Y_t$ stand for the ``distress'' variable.
The systemic importance of $Y_t$ for $Z_t$ will then be measured by the CoVaR.
In the empirical application in Section~\ref{The VIX as a Predictor for Systemic Risk}, $Z_t$ will denote the log-losses of a large US bank and $Y_t$ those of the S\&P~500 Financials.
The $(k+1)\times 1$-vector of predictors is $\mX_{t-1}=(1,\vx_{t-1}^\prime)^\prime$, where $\vx_{t-1}$ contains the $\mathbb{R}^{k}$-valued, non-constant predictors. 
(We use bold font to denote vector- or matrix-valued quantities and normal font to denote scalar-valued objects.)
As mentioned in the Motivation, several possibly persistent variables have been used in CoVaR regressions as predictors.

To formally introduce quantiles and the CoVaR, let $\mathcal{F}_{t}=\sigma(Z_t, Y_t, \mX_t,Z_{t-1}, Y_{t-1}, \mX_{t-1},\ldots)$ denote the time-$t$ information set, and let $\alpha,\beta\in(0,1)$ be two probability levels. 
The quantity to be modeled in the auxiliary QR is the conditional $\alpha$-quantile $Q_{\alpha}(Y_t\mid\mathcal{F}_{t-1}):=Q_{\alpha}(F_{Y_t\mid\mathcal{F}_{t-1}}):=F_{Y_t\mid\mathcal{F}_{t-1}}^{\leftarrow}(\alpha)$ of the cumulative distribution function $F_{Y_t\mid\mathcal{F}_{t-1}}(\cdot):=\P\{Y_t\leq\cdot\mid\mathcal{F}_{t-1}\}$.

The CoVaR is the $\beta$-quantile of the distribution of $Z_t\mid \big\{Y_t\geq Q_{\alpha}(Y_t\mid\mathcal{F}_{t-1}),\mathcal{F}_{t-1}\big\}$, i.e.,
\[	
	\CoVaR_{\beta\mid\alpha}\big((Z_t,Y_t)^\prime\mid\mathcal{F}_{t-1}\big)=Q_{\beta}\big(F_{Z_t\mid Y_t\geq Q_{\alpha}(Y_t\mid\mathcal{F}_{t-1}),\, \mathcal{F}_{t-1}}\big).
\]
Thus, the CoVaR is only distinct from a standard quantile by the conditioning event $\{Y_t\geq Q_{\alpha}(Y_t\mid\mathcal{F}_{t-1})\}$, which we interpret as a ``distress event'' for large $\alpha$.
Therefore, as $\alpha\uparrow1$, the CoVaR measures the risk in $Z_t$ (via its $\beta$-quantile) conditional on $Y_t$ being in distress, such that the \textit{systemic} risk interpretation of the CoVaR becomes apparent.
For $\alpha=0$, the distress event occurs with probability one, and the CoVaR degenerates to a quantile, i.e., $\CoVaR_{\beta\mid0}\big((Z_t,Y_t)^\prime\mid\mathcal{F}_{t-1}\big)=Q_{\beta}(Z_t\mid\mathcal{F}_{t-1})$.

\subsection{Predictive (Quantile, CoVaR) Regressions}
\label{sec:Predictive QR}

We model quantiles and the CoVaR via the linear (quantile, CoVaR) regression (henceforth simply called a \textit{CoVaR regression})
\begin{align}
	Y_t&=\mX_{t-1}^\prime\valpha_0 + \epsilon_t,\qquad Q_{\alpha}(\epsilon_t\mid\mathcal{F}_{t-1})=0,\label{eq:quantile model}\\
	Z_t&=\mX_{t-1}^\prime\vbeta_0 + \delta_t,\qquad \CoVaR_{\beta\mid\alpha}\big((\delta_t,\epsilon_t)^\prime\mid\mathcal{F}_{t-1}\big)=0.\label{eq:CoVaR model}
\end{align}
The model in \eqref{eq:quantile model} is a standard linear predictive quantile regression.
The assumption on the QR error $\epsilon_t$ in \eqref{eq:quantile model} ensures that $Q_{\alpha}(Y_t\mid\mathcal{F}_{t-1})=\mX_{t-1}^\prime\valpha_0$, whence the coefficient $\valpha_0$ measures the strength of the predictive content of $\mX_{t-1}$ for the (conditional) $\alpha$-quantile of $Y_t$.
Analogously, the assumption on the (quantile, CoVaR) errors $(\epsilon_t,\delta_t)^\prime$ in \eqref{eq:CoVaR model} implies that $\CoVaR_{\beta\mid\alpha}\big((Z_t,Y_t)^\prime\mid\mathcal{F}_{t-1}\big)=\mX_{t-1}^\prime\vbeta_0$, such that $\vbeta_0$ quantifies the forecasting power of $\mX_{t-1}$ for future levels of systemic risk.
In Appendix~\ref{DGP} of the simulations, we present a concrete data-generating process (DGP) for $(Y_t,Z_t)^\prime$ that gives rise to linear (quantile, CoVaR) regressions as in \eqref{eq:quantile model}--\eqref{eq:CoVaR model}.

To estimate the predictive CoVaR regression in \eqref{eq:quantile model}--\eqref{eq:CoVaR model}, we adapt the estimator of \citet{DH26} to our context.
Given a sample $\big\{(Z_t,Y_t,\mX_{t-1}^\prime)^\prime\big\}_{t=1,\ldots,n}$, we estimate the parameter vector $\valpha_0$ via
\begin{equation*}
	\widehat{\valpha}_n = \argmin_{\valpha\in\mathbb{R}^{k+1}} \sum_{t=1}^{n}\rho_{\alpha}(Y_t - \mX_{t-1}^\prime\valpha),
\end{equation*}
where $\rho_{\alpha}(u)=u(\alpha - \1_{\{u\leq0\}})$ is the standard pinball loss known from quantile regressions \citep{KB78}.
To estimate $\vbeta_0$, we use
\begin{equation*}
	\widehat{\vbeta}_n = \argmin_{\vbeta\in\mathbb{R}^{k+1}} \sum_{t=1}^{n}\1_{\{Y_t>\mX_{t-1}^\prime\widehat{\valpha}_n\}}\rho_{\beta}(Z_t - \mX_{t-1}^\prime\vbeta).
\end{equation*}
This estimator is similar in spirit to the QR estimator $\widehat{\valpha}_n$, except that in minimizing the loss, only those observations $(Z_t,Y_t,\mX_{t-1}^\prime)^\prime$ are used for which $Y_t$ is larger than its predicted conditional quantile, i.e., $Y_t>\mX_{t-1}^\prime\widehat{\valpha}_n$.
We mention that this restriction to observations with a quantile exceedance is unavoidable by definition of the CoVaR as a quantile of the distribution $F_{Z_t\mid Y_t\geq Q_{\alpha}(Y_t\mid\mathcal{F}_{t-1}),\mathcal{F}_{t-1}}$ that conditions on $Y_t\geq Q_{\alpha}(Y_t\mid\mathcal{F}_{t-1})$.
For a more technical argument, we refer to Theorem~4.2~(ii) of \citet{FH24}.
Underlying the estimators $\widehat{\valpha}_n$ and $\widehat{\vbeta}_n$ is the assumption that the predictive relationships in \eqref{eq:quantile model}--\eqref{eq:CoVaR model} are unchanged throughout the sample.

\subsection{CoVaR Regressions With Instabilities}
\label{sec:CoVaR Regressions With Instabilities}

Since the predictive content of $\mX_{t-1}$ may change over time, the coefficients $\valpha_0$ and $\vbeta_0$ in \eqref{eq:quantile model}--\eqref{eq:CoVaR model} may vary with $t$, such that
\begin{align}
		Y_t&=\mX_{t-1}^\prime\valpha_{0,t} + \epsilon_t,\qquad Q_{\alpha}(\epsilon_t\mid\mathcal{F}_{t-1}),\label{eq:(QRalt)}\\
		Z_t&=\mX_{t-1}^\prime\vbeta_{0,t} + \delta_t,\qquad \CoVaR_{\beta\mid\alpha}\big((\delta_t,\epsilon_t)^\prime\mid\mathcal{F}_{t-1}\big)=0.\label{eq:(CoVaRalt)}
\end{align}
Our goal is to construct a test of the no-break hypothesis for model \eqref{eq:(QRalt)}--\eqref{eq:(CoVaRalt)}, i.e.,
\[
	\mathcal{H}_0^{\CoVaR}\colon \begin{pmatrix}\valpha_{0,1}\\ \vbeta_{0,1}\end{pmatrix}=\ldots=\begin{pmatrix}\valpha_{0,n}\\ \vbeta_{0,n}\end{pmatrix}\equiv\begin{pmatrix}\valpha_{0}\\ \vbeta_{0}\end{pmatrix}.
\]
To do so, we follow \citet{SZ10} and rely on estimates based on certain subsamples $\big\{(Z_t, Y_t,\mX_{t-1}^\prime)^\prime\big\}_{t=\lfloor nr\rfloor + 1,\ldots,\lfloor ns\rfloor}$, where $0\leq r<s\leq1$ and $\lfloor \cdot\rfloor$ denotes the floor function. 
Specifically, the subsample estimates for the CoVaR regression are
\begin{align*}
	\widehat{\valpha}_n(r,s) &= \argmin_{\valpha\in\mathbb{R}^{k+1}} \sum_{t=\lfloor nr\rfloor+1}^{\lfloor ns\rfloor}\rho_{\alpha}(Y_t - \mX_{t-1}^\prime\valpha),\\
	\widehat{\vbeta}_n(r, s) &= \argmin_{\vbeta\in\mathbb{R}^{k+1}} \sum_{t=\lfloor nr\rfloor+1}^{\lfloor ns\rfloor}\1_{\{Y_t>\mX_{t-1}^\prime\widehat{\valpha}_n(r,s)\}}\rho_{\beta}(Z_t - \mX_{t-1}^\prime\vbeta).
\end{align*}

Define the stacked estimator $\widehat{\vgamma}_n(r,s):=\big(\widehat{\valpha}_n^\prime(r,s),\widehat{\vbeta}_n^\prime(r,s)\big)^\prime$.
Then, our self-normalized test statistic for $\mathcal{H}_0^{\CoVaR}$, inspired by \citet{SZ10}, is given by
\[
\mathcal{U}_{n,\vgamma}:=\sup_{s\in[\iota,1-\iota]} s^2(1-s)^2\big[\widehat{\vgamma}_n(0,s) - \widehat{\vgamma}_n(s,1)\big]^\prime \bm{\mathcal{N}}_{n,\vgamma}^{-1}(s)\big[\widehat{\vgamma}_n(0,s) - \widehat{\vgamma}_n(s,1)\big]
\]
with cut-off $\iota\in(0,1/2)$ and normalizer
\begin{multline*}
\bm{\mathcal{N}}_{n,\vgamma}(s) =\int_{\iota}^{s}r^2\big[\widehat{\vgamma}_n(0,r) - \widehat{\vgamma}_n(0,s)\big]\big[\widehat{\vgamma}_n(0,r) - \widehat{\vgamma}_n(0,s)\big]^\prime\D r \\
	+ \int_{s}^{1-\iota}(1-r)^2\big[\widehat{\vgamma}_n(r,1) - \widehat{\vgamma}_n(s,1)\big]\big[\widehat{\vgamma}_n(r,1) - \widehat{\vgamma}_n(s,1)\big]^\prime\D r.
\end{multline*}

The cut-off $\iota$ ensures that estimates are based on at least a $\iota$-fraction of the available sample, that is, on $\lfloor n\iota\rfloor$ many observations.
For technical reasons, the choice $\iota=0$ is not allowed, which is similar as in the fixed-regressor QRs considered in \citet[Theorem~2]{ZS13}.

The test statistic $\mathcal{U}_{n,\vgamma}$ has two different components. 
First, the outer terms involving $\big[\widehat{\vgamma}_n(0,s) - \widehat{\vgamma}_n(s,1)\big]$ indicate a structural instability in \eqref{eq:(QRalt)}--\eqref{eq:(CoVaRalt)} if they are far from zero and, thus, give the test its power.
When $s$ is close to 0 or 1, this difference is weighted down by the factor $s^2(1-s)^2$, because then either $\widehat{\vgamma}_n(0,s)$ or $\widehat{\vgamma}_n(s,1)$ is based on very few observations, such that larger differences between the two estimates are not uncommon.
The second component of $\mathcal{U}_{n,\vgamma}$ is the normalizer $\bm{\mathcal{N}}_{n,\vgamma}(s)$, which serves to give a nuisance parameter-free limiting distribution (see the proof of Corollary~\ref{cor:SBT CoVaR}).

To derive the asymptotic limit of our test statistic $\mathcal{U}_{n,\vgamma}$, we require some assumptions on the predictors $\vx_{t-1}$.
We introduce these in the following Section~\ref{Assumptions on the Predictors} (see Assumptions~\ref{ass:N}--\ref{ass:LP}).
Additional required assumptions on the predictive QR model in \eqref{eq:(QRalt)} (see Assumptions~\ref{ass:innov}--\ref{ass:K} in Appendix~\ref{Assumptions on the Quantile Regression}), and the predictive CoVaR regression in \eqref{eq:(CoVaRalt)} (see Assumptions~\ref{ass:innov CoVaR}--\ref{ass:K ast} in Appendix~\ref{Assumptions on the CoVaR Regression}) are relegated to the appendix to promote flow.
Importantly, Assumptions~\ref{ass:innov}--\ref{ass:K ast} in Appendices~\ref{Assumptions on the Quantile Regression}--\ref{Assumptions on the CoVaR Regression} are sufficiently general to allow for conditional heteroskedasticity of the regression errors $(\epsilon_t,\delta_t)^\prime$.

\section{Main Results}\label{Main Results}

\subsection{Assumptions on the Predictors}\label{Assumptions on the Predictors}

In line with most of the predictive regression literature, we assume the stochastic predictors $\vx_{t}$ in $\mX_{t}=(1,\vx_{t}^\prime)^\prime$ to be generated by the additive components model
\[
	\vx_{t}=\vmu_{x} + \vxi_t,
\]
where $\vmu_{x}$ is an $\mathbb{R}^{k}$-valued constant and the (zero-mean) stochastic component $\vxi_t$ obeys the following autoregression:

\begin{assumption}[Predictors]\label{ass:N}
The stochastic $\vxi_t$ are generated by
\begin{equation}\label{eq:xi}
	\vxi_t  = \mR_n\vxi_{t-1} + \vu_{t},\qquad t\in\mathbb{N},
\end{equation}
where $\vu_{t}$ is a linear process defined in Assumption~\ref{ass:LP} below and the autoregressive matrix $\mR_n$ satisfies
\begin{equation}\label{eq:Cn}
	\mC_n:=n^{\kappa}(\mR_n-\mI_k)\underset{(n\to\infty)}{\longrightarrow}\mC,
\end{equation}
where $\mI_{k}$ denotes the $(k\times k)$-identity matrix and $\kappa\geq0$.
The variables $\vxi_t$ belong to one of the following classes:
\begin{enumerate}
	\item[(I0)]\label{it:I0} \textbf{Stationary predictors:} Equation~\eqref{eq:Cn} holds with $\kappa=0$ and $\mR:=\mR_n=\mI_k + \mC$ has spectral radius $\rho(\mR)<1$.
	
	\item[(NS)]\label{it:MI} \textbf{Near-stationary predictors:} Equation~\eqref{eq:Cn} holds with $\kappa\in(0,1)$ and $\mC$ is a negative stable matrix, i.e., all its eigenvalues have negative real part.
	
\end{enumerate}
The process $\vxi_t$ in \eqref{eq:xi} is initialized at $\vxi_0=O_{\P}(1)$ under (I0), and $\vxi_0=o_{\P}(n^{\kappa/2})$ under (NS).
\end{assumption}

A similar condition to Assumption~\ref{ass:N} is entertained by \citet[Assumption~P]{KMS23}.
Our Assumption~\ref{ass:N} covers the cases of stationary and near-stationary regressors in Assumption~N from \citet{MP20} with, in their notation, $\kappa_n=n^{\kappa}$.
Variables generated according to (NS) in Assumption~\ref{ass:N} have also been termed mildly integrated \citep{PM09,Lee16} or moderately integrated \citep{MP09}.
However, we adopt the terminology of \citet{MP20} here, as it is closest to our framework.
Allowing $\kappa=1$ in (NS) would lead to what \citet{MP20} call \textit{near-nonstationary regressors}.
These are closely related to nearly integrated variables, considered in the earlier literature \citep{CES95,JM06a}.

Of course, it would be desirable for additional robustness to also allow for near-nonstationary covariates with $\kappa=1$ in Assumption~\ref{ass:N}.
However, in that case, limit theory is already non-Gaussian for the quantile regression \citep{Lee16}, such that our self-normalized approach cannot be expected to work. 
Covering near-nonstationary regressors requires use of other methods, such as IVX-based approaches. 
Yet, investigating these is beyond the scope of the present paper, and is left for future research.

It will turn out that the estimators $\widehat{\valpha}_n(r,s)$ and $\widehat{\vbeta}_n(r,s)$ have a different convergence rate depending on whether the predictors are stationary or near-stationary in Assumption~\ref{ass:N}. 
To unify notation across these two cases, we introduce the normalizing matrix
\[
	\mD_n=\begin{cases}
		\sqrt{n}\mI_{k+1} & \text{for (I0)},\\
		\diag(\sqrt{n},n^{\frac{1+\kappa}{2}}\mI_{k}) & \text{for (NS)}.
	\end{cases}
\]

For the predictor innovations $\vu_t$, we follow the linear framework of \citet[Assumption~LP]{MP20}.
To that end, let $\Vert\cdot\Vert$ denote the spectral norm.

\begin{assumption}[Predictor innovations]\label{ass:LP}
For each $t\in\mathbb{N}$, $\vu_t$ has linear process representation
\[
	\vu_t=\sum_{j=0}^{\infty}\mF_j\vvarepsilon_{t-j},\qquad \sum_{j=0}^{\infty}\Vert\mF_j\Vert<\infty,\qquad \sum_{j=1}^{\infty}j\Vert\mF_j\Vert^2<\infty,
\]
where $\mF_0:=\mI_{k}$, $\mF(1):=\sum_{j=1}^{\infty}\mF_j$ has full rank, and $\vvarepsilon_t$ is a $\mathbb{R}^{k}$-valued martingale difference sequence with respect to $\widetilde{\mathcal{F}}_{t}=\sigma(\vvarepsilon_t,\vvarepsilon_{t-1},\ldots)$, such that $\E_{\widetilde{\mathcal{F}}_{t-1}}[\vvarepsilon_t\vvarepsilon_t^\prime]=\mSigma_{\varepsilon}>0$ and the sequence $\{\Vert\vvarepsilon_t\Vert^2\}_{t\in\mathbb{Z}}$ is uniformly integrable.
\end{assumption}

Often in the predictive regression literature, it is assumed that the regression disturbances are uncorrelated with the predictor increments $\vu_t$ \citep[see, e.g.,][Assumption~3]{Dea22}.
We do not require a similar condition in our linear process Assumption~\ref{ass:LP} for the $\vu_t$.
Also note that while the requirement $\E_{\widetilde{\mathcal{F}}_{t-1}}[\vvarepsilon_t\vvarepsilon_t^\prime]=\mSigma_{\varepsilon}>0$ rules out conditional heteroskedasticity of the $\vvarepsilon_t$ with respect to their own past, conditional on the full information set $\mathcal{F}_{t-1}$, variances of the $\vvarepsilon_t$ are allowed to vary.

The sole purpose of Assumptions~\ref{ass:N}--\ref{ass:LP} is to ensure that we can draw on certain results of \citet{MP20} for stationary and near-stationary predictors $\vx_{t-1}$.
The remaining development for our CoVaR regressions in \eqref{eq:(QRalt)}--\eqref{eq:(CoVaRalt)} instead relies on results for estimators derived from convex minimization problems \citep[see, e.g.,][]{Kat09}.
To be able to apply those, the appendix imposes some regularity conditions on \eqref{eq:(QRalt)} (Assumptions~\ref{ass:innov}--\ref{ass:K} in Appendix~\ref{Assumptions on the Quantile Regression}) and \eqref{eq:(CoVaRalt)} (Assumptions~\ref{ass:innov CoVaR}--\ref{ass:K ast} in Appendix~\ref{Assumptions on the CoVaR Regression}).

\subsection{Change Point Test for Predictive CoVaR Regressions}\label{Change Point Test for Predictive CoVaR Regressions}

Before presenting our first main result, we have to introduce some additional notation.
Define $\mathcal{D}_{\iota}=\big\{(r,s)\in[0,1]^2\colon \iota\leq r<s\leq 1-\iota,\ s-r\geq\iota\big\}$.
Let $\ell^{\infty}(\mathcal{D}_{\iota})$ denote the space of real-valued, bounded functions on $\mathcal{D}_{\iota}$ endowed with the uniform topology \citep{VW96}.
The $d$-fold product of this space is denoted by $(\ell^{\infty}(\mathcal{D}_{\iota}))^d$, which comes equipped with the product topology.

\begin{thm}\label{thm:CoVaR est}
Suppose $\mathcal{H}_0^{\CoVaR}$ holds true for the model \eqref{eq:(QRalt)}--\eqref{eq:(CoVaRalt)}.
If Assumptions~\ref{ass:N}--\ref{ass:K ast} are satisfied, then, as $n\to\infty$,
\begin{equation*}
(s-r)\begin{pmatrix}\mD_n\big[\widehat{\valpha}_n(r,s) - \valpha_0\big]\\
\mD_n\big[\widehat{\vbeta}_n(r,s) - \vbeta_0\big]
\end{pmatrix}
\overset{d}{\longrightarrow}\overline{\mSigma}^{1/2}\big[\overline{\mW}(s)-\overline{\mW}(r)\big]\qquad\text{in }(\ell^{\infty}(\mathcal{D}_{\iota}))^{2k+2},
\end{equation*}
where $\overline{\mW}(\cdot)$ is a $(2k+2)$-variate standard Brownian motion, and $\overline{\mSigma}$ is defined in Appendix~\ref{Asymptotic Variance-Covariance Matrices}.
\end{thm}

\begin{proof}
See Appendix~\ref{sec:proof sketch} for a sketch of the proof.
Appendices~\ref{sec:thm1}--\ref{sec:CoVaRest Lemmas} provide full detail.
\end{proof}

The proof of Theorem~\ref{thm:CoVaR est} draws on several sources.
The general outline of the proof is similar to that of Theorem~2 in \citet{HS25}, where they show the functional convergence of parameter estimators in linear (quantile, expected shortfall) regressions. 
Apart from considering a completely different functional, the main technical differences are as follows.
First, we go beyond \citet{HS25} by considering \textit{two-parameter} convergence in Theorem~\ref{thm:CoVaR est}.
Doing so allows us to derive an ``unsupervised'' break test in the spirit of \citet{ZL18} later on (see Section~\ref{Multiple Breaks}), which does not require the number of breaks to be pre-specified.
Second, in contrast to the stationary setup in \citet{HS25}, the limit theory in Theorem~\ref{thm:CoVaR est} unifies the cases of stationary and near-stationary regressors.
We are able to do so by leveraging several results of \citet{MP20} for (I0) and (NS) variables.

The precise form of the asymptotic variance-covariance matrix $\overline{\mSigma}$ is irrelevant for our test, because---by virtue of self-normalization---it simply cancels out in the limiting distribution of our test statistic $\mathcal{U}_{n,\vgamma}$ (see Corollary~\ref{cor:SBT CoVaR}).
The (non-functional) convergence for fixed $r=0$ and $s=1$ of Theorem~\ref{thm:CoVaR est} for $\widehat{\valpha}_n(r,s)$ is similar to that in \citet{Lee16} and \citet{FL19} (who, however, go beyond the (I0) and (NS) case of Assumption~\ref{ass:N} by also considering near-nonstationary and mildly explosive regressors).
Yet, on account of the functional convergence, Theorem~\ref{thm:CoVaR est} provides a much stronger conclusion in the (I0) and (NS) case, which may be used to derive uniformly valid inference on structural breaks in the predictive relationship, as we show next.

\begin{cor}\label{cor:SBT CoVaR}
Under the conditions of Theorem~\ref{thm:CoVaR est} it holds that, as $n\to\infty$,
\begin{equation*}
	\mathcal{U}_{n,\vgamma}	\overset{d}{\longrightarrow}\sup_{s\in[\iota,1-\iota]}\big[\overline{\mW}(s)-s\overline{\mW}(1)\big]^\prime \overline{\bm{\mathcal{\mW}}}^{-1}(s)\big[\overline{\mW}(s)-s\overline{\mW}(1)\big]=:\mathcal{U}_{2k+2},
\end{equation*}
where $\overline{\mW}(\cdot)$ is a $(2k+2)$-variate standard Brownian motion and the normalizer is
\begin{align*}
	\overline{\bm{\mathcal{\mW}}}(s) &= \int_{\iota}^{s}\big\{\overline{\mW}(r)-(r/s)\overline{\mW}(s)\big\}\big\{\overline{\mW}(r)-(r/s)\overline{\mW}(s)\big\}^\prime\D r \\
	&\hspace{1cm} + \int_{s}^{1-\iota}\Big\{\big[\overline{\mW}(1)-\overline{\mW}(r)\big] - \big((1-r)/(1-s)\big)\big[\overline{\mW}(1)-\overline{\mW}(s)\big]\Big\}\times\\
	&\hspace{2cm}\times\Big\{\big[\overline{\mW}(1)-\overline{\mW}(r)\big] - \big((1-r)/(1-s)\big)\big[\overline{\mW}(1)-\overline{\mW}(s)\big]\Big\}^\prime\D r.
\end{align*}
\end{cor}

\begin{proof}
The result follows by a straightforward application of the continuous mapping theorem to Theorem~\ref{thm:CoVaR est}; see Appendix~\ref{sec:thm2} for full detail.
\end{proof}

Denote by $\mathcal{U}_{\ell,1-\nu}$ the $(1-\nu)$-quantile of the distribution of $\mathcal{U}_{\ell}$ $(\ell\in\mathbb{N})$ for $\nu\in(0,1)$.
Then, Corollary~\ref{cor:SBT CoVaR} shows that rejecting $\mathcal{H}_{0}^{\CoVaR}$ if $\mathcal{U}_{n,\vgamma}	> \mathcal{U}_{2k+2,1-\nu}$ leads to an asymptotic level-$\nu$ test.

Table~\ref{tab:cv} shows some selected $(1-\nu)$-quantiles of the limiting distribution $\mathcal{U}_{\ell}$ from Corollary~\ref{cor:SBT CoVaR} for different $\ell$'s.
The critical values have been computed based on 500,000 draws from the limiting distribution, where the Brownian motion was approximated on a grid of 5,000 equally spaced points in $[0,1]$.

\begin{table}[t!]
		\centering
		\begin{tabular}{crrrrrr}
			\toprule
		$\ell$ & \multicolumn{6}{c}{$1-\nu$} 	\\
	\cline{2-7} \noalign{\vspace{0.5ex}}
		& 80\% & 90\% & 95\% & 97.5\%& 99\%  &   99.5\%  \\
\midrule
2		&  48.7 &  70.1 &  93.0 & 117.2 & 152.6 & 180.5 \\
3		&  78.9 & 108.6 & 138.8 & 170.9 & 215.4 & 250.7 \\
4		& 113.4 & 151.4 & 190.0 & 229.3 & 283.7 & 327.1 \\
5		& 152.9 & 199.5 & 245.6 & 293.0 & 357.5 & 407.5 \\
6		& 196.7 & 251.9 & 306.7 & 362.5 & 437.9 & 495.5 \\
7   & 244.8 & 309.4 & 373.0 & 436.9 & 522.6 & 589.3 \\
8   & 297.5 & 372.1 & 444.2 & 516.2 & 611.4 & 682.5 \\
9   & 354.8 & 438.6 & 519.7 & 599.6 & 706.7 & 790.3 \\
10  & 416.5 & 509.8 & 601.1 & 691.2 & 809.5 & 900.5 \\
			\bottomrule
		\end{tabular}
\caption{$(1-\nu)$-quantiles $\mathcal{U}_{\ell,1-\nu}$ of the limiting distribution $\mathcal{U}_{\ell}$ for different values of $\ell$ and fixed $\iota=0.1$.}
\label{tab:cv}
\end{table}

Importantly, these critical values are valid irrespective of whether predictors are (I0) or (NS), which delivers the persistence-robustness of our test.
The reason we obtain such a unified approach lies in Theorem~\ref{thm:CoVaR est}, which shows that the functional limit of the subsample estimates is Brownian motion with only the variance-covariance matrix differing between the (I0) and (NS) case.
These covariance matrices, however, cancel out in the limit due to SN.
The only quantity appearing in $\mathcal{U}_{2k+2}$ that is related to the data-generating process is the dimension $k$ of the predictors.
In our case, $k$ is a fixed integer. 

\begin{rem}[Computation of test statistic]
It is often computationally easier to approximate the test statistic $\mathcal{U}_{n,\vgamma}$ in the following way.
First, for $j\in\mathbb{N}$ define $\mT_n(j)=(j/n)(1-j/n)\big[\widehat{\vgamma}(0,j/n) - \widehat{\vgamma}(j/n, 1)\big]$, such that $\widehat{\vgamma}(0,j/n)$ ($\widehat{\vgamma}(j/n,1)$) is the regression estimate based on observations from $t=1,\ldots,j$ ($t=j+1,\ldots,n$).
Then, $\mathcal{U}_{n,\vgamma}$ can be approximated via
\begin{equation*}
	\widetilde{\mathcal{U}}_{n,\vgamma}=\sup_{j=\lfloor\iota n\rfloor+1,\ldots,\lfloor(1-\iota) n\rfloor}\mT_n^{\prime}(j)\mV_n^{-1}(j)\mT_n(j),
\end{equation*}
where 
\begin{multline*}
	\mV_n(j) = \frac{1}{n}\bigg\{\sum_{i=\lfloor\iota n\rfloor+1}^{j}(i/n)^2\big[\widehat{\vgamma}(0,i/n) - \widehat{\vgamma}(0, j/n)\big]\big[\widehat{\vgamma}(0,i/n) - \widehat{\vgamma}(0,j/n)\big]^\prime\\
	\sum_{i=j}^{\lfloor(1-\iota) n\rfloor}(1-i/n)^2\big[\widehat{\vgamma}(i/n,1) - \widehat{\vgamma}(j/n, 1)\big]\big[\widehat{\vgamma}(i/n,1) - \widehat{\vgamma}(j/n, 1)\big]^\prime\bigg\}.
\end{multline*}
Lengthy, but straightforward, calculations show that $\mathcal{U}_{n,\vgamma}-\widetilde{\mathcal{U}}_{n,\vgamma}=o_{\P}(1)$, such that a test of $\mathcal{H}_{0}^{\CoVaR}$ may also be based on $\widetilde{\mathcal{U}}_{n,\vgamma}$.
\end{rem}


\begin{rem}[Mixed-persistence predictors]
A careful reading of the proofs shows that Corollary~\ref{cor:SBT CoVaR} also holds for ``mixed''-persistence regressors, where $k_{1}$ predictors (say $\mX_{t-1,(I0)}$) are stationary and $k-k_1$ are near-stationary (say $\mX_{t-1,(NS)}$), as long as Lemma~\ref{lem:SUM} in Appendix~\ref{sec:QRest Lemmas} continues to hold for the $\mX_{t-1}=(\mX_{t-1,(I0)}^\prime, \mX_{t-1,(NS)}^\prime)^\prime$ and some limiting matrix $\mOmega$ (then with normalizing matrix $\mD_n=\diag(\sqrt{n}\mI_{k_1},\, n^{\frac{1+\kappa}{2}}\mI_{k-k_1})$).
\end{rem}

Finally, we show how Theorem~\ref{thm:CoVaR est} can be specialized to construct a change point test for predictive QRs.

\begin{cor}\label{cor:SBT}
Suppose $\mathcal{H}_0^{Q}\colon\valpha_{0,1}=\ldots=\valpha_{0,n}\equiv\valpha_0$ holds true for model \eqref{eq:(QRalt)}.
If Assumptions~\ref{ass:N}--\ref{ass:K} are satisfied, then, as $n\to\infty$,
\begin{equation*}
	\mathcal{U}_{n,\valpha}:=\sup_{s\in[\iota,1-\iota]} s^2(1-s)^2\big[\widehat{\valpha}_n(0,s) - \widehat{\valpha}_n(s,1)\big]^\prime \bm{\mathcal{N}}_{n,\valpha}^{-1}(s)\big[\widehat{\valpha}_n(0,s) - \widehat{\valpha}_n(s,1)\big]\overset{d}{\longrightarrow} \mathcal{U}_{k+1},
\end{equation*}
where the normalizer $\bm{\mathcal{N}}_{n,\valpha}(s)$ is defined as $\bm{\mathcal{N}}_{n,\vgamma}(s)$ with $\widehat{\vgamma}_n(\cdot,\cdot)$ replaced by $\widehat{\valpha}_n(\cdot,\cdot)$ at every occurrence.
\end{cor}

\begin{proof}
Analogous to the proof of Corollary~\ref{cor:SBT CoVaR}, but using Theorem~\ref{thm:std est} in Appendix~\ref{sec:thm1} in place of Theorem~\ref{thm:CoVaR est}.
Therefore, Corollary~\ref{cor:SBT} holds under a subset of the conditions of Corollary~\ref{cor:SBT CoVaR}.
\end{proof}



\subsection{Power Analysis}\label{Power Analysis}

Here, we show that our test based on $\mathcal{U}_{n,\vgamma}$ has power against a one-break local alternative.
To do so, consider the model \eqref{eq:(QRalt)}--\eqref{eq:(CoVaRalt)} under
\[
	\mathcal{H}_1^{\CoVaR}\colon \begin{pmatrix}\valpha_{0,t}\\ \vbeta_{0,t} \end{pmatrix}=\begin{pmatrix}\valpha_{0}+\mD_n^{-1}\va(t/n)\\ \vbeta_{0}+\mD_n^{-1}\vb(t/n) \end{pmatrix},\qquad t=1,\ldots,n,
\]
where $\va(\cdot)$ and $\vb(\cdot)$ are $\mathbb{R}^{k+1}$-valued, componentwise step functions on $[0,1]$.
This local alternative is inspired by \citet[Sec.~3]{KPA88}.
Extensions to certain smooth $\va(\cdot)$ and $\vb(\cdot)$ are also possible along the lines of \citet{KPA88} and \citet{Qu08}.
However, in view of later results on multiple discrete breaks (see Section~\ref{Multiple Breaks}) and to keep the exposition as simple as possible, we confine ourselves to step functions here.

\begin{thm}\label{thm:CoVaR est alt}
Suppose $\mathcal{H}_1^{\CoVaR}$ holds true for the model \eqref{eq:(QRalt)}--\eqref{eq:(CoVaRalt)}.
If Assumptions~\ref{ass:N}--\ref{ass:K ast} are satisfied, then, as $n\to\infty$,
\begin{multline*}
(s-r)\begin{pmatrix}\mD_n\big[\widehat{\valpha}_n(r,s) - \valpha_0\big]\\
\mD_n\big[\widehat{\vbeta}_n(r,s) - \vbeta_0\big]
\end{pmatrix}
\overset{d}{\longrightarrow}\overline{\mSigma}^{1/2}\big[\overline{\mW}(s)-\overline{\mW}(r)\big]
+\begin{pmatrix}\int_{r}^{s}\va(x)\D x\\ \int_{r}^{s}\vb(x)\D x\end{pmatrix}\quad \text{in }(\ell^{\infty}(\mathcal{D}_{\iota}))^{2k+2},
\end{multline*}
where $\overline{\mW}(\cdot)$ and $\overline{\mSigma}$ are as in Theorem~\ref{thm:CoVaR est}.
\end{thm}

\begin{proof}
See Appendix~\ref{CoVaR alt}.
\end{proof}

Our test statistic $\mathcal{U}_{n,\vgamma}$ is specifically designed to have power under one-break alternatives.
Formally, we show that when the single break point $s^\ast$ lies in the interval $(\iota,1-\iota)$, then our test is consistent against certain local alternatives.

\begin{cor}\label{cor:one-break}
Suppose $\mathcal{H}_1^{\CoVaR}$ holds true for the model \eqref{eq:(QRalt)}--\eqref{eq:(CoVaRalt)}, where $\vc(x):=\big(\va^\prime(x),\vb^\prime(x)\big)^\prime=\vc_1\1_{\{x\leq s^{\ast}\}} + \vc_2\1_{\{x>s^{\ast}\}}$ for some break point $s^\ast\in(\iota,1-\iota)$ and $\vc_1,\vc_2\in\mathbb{R}^{2k+2}$ with $\vc_1\neq\vc_2$.
If Assumptions~\ref{ass:N}--\ref{ass:K ast} are satisfied, then for any $\nu\in(0,1)$,
\[
	\lim_{\Vert\vc_2-\vc_1\Vert\to\infty}\lim_{n\to\infty}\P\big\{\mathcal{U}_{n,\vgamma}>\mathcal{U}_{2k+2,1-\nu}\big\}=1.
\]
\end{cor}

\begin{proof}
See Appendix~\ref{CoVaR alt}.
\end{proof}

We omit the corresponding local power result for the QR, because it follows analogously.
Importantly, Corollary~\ref{cor:one-break} shows that local power is higher for near-stationary than for stationary predictors.
This is because under stationarity the test is consistent against alternatives in a $n^{-1/2}$-neighborhood of the null, whereas for near-stationary variables consistency is obtained even in ``more local'' $n^{-(1+\kappa)/2}$-neighborhoods of the slope coefficients (see $\mathcal{H}_{1}^{\CoVaR}$).
The intuition behind this result is that the signal strength in regressions with highly persistent variables is much larger, allowing the parameters to be estimated much more precisely, which---in turn---aids in detecting potential breaks.

\subsection{Multiple Breaks}\label{Multiple Breaks}	

The $\mathcal{U}_{n,\vgamma}$ statistic is specifically designed for one-break alternatives.
Therefore, in this section, we briefly propose a test statistic $\mathcal{V}_{n,\vgamma}$ tailored for multiple possible breaks along the lines of the ``unsupervised'' approach of \citet{ZL18}.
We derive the asymptotic limit of $\mathcal{V}_{n,\vgamma}$ under the null and show consistency under local alternatives with arbitrarily many breaks.
Crucially, the number of break points does not have to be pre-specified \textit{ex ante} in this test and, in this sense, it is ``unsupervised''.

To introduce the test statistic, let $\mG=\mG(r,s)$ be some arbitrary function on $\mathcal{D}=\big\{(r,s)\in[0,1]^2\colon 0\leq r\leq s\leq 1\big\}$ and define
\begin{align*}
	\vd(\mG, s_1,s_2,s_3)  &= \frac{1}{(s_3-s_1)^{3/2}}\big[(s_3-s_2)\mG(s_1,s_2)-(s_2-s_1)\mG(s_2,s_3)\big],\\
	\mXi(\mG, s_1,s_2,s_3) &=  \frac{1}{(s_3-s_1)^{2}}\Big\{(s_2-s_1)\int_{s_1}^{s_2}\vd(\mG,s_1,s,s_2)\vd^\prime(\mG,s_1,s,s_2)\D s\\
	&\hspace{3cm}+(s_3-s_2)\int_{s_2}^{s_3}\vd(\mG,s_2,s,s_3)\vd^\prime(\mG,s_2,s,s_3)\D s\Big\}.
\end{align*}

Then, our test statistic is defined as
\begin{multline}\label{eq:Unsupervised T}
	\mathcal{V}_{n,\vgamma}=\sup_{(r_1,r_2)\in\mathcal{D}_{\iota}}\vd^\prime(\widehat{\mS}_n, 0,r_1,r_2)\mXi^{-1}(\widehat{\mS}_n, 0,r_1,r_2)\vd(\widehat{\mS}_n, 0,r_1,r_2)\\
	+\sup_{(s_1,s_2)\in\mathcal{D}_{\iota}}\vd^\prime(\widehat{\mS}_n, s_1,s_2,1)\mXi^{-1}(\widehat{\mS}_n,s_1,s_2,1)\vd(\widehat{\mS}_n, s_1,s_2,1),
\end{multline}
where $\iota\in(0,1/2)$ and $\widehat{\mS}_n(r,s):= (s-r)\widehat{\vgamma}_n(r,s)$.
To gain some intuition, note that straightforward algebra implies that
\begin{equation}\label{eq:(L.2)}
	\vd(\widehat{\mS}_n, s_1,s_2,s_3)=\frac{(s_2-s_1)(s_3-s_2)}{(s_3-s_1)^{3/2}}\big[\widehat{\vgamma}_n(s_1,s_2) - \widehat{\vgamma}_n(s_2,s_3)\big],
\end{equation}
such that the $\vd$-terms in $\mathcal{V}_{n,\vgamma}$ compare subsample estimates, giving the test its power.
In contrast, the ``meat'' matrices $\mXi^{-1}$ in \eqref{eq:Unsupervised T} simply serve as (cleverly constructed) normalizers.

\begin{cor}\label{thm:CoVaR est unsupervised}
Suppose $\mathcal{H}_0^{\CoVaR}$ holds true for the model \eqref{eq:(QRalt)}--\eqref{eq:(CoVaRalt)}.
If Assumptions~\ref{ass:N}--\ref{ass:K ast} are satisfied, then, as $n\to\infty$,
\begin{multline*}
\mathcal{V}_{n,\vgamma}
\overset{d}{\longrightarrow}\sup_{(r_1,r_2)\in\mathcal{D}_{\iota}}\vd^\prime(\Delta\overline{\mW}, 0,r_1,r_2)\mXi^{-1}(\Delta\overline{\mW}, 0,r_1,r_2)\vd(\Delta\overline{\mW}, 0,r_1,r_2)\\
	+\sup_{(s_1,s_2)\in\mathcal{D}_{\iota}}\vd^\prime(\Delta\overline{\mW}, s_1,s_2,1)\mXi^{-1}(\Delta\overline{\mW},s_1,s_2,1)\vd(\Delta\overline{\mW}, s_1,s_2,1)=:\mathcal{V}_{2k+2},
\end{multline*}
where $\Delta\overline{\mW}(r,s)=\overline{\mW}(s)-\overline{\mW}(r)$ for the $(2k+2)$-variate standard Brownian motion $\overline{\mW}(\cdot)$.
\end{cor}

\begin{proof}
See Appendix~\ref{sec:proofs of cors}.
\end{proof}

The limiting distribution $\mathcal{V}_{2k+2}$ coincides with $T^{\ast}(\mathbb{B})$ from Theorem~3.2 in \citet{ZL18}.

Corollary~\ref{thm:CoVaR est unsupervised loc alt} shows consistency of our test even for the case of $M^{\ast}$ breaks.

\begin{cor}\label{thm:CoVaR est unsupervised loc alt}
Suppose $\mathcal{H}_1^{\CoVaR}$ holds true for the model \eqref{eq:(QRalt)}--\eqref{eq:(CoVaRalt)}, where $\vc(x):=\big(\va^\prime(x), \vb^\prime(x)\big)=\vc_i$ for $x\in(s_{i-1}^{\ast},s_{i}^{\ast}]$ are the break magnitudes of the $M^{\ast}\in\mathbb{N}$ break points occurring at times $0=s_{0}^{\ast}<s_{1}^{\ast}<\ldots<s_{M^{\ast}}^{\ast}<s_{M^{\ast}+1}^{\ast}=1$.
Moreover, assume for the break magnitudes that $\vc_{i}\neq\vc_{i+1}$ for all $i=1,\ldots,M^{\ast}$ and the break times obey $\min_{0\leq i\leq M^{\ast}}|s_{i+1}^{\ast}-s_{i}^{\ast}|>\iota$.
If Assumptions~\ref{ass:N}--\ref{ass:K ast} are satisfied, then, for any $\nu\in(0,1)$,
\[
	\lim_{\min_{1\leq i\leq M^{\ast}}\Vert\vc_{i+1}-\vc_{i}\Vert\to\infty}\lim_{n\to\infty}\P\big\{\mathcal{V}_{n,\vgamma}>\mathcal{V}_{2k+2,1-\nu}\big\}=1,
\]
where $\mathcal{V}_{2k+2,1-\nu}$ denotes the $(1-\nu)$-quantile of $\mathcal{V}_{2k+2,1-\nu}$.
\end{cor}

\begin{proof}
See Appendix~\ref{sec:proofs of cors}.
\end{proof}

\section{Simulations}\label{sec:Simulation}

We provide simulation evidence on the finite-sample properties of our tests (size and power) in Appendix~\ref{sec:Main Simulation}.
We draw the following main conclusions from the simulation study. 
First, the change point tests for the predictive CoVaR regressions have reasonable size even for samples as small as $n=1000$.
Second, for extremely persistent predictors, the tests display some liberal tendencies. 
Yet, under near-stationarity, the size distortions vanish for large $n$, as predicted by our theory. 
Third, the power to identify structural breaks is higher for more persistent predictors---consistent with the higher local power for near-stationary covariates in Corollary~\ref{cor:one-break} (recall Section~\ref{Power Analysis}).

\section{Stability of the VIX as a Systemic Risk Predictor}\label{The VIX as a Predictor for Systemic Risk}

Up until the global financial crisis (GFC) of 2007--9, banking regulation focused almost exclusively on microprudential objectives; that is, it focused on limiting the risks of each financial institution in isolation.
However, such individual regulations were insufficient to prevent the crisis.
Thus, in the aftermath, attention has shifted towards macroprudential objectives, where---next to ensuring the financial viability of each bank in isolation---the goal is to improve the stability of the financial system as a whole. 
Achieving this requires a gauge of the interconnectedness of financial institutions, which is provided by systemic risk measures \citep{AB16,Aea17}.

As a consequence, researchers have aimed for a better understanding of systemic risk.
For instance, \citet{AB16} and \citet{Hea16} investigate the predictive content of volatility for future levels of systemic risk.
However, whether such predictive relationships remain stable over time has not been investigated before due to a lack of appropriate statistical methodology, which this paper provides.

Particular interest attaches to the VIX as a predictor for future systemic riskiness, because of what \citet{BS14} term the ``volatility paradox''.
This phenomenon describes systemic risk as building up in times of low volatility in equities, only to materialize during crises.
This occurs endogenously because financial institutions take excessive risk when volatility is low.
Due to this tight relationship, the VIX as a forward-looking volatility measure can be seen as a natural predictor of systemic risk.

The VIX, whose value at time $t$ we denote by $\VIX_t$, measures the market participants' risk-neutral expectation of the variance in stock market returns on the S\&P~500.
It is computed by the Chicago Board Options Exchange (CBOE) based on option prices.
Figure~\ref{fig:VIX} plots the VIX over our 10 year sample period from 2005--2014, which is chosen to include the GFC and the European sovereign debt crisis.
The most marked spike in expected volatility occurs after the collapse of Lehman Brothers in late 2008 during the GFC.
One can see clearly that the VIX took a long time to return to its pre-crisis level, suggesting a reasonably high persistence.
Further spikes of the VIX, such as during the European sovereign debt crisis of the early 2010s, are also noticeable. 

\begin{figure}[t!]
	\centering
	\includegraphics[width=\linewidth]{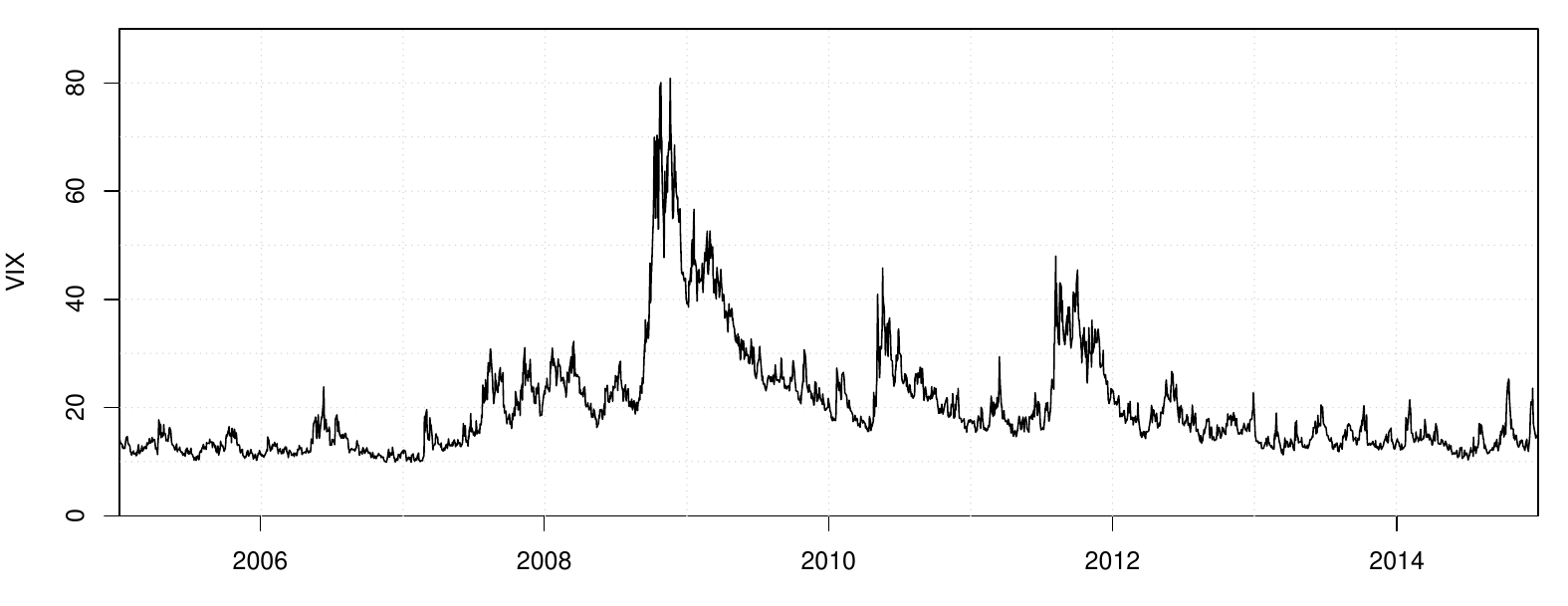}
	\caption{Level of the VIX from 2005--2014.}
	\label{fig:VIX}
\end{figure}

More formally, consistent with \citet{HZ12}, we find that the VIX is somewhere on the borderline of stationarity.
As in \citet{CWW15} and in line with Assumption~\ref{ass:N}, we report the AR(1) coefficient estimate for the VIX, which is 0.9824.
The KPSS test of \citet{Kea92} rejects the null of stationarity at any conventional significance level.
However, the ADF--GLS test of \citet{ERS96} rejects the null of a unit root in the VIX, with a $p$-value well below $1\%$.\footnote{A similar result is obtained when applying the Dickey--Fuller-type test of \citet{CT09}, which is robust to unconditional heteroskedasticity. Specifically, the wild bootstrap-based implementation of this test yields a $p$-value of 1.2\%. In computing this result, we have adopted the standard settings implemented in the \texttt{bootUR} package of \citet{SW23}.}
In light of this conflicting evidence where both stationarity and non-stationarity are rejected, the exact degree of persistence in the VIX seems difficult to determine.
Therefore, the persistence-robustness of our test becomes an empirically desirable feature when testing for changes next.

Specifically, we now investigate the stability of the VIX as a predictor of systemic risk in the US financial system. 
We do so for the ``standard'' CoVaR in Section~\ref{``Standard'' CoVaR}, where the conditioning is on the individual financial institution, and for the Exposure CoVaR in Section~\ref{Exposure CoVaR}, where we condition on the financial system.

\subsection{``Standard'' CoVaR}\label{``Standard'' CoVaR}

We now apply our test from Corollary~\ref{cor:SBT CoVaR} to check the constancy of the predictive relationship between the VIX and future levels of systemic risk in the US banking sector.
To do so, we use daily log-losses of the S\&P~500 Financials (SPF) index as our $Z_t$, and the log-losses on a particular US bank as $Y_t$.
We focus on those US banks that, at the end of our sample, were ranked as a global systemically important bank (G-SIB) by the Financial Stability Board \citep{FSB15}.
Then, $\CoVaR_{\beta\mid\alpha}\big((Z_t,Y_t)^\prime\mid\mathcal{F}_{t-1}\big)$ measures the impact that an extreme loss of the bank has on the financial sector, i.e., its riskiness to the system.
We examine the structural stability of the following predictive model for the CoVaR:
\begin{align}
	Y_t &= \alpha_{0,t}+\alpha_{1,t}\VIX_{t-1}+\epsilon_t, && Q_{\alpha}(\epsilon_t\mid\mathcal{F}_{t-1})=0,\label{eq:QR appl}\\
	Z_t &= \beta_{0,t}+\beta_{1,t}\VIX_{t-1}+\delta_t, && \CoVaR_{\beta\mid\alpha}\big((\delta_t, \epsilon_t)^\prime\mid\mathcal{F}_{t-1}\big)=0.\label{eq:CoVaR appl}
\end{align}
Therefore, we investigate whether the predictive power of volatility for systemic risk is subject to change over time.
To estimate the model, we use daily data from 2005 to 2014, all obtained from \textit{finance.yahoo.com}, resulting in a sample of size $n=2515$.

\citet[Sec.~6.2]{Hea16} use the \textit{change} in the VIX as a systemic risk predictor, citing concerns of non-stationarity of the VIX in levels.
This practice was also followed in the subsequent literature \citep[e.g.,][Sec.~6]{Ned20}.
While this step may be necessary to achieve stationarity (as required by the theory developed in \citet{Hea16}), going from levels to differences reduces the discriminatory power of subsequent tests, as is well-known from the predictive regression literature \citep[see, e.g.,][]{BD15a}.
Due to the persistence-robustness of our change point test, we can work directly with the levels of the VIX, which optimizes power; see the discussion below Corollary~\ref{cor:one-break}.

\begin{table}[t!]
	\centering
		\begin{tabular}{lcd{3.1}d{3.4}d{3.4}d{3.4}d{3.1}d{3.2}d{3.2}d{3.2}}
			\toprule
	$\alpha=\beta$& $Z_t$ 							&  \multicolumn{8}{c}{$Y_t$} \\
	\cline{3-10}\noalign{\vspace{0.5ex}}
								&											& \multicolumn{1}{c}{JPM}  &  \multicolumn{1}{c}{BAC}   &  \multicolumn{1}{c}{C}  	 & \multicolumn{1}{c}{GS} 	& \multicolumn{1}{c}{BK} 		& \multicolumn{1}{c}{MS} 		& \multicolumn{1}{c}{STT}  & \multicolumn{1}{c}{WFC} \\
		\midrule
	0.9 					& SPF									&154.5^{\ast} & 273.1^{\ast\ast} & 138.2 & 359.6^{\ast\ast\ast} &  98.4 & 249.3^{\ast\ast} & 213.9^{\ast\ast} & 213.9^{\ast\ast} \\
	0.95 					&       							& 43.3 & 197.8^{\ast\ast} & 298.0^{\ast\ast\ast} & 110.8 & 173.6^{\ast} &  26.3 &  58.1 &  58.1 \\
			\bottomrule
		\end{tabular}
	\caption{Values of test statistic $\mathcal{U}_{n,\vgamma}$ for CoVaR regression in \eqref{eq:QR appl}--\eqref{eq:CoVaR appl} with indicated $Y_t$ and $Z_t$. Ticker symbols JPM, BAC, C, GS, BK, MS, STT, and WFC correspond to JP Morgan Chase, Bank of America, Citigroup, Goldman Sachs, Bank of New York Mellon, Morgan Stanley, State Street, and Wells Fargo, respectively. Significances at the 10\%, 5\% and 1\% level are indicated by $^\ast$, $^{\ast\ast}$ and $^{\ast\ast\ast}$, respectively.}
\label{tab:SR pred}
\end{table}

For all US G-SIBs, Table~\ref{tab:SR pred} shows realizations of the test statistic $\mathcal{U}_{n,\vgamma}$ for $\alpha=\beta\in\{0.9,\ 0.95\}$ to truly capture \textit{systemic} risk.
The test statistics for all banks exceed the 10\%-critical value of 151.4 (see Table~\ref{tab:cv}) at least once, such that there is evidence for structural breaks in the predictive content of the VIX of varying degree for each institution.

\begin{figure}[t!]
	\centering
	\includegraphics[width=\linewidth]{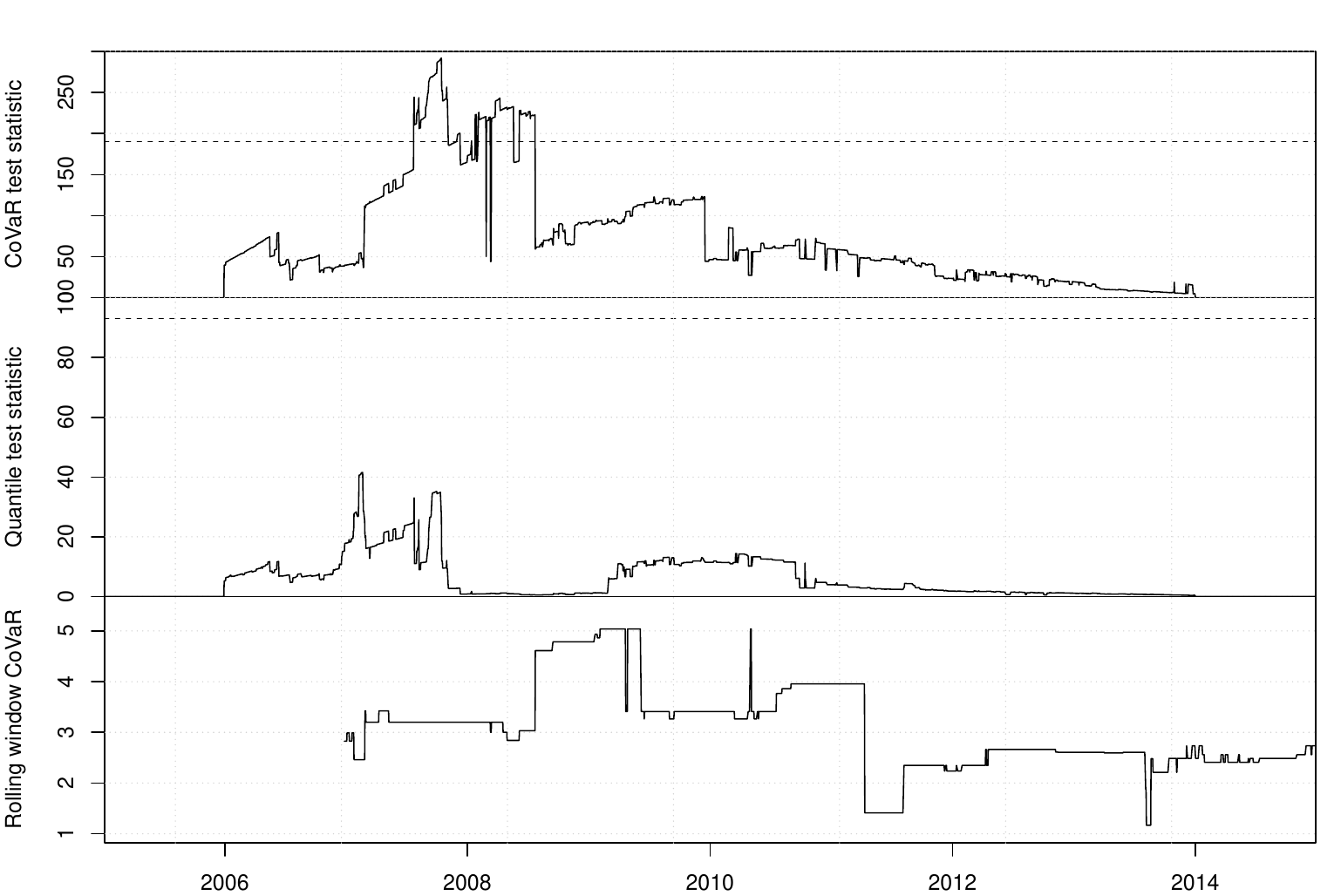}
	\caption{Top panel: Plot of the function $s\mapsto s^2(1-s)^2\big[\widehat{\vgamma}_n(0,s) - \widehat{\vgamma}_n(s,1)\big]^\prime \bm{\mathcal{N}}_{n,\vgamma}^{-1}(s)\big[\widehat{\vgamma}_n(0,s) - \widehat{\vgamma}_n(s,1)\big]$ for $\alpha=\beta=0.95$. The 5\%-critical value is indicated by the dashed horizontal line. 
	Middle panel: Plot of the function $s\mapsto s^2(1-s)^2\big[\widehat{\valpha}_n(0,s) - \widehat{\valpha}_n(s,1)\big]^\prime \bm{\mathcal{N}}_{n,\valpha}^{-1}(s)\big[\widehat{\valpha}_n(0,s) - \widehat{\valpha}_n(s,1)\big]$ for $\alpha=0.95$. The 5\%-critical value is indicated by the dashed horizontal line. 
	Bottom panel: Rolling-window estimates of the slope coefficient $\beta_{1,t}$ of the VIX in the linear predictive CoVaR regression \eqref{eq:CoVaR appl}. The rolling window estimates are based on 500 daily returns.}
	\label{fig:CPT CoVaR}
\end{figure}

To provide additional insight on a specific rejection in Table~\ref{tab:SR pred}, we focus on Citigroup.
Figure~\ref{fig:CPT CoVaR} provides an overview for the case $\alpha=\beta=0.95$.
Consistent with Table~\ref{tab:SR pred}, the 5\%-critical value of our CoVaR test is exceeded at some point in time by the function $s\mapsto s^2(1-s)^2\big[\widehat{\vgamma}_n(0,s) - \widehat{\vgamma}_n(s,1)\big]^\prime \bm{\mathcal{N}}_{n,\vgamma}^{-1}(s)\big[\widehat{\vgamma}_n(0,s) - \widehat{\vgamma}_n(s,1)\big]$, which forms the basis of our test statistic $\mathcal{U}_{n,\vgamma}$.
However, the top panel allows us to see more clearly the period responsible for the break. 
It is obvious that the evidence against a stable predictive relationship is strongest during the global financial crisis, with the maximum value of the test statistic attained on October 15th, 2007.

In principle, this break could be due to an instability in the quantile model \eqref{eq:QR appl} \textit{or} the CoVaR model \eqref{eq:CoVaR appl}.
However, when plotting the analogous function $s\mapsto s^2(1-s)^2\big[\widehat{\valpha}_n(0,s) - \widehat{\valpha}_n(s,1)\big]^\prime \bm{\mathcal{N}}_{n,\valpha}^{-1}(s)\big[\widehat{\valpha}_n(0,s) - \widehat{\valpha}_n(s,1)\big]$ for only the QR in \eqref{eq:QR appl}, we see that it solidly stays below the 5\%-critical value (see the middle panel of Figure~\ref{fig:CPT CoVaR}).
This points to an instability exclusively in the predictive relationship for future systemic risk in \eqref{eq:CoVaR appl}.

The bottom panel of Figure~\ref{fig:CPT CoVaR} sheds additional light on how the predictive content of the VIX varies over time for systemic risk.
It does so by plotting rolling-window estimates of the slope coefficient $\beta_{1,t}$ of the VIX in the CoVaR regression \eqref{eq:CoVaR appl}, where each window consists of 500 observations.
During the GFC we observe elevated levels of predictability by the VIX, which subsequently declines (but stays positive).
Therefore, as expected, during times where systemic risk materializes, the VIX as a ``fear index'' provides a good indication of high future systemic risk in the financial sector.
Yet even during times of lower predictability, the VIX and systemic risk remain tightly positively linked, consistent with the notion of \citet{AB16} that ``low volatility environments breed systemic risk''.


\subsection{Exposure CoVaR}\label{Exposure CoVaR}

Our above definition of the CoVaR corresponds to the standard definition of \citet{AB16}, where the system (here: S\&P~500 Financials) is considered conditional on some institution (here: US G-SIB) being in distress.
However, as pointed out by \citet[Sec.~II.D]{AB16}, the conditioning may be reversed to obtain the \textit{Exposure CoVaR}, which measures the risk of an institution given system failure.
The two CoVaR definitions give complementary information. 
For instance, a small, but highly connected, bank may have a large Exposure CoVaR, due to its many links with other institutions.
However, its ``standard'' CoVaR---measuring its impact on the system---would be low because of its small size, that would allow other banks to quickly pick up its positions in case of failure.
Since the Exposure CoVaR is also of interest in its own right, we repeat part of the above analysis now for the Exposure CoVaR.

Specifically, we investigate the structural stability of the Exposure CoVaR regression
\begin{align}
	Y_t &= \alpha_{0,t}+\alpha_{1,t}\VIX_{t-1}+\epsilon_t, && Q_{\alpha}(\epsilon_t\mid\mathcal{F}_{t-1})=0,\label{eq:QR appl2}\\
	Z_t &= \beta_{0,t}+\beta_{1,t}\VIX_{t-1}+\delta_t, && \CoVaR_{\beta\mid\alpha}\big((\delta_t, \epsilon_t)^\prime\mid\mathcal{F}_{t-1}\big)=0,\label{eq:CoVaR appl2}
\end{align}
which essentially corresponds to \eqref{eq:QR appl}--\eqref{eq:CoVaR appl}.
However, here we interchange the roles of $Y_t$ and $Z_t$, such that $Y_t$ now denotes the S\&P~500 Financials log-losses and $Z_t$ the log-losses on one of the eight US G-SIBs.
Doing so allows us to study the Exposure CoVaR for each G-SIB.

\begin{table}[t!]
	\centering
		\begin{tabular}{lcd{3.1}d{3.2}d{3.1}d{3.1}d{3.1}d{3.1}d{3.2}d{3.2}}
			\toprule
	$\alpha=\beta$& $Y_t$ 							&  \multicolumn{8}{c}{$Z_t$} \\
	\cline{3-10}\noalign{\vspace{0.5ex}}
								&											& \multicolumn{1}{c}{JPM}  &  \multicolumn{1}{c}{BAC}   &  \multicolumn{1}{c}{C}  	 & \multicolumn{1}{c}{GS} 	& \multicolumn{1}{c}{BK} 		& \multicolumn{1}{c}{MS} 		& \multicolumn{1}{c}{STT}  & \multicolumn{1}{c}{WFC} \\
		\midrule
	0.9 					& SPF									& 140.3 & 111.6 &  90.2 & 142.1 &  92.0 &  93.3 & 112.8 & 112.8 \\        
	0.95 					&       							& 144.4 & 163.0^{\ast} & 139.6 & 137.7 & 149.5 & 149.8 & 175.6^{\ast} & 175.6^{\ast} \\
			\bottomrule
		\end{tabular}
	\caption{Values of test statistic $\mathcal{U}_{n,\vgamma}$ for Exposure CoVaR regression in \eqref{eq:QR appl2}--\eqref{eq:CoVaR appl2} with indicated $Y_t$ and $Z_t$. Significances at the 10\%, 5\% and 1\% level are indicated by $^\ast$, $^{\ast\ast}$ and $^{\ast\ast\ast}$, respectively.}
\label{tab:SR pred2}
\end{table}

Table~\ref{tab:SR pred2}, which is the analog of Table~\ref{tab:SR pred}, shows the results of our stability tests for the Exposure CoVaR regression in \eqref{eq:QR appl2}--\eqref{eq:CoVaR appl2}. 
Here, we find little evidence of instability in the predictive relationship.
Only for three banks (for $\alpha=\beta=0.95$) do we find some indications for change at the 10\%-level.
However, with 16 tests being carried out at the 10\%-level, finding three rejections is not uncommon even when the null holds true in all cases.
Overall, full-sample estimates of the Exposure CoVaR regression in \eqref{eq:QR appl2}--\eqref{eq:CoVaR appl2} seem credible.

To obtain these full-sample estimates, we only have to run the single quantile regression for the S\&P~500 Financials in \eqref{eq:QR appl2}, which gives
\[
	\widehat{Q}_{\alpha}(Y_t\mid\mathcal{F}_{t-1}) = -0.021 + 0.0024\cdot \VIX_{t-1}.
\]
The positive sign of the estimated slope coefficient $\alpha_1$ indicates that a high expectation of future volatility (as measured by the VIX) is indicative of higher levels of risk for the US financial system.
We display the slope coefficient estimates $\widehat{\beta}_1$ of the appertaining Exposure CoVaR regression \eqref{eq:CoVaR appl2} in Table~\ref{tab:SR full sample rev}.
The VIX has a strong positive influence on the future exposure of banks to trouble in the financial system, as the positive and (mostly) significant estimates of $\beta_1$ suggest.

The $p$-values in Table~\ref{tab:SR full sample rev} (displayed in parentheses below the estimates) are calculated based on the functional convergence in Theorem~\ref{thm:CoVaR est}.
To see precisely how, write $\widehat{\vbeta}_{n}(0,s)=\big(\widehat{\beta}_{0,n}(0,s),\ldots,\widehat{\beta}_{k,n}(0,s)\big)^\prime$ and $\vbeta_0=(\beta_{0},\ldots,\beta_{k})^\prime$, and define the self-normalizer $\mathcal{S}_{n,\beta_i}=\int_{\iota}^{1}s^2\big[\widehat{\beta}_{i,n}(0,s) - \widehat{\beta}_{i,n}(0,1)\big]^2\D s$.
Then, following ideas from \citet{Sha10}, Theorem~\ref{thm:CoVaR est} and the continuous mapping theorem imply under $\mathcal{H}_0^{\CoVaR}$ and $\beta_{i}=0$ that, as $n\to\infty$,
\[
\mathcal{T}_n:=n\widehat{\beta}_{i,n}^2(0,1)/\mathcal{S}_{n,\beta_i}\overset{d}{\longrightarrow}W^2(1)/\int_{\iota}^{1}\big[W(s)-sW(1)\big]^2\D s=:\mathcal{T},
\]
where $W(\cdot)$ denotes a standard Brownian motion.
(A similar result also holds for the QR coefficients.)
Then, if $\mathcal{T}_n$ exceeds the $(1-\nu)$-quantile of $\mathcal{T}$, we conclude that the $i$-th CoVaR coefficient is significantly different from zero at level $\nu$.
Importantly, this ``off-the-shelf'' inference method afforded by our functional convergence result remains valid regardless of whether the predictors are stationary or near-stationary.
Therefore, the $p$-values of Table~\ref{tab:SR full sample rev} are persistence-robust in the same sense as our structural break tests.

\begin{table}[t!]
	\centering
		\begin{tabular}{lcrrrrrrrr}
			\toprule
			Coef.  	& $Y_t$ 						&  \multicolumn{8}{c}{$Z_t$} \\
	\cline{3-10}\noalign{\vspace{0.5ex}}
							&											& JPM  &  BAC   &  C  	 & GS 	& BK 		& MS 		& STT  & WFC \\
		\midrule
	$\beta_0$   & SPF	   							& $-39.4$ & $-62.7$ & $-55.6$ & $-23.6$ & $-36.7$ & $-38.8$ & $-75.5$ & $-52.4$  \\
							&											& (0.000) & (0.074) & (0.014) & (0.190) & (0.212) & (0.078) & (0.418) & (0.005)  \\
	$\beta_1$   &       							& $4.79$  & $7.14$  & $6.75$  & $4.78$  & $4.63$  & $6.05$  & $8.34$  & $5.72$   \\
							&											& (0.000) & (0.031) & (0.000) & (0.005) & (0.082) & (0.006) & (0.339) & (0.001)  \\
			\bottomrule
		\end{tabular}
	\caption{Full sample coefficient estimates of predictive Exposure CoVaR regression in \eqref{eq:QR appl2}--\eqref{eq:CoVaR appl2} for $\alpha=\beta=0.95$ with indicated $Y_t$ and $Z_t$. All estimates are premultiplied with 1,000 for better readability. $p$-values displayed below estimates in parentheses.}
\label{tab:SR full sample rev}
\end{table}

Overall, this section's results suggest that the forecasting power of the VIX for risk and exposure systemic risk is rather stable.
Moreover, the VIX is a statistically significant covariate, with high expected volatility predicting higher levels of risk and systemic risk in the US financial system.
We stress that this significance result for the VIX is robust to whether or not the VIX is stationary or near-stationary.

Comparing the results of Sections~\ref{``Standard'' CoVaR} and \ref{Exposure CoVaR}, we find that the VIX has somewhat instable forecasting power for predicting the riskiness of individual banks to the system, yet is a rather stable predictor of the riskiness the financial system poses for a single bank.
This suggests that the channel linking individual institutions to the system is more prone to changes than that connecting the system with a specific bank. 
In turn, this could be due to idiosyncratic effects on the level of the bank, such as---in the most extreme case---the collapse of an entire institution (e.g., Lehman Brothers) that destabilizes the financial system.
During its demise, the risk to the system is significantly increased relative to the pre-collapse time, leading naturally to time-variation in the bank's risk to the system.

\section{Conclusion}\label{sec:Conclusion}

This paper is the first to propose structural break tests for CoVaR regressions.
Importantly, our tests are valid irrespective of whether predictors are stationary or near-stationary.
In his review of self-normalization, \citet[p.~1814]{Sha15} writes on the use of SN that ``when the dependence in the time series is too strong (say, near-integrated time series), inference of certain parameter becomes difficult because the information aggregated over time does not accumulate quickly due to strong dependence.'' 
We show in this paper that when the dependence in the predictors is not too strong (at most near-stationary), then self-normalized change point tests still work in the sense of providing unified inference on possible breaks in the predictive relationship.
Crucially, no pre-tests for the serial dependence properties of the predictors are required, which would otherwise impair the validity of the subsequent break test.
As a corollary, we also obtain change point tests for QRs, where the results are novel for near-stationary predictors.

An empirical application highlights the importance of persistence-robust tests, where the VIX offers an example of a predictor somewhere on the boundary between stationarity and non-stationarity. 
Our test can be applied validly (in the sense of holding size) regardless of whether the VIX is stationary or near-stationary.
We find the predictive relationship between volatility and future systemic risk in the US financial system to be stable or instable depending on the direction of conditioning in the CoVaR.
We also provide an application of our persistence-robust QR stability test to some putative equity premium predictors (Appendix~\ref{Quantile Predictability of the Equity Premium}) that are also borderline stationary.
Our test reveals that there are significant instabilities in quantile prediction models for the US equity premium.

Future work could consider robustifying inference on change points also with respect to (I1) predictors. 
In this case, we no longer expect SN to work, but the IVX approach of \citet{MP09} and \citet{PM09} may be a feasible option. 
Indeed, IVX has been used successfully in developing unified inference in predictive QR \citep{Lee16}.
One drawback, however, is that IVX leads to slower rates of convergence for the (NS) case and, thus, possibly lower power.
Investigations such as these are left for future research.


%

%

%

%

\appendix
\renewcommand\appendixpagename{Appendix}
\appendixpage

\numberwithin{equation}{section}
\renewcommand\thesection{\Alph{section}}
\renewcommand\theequation{\thesection.\arabic{equation}}

This appendix contains Sections~\ref{Assumptions on the Quantile Regression}--\ref{Quantile Predictability of the Equity Premium}.
Specifically, Appendices~\ref{Assumptions on the Quantile Regression}--\ref{Assumptions on the CoVaR Regression} present additional regularity conditions required for the quantile and the CoVaR regression, respectively.
In Appendix~\ref{Asymptotic Variance-Covariance Matrices}, we introduce variance-covariance matrices for the (quanitle, CoVaR) regression estimators.
 A sketch of the proof of Theorem~\ref{thm:CoVaR est} is provided in Appendix~\ref{sec:proof sketch}.
	Appendix~\ref{sec:thm1} presents Theorem~\ref{thm:std est}, which establishes the functional convergence of the quantile regression estimator.
	Additional propositions for the proof of Theorem~\ref{thm:std est} are established in Appendix~\ref{sec:QRest Lemmas}.
	Theorem~\ref{thm:CoVaR est} and Corollary~\ref{cor:SBT CoVaR} are proven in Appendix~\ref{sec:thm2}.	
	Supplementary propositions for the proof of Theorem~\ref{thm:CoVaR est} are proven in Appendix~\ref{sec:CoVaRest Lemmas}. 
	Appendix~\ref{sec:thm1 alt} presents the local power result for the quantile regression in Theorem~\ref{thm:std est alt}, where additional technical detail is provided in Appendix~\ref{sec:Prop alt QR}.
	Theorem~\ref{thm:CoVaR est alt}, i.e., the local power result for the CoVaR regression, is proven in Appendix~\ref{CoVaR alt}, with supplementary propositions established in Appendix~\ref{sec:Prop alt CoVaR}.
	Appendix~\ref{sec:proofs of cors} contains the proofs of Corollaries~\ref{thm:CoVaR est unsupervised}--\ref{thm:CoVaR est unsupervised loc alt}.
	Appendix~\ref{sec:Main Simulation} presents Monte Carlo simulations for our tests.
	The final Appendix~\ref{Quantile Predictability of the Equity Premium} investigates the stability of quantile prediction models for the US equity premium.

\section*{Notation}

We use the following notational conventions throughout this appendix. The probability space that we work on is $\big(\Omega, \mathcal{F}, \P\big)$. We denote by $K>0$ a large positive constant that may change from line to line. 
If not specified otherwise, all convergences are to be understood with respect to $n\to\infty$, and the symbols $o_{\P}$ and $O_{\P}$ carry their usual meaning.
We write $\E_{t-1}[\cdot]=\E[\cdot\mid\mathcal{F}_{t-1}]$ and $\P_{t-1}\{\cdot\}=\P\{\cdot\mid\mathcal{F}_{t-1}\}$ for short. 
The undesignated norm $\Vert\cdot\Vert$ denotes the spectral norm.
We exploit without further mention that the spectral norm is submultiplicative, i.e., that $\Vert\mA\mB\Vert\leq\Vert\mA\Vert\cdot\Vert\mB\Vert$ for conformable matrices $\mA$ and $\mB$.

\section{Assumptions on the Quantile Regression}\label{Assumptions on the Quantile Regression}

We place the following assumption on the conditional distribution of the QR errors $\epsilon_t$ in \eqref{eq:(QRalt)}.

\begin{assumption}[Conditional density of QR errors]\label{ass:innov}
\begin{enumerate}[(i)]
	\item\label{it:dens} There exists some $d>0$, such that the distribution of $\epsilon_t$ conditional on $\mathcal{F}_{t-1}$ has a Lebesgue density $f_{\epsilon_t\mid\mathcal{F}_{t-1}}(\cdot)$ on $[-d,d]$.
	
	\item\label{it:dens bound} $0<\underline{f}<\inf_{x\in[-d,d]}f_{\epsilon_t\mid\mathcal{F}_{t-1}}(x)$ and $\sup_{x\in[-d,d]}f_{\epsilon_t\mid\mathcal{F}_{t-1}}(x)\leq \overline{f}<\infty$ for all $t\geq 1$.
	
	\item\label{it:Lipschitz} There exists some $L>0$, such that for all $t\geq1$ and all $-d\leq x_1,x_2\leq d$ it holds that $\big|f_{\epsilon_t\mid\mathcal{F}_{t-1}}(x_1) - f_{\epsilon_t\mid\mathcal{F}_{t-1}}(x_2)\big|\leq L |x_1-x_2|$.

\end{enumerate}
\end{assumption}

Assumption~\ref{ass:innov} imposes smoothness conditions on the conditional distribution of $\epsilon_t\mid\mathcal{F}_{t-1}$, viz.~the existence of a uniformly bounded and Lipschitz-continuous density.
The conditions are virtually identical to the assumptions on the QR error densities in \citet[Theorem~2.2]{Fit97}.

In their cointegrated system, \citet[Eqns.~(1)--(2)]{MP20} consider $\mY_t=\mA\vx_{t-1}+\vu_{0t}$ (instead of $Y_t=(1,\vx_{t-1}^\prime)^\prime\valpha_{0,t}+ \epsilon_t$ as we do in \eqref{eq:(QRalt)}).
In particular, \citet[Assumption~LP]{MP20} impose a linear process assumption jointly on $(\vu^\prime_{0t},\vu_t^\prime)^\prime$, where $\vu_t$ are the predictor innovations.
We do not have to impose a similar joint linear process assumption on our $(\epsilon_t,\vu_t^\prime)^\prime$.
Instead, the following suffices for our purposes:

\begin{assumption}[Joint behavior of QR errors and predictors]\label{ass:K}
It holds uniformly in $s\in[0,1]$ that, as $n\to\infty$,
\[
	\sum_{t=1}^{\lfloor ns\rfloor}f_{\epsilon_t\mid\mathcal{F}_{t-1}}(0)\mD_n^{-1}\mX_{t-1}\mX_{t-1}^\prime\mD_n^{-1}\overset{\P}{\longrightarrow}s\mK
\]
for some positive definite $\mK\in\mathbb{R}^{(k+1)\times(k+1)}$.
\end{assumption}

Note that the predictors $\mX_{t-1}$ are solely functions of past $\vu_t$.
Therefore, Assumption~\ref{ass:K} implicitly restricts the serial dependence in $(\epsilon_t,\vu_t^\prime)^\prime$ by imposing a uniform law of large numbers to hold for $\big\{f_{\epsilon_t\mid\mathcal{F}_{t-1}}(0)\mD_n^{-1}\mX_{t-1}\mX_{t-1}^\prime\mD_n^{-1}\big\}$.
For stationary predictors, where $\mD_n=\sqrt{n}\mI_{k+1}$, Assumption~\ref{ass:K} reads as $n^{-1}\sum_{t=1}^{\lfloor ns\rfloor}f_{\epsilon_t\mid\mathcal{F}_{t-1}}(0)\mX_{t-1}\mX_{t-1}^\prime\overset{\P}{\longrightarrow}s\mK$.
In this form, it is virtually identical to conditions entertained by \citet[Assumption~3~(b)]{Qu08}, \citet[Assumption~A6~(ii)]{SX08}, \citet[Assumption~5~(b)]{OQ11} and \citet[Assumption C.2]{MWT25}.


\section{Assumptions on the CoVaR Regression}\label{Assumptions on the CoVaR Regression}

Deriving functional central limit theory for the CoVaR regression subsample estimates requires some further assumptions.
Specifically, we impose the following analog of Assumption~\ref{ass:innov} for the CoVaR regression errors $(\epsilon_t,\delta_t)^\prime$ in \eqref{eq:(CoVaRalt)}:

\begin{assumption}[Conditional density of CoVaR errors]\label{ass:innov CoVaR}
\begin{enumerate}[(i)]
	\item\label{it:dens CoVaR} The distribution of $(\epsilon_t,\delta_t)^\prime$ conditional on $\mathcal{F}_{t-1}$ has a Lebesgue density $f_{(\epsilon_t,\delta_t)^\prime\mid\mathcal{F}_{t-1}}(\cdot,\cdot)$ on $\mathbb{R}^2$. 
	
	\item\label{it:dens bound CoVaR} $0<\underline{f}<f_{(\epsilon_t,\delta_t)^\prime\mid\mathcal{F}_{t-1}}(x,y)$ and $f_{(\epsilon_t,\delta_t)^\prime\mid\mathcal{F}_{t-1}}(x,y)\leq \overline{f}<\infty$ for all $t\geq 1$, and all $(x,y)^\prime\in\mathbb{R}^2$ such that $F_{(\epsilon_t,\delta_t)^\prime\mid\mathcal{F}_{t-1}}(x,y)\in(0,1)$.
	
	\item\label{it:Lipschitz CoVaR} There exists some $L>0$, such that for all $t\geq1$, all $-d\leq x_1,x_2\leq d$ and all $-d\leq y_1,y_2\leq d$,
	\begin{align*}
		&\bigg|\int_{0}^{\infty}f_{(\epsilon_t,\delta_t)^\prime\mid\mathcal{F}_{t-1}}(x_1,y)\D y - \int_{0}^{\infty}f_{(\epsilon_t,\delta_t)^\prime\mid\mathcal{F}_{t-1}}(x_2,y)\D y\bigg|\leq L |x_1-x_2|,\\
		&\bigg|\int_{0}^{\infty}f_{(\epsilon_t,\delta_t)^\prime\mid\mathcal{F}_{t-1}}(x,y_1)\D x - \int_{0}^{\infty}f_{(\epsilon_t,\delta_t)^\prime\mid\mathcal{F}_{t-1}}(x,y_2)\D x\bigg|\leq L |y_1-y_2|
	\end{align*}
	with $d>0$ from Assumption~\ref{ass:innov}.
\end{enumerate}
\end{assumption}

The following condition is the counterpart to Assumption~\ref{ass:K}.

\begin{assumption}[Joint behavior of CoVaR errors and predictors]\label{ass:K ast}
It holds uniformly in $s\in[0,1]$ that, as $n\to\infty$,
\begin{align*}
	\sum_{t=1}^{\lfloor ns\rfloor}\bigg(\int_{0}^{\infty}f_{(\epsilon_t,\delta_t)^\prime\mid\mathcal{F}_{t-1}}(x,0)\D x\bigg)\mD_n^{-1}\mX_{t-1}\mX_{t-1}^\prime\mD_n^{-1}\overset{\P}{\longrightarrow}s\mK_\ast,\\
	\sum_{t=1}^{\lfloor ns\rfloor}\bigg(\int_{0}^{\infty}f_{(\epsilon_t,\delta_t)^\prime\mid\mathcal{F}_{t-1}}(0,y)\D y\bigg)\mD_n^{-1}\mX_{t-1}\mX_{t-1}^\prime\mD_n^{-1}\overset{\P}{\longrightarrow}s\mK_\dagger
\end{align*}
for some positive definite matrices $\mK_{\ast}\in\mathbb{R}^{(k+1)\times(k+1)}$ and $\mK_{\dagger}\in\mathbb{R}^{(k+1)\times(k+1)}$.
\end{assumption}

\section{Asymptotic Variance-Covariance Matrices}\label{Asymptotic Variance-Covariance Matrices}


The asymptotic variance $\mSigma$ of the QR estimator $\widehat{\valpha}_n(r,s)$ (see also Theorem~\ref{thm:std est} in Appendix~\ref{thm:std est}) depends on several matrices.
To introduce these, put $\mV_{\xi\xi}=\int_{0}^{\infty}e^{r\mC} \mOmega_{\xi\xi} e^{r\mC}\D r$ with $\mOmega_{\xi\xi}=\mF(1)\mSigma_{\varepsilon}\mF^\prime(1)$, where $\mF(1)$ and $\mSigma_{\varepsilon}$ are defined in Assumption~\ref{ass:LP}.
Also, set $\mOmega_{xx}=\vmu_x\vmu_x^\prime + \sum_{j=0}^{\infty}\mR^j\big[\mGamma_u(0) + \mR\mGamma + \mGamma^\prime\mR^\prime\big](\mR^j)^\prime$ with $\mGamma=\sum_{l=1}^{\infty}\mR^{l-1}\mGamma_u^\prime(l)$ and $\mGamma_u(l)=\E[\vu_{t}\vu_{t-l}^\prime]$, and $\mR$ from Assumption~\ref{ass:N}.
Define
\begin{equation}\label{eq:Omega}
	\mOmega=\begin{cases}\mOmega_{XX}& \text{if \textit{(I0)}},\\ \mV_{XX}& \text{if \textit{(NS)}},\end{cases}\quad\text{where}\quad\mOmega_{XX}=\begin{pmatrix}1 & \vmu_{x}^\prime\\ \vmu_{x} & \mOmega_{xx}  \end{pmatrix},\qquad \mV_{XX}=\begin{pmatrix}1 & \vzero\\ \vzero & \mV_{\xi\xi}  \end{pmatrix}.
\end{equation}
Then, the asymptotic variance of the estimator $\widehat{\valpha}_n(r,s)$ is given by
\[
	\mSigma=\alpha(1-\alpha)\mK^{-1}\mOmega\mK^{-1}.
\]

The asymptotic variance of the estimator $\widehat{\vbeta}_n(r,s)$ of the CoVaR parameters depends on the following matrices; see Theorem~\ref{thm:CoVaR est}.
Define 
\[
	\mOmega_{\ast}=\alpha^{-1}\beta(1-\beta)\mOmega + \alpha^{-1}(1-\alpha)^{-1}\big[(1-\beta)\mK-\mK_{\dagger}\big]\mK^{-1}\mOmega\mK^{-1} \big[(1-\beta)\mK-\mK_{\dagger}\big]
\]
and
\[
\overline{\mK}=\begin{pmatrix}
		\mK & \vzero\\
		\vzero & \mK_{\ast}
	\end{pmatrix},\qquad
	\overline{\mOmega}=\begin{pmatrix}
	\mOmega & \mOmega\mK^{-1}\big[(1-\beta)\mK-\mK_{\dagger}\big]\\
	\big[(1-\beta)\mK-\mK_{\dagger}\big]\mK^{-1}\mOmega & \mOmega_{\ast}
	\end{pmatrix}.
\]
Then, the asymptotic variance-covariance matrix of $\big(\widehat{\valpha}_n^\prime(r,s), \widehat{\vbeta}_n^\prime(r,s)\big)^\prime$ is given by
\[
	\overline{\mSigma}=\alpha(1-\alpha)\overline{\mK}^{-1}\cdot\overline{\mOmega}\cdot\overline{\mK}^{-1}.
\]

\section{Proof Sketch for Theorem~\ref{thm:CoVaR est}}\label{sec:proof sketch}

Here, we provide a sketch of the proof of Theorem~\ref{thm:CoVaR est}, with full detail provided in Appendices~\ref{sec:thm1}--\ref{sec:CoVaRest Lemmas}.

\begin{proof}[{\textbf{Proof of Theorem~\ref{thm:CoVaR est} (Sketch):}}]
The proof proceeds in three main steps. 
First, we establish the functional convergence of the QR estimator, viz.
\begin{equation}\label{eq:first step}
(s-r)\mD_n[\widehat{\valpha}_n(r,s) - \valpha_0]\overset{d}{\longrightarrow}\mSigma^{1/2}\big[\mW(s)-\mW(r)\big]\qquad\text{in }(\ell^{\infty}(\mathcal{D}_{\iota}))^{k+1},
\end{equation}
where $\mW(\cdot)$ is a $(k+1)$-variate standard Brownian motion, and $\mSigma$ is defined in Appendix~\ref{Asymptotic Variance-Covariance Matrices}.

Second, we similarly show the following functional convergence of the CoVaR parameter estimator:
\begin{equation}\label{eq:second step}
(s-r)\mD_n\big[\widehat{\vbeta}_n(r,s) - \vbeta_0\big]\overset{d}{\longrightarrow}\mSigma_{\ast}^{1/2}\big[\mW_{\ast}(s)-\mW_{\ast}(r)\big]\qquad\text{in }(\ell^{\infty}(\mathcal{D}_{\iota}))^{k+1},
\end{equation}
where $\mW_{\ast}(\cdot)$ is a $(k+1)$-variate standard Brownian motion and $\mSigma_{\ast}=\alpha(1-\alpha)\mK_{\ast}^{-1}\mOmega_{\ast}\mK_{\ast}^{-1}$ with $\mOmega_{\ast}$ defined in Appendix~\ref{Asymptotic Variance-Covariance Matrices}.

The third and final step then proves the desired \textit{joint} functional convergence of the QR and CoVaR estimator.

We now give a sketch of the proof of the first step. 
(Appendix~\ref{sec:thm1} contains a complete proof.)
To do so, note that by \eqref{eq:quantile model} the estimator $\widehat{\valpha}_n(r,s)$ can equivalently be written as
\begin{equation*}
\widehat{\valpha}_n(r,s)= \argmin_{\valpha\in\mathbb{R}^{k+1}} \sum_{t=\lfloor nr\rfloor+1}^{\lfloor ns\rfloor}\big[\rho_{\alpha}\big(\epsilon_{t} - (\valpha-\valpha_0)^\prime\mX_{t-1}\big) - \rho_{\alpha}(\epsilon_{t})\big].
\end{equation*}
Hence, if we define 
\[
	f_n(\vw,r,s)=\sum_{t=\lfloor nr\rfloor+1}^{\lfloor ns\rfloor}\big[\rho_{\alpha}(\epsilon_{t} - \vw^\prime\mD_n^{-1}\mX_{t-1}) - \rho_{\alpha}(\epsilon_{t})\big],
\] 
then the minimizer $\vw_n(r,s)$ of $f_n(\cdot,r,s)$ satisfies that 
\[
	\vw_n(r,s)=\mD_n\big[\widehat{\valpha}_n(r,s) - \valpha_0\big].
\]
We derive the weak limit of $\vw_n(\cdot,\cdot)$ by invoking Theorem~2 of \citet{Kat09}.
This theorem shows that if $f_n(\vw,r,s)$ is asymptotically quadratic in $\vw$, one can derive an asymptotic representation of the minimizer $\vw_n(r,s)$. 
Specifically, using the relation
\begin{equation*}
	\rho_{\alpha}(u-v) - \rho_{\alpha}(u)=-v\psi_{\alpha}(u) + (u-v)\big[\1_{\{v<u<0\}} - \1_{\{0<u<v\}}\big]\quad\text{for}\  u\neq0,
\end{equation*}
we show for the quadratic form 
\begin{align*}
	g_n(\vw,r,s)&:=-\vw^\prime \mW_n(r,s) + \frac{1}{2}\vw^\prime(s-r)\mK\vw\\
	&:=-\vw^\prime\sum_{t=\lfloor nr\rfloor+1}^{\lfloor ns\rfloor}\psi_{\alpha}(\epsilon_{t})\mD_n^{-1}\mX_{t-1} + \frac{1}{2}\vw^\prime(s-r)\mK\vw
\end{align*}
that $\big|f_n(\vw,r,s) - g_n(\vw,r,s)\big|=o_{\P}(1)$ uniformly in $(r,s)$ (see Proposition~\ref{lem:LLN} in Appendix~\ref{sec:thm1}).
Theorem~2 of \citet{Kat09} then implies the following asymptotic representation of the minimizer $\vw_n(r,s)$:
\[
	\vw_n(r,s)=\frac{1}{s-r}\mK^{-1}\mW_n(r,s)+\vr_n(r,s)
\]
for uniformly negligible $\vr_n(r,s)$.
Therefore, the convergence of $\vw_n(r,s)$ can be derived from that of $\mW_n(r,s)$.
We derive functional central limit theory for $\mW_n(r,s)$ by drawing on classic results for martingale difference arrays in Proposition~\ref{lem:CLT} from Appendix~\ref{sec:thm1}.
Doing so, we obtain that, as $n\to\infty$,
\begin{align*}
	\vw_n(r,s)\overset{d}{\longrightarrow}\vw_{\infty}(r,s)&= \frac{\sqrt{\alpha(1-\alpha)}}{s-r}\mK^{-1}\mOmega^{1/2}\big[\mW(s)-\mW(r)\big],
\end{align*}
i.e., \eqref{eq:first step}.

The proof of \eqref{eq:second step} proceeds by similar arguments because $\widehat{\vbeta}_n(r,s)$ is also a quantile regression-type estimator, comparable in structure to $\widehat{\valpha}_n(r,s)$.
 
The third step that establishes the joint convergence is standard.

For details on the last two steps, we refer to the complete proof of Theorem~\ref{thm:CoVaR est} in Appendix~\ref{sec:thm2}.
\end{proof}

\section{Functional Convergence of the QR Estimator}
\label{sec:thm1}

As a first step towards proving Theorem~\ref{thm:CoVaR est}, we show the following

\begin{thm}\label{thm:std est}
Suppose $\mathcal{H}_0^{Q}\colon\valpha_{0,1}=\ldots=\valpha_{0,n}\equiv\valpha_0$ holds true for the model \eqref{eq:(QRalt)}.
If Assumptions~\ref{ass:N}--\ref{ass:K} are satisfied, then, as $n\to\infty$,
\begin{equation*}
(s-r)\mD_n[\widehat{\valpha}_n(r,s) - \valpha_0]\overset{d}{\longrightarrow}\mSigma^{1/2}\big[\mW(s)-\mW(r)\big]\qquad\text{in }(\ell^{\infty}(\mathcal{D}_{\iota}))^{k+1},
\end{equation*}
where $\mW(\cdot)$ is a $(k+1)$-variate standard Brownian motion, and $\mSigma$ is defined in Appendix~\ref{Asymptotic Variance-Covariance Matrices}.
\end{thm}

The proof of Theorem~\ref{thm:std est} is structured similarly as that of Theorem~1 in \citet{HS25}.
Of course, the same comments as below Theorem~\ref{thm:CoVaR est} apply.

Theorems~\ref{thm:CoVaR est} and \ref{thm:std est} provide functional central limit theory for estimators derived from a convex minimization problem.
To prove the convergences, we draw on \citet{Kat09}.
But before doing so, we introduce some additional notation.
Let $T\subset\mathbb{R}^{q}$ denote a compact set.
Then, $\ell^{\infty}(T)$ is the space of real-valued bounded functions on $T$ endowed with the uniform topology \citep{Bil99,VW96}.
The $d$-fold product of this space is denoted by $(\ell^{\infty}(T))^d$, equipped with the product topology.

The proof of Theorem~\ref{thm:std est} further requires the following two preliminary Propositions~\ref{lem:CLT}--\ref{lem:LLN}, which are similar to Lemmas~2--3 in \citet{HS25}.
To introduce the first, recall that $\mathcal{D}=\big\{(r,s)\in[0,1]^2\colon 0\leq r\leq s\leq 1\big\}$ and define an empty sum to be zero.
We also write $\E_{t-1}[\cdot]=\E[\cdot\mid\mathcal{F}_{t-1}]$ and $\P_{t-1}\{\cdot\}=\P\{\cdot\mid\mathcal{F}_{t-1}\}$ for short.

\begin{prop}\label{lem:CLT}
Under the assumptions of Theorem~\ref{thm:std est} it holds that, as $n\to\infty$,
\begin{equation*}
	\sum_{t=\lfloor nr\rfloor+1}^{\lfloor ns\rfloor}\psi_{\alpha}(\epsilon_t)\mD_n^{-1}\mX_{t-1}
	\overset{d}{\longrightarrow}\sqrt{\alpha(1-\alpha)}\mOmega^{1/2}\big[\mW(s)-\mW(r)\big]\qquad\text{in }(\ell^{\infty}(\mathcal{D}))^{k+1},
\end{equation*}
where $\mW(\cdot)$ is a $(k+1)$-variate standard Brownian motion, and $\mOmega$ is defined in \eqref{eq:Omega}.
\end{prop}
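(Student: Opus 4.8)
The plan is to recognise the left-hand side as the partial-sum process of a martingale-difference array and to apply a functional martingale central limit theorem. Set $\vzeta_{n,t}:=\psi_{\alpha}(\varepsilon_t)\,\mD_n^{-1}\mX_{t-1}$ for $t=1,\dots,n$, so the claim is a functional limit for $s\mapsto\sum_{t=1}^{\lfloor ns\rfloor}\vzeta_{n,t}$. Since $\mX_{t-1}$ is $\mathcal{F}_{t-1}$-measurable and $Q_{\alpha}(\varepsilon_t\mid\mathcal{F}_{t-1})=0$ together with Assumption~\ref{ass:innov} (which provides a conditional density near the origin, hence no atom of $\varepsilon_t\mid\mathcal{F}_{t-1}$ at $0$) give $\P_{t-1}\{\varepsilon_t\le 0\}=\alpha$, we obtain $\E_{t-1}[\psi_{\alpha}(\varepsilon_t)]=\alpha-\P_{t-1}\{\varepsilon_t\le 0\}=0$ and hence $\E_{t-1}[\vzeta_{n,t}]=\vzero$; thus $(\vzeta_{n,t},\mathcal{F}_t)_{t\le n}$ is a martingale-difference array. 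By the Cram\'er--Wold device it then suffices to prove, for each fixed $\vlambda\in\mathbb{R}^{k+1}$, that the c\`{a}dl\`{a}g process $s\mapsto\vlambda^\prime\sum_{t=1}^{\lfloor ns\rfloor}\vzeta_{n,t}$ converges in $D[0,1]$ to $\sqrt{\alpha(1-\alpha)\,\vlambda^\prime\mOmega\vlambda}\,W(\cdot)$ for a standard Brownian motion $W$, and then to lift the result to $D_{k+1}[0,1]$.

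For the conditional-variance condition, a direct computation gives the \emph{exact} identity $\E_{t-1}[\psi_{\alpha}(\varepsilon_t)^2]=\alpha^2+(1-2\alpha)\P_{t-1}\{\varepsilon_t\le 0\}=\alpha(1-\alpha)$, whence $\sum_{t=1}^{\lfloor ns\rfloor}\E_{t-1}[\vzeta_{n,t}\vzeta_{n,t}^\prime]=\alpha(1-\alpha)\sum_{t=1}^{\lfloor ns\rfloor}\mD_n^{-1}\mX_{t-1}\mX_{t-1}^\prime\mD_n^{-1}$. It then remains to establish $\sum_{t=1}^{\lfloor ns\rfloor}\mD_n^{-1}\mX_{t-1}\mX_{t-1}^\prime\mD_n^{-1}\overset{\P}{\longrightarrow}s\mOmega$ uniformly in $s\in[0,1]$, with $\mOmega=\mOmega_{XX}$ under (I0) and $\mOmega=\mV_{XX}$ under (NS). Under (I0), $\mD_n=\sqrt{n}\,\mI_{k+1}$ and this is a uniform law of large numbers for the partial sums of the strictly stationary, integrable sequence $\mX_{t-1}\mX_{t-1}^\prime$, whose mean is exactly the matrix $\mOmega_{XX}$ defined before Theorem~\ref{thm:std est}. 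Under (NS) I would decompose $\mD_n^{-1}\mX_{t-1}\mX_{t-1}^\prime\mD_n^{-1}$ into its $(1,1)$-block (which sums to $\lfloor ns\rfloor/n\to s$), its lower-right predictor block $n^{-(1+\kappa)}\sum_{t=1}^{\lfloor ns\rfloor}(\vmu_{x}+\vxi_{t-1})(\vmu_{x}+\vxi_{t-1})^\prime$, and the off-diagonal blocks; the off-diagonal blocks and the $\vmu_{x}\vmu_{x}^\prime$ contribution are damped by positive powers of $n$ (of orders $n^{-\kappa/2}$ and $n^{-(1-\kappa)/2}$) that vanish because $0<\kappa<1$, while the leading term $n^{-(1+\kappa)}\sum_{t=1}^{\lfloor ns\rfloor}\vxi_{t-1}\vxi_{t-1}^\prime$ converges to $s\mV_{\xi\xi}$ by the mildly-integrated sample-moment asymptotics of \citet{MP20}, so the limit is $s\mV_{XX}$.

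For the negligibility (Lindeberg) condition, observe that $|\psi_{\alpha}(u)|\le\max\{\alpha,1-\alpha\}\le 1$, so $\|\vzeta_{n,t}\|\le\|\mD_n^{-1}\mX_{t-1}\|$ and thus, for every $\delta>0$, $\sum_{t=1}^{n}\E_{t-1}\big[\|\vzeta_{n,t}\|^2\1_{\{\|\vzeta_{n,t}\|>\delta\}}\big]\le\sum_{t=1}^{n}\|\mD_n^{-1}\mX_{t-1}\|^2\1_{\{\|\mD_n^{-1}\mX_{t-1}\|>\delta\}}$. Under (I0) the right-hand side is $n^{-1}\sum_{t=1}^{n}\|\mX_{t-1}\|^2\1_{\{\|\mX_{t-1}\|^2>\delta^2 n\}}$, which is $o_{\P}(1)$ by a standard truncation argument relying on the square-integrability of $\mX_{t-1}$ --- itself a consequence of the absolutely summable linear-filter structure of $\vu_t$ and of the stationary $\vxi_t$ together with the uniform integrability of $\{\|\vvarepsilon_t\|^2\}$ in Assumption~\ref{ass:LP}. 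Under (NS) the scalings in $\mD_n^{-1}$ (the factor $n^{-1/2}$ against the constant and $n^{-(1+\kappa)/2}$ against $\vx_{t-1}=\vmu_{x}+\vxi_{t-1}=O_{\P}(n^{\kappa/2})$) force $\max_{t\le n}\|\mD_n^{-1}\mX_{t-1}\|\overset{\P}{\longrightarrow}0$, so the same bound is $o_{\P}(1)$ \emph{a fortiori}. Given the uniform-in-$s$ conditional-variance convergence to the deterministic, continuous limit $s\,\alpha(1-\alpha)\,\vlambda^\prime\mOmega\vlambda$ and this Lindeberg condition, a functional martingale central limit theorem yields $\vlambda^\prime\sum_{t=1}^{\lfloor ns\rfloor}\vzeta_{n,t}\overset{d}{\longrightarrow}\sqrt{\alpha(1-\alpha)\,\vlambda^\prime\mOmega\vlambda}\,W(\cdot)$ in $D[0,1]$. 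Since each coordinate of $\sum_{t=1}^{\lfloor ns\rfloor}\vzeta_{n,t}$ is a particular such linear combination and hence tight, the $\mathbb{R}^{k+1}$-valued process is tight, and Cram\'er--Wold identifies its limit as $\sqrt{\alpha(1-\alpha)}\,\mOmega^{1/2}\mW(\cdot)$ in $D_{k+1}[0,1]$.

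The main obstacle is the uniform-in-$s$ convergence of the conditional variance in the (NS) regime: this is not a plain ergodic law of large numbers but hinges on the sample-moment limit theory for mildly integrated regressors --- in particular $n^{-(1+\kappa)}\sum_{t=1}^{\lfloor ns\rfloor}\vxi_{t-1}\vxi_{t-1}^\prime\overset{\P}{\longrightarrow}s\mV_{\xi\xi}$ with $\mV_{\xi\xi}=\int_{0}^{\infty}e^{r\mC}\mOmega_{\xi\xi}e^{r\mC}\D r$, together with the negligibility of the mixed-rate blocks --- which I would import from \citet{MP20}. Everything else (the martingale-difference structure, the exact value $\alpha(1-\alpha)$ of the conditional second moment, and the Lindeberg bound) is elementary.
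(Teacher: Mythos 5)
Your proposal follows essentially the same route as the paper: recognize the summands $\psi_{\alpha}(\varepsilon_t)\mD_n^{-1}\mX_{t-1}$ as a vector martingale-difference array (using $Q_{\alpha}(\varepsilon_t\mid\mathcal{F}_{t-1})=0$ and the conditional density to get $\E_{t-1}[\psi_{\alpha}(\varepsilon_t)]=0$ and $\E_{t-1}[\psi_{\alpha}^2(\varepsilon_t)]=\alpha(1-\alpha)$), show the conditional-variance process converges via $\sum_{t\le\lfloor ns\rfloor}\mD_n^{-1}\mX_{t-1}\mX_{t-1}^\prime\mD_n^{-1}\to_{\P} s\mOmega$ (the paper's Lemma~\ref{lem:SUM}, built on \citet{MP20}), verify the Lindeberg condition by bounding $\|\psi_{\alpha}(\varepsilon_t)\mD_n^{-1}\mX_{t-1}\|\le\|\mD_n^{-1}\mX_{t-1}\|$, and invoke a functional MDA CLT. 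The only organizational difference is that you go scalar-by-scalar via Cram\'er--Wold plus componentwise tightness, whereas the paper applies the vector-valued Theorem~3.33 of \citet[Chapter~VIII]{JS87} directly.

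Two points to tighten. First, under (I0) you call $\mX_{t-1}\mX_{t-1}^\prime$ a strictly stationary sequence, but Assumption~\ref{ass:N} only imposes $\vxi_0=O_{\P}(1)$, not that $\vxi_0$ is drawn from the stationary law. The process thus carries a transient component $\mR^{t-1}\vxi_0$ (and, under (NS), additionally $n^{-(1+\kappa)/2}\vmu_x$ and $n^{-(1+\kappa)/2}\mR_n^{t-1}\vxi_0$), and the paper explicitly peels these off at the start of the proof and shows their contribution to the partial sum is uniformly $o_{\P}(1)$ via a maximal martingale inequality, before applying the MDA FCLT to the zero-initialized normalized process $\vxi_{nt}$. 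Your version works too because Lemma~\ref{lem:SUM} and Lemma~\ref{lem:MAX} accommodate the initialization, but you should not appeal to strict stationarity as the justification. Second, the claim $\max_{t\le n}\|\mD_n^{-1}\mX_{t-1}\|\to_{\P}0$ that your Lindeberg step hinges on is not just rate-counting: under both regimes it relies on the uniform integrability of $\{\|\vvarepsilon_t\|^2\}$ in Assumption~\ref{ass:LP} through \citeauthor{MP20}'s Proposition~A1(ii), which is exactly the content of the paper's Lemma~\ref{lem:MAX}. With these made explicit your argument coincides with the paper's proof.
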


\begin{proof}
See Appendix~\ref{sec:QRest Lemmas}.
\end{proof}

\begin{prop}\label{lem:LLN}
Under the assumptions of Theorem~\ref{thm:std est} it holds for fixed $\vw\in\mathbb{R}^{k+1}$ that, as $n\to\infty$,
\begin{multline*}
	\sup_{0\leq r< s\leq 1}\bigg|\sum_{t=\lfloor nr\rfloor+1}^{\lfloor ns\rfloor}\big(\epsilon_{t} - \vw^\prime\mD_n^{-1}\mX_{t-1}\big)\big[\1_{\{\vw^\prime\mD_n^{-1}\mX_{t-1}<\epsilon_{t}<0\}} - \1_{\{0<\epsilon_{t}<\vw^\prime\mD_n^{-1}\mX_{t-1}\}}\big]
\\
-\frac{1}{2}(s-r)\vw^\prime\mK \vw\bigg|=o_{\P}(1).
\end{multline*}
\end{prop}

\begin{proof}
See Appendix~\ref{sec:QRest Lemmas}.
\end{proof}

\begin{proof}[{\textbf{Proof of Theorem~\ref{thm:std est}:}}]
The estimator $\widehat{\valpha}_n(r,s)$ can equivalently be written as
\begin{align*}
\widehat{\valpha}_n(r,s) &= \argmin_{\valpha\in\mathbb{R}^{k+1}} \sum_{t=\lfloor nr\rfloor+1}^{\lfloor ns\rfloor}\big[\rho_{\alpha}(Y_t-\mX_{t-1}^\prime\valpha) - \rho_{\alpha}(\epsilon_{t})\big]\\
&= \argmin_{\valpha\in\mathbb{R}^{k+1}} \sum_{t=\lfloor nr\rfloor+1}^{\lfloor ns\rfloor}\big[\rho_{\alpha}\big(Y_t-\mX_{t-1}^\prime\valpha_0-\mX_{t-1}^\prime(\valpha-\valpha_0)\big) - \rho_{\alpha}(\epsilon_{t})\big]\\
&\overset{\eqref{eq:quantile model}}{=} \argmin_{\valpha\in\mathbb{R}^{k+1}} \sum_{t=\lfloor nr\rfloor+1}^{\lfloor ns\rfloor}\big[\rho_{\alpha}\big(\epsilon_{t} - (\valpha-\valpha_0)^\prime\mX_{t-1}\big) - \rho_{\alpha}(\epsilon_{t})\big].
\end{align*}
Therefore, if we define 
\[
 f_n(\vw,r,s)=\sum_{t=\lfloor nr\rfloor+1}^{\lfloor ns\rfloor}\big[\rho_{\alpha}(\epsilon_{t} - \vw^\prime\mD_n^{-1}\mX_{t-1}) - \rho_{\alpha}(\epsilon_{t})\big],
\] 
then the minimizer $\vw_n(r,s)$ of $f_n(\cdot,r,s)$ satisfies that 
\[
	\vw_n(r,s)=\mD_n\big[\widehat{\valpha}_n(r,s) - \valpha_0\big].
\]
To derive the weak limit of $\vw_n(\cdot,\cdot)$, we invoke Theorem~2 of \citet{Kat09}.
We first rewrite $f_n(\cdot, r,s)$ by exploiting that
\begin{equation}\label{eq:(1)}
	\rho_{\alpha}(u-v) - \rho_{\alpha}(u)=-v\psi_{\alpha}(u) + (u-v)\big[\1_{\{v<u<0\}} - \1_{\{0<u<v\}}\big]\quad\text{for}\  u\neq0.
\end{equation}
Observe that from Assumption~\ref{ass:innov},
\[
	\P\big\{\exists\ t\in\mathbb{N}\colon \epsilon_t=0\big\}=\P\bigg\{\bigcup_{t\in\mathbb{N}}\{ \epsilon_t=0\}\bigg\}\leq\sum_{t\in\mathbb{N}}\P\big\{ \epsilon_t=0\big\}=0.
\]
In light of this and \eqref{eq:(1)}, we may write that almost surely (a.s.)
\begin{align*}
	f_n(\vw,r,s) &= -\vw^\prime\sum_{t=\lfloor nr\rfloor+1}^{\lfloor ns\rfloor}\psi_{\alpha}(\epsilon_{t})\mD_n^{-1}\mX_{t-1}\\
	& \hspace{2cm} + \sum_{t=\lfloor nr\rfloor+1}^{\lfloor ns\rfloor}\big(\epsilon_{t} - \vw^\prime\mD_n^{-1}\mX_{t-1}\big)\big[\1_{\{\vw^\prime\mD_n^{-1}\mX_{t-1}<\epsilon_{t}<0\}} - \1_{\{0<\epsilon_{t}<\vw^\prime\mD_n^{-1}\mX_{t-1}\}}\big].
\end{align*}

Define
\[
	g_n(\vw,r,s)=-\vw^\prime\sum_{t=\lfloor nr\rfloor+1}^{\lfloor ns\rfloor}\psi_{\alpha}(\epsilon_{t})\mD_n^{-1}\mX_{t-1} + \frac{1}{2}\vw^\prime(s-r)\mK\vw.
\]
By Proposition~\ref{lem:LLN}, we obtain that
\[
	\sup_{(r,s)\in\mathcal{D}_{\iota}}\big|f_n(\vw,r,s) - g_n(\vw,r,s)\big|=o_{\P}(1)
\]
for each $\vw\in\mathbb{R}^{k+1}$ (which is the equivalent of equation (10) in \citet{Kat09}).
Moreover, as $n\to\infty$,
\begin{equation}\label{eq:WN1}
	\mW_n(r,s):=\sum_{t=\lfloor nr\rfloor+1}^{\lfloor ns\rfloor}\psi_{\alpha}(\epsilon_{t})\mD_n^{-1}\mX_{t-1}\overset{d}{\longrightarrow}\sqrt{\alpha(1-\alpha)}\mOmega^{1/2}\big[\mW(s)-\mW(r)\big]\quad\text{in }(\ell^{\infty}(\mathcal{D}_{\iota}))^{k+1}
\end{equation}
by Proposition~\ref{lem:CLT}.
Hence, by the continuous mapping theorem (CMT),
\[
	\limsup_{n\to\infty}\P\bigg\{\sup_{(r,s)\in\mathcal{D}_{\iota}}\big\Vert\mW_n(r,s)\big\Vert>M\bigg\}=\P\bigg\{ \sup_{(r,s)\in\mathcal{D}_{\iota}}\Big\Vert\sqrt{\alpha(1-\alpha)}\mOmega^{1/2}\big[\mW(s)-\mW(r)\big]\Big\Vert>M\bigg\},
\]
which can be made arbitrarily small by choosing $M>0$ sufficiently large. (This is the analog of equation (11) in \citet{Kat09}.)

We may now invoke Theorem~2 in \citet{Kat09} to deduce that
\[
	\vw_n(r,s)=\frac{1}{s-r}\mK^{-1}\mW_n(r,s)+\vr_n(r,s),
\]
where $\sup_{(r,s)\in\mathcal{D}_{\iota}}\big\Vert\vr_n(r,s)\big\Vert=o_{\P}(1)$.
Therefore, it holds by \eqref{eq:WN1} that, as $n\to\infty$,
\begin{equation}\label{eq:thm1}
	\vw_n(r,s)\overset{d}{\longrightarrow}\frac{\sqrt{\alpha(1-\alpha)}}{s-r}\mK^{-1}\mOmega^{1/2}\big[\mW(s)-\mW(r)\big]\qquad\text{in }(\ell^{\infty}(\mathcal{D}_{\iota}))^{k+1}.
\end{equation}
Premultiplying this with $(s-r)$, the conclusion follows.
\end{proof}

\section{Proofs of Propositions~\ref{lem:CLT}--\ref{lem:LLN}}\label{sec:QRest Lemmas}

The proofs of Propositions~\ref{lem:CLT}--\ref{lem:LLN} require two preliminary lemmas.
For these, use Assumption~\ref{ass:N} to write
\begin{align}
	\vxi_t &= \mR_n^{t}\vxi_0 + \sum_{j=0}^{t-1}\mR_n^{j} \vu_{t-j}\notag\\
	 &=: \mR_n^{t}\vxi_0 + \vxi_{0t}\label{eq:(B.1m)}\\
	&=: \mR_n^{t}\vxi_0 + n^{\kappa/2}\vxi_{nt},\label{eq:(p.4.decomp)}
\end{align}
where $\vxi_{0t}$ corresponds to a predictor with zero initialization (i.e., $\vxi_0=\vzero$) and $\vxi_{nt}=n^{-\kappa/2}\sum_{j=0}^{t-1}\mR_n^{j} \vu_{t-j}$ denotes what \citet{MP20} call a \textit{normalized near-stationary process} in their Proposition~A1~(ii).
Note that $\kappa=0$ in the (I0) case, such that $\vxi_{0t}=\vxi_{nt}$ under (I0).
To emphasize the dependence of $\vxi_t$ on the starting value $\vxi_0$, we often write $\vxi_t=\vxi_t(\vxi_0)$ and, by relation \eqref{eq:(B.1m)},
\begin{equation}\label{eq:(E.222)}
	\mX_{t}=\mX_{t}(\vmu_x,\vxi_0)=\begin{pmatrix} 1\\ \vmu_{x} + \vxi_{t}(\vxi_0)\end{pmatrix}=\begin{pmatrix}1\\ \vmu_{x} + \mR_n^{t}\vxi_0 + \vxi_{0t}\end{pmatrix}.
\end{equation}

\begin{lem}\label{lem:MAX}
Under Assumptions~\ref{ass:N}--\ref{ass:LP} it holds that, as $n\to\infty$,
\[
	\max_{t=1,\ldots,n}\big\Vert\mD_n^{-1}\mX_{t-1}\big\Vert=o_{\P}(1).
\]
\end{lem}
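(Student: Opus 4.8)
The plan is to reduce the claim to a maximal inequality for the stochastic part of the regressor, and then to control that maximum by a union bound combined with the uniform integrability built into Assumption~\ref{ass:LP}.

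First I would peel off the trivial pieces. Set $a_n=\sqrt n$ in case (I0) and $a_n=n^{(1+\kappa)/2}$ in case (NS). Submultiplicativity of the spectral norm gives $\Vert\mD_n^{-1}\mX_{t-1}\Vert\le n^{-1/2}+a_n^{-1}\Vert\vmu_x\Vert+a_n^{-1}\Vert\vxi_{t-1}\Vert$, where it is essential that in the (NS) case the $k$ stochastic coordinates are divided by the \emph{larger} rate $a_n=n^{(1+\kappa)/2}$, not by $\Vert\mD_n^{-1}\Vert=n^{-1/2}$ (the crude bound $\Vert\mD_n^{-1}\mX_{t-1}\Vert\le\Vert\mD_n^{-1}\Vert\,\Vert\mX_{t-1}\Vert$ is too lossy and fails). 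The first two terms are $O(n^{-1/2})$ uniformly in $t$, so it remains to show $\max_{1\le t\le n}a_n^{-1}\Vert\vxi_{t-1}\Vert=o_{\P}(1)$. Using the decomposition \eqref{eq:(B.1m)}--\eqref{eq:(p.4.decomp)}, $\vxi_{t-1}=\mR_n^{t-1}\vxi_0+n^{\kappa/2}\vxi_{n,t-1}$, this splits into an initialization term and the zero-initialized (normalized near-stationary) term.

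For the initialization term, under (I0) the condition $\rho(\mR)<1$ gives $\sup_{j\ge0}\Vert\mR^{j}\Vert\le K$, hence $\max_{t}\Vert\mR^{t-1}\vxi_0\Vert\le K\Vert\vxi_0\Vert=O_{\P}(1)=o_{\P}(\sqrt n)$ since $\vxi_0=O_{\P}(1)$. Under (NS), negative stability of $\mC$ gives (as in \citealp{MP20}) the uniform bound $\sup_{0\le j\le n}\Vert\mR_n^{j}\Vert\le K$, so $\max_t a_n^{-1}\Vert\mR_n^{t-1}\vxi_0\Vert\le K\,n^{-(1+\kappa)/2}\Vert\vxi_0\Vert=o_{\P}(n^{-1/2})=o_{\P}(1)$ because $\vxi_0=o_{\P}(n^{\kappa/2})$. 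The core of the proof is the zero-initialized part: since $a_n^{-1}n^{\kappa/2}=n^{-1/2}$ in \emph{both} cases (with $\vxi_{nt}=\vxi_{0t}$ when $\kappa=0$), I must show $\max_{1\le t\le n}\Vert\vxi_{n,t-1}\Vert=o_{\P}(\sqrt n)$. For $\delta>0$, a union bound and Markov's inequality yield
\begin{align*}
\P\Big\{\max_{1\le t\le n}\Vert\vxi_{n,t-1}\Vert>\delta\sqrt n\Big\}
&\le\sum_{t=1}^{n}\P\big\{\Vert\vxi_{n,t-1}\Vert^2>\delta^2 n\big\}\\
&\le\frac{1}{\delta^2 n}\sum_{t=1}^{n}\E\Big[\Vert\vxi_{n,t-1}\Vert^2\,\1_{\{\Vert\vxi_{n,t-1}\Vert^2>\delta^2 n\}}\Big]
\;\le\;\frac{1}{\delta^2}\,\phi(\delta^2 n),
\end{align*}
where $\phi(\lambda):=\sup_{m\ge1,\,0\le s\le m-1}\E\big[\Vert\vxi_{m,s}\Vert^2\,\1_{\{\Vert\vxi_{m,s}\Vert^2>\lambda\}}\big]$. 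It therefore suffices that the triangular array $\{\Vert\vxi_{n,t}\Vert^2:n\ge1,\,0\le t\le n-1\}$ is uniformly integrable, i.e.\ $\phi(\lambda)\to0$ as $\lambda\to\infty$. In the (I0) case this follows directly: writing $\vxi_{0t}=\sum_{m\ge0}\mPsi_m^{(t)}\vvarepsilon_{t-m}$ with $\mPsi_m^{(t)}=\sum_{j=0}^{\min(m,t-1)}\mR^{j}\mF_{m-j}$, the bound $\sup_{t}\Vert\mPsi_m^{(t)}\Vert\le K\sum_{j=0}^{m}\rho_0^{j}\Vert\mF_{m-j}\Vert=:\psi_m$ (for $\rho_0\in(\rho(\mR),1)$) has $\Psi:=\sum_m\psi_m<\infty$, so by Cauchy--Schwarz $\Vert\vxi_{0t}\Vert^2\le\Psi\sum_m\psi_m\Vert\vvarepsilon_{t-m}\Vert^2$, and uniform integrability of $\{\Vert\vvarepsilon_t\Vert^2\}$ from Assumption~\ref{ass:LP} propagates through this absolutely summable combination. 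In the (NS) case I would instead invoke the moment bounds for the normalized near-stationary process in Proposition~A1 of \citet{MP20}, which give $\sup_{n,t}\E\Vert\vxi_{nt}\Vert^2<\infty$ together with the convergence of its second moments; combined with the uniform integrability of the innovations, this delivers uniform integrability of $\{\Vert\vxi_{nt}\Vert^2\}$, and hence $\phi(\delta^2n)\to0$.

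I expect the only non-routine point to be the upgrade from $O_{\P}$ to $o_{\P}$: bounded second moments alone give $\max_{t\le n}\Vert\vxi_{n,t-1}\Vert=O_{\P}(\sqrt n)$ but not $o_{\P}(\sqrt n)$, so one must genuinely use the uniform integrability in Assumption~\ref{ass:LP}. Moreover, in the (NS) case the coefficient matrices $\mR_n^{j}$ are not absolutely summable ($\sum_{j}\Vert\mR_n^{j}\Vert\asymp n^{\kappa}$), so the naive ``uniformly integrable innovations $\Rightarrow$ uniformly integrable linear combination'' argument does not apply and one has to lean on the triangular-array moment structure of \citet{MP20}.
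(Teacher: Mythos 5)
Your proof is correct and structurally parallels the paper's: same decomposition into a deterministic piece, an $\mR_n^{t-1}\vxi_0$ initialization piece, and the zero-initialized normalized process $\vxi_{n,t-1}$; same $\sup_{j\le n}\Vert\mR_n^j\Vert=O(1)$ bound for the initialization; and ultimately the same reduction to the row-wise Lindeberg-type condition $\max_{t\le n}\E\big[\Vert\vxi_{n,t-1}\Vert^2\1_{\{\Vert\vxi_{n,t-1}\Vert^2>\delta^2 n\}}\big]\to0$. Your correct observation that the crude $\Vert\mD_n^{-1}\Vert\,\Vert\mX_{t-1}\Vert$ bound is too lossy is exactly what the paper's explicit coordinatewise decomposition is designed to avoid. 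Where you diverge: you use a union bound plus Markov on each summand, while the paper bounds $\E\big[\max_t\,\cdot\,\big]$ via the inequality $\max_t X_t\le\delta^2+\sum_t X_t\1_{\{X_t>\delta^2\}}$ and then applies Chebyshev; the two routes are equivalent in effort and land on the same display. The more substantive difference is how you discharge the Lindeberg condition: the paper simply cites Proposition~A1~(ii) of \citet{MP20}, which covers both (I0) and (NS) at once, whereas you propose to re-derive uniform integrability of $\{\Vert\vxi_{0t}\Vert^2\}$ from scratch in (I0) via the absolutely summable linear representation and Cauchy--Schwarz. That argument works (UI of $\{\Vert\vvarepsilon_t\Vert^2\}$ does propagate through nonnegative absolutely summable linear combinations, by the $\varepsilon$--$\delta$ characterization), but it is more work than necessary. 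Your (NS) justification is, by contrast, slightly underspecified: bounded second moments plus convergence of second moments do not by themselves give UI of the squares; you really need the explicit truncated-moment statement of Proposition~A1~(ii), so you should invoke that directly, as the paper does. Also minor: you do not need the double-sup $\phi(\lambda)$; bounding $\sum_{t\le n}\E[\cdot]$ by $n\max_{t\le n}\E[\cdot]$ with the threshold $\delta^2 n$ tied to the row index $n$ is cleaner and matches what MP20 actually provides.
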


\begin{proof}
See Appendix~\ref{sec:VaRest Lemmas help}.
\end{proof}

\begin{lem}\label{lem:SUM}
Under Assumptions~\ref{ass:N}--\ref{ass:LP} it holds for every $s\in[0,1]$ that, as $n\to\infty$,
\[
	\sum_{t=1}^{\lfloor ns\rfloor}\mD_n^{-1}\mX_{t-1}(\vmu,\vxi)\mX_{t-1}^\prime(\vmu,\vxi)\mD_n^{-1}\overset{\P}{\longrightarrow}s\mOmega,
\]
where $\mOmega$ is defined in \eqref{eq:Omega}, $\vmu=\vmu_{x}$ under (I0), and for the initialization $\vxi=O_{\P}(1)$ under (I0) and $\vxi=o_{\P}(n^{\kappa/2})$ under (NS).
\end{lem}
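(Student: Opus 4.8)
The approach is to work blockwise with respect to the partition $\mX_{t-1}=(1,\vx_{t-1}^\prime)^\prime=(1,\vmu^\prime+\vxi_{t-1}^\prime)^\prime$ and the decomposition $\vxi_{t-1}=\mR_n^{t-1}\vxi_0+n^{\kappa/2}\vxi_{n,t-1}$ from \eqref{eq:(p.4.decomp)} (recall that $\kappa=0$ and $\vxi_{nt}=\vxi_{0t}$ in the (I0) case), treating the two persistence regimes separately. In $2\times2$ block form, the $(1,1)$-entry of $\mD_n^{-1}\mX_{t-1}\mX_{t-1}^\prime\mD_n^{-1}$ equals $1/n$, the off-diagonal block equals $n^{-1-\kappa/2}\vx_{t-1}^\prime$, and the lower-right $k\times k$ block equals $n^{-1-\kappa}\vx_{t-1}\vx_{t-1}^\prime$ (with the convention $\kappa=0$ under (I0), this reproduces $\mD_n=\sqrt n\mI_{k+1}$). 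Summing over $t\leq\lfloor ns\rfloor$ makes the $(1,1)$-entry $\lfloor ns\rfloor/n\to s$, so it remains to identify the limits of the summed off-diagonal and lower-right blocks.

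Next I would substitute $\vx_{t-1}=\vmu+\mR_n^{t-1}\vxi_0+n^{\kappa/2}\vxi_{n,t-1}$ and split each block into (i) multiples of $\vmu$ or $\vmu\vmu^\prime$ times $\lfloor ns\rfloor$, (ii) terms involving $\mR_n^{t-1}\vxi_0$, (iii) cross terms, and (iv) a stochastic core built from $\sum\vxi_{n,t-1}$ and $\sum\vxi_{n,t-1}\vxi_{n,t-1}^\prime$. For (i): under (NS) the normalizing powers $n^{-1-\kappa/2}$ and $n^{-1-\kappa}$ differ from the $n^{-1}$ appropriate for an $O(\lfloor ns\rfloor)$-term by the vanishing factors $n^{-\kappa/2}$ and $n^{-\kappa}$, so all $\vmu$-contributions drop out (in particular the result holds for any fixed $\vmu$ under (NS)); under (I0) these powers coincide and the terms survive as $s\vmu_x$ and $s\vmu_x\vmu_x^\prime$. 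For (ii): using $\sum_{t\geq0}\Vert\mR^t\Vert<\infty$ under (I0) (since $\rho(\mR)<1$) and $\sum_{t\geq0}\Vert\mR_n^t\Vert=O(n^{\kappa})$, $\sum_{t\geq0}\Vert\mR_n^t\Vert^2=O(n^{\kappa})$ under (NS) \citep{MP20}, together with $\vxi_0=O_{\P}(1)$, resp.\ $o_{\P}(n^{\kappa/2})$, one checks these partial sums are of smaller order than the normalization can absorb, hence negligible; (iii) then follows from (ii) and (iv), with the Cauchy--Schwarz inequality dispatching the $\mR_n^{t-1}\vxi_0$-versus-$\vxi_{n,t-1}$ cross terms.

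It then remains to identify the stochastic core. Under (I0), $\vxi_{0,t-1}$ differs from a strictly stationary causal linear process in the martingale difference sequence $\{\vvarepsilon_t\}$ by a geometrically decaying term, and Assumption~\ref{ass:LP} (absolutely summable coefficients with $\sum_j j\Vert\mF_j\Vert^2<\infty$ and uniformly integrable $\Vert\vvarepsilon_t\Vert^2$) yields summable autocovariances; a weak law of large numbers then gives $n^{-1}\sum_{t=1}^{\lfloor ns\rfloor}\vxi_{0,t-1}\overset{\P}{\longrightarrow}\vzeros$ and $n^{-1}\sum_{t=1}^{\lfloor ns\rfloor}\vxi_{0,t-1}\vxi_{0,t-1}^\prime\overset{\P}{\longrightarrow}s\,\E[\vxi_0\vxi_0^\prime]$, where by construction $\E[\vxi_0\vxi_0^\prime]=\mOmega_{xx}-\vmu_x\vmu_x^\prime$ is the stationary solution of the Lyapunov recursion associated with \eqref{eq:xi}; assembling this with (i) and the $(1,1)$-limit yields $s\mOmega_{XX}$. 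Under (NS), $\vxi_{nt}$ is exactly the normalized near-stationary process of \citet[Prop.~A1]{MP20}, for which their results supply $n^{-1}\sum_{t=1}^{\lfloor ns\rfloor}\vxi_{n,t-1}\overset{\P}{\longrightarrow}\vzeros$ and $n^{-1}\sum_{t=1}^{\lfloor ns\rfloor}\vxi_{n,t-1}\vxi_{n,t-1}^\prime\overset{\P}{\longrightarrow}s\,\mV_{\xi\xi}$; after the exact cancellation $n^{-1-\kappa}\cdot n^{\kappa}=n^{-1}$ the lower-right block converges to $s\mV_{\xi\xi}$, the off-diagonal block vanishes, and together with the $(1,1)$-limit $s$ this gives $s\mV_{XX}$.

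The main obstacle is the stochastic-core step in the (NS) regime---namely $n^{-1}\sum\vxi_{n,t-1}\vxi_{n,t-1}^\prime\overset{\P}{\longrightarrow}s\mV_{\xi\xi}$ and $n^{-1}\sum\vxi_{n,t-1}\overset{\P}{\longrightarrow}\vzeros$---which is genuinely delicate because $\vxi_{nt}$ carries long, slowly decaying memory (its dependence persists on the scale $n^{\kappa}$), but it can be taken over wholesale from \citet{MP20}. Everything else reduces to the book-keeping of normalizing powers in (i)--(iii), and since $s$ is fixed no uniformity in $s$ is needed.
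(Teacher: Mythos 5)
Your argument is essentially correct and lands on the same ingredients as the paper, but it packages the bookkeeping a little differently. The paper leaves the initialization inside $\vxi_{t-1}(\vxi_0)$ and appeals directly to the proof of Lemma~2.2~(ii) in \citet{MP20} (with the minor $n\mapsto\lfloor ns\rfloor$ adaptation it flags), which already accommodates the admissible initializations; you instead peel off the $\mR_n^{t-1}\vxi_0$ contribution explicitly, bound it and its cross-terms, and work with the zero-initialized core $\vxi_{n,t-1}$. Both routes are valid, and yours is more self-contained on the initialization front at the cost of some extra book-keeping.

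Two small points worth tightening. First, the LLN-type statements $n^{-1}\sum_{t\le\lfloor ns\rfloor}\vxi_{n,t-1}\vxi_{n,t-1}'\toP s\mV_{\xi\xi}$ and $n^{-1}\sum_{t\le\lfloor ns\rfloor}\vxi_{n,t-1}\toP\vzero$ come from the proof of Lemma~2.2~(ii) of \citet{MP20} (after the $n\mapsto\lfloor ns\rfloor$ adaptation); Proposition~A1 of that paper is what supplies the uniform-integrability/negligibility tools, not the LLN itself. Second, in the (I0) case the paper obtains $n^{-1}\sum\vxi_{t-1}\toP\vzero$ by rescaling the CLT result $n^{-1/2}\sum\vxi_{t-1}=O_\P(1)$ from MP20 rather than invoking a fresh WLLN for the linear process; either works, but your phrase ``$\E[\vxi_0\vxi_0']=\mOmega_{xx}-\vmu_x\vmu_x'$'' should be read as the stationary variance of $\vxi_t$, not literally the variance of the starting value.
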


\begin{proof}
See Appendix~\ref{sec:VaRest Lemmas help}.
\end{proof}

Sometimes it is necessary to treat the cases of (I0) and (NS) predictors (see Assumption~\ref{ass:N}) separately in the proofs.
The following proof is an example of such an instance.

\begin{proof}[{\textbf{Proof of Proposition~\ref{lem:CLT}:}}]
\textit{\textbf{(NS) case:}} 
Our first goal is to show that, as $n\to\infty$,
\begin{equation}\label{eq:(B.3m)}
	\sum_{t=1}^{\lfloor ns\rfloor}\psi_{\alpha}(\epsilon_t)\mD_n^{-1}\mX_{t-1}\overset{d}{\longrightarrow}\sqrt{\alpha(1-\alpha)}\mOmega^{1/2}\mW(s)\qquad\text{in }(D[0,1])^{k+1}.
\end{equation}
With \eqref{eq:(p.4.decomp)} and $\vx_{t-1}=\vmu_{x} + \vxi_{t-1}$,
\begin{align}
	\psi_{\alpha}(\epsilon_t)\mD_n^{-1}\mX_{t-1} &= \psi_{\alpha}(\epsilon_t)\begin{pmatrix}n^{-1/2}\\ n^{-(1+\kappa)/2}\vx_{t-1}\end{pmatrix}\notag\\
	&= \psi_{\alpha}(\epsilon_t)\begin{pmatrix}0\\ n^{-(1+\kappa)/2}\vmu_{x} + n^{-(1+\kappa)/2}\mR_{n}^{t-1}\vxi_0\end{pmatrix} + \psi_{\alpha}(\epsilon_t)\begin{pmatrix}n^{-1/2}\\ n^{-1/2}\vxi_{n,t-1}\end{pmatrix}.\label{eq:decomp QR}
\end{align}
We consider each of the above terms separately.

Denote by $R_n^{t-1}$ some generic element of the matrix $\mR_{n}^{t-1}$.
Observe that $\big\{\psi_{\alpha}(\epsilon_t)n^{-(1+\kappa)/2}R_n^{t-1}\big\}_{t\in\mathbb{N}}$ is a martingale difference array (MDA), because
\begin{align*}
	\E_{t-1}\big[\psi_{\alpha}(\epsilon_t)n^{-(1+\kappa)/2}R_n^{t-1}\big] &= \E_{t-1}\big[(\alpha-\1_{\{\epsilon_t\leq0\}})\big]n^{-(1+\kappa)/2}R_n^{t-1}\\
	&=\big[\alpha - \P_{t-1}\{\epsilon_t\leq 0\}\big]n^{-(1+\kappa)/2}R_n^{t-1}\\
	&\overset{\eqref{eq:quantile model}}{=}0.
\end{align*}
By a standard maximal inequality for martingales \citep[e.g.,][Theorem~15.14]{Dav94},
\begin{align}
	\P\bigg\{\max_{i=1,\ldots,n}\bigg|\sum_{t=1}^{i}\psi_{\alpha}(\epsilon_t)n^{-(1+\kappa)/2}R_n^{t-1}\bigg|>Kn^{-\kappa/2}\bigg\}&\leq K^{-2}n^{\kappa}\E\bigg[\Big\{\sum_{t=1}^{n}\psi_{\alpha}(\epsilon_t)n^{-(1+\kappa)/2}R_n^{t-1}\Big\}^2\bigg]\notag\\
	&=K^{-2}n^{\kappa}\sum_{t=1}^{n}\E\Big[\psi_{\alpha}^2(\epsilon_t)n^{-(1+\kappa)}\big(R_n^{t-1}\big)^2\Big]\notag\\
	&\leq K^{-2}n^{\kappa}n^{-\kappa}\Big\{\max_{t=1,\ldots,n} \big|R_n^{t-1}\big|\Big\}^2 \E\big[\psi_{\alpha}^2(\epsilon_t)\big]\notag\\
	&=K^{-2}O(1)\alpha(1-\alpha)\notag\\
	&=O(K^{-2}).\label{eq:max ineq}
\end{align}
Here, we have used in the penultimate step that $\Vert\mR_n^{t-1}\Vert\leq\max_{j=1,\ldots,n}\Vert\mR_n^{j}\Vert=O(1)$ \citep[Lemma~2.1~(ii)]{MP20}, and \eqref{eq:(pp.4)}.
We conclude from \eqref{eq:max ineq} that 
\[
	\max_{i=1,\ldots,n}\bigg\Vert\sum_{t=1}^{i}\psi_{\alpha}(\epsilon_t)n^{-(1+\kappa)/2}\mR_n^{t-1}\bigg\Vert=O_{\P}(n^{-\kappa/2}),
\]
such that
\begin{align*}
	\sup_{s\in[0,1]}\bigg\Vert\sum_{t=1}^{\lfloor ns\rfloor}\psi_{\alpha}(\epsilon_t)n^{-(1+\kappa)/2}\mR_n^{t-1}\vxi_0\bigg\Vert& \leq \Vert\vxi_0\Vert \sup_{s\in[0,1]}\bigg\Vert\sum_{t=1}^{\lfloor ns\rfloor}\psi_{\alpha}(\epsilon_t)n^{-(1+\kappa)/2}\mR_n^{t-1}\bigg\Vert\\
	& =o_{\P}(n^{\kappa/2})O_{\P}(n^{-\kappa/2})\\
	& =o_{\P}(1).
\end{align*}
Similar, but simpler, arguments also show that
\[
	\sup_{s\in[0,1]}\bigg\Vert\sum_{t=1}^{\lfloor ns\rfloor}\psi_{\alpha}(\epsilon_t)n^{-(1+\kappa)/2}\vmu_x\bigg\Vert=o_{\P}(1).
\]


In light of \eqref{eq:decomp QR} and the previous two displays, it remains to show that, as $n\to\infty$,
\begin{equation}\label{eq:(B.2m)}
	\sum_{t=1}^{\lfloor ns\rfloor} \vzeta_{nt}:=\sum_{t=1}^{\lfloor ns\rfloor} \psi_{\alpha}(\epsilon_t)\begin{pmatrix}	
	n^{-1/2}\\
	n^{-1/2}\vxi_{n,t-1}
	\end{pmatrix}\overset{d}{\longrightarrow}\sqrt{\alpha(1-\alpha)}\mOmega^{1/2}\mW(s)\qquad\text{in }(D[0,1])^{k+1}.
\end{equation}
Note that $\{\vzeta_{nt}\}_{t=1,\ldots,n;n\in\mathbb{N}}$ is a vector MDA because
\[
	\E_{t-1}\big[\vzeta_{nt}\big]=\begin{pmatrix}n^{-1/2}\\ n^{-1/2}\vxi_{n,t-1}\end{pmatrix}\E_{t-1}\big[\psi_{\alpha}(\epsilon_t)\big]\overset{\eqref{eq:quantile model}}{=}\vzero.
\]
Therefore, we only have to verify the conditions of the functional central limit theorem (FCLT) for vector MDAs in \citet[Theorem~3.33 in Chapter VIII]{JS87}.
First, the conditional variances converge as follows:
\begin{align*}
	\sum_{t=1}^{\lfloor ns\rfloor}\E_{t-1}\big[\vzeta_{nt}\vzeta_{nt}^\prime\big] &\overset{\eqref{eq:(E.222)}}{=} \sum_{t=1}^{\lfloor ns\rfloor} \E_{t-1}\big[\psi_{\alpha}^2(\epsilon_t)\big]\mD_n^{-1}\mX_{t-1}(\vzero,\vzero)\mX_{t-1}^\prime(\vzero,\vzero)\mD_n^{-1}\\
	&\overset{\eqref{eq:(pp.4)}}{=} \alpha(1-\alpha)\sum_{t=1}^{\lfloor ns\rfloor} \mD_n^{-1}\mX_{t-1}(\vzero,\vzero)\mX_{t-1}^\prime(\vzero,\vzero)\mD_n^{-1}\\
	&\overset{\P}{\longrightarrow}s\alpha(1-\alpha)\mOmega,
\end{align*}
where the last line follows from Lemma~\ref{lem:SUM}.

The second condition to verify for the FCLT is the conditional Lindeberg condition (CLC):
\[
		\sum_{t=1}^{n} \E_{t-1}\Big[\Vert\vzeta_{nt}\Vert^2 \1_{\{\Vert\vzeta_{nt}\Vert^2>\delta^2\}}\Big]\overset{\P}{\underset{(n\to\infty)}{\longrightarrow}}0
\]
for any $\delta>0$.
To do so, it suffices to verify the (unconditional) Lindeberg condition (LC)
\[
		\sum_{t=1}^{n} \E\Big[\Vert\vzeta_{nt}\Vert^2 \1_{\{\Vert\vzeta_{nt}\Vert^2>\delta^2\}}\Big]\underset{(n\to\infty)}{\longrightarrow}0,
\]
because then, by Markov's inequality and the law of iterated expectations,
\begin{align*}
	\P\bigg\{ \sum_{t=1}^{n}\E_{t-1}\Big[\Vert\vzeta_{nt}\Vert^2 \1_{\{\Vert\vzeta_{nt}\Vert^2>\delta^2\}}\Big]>\varepsilon\bigg\} &\leq \frac{1}{\varepsilon}\E\bigg[\sum_{t=1}^{n}\E_{t-1}\big\{\Vert\vzeta_{nt}\Vert^2 \1_{\{\Vert\vzeta_{nt}\Vert^2>\delta^2\}}\big\}\bigg]\\
	&=\frac{1}{\varepsilon}\sum_{t=1}^{n}\E\Big[\Vert\vzeta_{nt}\Vert^2 \1_{\{\Vert\vzeta_{nt}\Vert^2>\delta^2\}}\Big]\underset{(n\to\infty)}{\longrightarrow}0,
\end{align*}
such that the CLC is implied by the LC.
Since $|\psi_{\alpha}(\epsilon_t)|\leq 1$, we get that
\begin{align*}
	\Vert\vzeta_{nt}\Vert^2 &\leq \Bigg\Vert\begin{pmatrix} n^{-1/2}\\ n^{-1/2}\vxi_{n,t-1}\end{pmatrix}\Bigg\Vert^2\\
	&= \Bigg\Vert\begin{pmatrix} n^{-1/2}\\ \vzero \end{pmatrix} + \begin{pmatrix} 0\\ n^{-1/2}\vxi_{n,t-1}\end{pmatrix}\Bigg\Vert^2\\
	&\leq K\max\big\{n^{-1}, n^{-1}\Vert\vxi_{n,t-1}\Vert^2\big\}.
\end{align*}
Hence, it is sufficient to verify the LC separately for $n^{-1}$ and $n^{-1}\Vert\vxi_{n,t-1}\Vert^2$.
For the deterministic quantity this is immediate because
\[
	\sum_{t=1}^{n}\E\big[n^{-1}\1_{\{n^{-1}>\delta^2\}}\big]=0
\]
for sufficiently large $n$.
The LC for $n^{-1}\Vert\vxi_{n,t-1}\Vert^2$ follows from Proposition~A1~(ii) in \citet{MP20} as follows:
\begin{equation*}
\sum_{t=1}^{n} \E\bigg[\frac{1}{n}\Vert\vxi_{n,t-1}\Vert^2 \1_{\{\frac{1}{n}\Vert\vxi_{n,t-1}\Vert^2>\delta^2\}}\bigg]\leq \max_{t=1,\ldots,n}\E\Big[\Vert\vxi_{n,t-1}\Vert^2\1_{\{\Vert\vxi_{n,t-1}\Vert^2>\delta^2n\}}\Big]\underset{(n\to\infty)}{\longrightarrow}0.	
\end{equation*}

Now, Theorem~3.33 of \citet[Chapter VIII]{JS87} implies the existence of a continuous Gaussian martingale $\vzeta(s)$ with quadratic variation
\[
	\langle\vzeta\rangle_s=s\alpha(1-\alpha)\mOmega,
\]
such that $\sum_{t=1}^{\lfloor ns\rfloor}\vzeta_{nt}\overset{d}{\longrightarrow}\vzeta(s)$ in $(D[0,1])^{k+1}$.
By Levy's characterization of Brownian motion \citep[e.g.,][Theorem~4.4 in Chapter II]{JS87}, $\vzeta(\cdot)$ is a Brownian motion with covariance matrix $\alpha(1-\alpha)\mOmega$. 
This establishes \eqref{eq:(B.2m)}, concluding the proof of \eqref{eq:(B.3m)}.

Since convergence in \eqref{eq:(B.3m)} is to a continuous limit, it also holds in $(D[0,1])^{k+1}$ equipped with the uniform (product) topology \citep{Pol84} instead of the usual Skorohod topology.
Therefore, the two right-hand side terms in
\[
	\sum_{t=\lfloor nr\rfloor+1}^{\lfloor ns\rfloor}\psi_{\alpha}(\epsilon_t)\mD_n^{-1}\mX_{t-1}=\sum_{t=1}^{\lfloor ns\rfloor}\psi_{\alpha}(\epsilon_t)\mD_n^{-1}\mX_{t-1}-\sum_{t=1}^{\lfloor nr\rfloor}\psi_{\alpha}(\epsilon_t)\mD_n^{-1}\mX_{t-1}
\]
converge in $(D[0,1])^{k+1}$ endowed with the uniform (product) topology.
The claim of the lemma in the (NS) case thus follows from standard arguments in, e.g., \citet[Sec.~4.2]{VS14}.

\textit{\textbf{(I0) case:}}
The proof strategy is similar as in the (NS) case.
Recall from \eqref{eq:(B.1m)} that $\vx_{t-1}=\vmu_{x} + \vxi_{t-1}=\vmu_x + \mR^{t-1}\vxi_0 + \vxi_{0,t-1}$, such that
\begin{align*}
	\psi_{\alpha}(\epsilon_t)\mD_n^{-1}\mX_{t-1} &= \psi_{\alpha}(\epsilon_t)\begin{pmatrix}0\\ n^{-1/2}\mR^{t-1}\vxi_0\end{pmatrix} + \psi_{\alpha}(\epsilon_t)\begin{pmatrix}n^{-1/2}\\ n^{-1/2}(\vmu_{x} + \vxi_{0,t-1})\end{pmatrix}.
\end{align*}
Since $|\psi_{\alpha}(\epsilon_t)|\leq 1$,
\[
	\sup_{s\in[0,1]}\Bigg\Vert\sum_{t=1}^{\lfloor ns\rfloor}\psi_{\alpha}(\epsilon_t)\begin{pmatrix}0\\ n^{-1/2}\mR^{t-1}\vxi_0\end{pmatrix}\Bigg\Vert \leq \Vert\vxi_0\Vert n^{-1/2}\sum_{t=1}^{n}\Vert\mR^{t-1}\Vert=O_{\P}(1)n^{-1/2}O(1)=o_{\P}(1),
\]
where $\sum_{t=1}^{n}\Vert\mR^{t-1}\Vert=O(1)$ follows from $\rho(\mR)<1$ and Rule (3) in \citet[p.~657]{Lüt05}.

In light of this, we only have to show that, as $n\to\infty$,
\[
	\sum_{t=1}^{\lfloor ns\rfloor} \vzeta_{nt}:=\sum_{t=1}^{\lfloor ns\rfloor}\psi_{\alpha}(\epsilon_t)\begin{pmatrix}n^{-1/2}\\ n^{-1/2}(\vmu_{x} + \vxi_{0,t-1})\end{pmatrix}\overset{d}{\longrightarrow}\sqrt{\alpha(1-\alpha)}\mOmega^{1/2}\mW(s)\qquad\text{in }(D[0,1])^{k+1}.
\]
(Note that we slightly overload notation here by redefining $\vzeta_{nt}$. We do so to highlight the similarities to the proof in the (NS) case.)
We again verify the conditions of the FCLT in Theorem~3.33 of \citet[Chapter~VIII]{JS87}.
First, by Lemma~\ref{lem:SUM} and \eqref{eq:(pp.4)} the conditional variances converge as follows:
\begin{align*}
	\sum_{t=1}^{\lfloor ns\rfloor}\E_{t-1}[\vzeta_{nt}\vzeta_{nt}^\prime] &= \sum_{t=1}^{\lfloor ns\rfloor} \E_{t-1}[\psi_{\alpha}^2(\epsilon_t)]\mD_n^{-1}\mX_{t-1}(\vmu_x,\vzero)\mX_{t-1}^{\prime}(\vmu_x,\vzero)\mD_n^{-1}\\
	& \overset{\P}{\longrightarrow}s\alpha(1-\alpha)\mOmega.
\end{align*}

Second, to verify the CLC we check the unconditional LC
\[
	\sum_{t=1}^{n}\E\Big[\Vert\vzeta_{nt}\Vert^2\1_{\{\Vert\vzeta_{nt}\Vert^2>\delta^2\}}\Big]\underset{(n\to\infty)}{\longrightarrow}0.
\]
Since $|\psi_{\alpha}(\epsilon_t)|\leq 1$, we obtain from standard norm inequalities that
\begin{align*}
	\Vert\vzeta_{nt}\Vert^2 &= \Bigg\Vert \begin{pmatrix}n^{-1/2}\\ \vzero\end{pmatrix} + \begin{pmatrix}0\\ n^{-1/2}\vmu_{x}\end{pmatrix} + \begin{pmatrix}0\\ n^{-1/2}\vxi_{0,t-1}\end{pmatrix}\Bigg\Vert^2\\
	&\leq K\big\{ n^{-1} + n^{-1}\Vert\vmu_{x}\Vert^2 + n^{-1}\Vert\vxi_{0,t-1}\Vert^2\big\}\\
	&\leq K\max\big\{n^{-1}, n^{-1}\Vert\vmu_{x}\Vert^2, n^{-1}\Vert\vxi_{0,t-1}\Vert^2\big\}.
\end{align*}
Due to this, we can verify the LC separately for each term in the maximum.
For the deterministic terms, the LC is immediate.
For the only stochastic quantity,
\[
	\sum_{t=1}^{n}\E\Big[\frac{1}{n}\Vert\vxi_{0,t-1}\Vert^2\1_{\{n^{-1}\Vert\vxi_{0,t-1}\Vert^2>\delta^2\}}\Big]\leq \max_{t=1,\ldots,n}\E\Big[\Vert\vxi_{0,t-1}\Vert^2\1_{\{\Vert\vxi_{0,t-1}\Vert^2>\delta^2 n \}}\Big]\underset{(n\to\infty)}{\longrightarrow}0
\]
by Proposition~A1~(ii) of \citet{MP20}.
The remainder of the proof now follows as in the (NS) case.
\end{proof}

\begin{proof}[{\textbf{Proof of Proposition~\ref{lem:LLN}:}}]
Since
\begin{align}
	&\sup_{0\leq r< s\leq 1}\bigg|\sum_{t=\lfloor nr\rfloor+1}^{\lfloor ns\rfloor}\big(\epsilon_{t} - \vw^\prime\mD_n^{-1}\mX_{t-1}\big)\big[\1_{\{\vw^\prime\mD_n^{-1}\mX_{t-1}<\epsilon_{t}<0\}} - \1_{\{0<\epsilon_{t}<\vw^\prime\mD_n^{-1}\mX_{t-1}\}}\big]
-\frac{1}{2}(s-r)\vw^\prime\mK \vw\bigg|\notag\\
& =\sup_{0\leq r< s\leq 1}\bigg|\sum_{t=1}^{\lfloor ns\rfloor}\big(\epsilon_{t} - \vw^\prime\mD_n^{-1}\mX_{t-1}\big)\big[\1_{\{\vw^\prime\mD_n^{-1}\mX_{t-1}<\epsilon_{t}<0\}} - \1_{\{0<\epsilon_{t}<\vw^\prime\mD_n^{-1}\mX_{t-1}\}}\big]
-\frac{1}{2}s\vw^\prime\mK \vw\notag\\
&\hspace{1.4cm} - \bigg\{\sum_{t=1}^{\lfloor nr\rfloor}\big(\epsilon_{t} - \vw^\prime\mD_n^{-1}\mX_{t-1}\big)\big[\1_{\{\vw^\prime\mD_n^{-1}\mX_{t-1}<\epsilon_{t}<0\}} - \1_{\{0<\epsilon_{t}<\vw^\prime\mD_n^{-1}\mX_{t-1}\}}\big]-\frac{1}{2}r\vw^\prime\mK \vw\bigg\}\bigg|\notag\\
&\leq 2\sup_{0\leq s\leq 1}\bigg|\sum_{t=1}^{\lfloor ns\rfloor}\big(\epsilon_{t} - \vw^\prime\mD_n^{-1}\mX_{t-1}\big)\big[\1_{\{\vw^\prime\mD_n^{-1}\mX_{t-1}<\epsilon_{t}<0\}} - \1_{\{0<\epsilon_{t}<\vw^\prime\mD_n^{-1}\mX_{t-1}\}}\big]
-\frac{1}{2}s\vw^\prime\mK \vw\bigg|,\label{eq:(p.9)}
\end{align}
it suffices to show that
\begin{equation}\label{eq:(SC)}
	\sup_{0\leq s\leq 1}\bigg|\sum_{t=1}^{\lfloor ns\rfloor}\big(\epsilon_{t} - \vw^\prime\mD_n^{-1}\mX_{t-1}\big)\big[\1_{\{\vw^\prime\mD_n^{-1}\mX_{t-1}<\epsilon_{t}<0\}} - \1_{\{0<\epsilon_{t}<\vw^\prime\mD_n^{-1}\mX_{t-1}\}}\big]
-\frac{1}{2}s\vw^\prime\mK \vw\bigg|=o_{\P}(1).
\end{equation}
To do so, define
\begin{align*}
	\nu_{t}(\vw) &:= (\epsilon_{t} - \vw^\prime\mD_n^{-1}\mX_{t-1})\big(\1_{\{\vw^\prime\mD_n^{-1}\mX_{t-1}< \epsilon_{t}<0\}} - \1_{\{0<\epsilon_{t}< \vw^\prime\mD_n^{-1}\mX_{t-1}\}}\big),\\
	\overline{\nu}_{t}(\vw) &:= \E_{t-1}\Big[(\epsilon_{t} - \vw^\prime\mD_n^{-1}\mX_{t-1})\big(\1_{\{\vw^\prime\mD_n^{-1}\mX_{t-1}< \epsilon_{t}<0\}} - \1_{\{0<\epsilon_{t}< \vw^\prime\mD_n^{-1}\mX_{t-1}\}}\big)\Big],\\
	V_n(\vw,s) &:= \sum_{t=1}^{\lfloor ns\rfloor}\nu_t(\vw),\\
	\overline{V}_n(\vw,s) &:= \sum_{t=1}^{\lfloor ns\rfloor}\overline{\nu}_t(\vw).
\end{align*}
We establish \eqref{eq:(SC)} by showing that, uniformly in $s\in[0,1]$,
\begin{align}
	\overline{V}_n(\vw,s) &\overset{\P}{\longrightarrow}\frac{1}{2}s\vw^\prime\mK\vw, \label{eq:(P.7.0)}\\
	V_n(\vw,s) - \overline{V}_n(\vw,s) &= o_{\P}(1).  \label{eq:(P.7.1)}
\end{align}

We first prove \eqref{eq:(P.7.0)}.
It suffices to show that the convergence holds on the set 
\[
	\Big\{\max_{t=1,\ldots,n}|\vw^\prime\mD_n^{-1}\mX_{t-1}|\leq d\Big\}
\]
with $d>0$ from Assumption~\ref{ass:innov}~\eqref{it:dens}.
This is because, as $n\to\infty$,
\[
	\P\Big\{\max_{t=1,\ldots,n}|\vw^\prime\mD_n^{-1}\mX_{t-1}|\leq d\Big\}\longrightarrow1
\]
by Lemma~\ref{lem:MAX}.
On this set, we may exploit Assumption~\ref{ass:innov}~\eqref{it:dens} to deduce that
\begin{align*}
	\overline{V}_n(\vw,s) &= \sum_{t=1}^{\lfloor ns\rfloor}\E_{t-1}\Big[(\epsilon_{t} - \vw^\prime\mD_n^{-1}\mX_{t-1})\1_{\{\vw^\prime\mD_n^{-1}\mX_{t-1}< \epsilon_{t}<0\}}\Big]\\
	&\hspace{1cm} + \sum_{t=1}^{\lfloor ns\rfloor}\E_{t-1}\Big[(\vw^\prime\mD_n^{-1}\mX_{t-1} - \epsilon_{t}) \1_{\{0<\epsilon_{t}< \vw^\prime\mD_n^{-1}\mX_{t-1}\}}\Big]\\
	&= \sum_{t=1}^{\lfloor ns\rfloor}\1_{\{\vw^\prime\mD_n^{-1}\mX_{t-1}<0\}}\int_{\vw^\prime\mD_n^{-1}\mX_{t-1}}^{0}(x-\vw^\prime\mD_n^{-1}\mX_{t-1})f_{\epsilon_t\mid\mathcal{F}_{t-1}}(x)\D x\\
	&\hspace{1cm} +\sum_{t=1}^{\lfloor ns\rfloor}\1_{\{\vw^\prime\mD_n^{-1}\mX_{t-1}>0\}}\int_{0}^{\vw^\prime\mD_n^{-1}\mX_{t-1}}(\vw^\prime\mD_n^{-1}\mX_{t-1} - x)f_{\epsilon_t\mid\mathcal{F}_{t-1}}(x)\D x\\
	&= \sum_{t=1}^{\lfloor ns\rfloor}\1_{\{\vw^\prime\mD_n^{-1}\mX_{t-1}<0\}}\int_{\vw^\prime\mD_n^{-1}\mX_{t-1}}^{0}\big[F_{\epsilon_t\mid\mathcal{F}_{t-1}}(0) - F_{\epsilon_t\mid\mathcal{F}_{t-1}}(x)\big]\D x\\
	&\hspace{1cm} +\sum_{t=1}^{\lfloor ns\rfloor}\1_{\{\vw^\prime\mD_n^{-1}\mX_{t-1}>0\}}\int_{0}^{\vw^\prime\mD_n^{-1}\mX_{t-1}}\big[F_{\epsilon_t\mid\mathcal{F}_{t-1}}(x) - F_{\epsilon_t\mid\mathcal{F}_{t-1}}(0)\big]\D x,
\end{align*}
where the final line follows from integration by parts, and $F_{\epsilon_t\mid\mathcal{F}_{t-1}}(\cdot)$ denotes the cumulative distribution function of $\epsilon_t\mid\mathcal{F}_{t-1}$.
By the mean value theorem it holds for some $x^\ast=x^\ast(x)$ between $0$ and $x$ that
\begin{align}
 \frac{F_{\epsilon_{t}\mid\mathcal{F}_{t-1}}(x)-F_{\epsilon_{t}\mid\mathcal{F}_{t-1}}(0)}{x-0} &= f_{\epsilon_{t}\mid\mathcal{F}_{t-1}}(x^\ast)\notag\\
&= f_{\epsilon_{t}\mid\mathcal{F}_{t-1}}(0) + \big[f_{\epsilon_{t}\mid\mathcal{F}_{t-1}}(x^\ast) - f_{\epsilon_{t}\mid\mathcal{F}_{t-1}}(0)\big].\label{eq:MVT}
\end{align}
Plugging this into the previous display yields that
\begin{align*}
	\overline{V}_n(\vw,s) &=\sum_{t=1}^{\lfloor ns\rfloor}\bigg\{-\1_{\{\vw^\prime\mD_n^{-1}\mX_{t-1}<0\}}\int_{\vw^\prime\mD_n^{-1}\mX_{t-1}}^{0}xf_{\epsilon_{t}\mid\mathcal{F}_{t-1}}(0)\D x\\
	&\hspace{2cm} + \1_{\{\vw^\prime\mD_n^{-1}\mX_{t-1}>0\}}\int_{0}^{\vw^\prime\mD_n^{-1}\mX_{t-1}}xf_{\epsilon_{t}\mid\mathcal{F}_{t-1}}(0)\D x\bigg\}\\
	&\hspace{1cm} + \sum_{t=1}^{\lfloor ns\rfloor}\bigg\{-\1_{\{\vw^\prime\mD_n^{-1}\mX_{t-1}<0\}}\int_{\vw^\prime\mD_n^{-1}\mX_{t-1}}^{0}x\big[f_{\epsilon_{t}\mid\mathcal{F}_{t-1}}(x^\ast) - f_{\epsilon_{t}\mid\mathcal{F}_{t-1}}(0)\big]\D x\\
	&\hspace{2cm} + \1_{\{\vw^\prime\mD_n^{-1}\mX_{t-1}>0\}}\int_{0}^{\vw^\prime\mD_n^{-1}\mX_{t-1}}x\big[f_{\epsilon_{t}\mid\mathcal{F}_{t-1}}(x^\ast) - f_{\epsilon_{t}\mid\mathcal{F}_{t-1}}(0)\big]\D x\bigg\}\\
	&=:\overline{V}_{1n}(\vw,s) + \overline{V}_{2n}(\vw,s).
\end{align*}
The convergence in \eqref{eq:(P.7.0)} is established if we can prove that, uniformly in $s\in[0,1]$,
\begin{align}
	\overline{V}_{1n}(\vw,s) & \overset{\P}{\longrightarrow}\frac{1}{2}s\vw^\prime\mK\vw,\label{eq:(P.8.0)}\\
	\overline{V}_{2n}(\vw,s) &=o_{\P}(1). \label{eq:(P.8.1)}
\end{align}

To show \eqref{eq:(P.8.1)}, use Assumption~\ref{ass:innov}~\eqref{it:Lipschitz} to deduce that
\begin{align*}
	\big|\overline{V}_{2n}(\vw,s)\big| &\leq 2\sum_{t=1}^{n}\int_{0}^{|\vw^\prime\mD_n^{-1}\mX_{t-1}|}xL|\vw^\prime\mD_n^{-1}\mX_{t-1}|\D x\\
	 &\leq 2L\sum_{t=1}^{n}|\vw^\prime\mD_n^{-1}\mX_{t-1}|\int_{0}^{|\vw^\prime\mD_n^{-1}\mX_{t-1}|}x\D x\\
	 &\leq 2L\Vert\vw\Vert \sum_{t=1}^{n} \big\Vert\mD_n^{-1}\mX_{t-1}\big\Vert \frac{1}{2}\vw^\prime\mD_n^{-1}\mX_{t-1}\mX_{t-1}^\prime\mD_n^{-1}\vw\\
	&\leq L\Vert\vw\Vert \max_{t=1,\ldots,n} \big\Vert\mD_n^{-1}\mX_{t-1}\big\Vert\vw^\prime\bigg(\sum_{t=1}^{n}\mD_n^{-1}\mX_{t-1}\mX_{t-1}^\prime\mD_n^{-1}\bigg)\vw\\
	&=o_{\P}(1)O_{\P}(1)\\
	&= o_{\P}(1)
\end{align*}
uniformly in $s\in[0,1]$, where the penultimate step uses Lemmas~\ref{lem:MAX} and \ref{lem:SUM}.

To prove \eqref{eq:(P.8.0)}, use Assumption~\ref{ass:K} to conclude that
\begin{align*}
	\overline{V}_{1n}(\vw,s) &=\sum_{t=1}^{\lfloor ns\rfloor}\bigg\{\1_{\{\vw^\prime\mD_n^{-1}\mX_{t-1}<0\}}f_{\epsilon_{t}\mid\mathcal{F}_{t-1}}(0)\frac{1}{2}(\vw^\prime\mD_n^{-1}\mX_{t-1})^2\\
	&\hspace{2cm} + \1_{\{\vw^\prime\mD_n^{-1}\mX_{t-1}>0\}}f_{\epsilon_{t}\mid\mathcal{F}_{t-1}}(0)\frac{1}{2}(\vw^\prime\mD_n^{-1}\mX_{t-1})^2\bigg\}\\
	&=\frac{1}{2}\vw^\prime\bigg(\sum_{t=1}^{\lfloor ns\rfloor}f_{\epsilon_{t}\mid\mathcal{F}_{t-1}}(0)\mD_n^{-1}\mX_{t-1}\mX^\prime_{t-1}\mD_n^{-1}\bigg)\vw\\
	&\overset{\P}{\longrightarrow}\frac{1}{2}s\vw^\prime\mK\vw
\end{align*}
uniformly in $s\in[0,1]$.
This proves \eqref{eq:(P.8.0)}, whence \eqref{eq:(P.7.0)} follows.

It remains to show \eqref{eq:(P.7.1)}.
Observe that
\[
	V_n(\vw,s) - \overline{V}_n(\vw,s) = \sum_{t=1}^{\lfloor ns\rfloor}\big[\nu_t(\vw) - \overline{\nu}_t(\vw)\big]
\]
is a sum of MDAs.
Our goal is to invoke Theorem~3.33 of \citet[Chapter VIII]{JS87} once more.
First, the conditional variances converge as follows:
\begin{align*}
\sum_{t=1}^{\lfloor ns\rfloor}&\E_{t-1}\big[\{\nu_t(\vw) - \overline{\nu}_t(\vw)\}^2\big]\\
	&= \sum_{t=1}^{\lfloor ns\rfloor}\Big\{\E_{t-1}\big[\nu_t^2(\vw)\big] - \overline{\nu}^2_t(\vw)\Big\}\\
	&\leq \sum_{t=1}^{\lfloor ns\rfloor}\E_{t-1}[\nu_t^2(\vw)]\\
	&= \sum_{t=1}^{\lfloor ns\rfloor}\E_{t-1}\Big[(\epsilon_{t} - \vw^\prime\mD_n^{-1}\mX_{t-1})^2\big(\1_{\{\vw^\prime\mD_n^{-1}\mX_{t-1}< \epsilon_{t}<0\}} - \1_{\{0<\epsilon_{t}<\vw^\prime\mD_n^{-1}\mX_{t-1}\}}\big)^2\Big]\\
	&= \sum_{t=1}^{\lfloor ns\rfloor}\E_{t-1}\Big[(\epsilon_{t} - \vw^\prime\mD_n^{-1}\mX_{t-1})^2\big(\1_{\{\vw^\prime\mD_n^{-1}\mX_{t-1}< \epsilon_{t}<0\}} + \1_{\{0<\epsilon_{t}<\vw^\prime\mD_n^{-1}\mX_{t-1}\}}\big)\Big]\\
	&\leq \sum_{t=1}^{\lfloor ns\rfloor}|\vw^\prime\mD_n^{-1}\mX_{t-1}|\times\\
	&\hspace{1cm}\times\E_{t-1}\Big[|\epsilon_{t} - \vw^\prime\mD_n^{-1}\mX_{t-1}|\big(\1_{\{\vw^\prime\mD_n^{-1}\mX_{t-1}< \epsilon_{t}<0\}} + \1_{\{0<\epsilon_{t}<\vw^\prime\mD_n^{-1}\mX_{t-1}\}}\big)\Big]\\
	&\leq K \max_{t=1,\ldots,n}\Vert\mD_n^{-1}\mX_{t-1}\Vert\times\\
	&\hspace{1cm}\times\sum_{t=1}^{\lfloor ns\rfloor}\E_{t-1}\Big[(\epsilon_{t} - \vw^\prime\mD_n^{-1}\mX_{t-1})\big(\1_{\{\vw^\prime\mD_n^{-1}\mX_{t-1}< \epsilon_{t}<0\}} - \1_{\{0<\epsilon_{t}<\vw^\prime\mD_n^{-1}\mX_{t-1}\}}\big)\Big]\\
	&=K \max_{t=1,\ldots,n}\Vert\mD_n^{-1}\mX_{t-1}\Vert \sum_{t=1}^{\lfloor ns\rfloor}\overline{\nu}_t(\vw)\\
	&=o_{\P}(1)O_{\P}(1)=o_{\P}(1),
\end{align*}
where we used that $\max_{t=1,\ldots,n}\Vert\mD_n^{-1}\mX_{t-1}\Vert=o_{\P}(1)$ (from Lemma~\ref{lem:MAX}) and $\overline{V}_n(\vw,s)=\sum_{t=1}^{\lfloor ns\rfloor}\overline{\nu}_t(\vw)=O_{\P}(1)$ (from \eqref{eq:(P.7.0)}).

Second, to verify the CLC
\[
	\sum_{t=1}^{n}\E_{t-1}\Big[\big|\nu_t(\vw)-\overline{\nu}_t(\vw)\big|^2\1_{\{|\vnu_t(\vw) - \overline{\vnu}_t(\vw)|^2>\delta^2\}}\Big]\overset{\P}{\underset{(n\to\infty)}{\longrightarrow}}0,
\]
it is once again sufficient to verify the LC
\[
	\sum_{t=1}^{n}\E\Big[\big|\nu_t(\vw)-\overline{\nu}_t(\vw)\big|^2\1_{\{|\vnu_t(\vw) - \overline{\vnu}_t(\vw)|^2>\delta^2\}}\Big]\underset{(n\to\infty)}{\longrightarrow}0.
\]
Since 
\begin{align}
	\big|\nu_t(\vw)-\overline{\nu}_t(\vw)\big|^2 &\leq K\Vert\mD_n^{-1}\mX_{t-1}\Vert^2 = K\Bigg\Vert\begin{pmatrix}n^{-1/2}\\ n^{-(1+\kappa)/2}\vx_{t-1}\end{pmatrix}\Bigg\Vert^2\notag\\
	&\leq \frac{K}{n} + Kn^{-(1+\kappa)}\Vert\vx_{t-1}\Vert^2\notag\\
	&\leq 2K\max\big\{n^{-1},n^{-(1+\kappa)}\Vert\vx_{t-1}\Vert^2\big\},\label{eq:(p.12.LC)}
\end{align}
we can check the LC separately for $n^{-1}$ and $n^{-(1+\kappa)}\Vert\vx_{t-1}\Vert^2$.
(Here, $\kappa=0$ corresponds to the (I0) case.)
This, however, follows as in the proof of Proposition~\ref{lem:CLT} via Proposition~A1~(ii) of \citet{MP20}.
Hence, the CLC is satisfied.

Overall, we may apply Theorem~3.33 of \citet[Chapter~VIII]{JS87} to conclude that
\[
	V_n(\vw,s) - \overline{V}_n(\vw,s) = \sum_{t=1}^{\lfloor ns\rfloor}\big[\nu_t(\vw) - \overline{\nu}_t(\vw)\big]\overset{d}{\longrightarrow}0\qquad\text{in }D[0,1].
\]
Applying the continuous mapping theorem (CMT) to this shows that $\sup_{s\in[0,1]}\big|V_n(\vw,s) - \overline{V}_n(\vw,s)\big|=o_{\P}(1)$, which establishes \eqref{eq:(P.7.1)}.
\end{proof}

\subsection{Proofs of Lemmas~\ref{lem:MAX}--\ref{lem:SUM}}\label{sec:VaRest Lemmas help}

\begin{proof}[{\textbf{Proof of Lemma~\ref{lem:MAX}:}}]
We only prove the result in the (NS) case, because the (I0) case follows similarly.
From \eqref{eq:(p.4.decomp)},
\begin{align*}
	\mD_n^{-1}\mX_{t-1} &= \begin{pmatrix}n^{-1/2}\\ n^{-(1+\kappa)/2}\vx_{t-1}\end{pmatrix}\\
	&= \begin{pmatrix}n^{-1/2}\\ n^{-(1+\kappa)/2}\vmu_x + n^{-(1+\kappa)/2}\mR_n^{t-1}\vxi_0 + n^{-1/2}\vxi_{n,t-1}\end{pmatrix}.
\end{align*}
In light of this, it suffices to prove that
\begin{align}
	\max_{t=1,\ldots,n} \Vert n^{-(1+\kappa)/2}\mR_n^{t-1}\vxi_0\Vert =o_{\P}(1),\label{eq:(p.6.1)}\\
	\max_{t=1,\ldots,n} \Vert n^{-1/2}\vxi_{n,t-1}\Vert =o_{\P}(1).\label{eq:(p.6.2)}
\end{align}
For \eqref{eq:(p.6.1)}, submultiplicativity implies
\begin{align*}
	\max_{t=1,\ldots,n} \big\Vert n^{-(1+\kappa)/2}\mR_n^{t-1}\vxi_0\big\Vert &\leq n^{-(1+\kappa)/2}\Big(\max_{t=1,\ldots,n}\big\Vert\mR_n^{t-1}\big\Vert\Big)\big\Vert\vxi_0\big\Vert\\
	&=n^{-(1+\kappa)/2}O(1)o_{\P}(n^{\kappa/2})\\
	&=o_{\P}(1),
\end{align*}
where $\max_{t=1,\ldots,n}\Vert\mR_n^{t-1}\Vert=O(1)$ follows from Lemma~2.1~(ii) in \citet{MP20}.
To prove \eqref{eq:(p.6.2)}, note that for arbitrary $\delta>0$,
\begin{align*}
	\P\Big\{\max_{t=1,\ldots,n}\Vert n^{-1/2}\vxi_{n,t-1}\Vert>\varepsilon\Big\}&\leq \frac{1}{\varepsilon^2}\E\Big[\max_{t=1,\ldots,n}\Vert n^{-1/2}\vxi_{n,t-1}\Vert^2\Big]\\
	&\leq \frac{1}{\varepsilon^2}\bigg\{\delta^2 + \E\Big[\max_{t=1,\ldots,n}\Vert n^{-1/2}\vxi_{n,t-1}\Vert^2 \1_{\{\Vert n^{-1/2}\vxi_{n,t-1}\Vert^2>\delta^2\}}\Big]\bigg\}\\
	&\leq \frac{\delta^2}{\varepsilon^2} + \frac{1}{\varepsilon^2}\E\bigg[\sum_{t=1}^{n}\Vert n^{-1/2}\vxi_{n,t-1}\Vert^2 \1_{\{\Vert n^{-1/2}\vxi_{n,t-1}\Vert^2>\delta^2\}}\bigg]\\
	&\leq \frac{\delta^2}{\varepsilon^2} + \frac{1}{\varepsilon^2}\max_{t=1,\ldots,n}\E\Big[\Vert\vxi_{n,t-1}\Vert^2 \1_{\{\Vert \vxi_{n,t-1}\Vert^2>\delta^2 n\}}\Big]\\
	&\underset{(n\to\infty)}{\longrightarrow}\frac{\delta^2}{\varepsilon^2},
\end{align*}
where the first step uses Markov's inequality and the final step Proposition~A1~(ii) of \citet{MP20}.
Since $\delta>0$ was arbitrary, $\max_{t=1,\ldots,n}\Vert n^{-1/2}\vxi_{n,t-1}\Vert=o_{\P}(1)$ follows.
\end{proof}

\begin{proof}[{\textbf{Proof of Lemma~\ref{lem:SUM}:}}]
To avoid notational clutter in this proof, we slightly abuse notation and write $\vxi_{t-1}=\vxi_{t-1}(\vxi)$.
However, keep in mind that $\vxi_{t-1}$ as defined in \eqref{eq:(p.4.decomp)} actually pertains to the \textit{true} starting value $\vxi_0$ (such that, in fact, $\vxi_{t-1}=\vxi_{t-1}(\vxi_0)$).

\textit{\textbf{(NS) case:}} It follows from the proof of Lemma~2.2~(ii) in \citet{MP20} that, as $n\to\infty$,
\begin{align}
	\frac{1}{n^{1+\kappa}}\sum_{t=1}^{\lfloor ns\rfloor}\vxi_{t-1}\vxi_{t-1}^\prime & \overset{\P}{\longrightarrow}s\mV_{\xi\xi},\label{eq:(p.4.4.1)}\\
	\frac{1}{n^{1/2+\kappa}}\sum_{t=1}^{n}\vxi_{t-1} & \overset{d}{\longrightarrow}N\big(\vzero, \mC^{-1}\mOmega_{\xi\xi}(\mC^{-1})^{\prime}\big),\label{eq:(p.4.4.2)}
\end{align}
where $\mOmega_{\xi\xi}$ is defined in Appendix~\ref{Asymptotic Variance-Covariance Matrices}, and $\vxi_{t-1}$ in our notation corresponds to $\vx_{t-1}$ in the notation of \citet{MP20}.
A slight adaptation of their arguments (replacing $n$ by $\lfloor ns\rfloor$ in the summation) yields that
\begin{align}
	\frac{1}{n^{1+\kappa/2}}\sum_{t=1}^{\lfloor ns\rfloor} \vxi_{t-1} &= \frac{1}{n^{(1-\kappa)/2}}\frac{1}{n^{1/2+\kappa}}\sum_{t=1}^{\lfloor ns\rfloor} \vxi_{t-1}\notag\\
	&= \frac{1}{n^{(1-\kappa)/2}}O_{\P}(1)\notag\\
	&= o_{\P}(1),\label{eq:(p.4.3.2)}
\end{align}
where the second step follows from arguments used to deduce \eqref{eq:(p.4.4.2)} in \citet{MP20}.
We also refer to the proof of Lemma~3.3 in \citet{MP20}---in particular the penultimate display on p.~50---for equivalents of \eqref{eq:(p.4.4.1)} and \eqref{eq:(p.4.3.2)}.

Using \eqref{eq:(p.4.4.1)} and \eqref{eq:(p.4.3.2)}, we obtain
\begin{align*}
	\sum_{t=1}^{\lfloor ns\rfloor}&\mD_n^{-1}\mX_{t-1}(\vmu,\vxi)\mX_{t-1}^\prime(\vmu,\vxi)\mD_n^{-1}\\
	& =\sum_{t=1}^{\lfloor ns\rfloor} \begin{pmatrix} \frac{1}{n} & \frac{1}{n^{1+\kappa/2}}\vmu^\prime + \frac{1}{n^{1+\kappa/2}}\vxi_{t-1}^\prime\\
	\frac{1}{n^{1+\kappa/2}}\vmu + \frac{1}{n^{1+\kappa/2}}\vxi_{t-1} & \frac{1}{n^{1+\kappa}}\big[\vmu\vmu^\prime + \vmu\vxi_{t-1}^\prime + \vxi_{t-1}\vmu^\prime + \vxi_{t-1}\vxi_{t-1}^\prime\big]
	\end{pmatrix}\\
	&\overset{\P}{\longrightarrow}s\begin{pmatrix}1 & \vzero\\
	\vzero & \mV_{\xi\xi}\end{pmatrix}.
\end{align*}

\textit{\textbf{(I0) case:}} Lemma~2.2~(ii) in \citet{MP20} implies that \eqref{eq:(p.4.4.2)} also holds in the (I0) case with $\kappa=0$.
Therefore, as $n\to\infty$,
\begin{equation}\label{eq:(p.14.0)}
	\frac{1}{n}\sum_{t=1}^{\lfloor ns\rfloor} (\vmu_{x} + \vxi_{t-1})=\frac{\lfloor ns\rfloor}{n}\vmu_{x} + \frac{1}{\sqrt{n}}\frac{1}{\sqrt{n}}\sum_{t=1}^{\lfloor ns\rfloor}\vxi_{t-1}=s\vmu_{x}+o(1) + \frac{1}{\sqrt{n}}O_{\P}(1)\overset{\P}{\longrightarrow}s\vmu_{x}.
\end{equation}
From the penultimate display on p.~43 of \citet{MP20},
\[
	\frac{1}{n}\sum_{t=1}^{\lfloor ns\rfloor}\vxi_{t-1}\vxi_{t-1}^\prime\overset{\P}{\longrightarrow}s\sum_{j=0}^{\infty}\mR^{j}\big[\mGamma_{u}(0) + \mR\mGamma + \mGamma^\prime\mR^\prime\big](\mR^{j})^\prime,
\]
where $\mGamma$ and $\mGamma_u(0)$ are defined in Appendix~\ref{Asymptotic Variance-Covariance Matrices}.
Thus, as $n\to\infty$,
\begin{align}
	\frac{1}{n}\sum_{t=1}^{\lfloor ns\rfloor} (\vmu_x + \vxi_{t-1})(\vmu_x^\prime + \vxi_{t-1}^\prime) &= \frac{1}{n}\sum_{t=1}^{\lfloor ns\rfloor} \vmu_x\vmu_x^\prime +  n^{-1/2}\vmu_x\frac{1}{\sqrt{n}}\sum_{t=1}^{\lfloor ns\rfloor}\vxi_{t-1}^\prime\notag\\
	&\hspace{1cm} + n^{-1/2}\bigg(\frac{1}{\sqrt{n}}\sum_{t=1}^{\lfloor ns\rfloor}\vxi_{t-1}\bigg)\vmu_{x}^\prime + \frac{1}{n}\sum_{t=1}^{\lfloor ns\rfloor}\vxi_{t-1}\vxi_{t-1}^\prime\notag\\
	&\overset{\P}{\longrightarrow}s\bigg[\vmu_{x}\vmu_{x}^\prime + \sum_{j=0}^{\infty}\mR^{j}\big[\mGamma_u(0) + \mR\mGamma + \mGamma^\prime\mR^\prime\big](\mR^{j})^\prime\bigg].\label{eq:(p.14.1)}
\end{align}

Combining \eqref{eq:(p.14.0)} and \eqref{eq:(p.14.1)} gives that, as $n\to\infty$,
\begin{align*}
	\sum_{t=1}^{\lfloor ns\rfloor}&\mD_n^{-1}\mX_{t-1}(\vmu_{x},\vxi)\mX_{t-1}^\prime(\vmu_{x},\vxi)\mD_n^{-1}\\
	& =\frac{1}{n}\sum_{t=1}^{\lfloor ns\rfloor} \begin{pmatrix} 1 & \vmu_{x}^\prime + \vxi_{t-1}^\prime\\
	\vmu_{x} + \vxi_{t-1} & \big[\vmu_{x}+\vxi_{t-1}\big]\big[\vmu_{x}^\prime+\vxi_{t-1}^\prime\big]
	\end{pmatrix}\\
	&\overset{\P}{\longrightarrow}s\begin{pmatrix}1 & \vmu_{x}^\prime\\
	\vmu_{x} & \mOmega_{xx}\end{pmatrix}=s\mOmega_{XX}.
\end{align*}
This finishes the proof.
\end{proof}

\section{Proof of Theorem~\ref{thm:CoVaR est} and Corollary~\ref{cor:SBT CoVaR}}
\label{sec:thm2}

The proof of Theorem~\ref{thm:CoVaR est} requires the following preliminary Propositions~\ref{lem:CLT2}--\ref{lem:LLN3}, that serve similar purposes as Lemmas~4--7 from \citet{HS25}.
\begin{prop}\label{lem:CLT2}
Under the assumptions of Theorem~\ref{thm:CoVaR est} it holds that, as $n\to\infty$,
\begin{equation*}
	\sum_{t=\lfloor nr\rfloor+1}^{\lfloor ns\rfloor}\1_{\{\epsilon_t>0\}}\psi_{\beta}(\delta_t)\mD_n^{-1}\mX_{t-1}
	\overset{d}{\longrightarrow}\sqrt{(1-\alpha)(1-\beta)\beta}\mOmega^{1/2}\big[\mW_{\dagger}(s)-\mW_{\dagger}(r)\big]\qquad\text{in }(\ell^{\infty}(\mathcal{D}))^{k+1},
\end{equation*}
where $\mW_{\dagger}(\cdot)$ is a $(k+1)$-variate standard Brownian motion, and $\mOmega$ is defined in \eqref{eq:Omega}. 
\end{prop}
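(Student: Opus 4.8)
The plan is to follow the proof of Proposition~\ref{lem:CLT} almost verbatim, with the scalar $\psi_{\alpha}(\varepsilon_t)$ there replaced by the product $\1_{\{\varepsilon_t>0\}}\psi_{\beta}(\delta_t)$. First I would check that $\big\{\1_{\{\varepsilon_t>0\}}\psi_{\beta}(\delta_t)\mD_n^{-1}\mX_{t-1}\big\}_{t\geq1}$ is a vector martingale difference array: since $\mD_n^{-1}\mX_{t-1}$ is $\mathcal F_{t-1}$-measurable, it suffices that $\E_{t-1}\big[\1_{\{\varepsilon_t>0\}}\psi_{\beta}(\delta_t)\big]=0$. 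The constraint $\CoVaR_{\beta\mid\alpha}\big((\delta_t,\varepsilon_t)^\prime\mid\mathcal F_{t-1}\big)=0$ is, by definition, equivalent to $Q_{\beta}(\delta_t\mid\varepsilon_t\geq0,\mathcal F_{t-1})=0$, i.e.\ $\P_{t-1}\{\delta_t\leq0,\,\varepsilon_t>0\}=\beta\,\P_{t-1}\{\varepsilon_t>0\}$ (the event $\{\varepsilon_t=0\}$ being null under Assumption~\ref{ass:innov CoVaR}). Together with $\P_{t-1}\{\varepsilon_t>0\}=1-\alpha$ --- which follows from $Q_{\alpha}(\varepsilon_t\mid\mathcal F_{t-1})=0$ and the continuity of $\varepsilon_t\mid\mathcal F_{t-1}$ at $0$ --- this gives $\E_{t-1}\big[\1_{\{\varepsilon_t>0\}}(\beta-\1_{\{\delta_t\leq0\}})\big]=\beta(1-\alpha)-\beta(1-\alpha)=0$.

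Next I would compute the conditional second moment, which supplies the scaling constant. Using $\1_{\{\varepsilon_t>0\}}^2=\1_{\{\varepsilon_t>0\}}$ and the expansion $\psi_{\beta}^2(\delta_t)=\beta^2-2\beta\1_{\{\delta_t\leq0\}}+\1_{\{\delta_t\leq0\}}$ together with the same two conditional probabilities, one obtains
\[
\E_{t-1}\big[\big(\1_{\{\varepsilon_t>0\}}\psi_{\beta}(\delta_t)\big)^2\big]=(1-\alpha)\beta^2-2\beta\cdot\beta(1-\alpha)+\beta(1-\alpha)=(1-\alpha)\beta(1-\beta),
\]
a constant, which is the CoVaR analog of \eqref{eq:(pp.4)} and accounts for the factor $\sqrt{(1-\alpha)(1-\beta)\beta}$ in the statement.

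From this point the argument is a transcription of the proof of Proposition~\ref{lem:CLT}. I would decompose $\mD_n^{-1}\mX_{t-1}$ via \eqref{eq:(B.1m)}--\eqref{eq:(p.4.decomp)} into the deterministic-mean contribution, the initialization term $n^{-(1+\kappa)/2}\mR_n^{t-1}\vxi_0$ (resp.\ $n^{-1/2}\mR^{t-1}\vxi_0$ under (I0)), and the normalized near-stationary term $n^{-1/2}\vxi_{n,t-1}$; since $|\1_{\{\varepsilon_t>0\}}\psi_{\beta}(\delta_t)|\leq1$, the first two are $o_{\P}(1)$ uniformly in $s\in[0,1]$ by exactly the martingale maximal-inequality bounds used in Proposition~\ref{lem:CLT}. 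For the leading term $\vzeta_{nt}:=\1_{\{\varepsilon_t>0\}}\psi_{\beta}(\delta_t)\big(n^{-1/2},\,n^{-1/2}\vxi_{n,t-1}^\prime\big)^\prime$ I would apply the FCLT for vector martingale difference arrays, Theorem~3.33 in Chapter~VIII of \citet{JS87}: the conditional variances equal $(1-\alpha)\beta(1-\beta)\sum_{t=1}^{\lfloor ns\rfloor}\mD_n^{-1}\mX_{t-1}(\vzero,\vzero)\mX_{t-1}^\prime(\vzero,\vzero)\mD_n^{-1}\overset{\P}{\longrightarrow}s(1-\alpha)\beta(1-\beta)\mOmega$ by Lemma~\ref{lem:SUM}, and the conditional Lindeberg condition follows (via the unconditional Lindeberg condition, as in Proposition~\ref{lem:CLT}) from $\Vert\vzeta_{nt}\Vert^2\leq4\max\{n^{-1},n^{-1}\Vert\vxi_{n,t-1}\Vert^2\}$ together with Proposition~A1~(ii) of \citet{MP20}. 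L\'evy's characterization then identifies the limit as a Brownian motion with covariance $(1-\alpha)\beta(1-\beta)\mOmega$, i.e.\ $\sqrt{(1-\alpha)(1-\beta)\beta}\,\mOmega^{1/2}\mW_{\dagger}(s)$; the (I0) case is handled identically with $\kappa=0$.

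The only genuinely new work is the moment bookkeeping in the first two steps --- in particular reading off the correct conditional probabilities from $\CoVaR_{\beta\mid\alpha}\big((\delta_t,\varepsilon_t)^\prime\mid\mathcal F_{t-1}\big)=0$ and keeping track of the indicator $\1_{\{\varepsilon_t>0\}}$. I expect this to be the main (and essentially only) obstacle, with everything downstream inherited directly from the proof of Proposition~\ref{lem:CLT}.
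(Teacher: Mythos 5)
Your proposal is correct and takes essentially the same route as the paper: both reduce to the $\vzeta_{nt}$ term after discarding the mean and initialization contributions, apply the martingale-difference FCLT of Jacod and Shiryaev (Theorem~3.33, Ch.~VIII), with the conditional variance computation $\E_{t-1}\big[\1_{\{\varepsilon_t>0\}}\psi_{\beta}^2(\delta_t)\big]=(1-\alpha)(1-\beta)\beta$ — which the paper records as \eqref{eq:(p.16)} via \eqref{eq:(p.16.2)} — supplying the new scaling constant, and the Lindeberg condition inherited verbatim from Proposition~\ref{lem:CLT}.
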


\begin{proof}
See Appendix~\ref{sec:CoVaRest Lemmas}.
\end{proof}

\begin{prop}\label{lem:LLN2}
Under the assumptions of Theorem~\ref{thm:CoVaR est} it holds for fixed $\vw\in\mathbb{R}^{k+1}$ that, as $n\to\infty$,
\begin{multline*}
	\sup_{0\leq r<s\leq 1}\bigg|\sum_{t=\lfloor nr\rfloor+1}^{\lfloor ns\rfloor}\1_{\{\epsilon_{t}>0\}}\big(\delta_{t} - \vw^\prime\mD_n^{-1}\mX_{t-1}\big)\big[\1_{\{\vw^\prime\mD_n^{-1}\mX_{t-1}<\delta_{t}<0\}} - \1_{\{0<\delta_{t}<\vw^\prime\mD_n^{-1}\mX_{t-1}\}}\big]
\\
-\frac{1}{2}(s-r)\vw^\prime\mK_\ast \vw\bigg|=o_{\P}(1).
\end{multline*}
\end{prop}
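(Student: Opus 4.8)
The plan is to transcribe the proof of Proposition~\ref{lem:LLN} almost verbatim, the only genuinely new feature being that the indicator $\1_{\{\varepsilon_t>0\}}$ is absorbed, via Fubini, into a ``sub-density'' of $\delta_t$. Abbreviate $c_t := \vw^\prime\mD_n^{-1}\mX_{t-1}$ and set
\[
	\nu_t(\vw) := \1_{\{\varepsilon_t>0\}}(\delta_t - c_t)\big[\1_{\{c_t<\delta_t<0\}} - \1_{\{0<\delta_t<c_t\}}\big],\qquad \overline{\nu}_t(\vw) := \E_{t-1}\big[\nu_t(\vw)\big],
\]
together with $V_n(\vw,s):=\sum_{t=1}^{\lfloor ns\rfloor}\nu_t(\vw)$ and $\overline{V}_n(\vw,s):=\sum_{t=1}^{\lfloor ns\rfloor}\overline{\nu}_t(\vw)$. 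As in Proposition~\ref{lem:LLN}, it suffices to prove that, uniformly in $s\in[0,1]$, $\overline{V}_n(\vw,s)\overset{\P}{\longrightarrow}\tfrac12 s\vw^\prime\mK_\ast\vw$ and $V_n(\vw,s)-\overline{V}_n(\vw,s)=o_{\P}(1)$.

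For the first convergence I would work on the event $\{\max_{t\leq n}|c_t|\leq d\}$, which has probability tending to one by Lemma~\ref{lem:MAX}. Writing $g_t(y):=\int_0^\infty f_{(\varepsilon_t,\delta_t)^\prime\mid\mathcal{F}_{t-1}}(x,y)\D x$ and interchanging integrals (all integrands being nonnegative), $\overline{\nu}_t(\vw)$ takes exactly the form appearing in the proof of Proposition~\ref{lem:LLN}, with $f_{\varepsilon_t\mid\mathcal{F}_{t-1}}$ replaced by $g_t$; here $g_t$ is integrable over $[-d,d]$ (its integral there is at most one) and Lipschitz, hence continuous, on $[-d,d]$ by Assumption~\ref{ass:innov CoVaR}~\eqref{it:Lipschitz CoVaR}. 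Integration by parts and the mean value theorem then decompose $\overline{V}_n(\vw,s)$ into a main term $\tfrac12\vw^\prime\big(\sum_{t=1}^{\lfloor ns\rfloor}g_t(0)\mD_n^{-1}\mX_{t-1}\mX_{t-1}^\prime\mD_n^{-1}\big)\vw$ and a remainder bounded, through the Lipschitz constant $L$ and exactly as for $\overline{V}_{2n}$ in Proposition~\ref{lem:LLN}, by a term of order $\Vert\vw\Vert\max_{t\leq n}\Vert\mD_n^{-1}\mX_{t-1}\Vert\cdot\vw^\prime\big(\sum_{t=1}^{n}\mD_n^{-1}\mX_{t-1}\mX_{t-1}^\prime\mD_n^{-1}\big)\vw=o_{\P}(1)O_{\P}(1)$ uniformly in $s$, by Lemmas~\ref{lem:MAX} and \ref{lem:SUM}. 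Since $g_t(0)=\int_0^\infty f_{(\varepsilon_t,\delta_t)^\prime\mid\mathcal{F}_{t-1}}(x,0)\D x$, the first convergence in Assumption~\ref{ass:K ast} gives $\sum_{t=1}^{\lfloor ns\rfloor}g_t(0)\mD_n^{-1}\mX_{t-1}\mX_{t-1}^\prime\mD_n^{-1}\overset{\P}{\longrightarrow}s\mK_\ast$ uniformly, which settles the main term and, in particular, shows $\sup_s|\overline{V}_n(\vw,s)|=O_{\P}(1)$.

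For $V_n-\overline{V}_n$, note that this is a sum over a martingale difference array, so I would verify the hypotheses of Theorem~3.33 in Chapter~VIII of \citet{JS87}. The conditional variance is dominated by $\sum_{t=1}^{\lfloor ns\rfloor}\E_{t-1}[\nu_t^2(\vw)]$; using $\1_{\{\varepsilon_t>0\}}^2=\1_{\{\varepsilon_t>0\}}$, the disjointness of the remaining indicators, and $|\delta_t-c_t|\leq|c_t|$ on their union, one gets $\E_{t-1}[\nu_t^2(\vw)]\leq|c_t|\,\overline{\nu}_t(\vw)$ with $\overline{\nu}_t(\vw)\geq0$, so the sum is at most $\Vert\vw\Vert\max_{t\leq n}\Vert\mD_n^{-1}\mX_{t-1}\Vert\cdot\overline{V}_n(\vw,1)=o_{\P}(1)O_{\P}(1)$ uniformly, by Lemma~\ref{lem:MAX} and the boundedness of $\overline{V}_n$ just established. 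The conditional Lindeberg condition again follows from the unconditional one, and since $|\nu_t(\vw)-\overline{\nu}_t(\vw)|^2\leq K\Vert\mD_n^{-1}\mX_{t-1}\Vert^2$ the verification is verbatim that in Proposition~\ref{lem:LLN} (split into $K/n$ and $Kn^{-(1+\kappa)}\Vert\vx_{t-1}\Vert^2$ and invoke Proposition~A1~(ii) of \citet{MP20}). Theorem~3.33 of \citet[Chapter~VIII]{JS87} then yields $V_n(\vw,s)-\overline{V}_n(\vw,s)\overset{d}{\longrightarrow}0$ in $D[0,1]$, and the continuous mapping theorem upgrades this to $\sup_{s\in[0,1]}|V_n(\vw,s)-\overline{V}_n(\vw,s)|=o_{\P}(1)$. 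Combining the two displays concludes the proof.

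The step requiring most care is the first one: one must correctly recognize that the conditioning event $\{\varepsilon_t>0\}$ merely replaces the conditional density of the quantile error by the sub-density $g_t(\cdot)=\int_0^\infty f_{(\varepsilon_t,\delta_t)^\prime\mid\mathcal{F}_{t-1}}(x,\cdot)\D x$ of $\delta_t$ on $\{\varepsilon_t>0\}$, and then check that $g_t$ carries exactly the two properties the argument needs — its value at $0$ feeding the matrix $\mK_\ast$ of Assumption~\ref{ass:K ast}, and its Lipschitz continuity supplied by Assumption~\ref{ass:innov CoVaR}~\eqref{it:Lipschitz CoVaR}. Once this bookkeeping is in place, everything else transcribes from Proposition~\ref{lem:LLN}.
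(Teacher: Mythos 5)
Your proposal is correct and follows essentially the same route as the paper's proof: split into the compensator $\overline{V}_n$ and the martingale remainder $V_n-\overline{V}_n$, handle $\overline{V}_n$ by Fubini (your sub-density $g_t$), the Lipschitz bound of Assumption~\ref{ass:innov CoVaR}~\eqref{it:Lipschitz CoVaR}, and the first limit in Assumption~\ref{ass:K ast}, and control $V_n-\overline{V}_n$ via the FCLT for martingale difference arrays exactly as in Proposition~\ref{lem:LLN}. The only cosmetic difference is that the paper splits $\overline{V}_{1n}$ directly into a main term with $g_t(0)$ plus a Lipschitz remainder, rather than invoking integration by parts and the mean value theorem as you do (and as the paper itself does in Proposition~\ref{lem:LLN}); the two routes are interchangeable.
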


\begin{proof}
See Appendix~\ref{sec:CoVaRest Lemmas}.
\end{proof}

\begin{prop}\label{lem:LLN CLT}
Under the assumptions of Theorem~\ref{thm:CoVaR est} it holds that, as $n\to\infty$,
\begin{multline*}
	\sum_{t=\lfloor nr\rfloor+1}^{\lfloor ns\rfloor}\big[\1_{\{\mX_{t-1}^\prime[\widehat{\valpha}_n(r,s) - \valpha_0]<\epsilon_t\leq0\}} - \1_{\{0<\epsilon_t\leq \mX_{t-1}^\prime[\widehat{\valpha}_n(r,s) - \valpha_0]\}}\big]\psi_{\beta}(\delta_{t})\mD_n^{-1}\mX_{t-1}\\
\overset{d}{\longrightarrow}\big[(1-\beta)\mK - \mK_{\dagger}\big]\mSigma^{1/2} \big[\mW(s)-\mW(r)\big]\qquad\text{in }(\ell^{\infty}(\mathcal{D}_{\iota}))^{k+1},
\end{multline*}
where $\mW(\cdot)$ and $\mSigma$ are the same as in Theorem~\ref{thm:std est}
\end{prop}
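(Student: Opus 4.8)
The plan is to view the left-hand side as a random field in a deterministic argument $\vw\in\mathbb{R}^{k+1}$, evaluated at the random, path-valued point $\vw_n(s):=\mD_n\big[\widehat{\valpha}_n(0,s)-\valpha_0\big]$, and to combine a pointwise-in-$\vw$, uniform-in-$s$ limit with a stochastic-equicontinuity bound that licenses the substitution $\vw=\vw_n(s)$. By Theorem~\ref{thm:std est}, $s\vw_n(s)\overset{d}{\longrightarrow}\mSigma^{1/2}\mW(s)$ in $D_{k+1}[\epsilon,1]$ with $\mW$ the Brownian motion appearing in the proposition, and $\sup_{s\in[\epsilon,1]}\Vert\vw_n(s)\Vert=O_{\P}(1)$. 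The point is that the field converges to $s\,\big[(1-\beta)\mK-\mK_{\dagger}\big]\vw$, so that plugging in $\vw_n(s)$ produces $\big[(1-\beta)\mK-\mK_{\dagger}\big]\big(s\vw_n(s)\big)$, which by the continuous mapping theorem converges to $\big[(1-\beta)\mK-\mK_{\dagger}\big]\mSigma^{1/2}\mW(s)$.

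For fixed $\vw$ set
\[
	\mathcal{M}_n(\vw,s)=\sum_{t=1}^{\lfloor ns\rfloor}\big[\1_{\{\mX_{t-1}^\prime\mD_n^{-1}\vw<\varepsilon_t\leq0\}}-\1_{\{0<\varepsilon_t\leq\mX_{t-1}^\prime\mD_n^{-1}\vw\}}\big]\psi_{\beta}(\delta_t)\mD_n^{-1}\mX_{t-1},
\]
so that, since $\varepsilon_t=Y_t-\mX_{t-1}^\prime\valpha_0$, the quantity in the proposition equals $\mathcal{M}_n(\vw_n(s),s)$. The first step is to show $\sup_{s\in[0,1]}\big\Vert\mathcal{M}_n(\vw,s)-s\big[(1-\beta)\mK-\mK_{\dagger}\big]\vw\big\Vert\overset{\P}{\longrightarrow}0$ for each $\vw$, by splitting $\mathcal{M}_n(\vw,s)$ into its $\mathcal{F}_{t-1}$-compensator and a martingale remainder. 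For the remainder, $|\psi_{\beta}|\leq1$ and the disjointness of the two indicator events yield a conditional second moment bounded by $2\overline{f}\Vert\vw\Vert\max_{t}\Vert\mD_n^{-1}\mX_{t-1}\Vert\sum_{t}\Vert\mD_n^{-1}\mX_{t-1}\Vert^2$ (using Assumption~\ref{ass:innov} on the event $\{\max_t|\mX_{t-1}^\prime\mD_n^{-1}\vw|\leq d\}$, which has probability tending to one by Lemma~\ref{lem:MAX}); this is $o_{\P}(1)$ by Lemmas~\ref{lem:MAX}--\ref{lem:SUM}, and Theorem~3.33 of \citet[Ch.~VIII]{JS87}, invoked as in the proofs of Propositions~\ref{lem:CLT}--\ref{lem:LLN}, then sends the remainder to $0$ in $D[0,1]$. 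For the compensator, writing $\int_{-\infty}^{0}f_{(\varepsilon_t,\delta_t)^\prime\mid\mathcal{F}_{t-1}}(x,y)\D y=f_{\varepsilon_t\mid\mathcal{F}_{t-1}}(x)-\int_{0}^{\infty}f_{(\varepsilon_t,\delta_t)^\prime\mid\mathcal{F}_{t-1}}(x,y)\D y$ and Taylor-expanding around $x=0$ (the quadratic remainder being controlled by the Lipschitz bounds in Assumptions~\ref{ass:innov}~\eqref{it:Lipschitz} and \ref{ass:innov CoVaR}, handled exactly as $\overline{V}_{2n}$ in the proof of Proposition~\ref{lem:LLN}), the leading term is
\[
	\sum_{t=1}^{\lfloor ns\rfloor}\bigg[(1-\beta)f_{\varepsilon_t\mid\mathcal{F}_{t-1}}(0)-\int_{0}^{\infty}f_{(\varepsilon_t,\delta_t)^\prime\mid\mathcal{F}_{t-1}}(0,y)\D y\bigg]\mD_n^{-1}\mX_{t-1}\mX_{t-1}^\prime\mD_n^{-1}\vw,
\]
which converges to $s\big[(1-\beta)\mK-\mK_{\dagger}\big]\vw$ uniformly in $s$ by Assumptions~\ref{ass:K} and \ref{ass:K ast}.

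The second step upgrades this to an estimate uniform in $\vw$ over compacta: for every compact $\mathcal{K}\subset\mathbb{R}^{k+1}$ and every $\zeta>0$,
\[
	\lim_{\eta\downarrow0}\limsup_{n\to\infty}\P\bigg\{\sup_{\substack{\vw_1,\vw_2\in\mathcal{K}\\\Vert\vw_1-\vw_2\Vert\leq\eta}}\ \sup_{s\in[0,1]}\big\Vert\mathcal{M}_n(\vw_1,s)-\mathcal{M}_n(\vw_2,s)\big\Vert>\zeta\bigg\}=0.
\]
Indeed, for fixed $\vw_1,\vw_2$ the indicator difference is supported on an $\varepsilon_t$-interval of conditional probability $\leq\overline{f}\Vert\vw_1-\vw_2\Vert\Vert\mD_n^{-1}\mX_{t-1}\Vert$, so the same decomposition gives a compensator equal to $s\big[(1-\beta)\mK-\mK_{\dagger}\big](\vw_1-\vw_2)+o_{\P}(1)$ (hence $\leq K\eta$ in norm) and a martingale part of conditional variance $\leq K\Vert\vw_1-\vw_2\Vert\max_t\Vert\mD_n^{-1}\mX_{t-1}\Vert\sum_t\Vert\mD_n^{-1}\mX_{t-1}\Vert^2=o_{\P}(1)$; covering $\mathcal{K}$ by finitely many $\eta$-balls and controlling the oscillation of $\mathcal{M}_n(\cdot,s)$ over each ball through indicators of the correspondingly fattened intervals, together with Step~1 at the centres, yields the display. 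Combining Steps~1--2 gives $\sup_{\vw\in\mathcal{K}}\sup_{s\in[0,1]}\big\Vert\mathcal{M}_n(\vw,s)-s\big[(1-\beta)\mK-\mK_{\dagger}\big]\vw\big\Vert\overset{\P}{\longrightarrow}0$; since the range of $\vw_n(\cdot)$ lies in a fixed compact set with probability tending to one, substituting $\vw=\vw_n(s)$ gives $\mathcal{M}_n(\vw_n(s),s)=\big[(1-\beta)\mK-\mK_{\dagger}\big]\big(s\vw_n(s)\big)+o_{\P}(1)$ uniformly in $s\in[\epsilon,1]$, and the continuous mapping theorem applied together with Theorem~\ref{thm:std est} delivers the claimed limit in $D_{k+1}[\epsilon,1]$.

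The main obstacle is the second step: the uniform-in-$s$ control of the oscillation of $\mathcal{M}_n(\vw,s)$ in $\vw$. Because the summands carry the sign-changing factor $\mX_{t-1}$, $\mathcal{M}_n(\cdot,s)$ is not monotone in $\vw$, so no simple sandwiching is available and the argument must be routed through the compensator/martingale split and a covering argument tailored to the small-interval structure of the indicators; the joint law of $(\varepsilon_t,\delta_t,\vu_t)$ enters only through Assumptions~\ref{ass:innov CoVaR} and \ref{ass:K ast} and Lemmas~\ref{lem:MAX}--\ref{lem:SUM}, which keeps the remaining estimates routine once the decomposition is set up.
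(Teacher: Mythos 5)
Your proposal is correct and follows essentially the same route as the paper: split $\mathcal{M}_n(\vw,s)$ into compensator and martingale remainder, establish (i) uniform-in-$s$ convergence of the compensator to $s\big[(1-\beta)\mK-\mK_{\dagger}\big]\vw$ via a Taylor/Lipschitz expansion of the conditional densities (Lemma~\ref{lem:7}), (ii) $o_{\P}(1)$-negligibility of the martingale part, first pointwise in $\vw$ (Lemma~\ref{lem:8}) and then uniformly over compacta via a grid-plus-oscillation covering argument with fattened indicator intervals (Lemma~\ref{lem:9}), and finally substitute $\vw_n(s)=\mD_n[\widehat{\valpha}_n(0,s)-\valpha_0]$ using the uniform tightness from Theorem~\ref{thm:std est} and the CMT. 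The only difference is organizational — you fold the compensator convergence and the martingale negligibility together in Step~1 and then upgrade both to uniformity at once, whereas the paper establishes the compensator uniformly over compacta from the outset and devotes a separate covering lemma to the martingale part — but the underlying estimates, decompositions, and conclusion are the same.
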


\begin{proof}
See Appendix~\ref{sec:CoVaRest Lemmas}.
\end{proof}

\begin{prop}\label{lem:LLN3}
Under the assumptions of Theorem~\ref{thm:CoVaR est} it holds for any $\vw\in\mathbb{R}^{k+1}$ that, as $n\to\infty$,
\begin{multline*}
	\sup_{(r,s)\in\mathcal{D}_{\iota}}\bigg|\sum_{t=\lfloor nr\rfloor+1}^{\lfloor ns\rfloor}\big[\1_{\{\mX_{t-1}^\prime[\widehat{\valpha}_n(r,s) - \valpha_0]<\epsilon_t\leq0\}} - \1_{\{0<\epsilon_t\leq \mX_{t-1}^\prime[\widehat{\valpha}_n(r,s) - \valpha_0]\}}\big]\big(\delta_{t} - \vw^\prime\mD_n^{-1}\mX_{t-1}\big)\times\\
	\times\big[\1_{\{\vw^\prime\mD_n^{-1}\mX_{t-1}<\delta_{t}<0\}} - \1_{\{0<\delta_{t}<\vw^\prime\mD_n^{-1}\mX_{t-1}\}}\big]\bigg|=o_{\P}(1).
\end{multline*}
\end{prop}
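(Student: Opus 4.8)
The plan is to exploit the ``double smallness'' of the summand and to strip the dependence on $\widehat{\valpha}_n(0,s)$ by a tightness argument, thereby reducing the functional claim to a scalar estimate. Write $S_n(s)$ for the sum inside $|\cdot|$. The first bracketed indicator difference is nonzero only when $\varepsilon_t$ lies between $0$ and $\mX_{t-1}^\prime[\widehat{\valpha}_n(0,s)-\valpha_0]$, while the factor in the second line is nonzero only when $\delta_t$ lies between $0$ and $\vw^\prime\mD_n^{-1}\mX_{t-1}$, in which case $|\delta_t-\vw^\prime\mD_n^{-1}\mX_{t-1}|\leq|\vw^\prime\mD_n^{-1}\mX_{t-1}|$. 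By Theorem~\ref{thm:std est}, $s\mapsto\mD_n[\widehat{\valpha}_n(0,s)-\valpha_0]$ converges weakly in $D_{k+1}[\epsilon,1]$, hence the family is tight, so for every $\eta>0$ there is $M<\infty$ with $\P(A_n)\geq1-\eta$ for all $n$, where $A_n=\{\sup_{s\in[\epsilon,1]}\Vert\mD_n[\widehat{\valpha}_n(0,s)-\valpha_0]\Vert\leq M\}$. On $A_n$ one has $|\mX_{t-1}^\prime[\widehat{\valpha}_n(0,s)-\valpha_0]|\leq M\Vert\mD_n^{-1}\mX_{t-1}\Vert$ for all $t$ and all $s\in[\epsilon,1]$, so, uniformly in $s\in[\epsilon,1]$,
\[
|S_n(s)|\;\leq\;T_n:=\sum_{t=1}^{n}\1_{\{|\varepsilon_t|\leq M\Vert\mD_n^{-1}\mX_{t-1}\Vert\}}\,\big|\vw^\prime\mD_n^{-1}\mX_{t-1}\big|\,\1_{\{|\delta_t|\leq|\vw^\prime\mD_n^{-1}\mX_{t-1}|\}}.
\]
It therefore suffices to show $T_n\1_{B_n}=o_{\P}(1)$ for a suitable high-probability event $B_n$, since $\P(A_n^c)<\eta$ and $\eta>0$ is arbitrary.

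Next I would set $B_{n,t}=\{\Vert\mD_n^{-1}\mX_{t-1}\Vert\leq\delta_0\}\in\mathcal{F}_{t-1}$ with $\delta_0=d/\max(M,\Vert\vw\Vert)$, so that $M\Vert\mD_n^{-1}\mX_{t-1}\Vert\leq d$ and $|\vw^\prime\mD_n^{-1}\mX_{t-1}|\leq d$ on $B_{n,t}$, and $B_n=\bigcap_{t=1}^n B_{n,t}$, which satisfies $\P(B_n)\to1$ by Lemma~\ref{lem:MAX}. Let $\xi_t$ be the $t$-th summand of $T_n$ multiplied by $\1_{B_{n,t}}$; since $\1_{B_n}\leq\1_{B_{n,t}}$ for every $t$, we get $\1_{B_n}T_n\leq U_n:=\sum_{t=1}^n\xi_t$, and I would split $U_n=\mu_n+(U_n-\mu_n)$ with $\mu_n:=\sum_{t=1}^n\E_{t-1}[\xi_t]$. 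Because $\mX_{t-1}$ is $\mathcal{F}_{t-1}$-measurable and, by Assumption~\ref{ass:innov CoVaR}~\eqref{it:dens bound CoVaR}, the conditional density of $(\varepsilon_t,\delta_t)^\prime$ is bounded by $\overline{f}$ on $[-d,d]^2$, on $B_{n,t}$ the conditional probability of the rectangle $\{|\varepsilon_t|\leq M\Vert\mD_n^{-1}\mX_{t-1}\Vert,\ |\delta_t|\leq|\vw^\prime\mD_n^{-1}\mX_{t-1}|\}$ is at most $4\overline{f}M\Vert\mD_n^{-1}\mX_{t-1}\Vert\,|\vw^\prime\mD_n^{-1}\mX_{t-1}|$, whence
\[
\mu_n\leq4\overline{f}M\sum_{t=1}^{n}\Vert\mD_n^{-1}\mX_{t-1}\Vert\,\big|\vw^\prime\mD_n^{-1}\mX_{t-1}\big|^2\leq4\overline{f}M\Vert\vw\Vert^2\Big(\max_{t\leq n}\Vert\mD_n^{-1}\mX_{t-1}\Vert\Big)\sum_{t=1}^{n}\Vert\mD_n^{-1}\mX_{t-1}\Vert^2=o_{\P}(1)O_{\P}(1),
\]
using Lemma~\ref{lem:MAX} for the maximum and Lemma~\ref{lem:SUM} (at $s=1$, via the trace) for the sum.

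For the martingale remainder $U_n-\mu_n=\sum_{t=1}^n\{\xi_t-\E_{t-1}[\xi_t]\}$, a sum of bounded martingale differences with respect to $\{\mathcal{F}_t\}$, I would argue exactly as in the proof of Proposition~\ref{lem:LLN} (i.e., via Theorem~3.33 of \citet[Chapter~VIII]{JS87} with the degenerate Gaussian limit): the conditional second moments satisfy
\[
\sum_{t=1}^{n}\E_{t-1}\big[(\xi_t-\E_{t-1}\xi_t)^2\big]\leq\sum_{t=1}^{n}\E_{t-1}[\xi_t^2]\leq4\overline{f}M\Vert\vw\Vert^3\Big(\max_{t\leq n}\Vert\mD_n^{-1}\mX_{t-1}\Vert\Big)^2\sum_{t=1}^{n}\Vert\mD_n^{-1}\mX_{t-1}\Vert^2=o_{\P}(1)
\]
by the same two lemmas, while $\max_{t\leq n}|\xi_t-\E_{t-1}\xi_t|\leq2\Vert\vw\Vert\max_{t\leq n}\Vert\mD_n^{-1}\mX_{t-1}\Vert=o_{\P}(1)$ makes the conditional Lindeberg condition trivial; hence $U_n-\mu_n\overset{d}{\longrightarrow}0$ and so $U_n-\mu_n=o_{\P}(1)$. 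Combining, $U_n=o_{\P}(1)$, and on $A_n\cap B_n$ the supremum in the proposition is bounded by $U_n$; since $\P(A_n^c)<\eta$ and $\P(B_n^c)\to0$, letting $n\to\infty$ and then $\eta\downarrow0$ gives the claim. The only real obstacle is the uniformity over $s$ through the estimator-dependent first indicator; once this is absorbed into the constant $M$ via tightness from Theorem~\ref{thm:std est}, what remains is a routine double-smallness bound of the kind already used for Propositions~\ref{lem:LLN} and \ref{lem:LLN2}.
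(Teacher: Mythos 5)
Your proposal is correct and follows essentially the same route as the paper's proof: use tightness of $s\mapsto\mD_n[\widehat{\valpha}_n(0,s)-\valpha_0]$ from Theorem~\ref{thm:std est} to replace the estimator-dependent indicator by a deterministic one on a high-probability event, work on $\{\max_t\Vert\mD_n^{-1}\mX_{t-1}\Vert\leq d^\ast\}$ so the density bound applies, split off the conditional mean (bounded via Assumption~\ref{ass:innov CoVaR} together with Lemmas~\ref{lem:MAX} and \ref{lem:SUM}), and dispatch the remaining martingale-difference array via a CLT-type argument with degenerate limit. The only cosmetic differences are that you absorb all four indicator combinations at once through a rectangle bound rather than treating one representative term as the paper does, and you cite Theorem~3.33 of \citet{JS87} where the paper uses Corollary~3.1 of \citet{HH80}; both are interchangeable here.
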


\begin{proof}
See Appendix~\ref{sec:CoVaRest Lemmas}.
\end{proof}

\begin{proof}[{\textbf{Proof of Theorem~\ref{thm:CoVaR est}:}}]
In the first part of this proof, we show that
\begin{equation}\label{eq:thm2 simple}
(s-r)\mD_n\big[\widehat{\vbeta}_n(r,s) - \vbeta_0\big]\overset{d}{\longrightarrow}\mSigma_{\ast}^{1/2}\big[\mW_{\ast}(s)-\mW_{\ast}(r)\big]\qquad\text{in }(\ell^{\infty}(\mathcal{D}_{\iota}))^{k+1},
\end{equation}
where $\mW_{\ast}(\cdot)$ is a $(k+1)$-variate standard Brownian motion and $\mSigma_{\ast}=\alpha(1-\alpha)\mK_{\ast}^{-1}\mOmega_{\ast}\mK_{\ast}^{-1}$ with 
\begin{equation}\label{eq:(E.2)}
	\mOmega_{\ast}=\alpha^{-1}\beta(1-\beta)\mOmega + \alpha^{-1}(1-\alpha)^{-1}\big[(1-\beta)\mK - \mK_{\dagger}\big]\mK^{-1}\mOmega\mK^{-1}\big[(1-\beta)\mK - \mK_{\dagger}\big]
\end{equation}
as defined in Appendix~\ref{Asymptotic Variance-Covariance Matrices}.
This part is similar to the proof of Theorem~\ref{thm:std est}.
To highlight the similarities, we often overload notation by redefining quantities that already appeared in the proof of Theorem~\ref{thm:std est}.
The estimator $\widehat{\vbeta}_n(r,s)$ can equivalently be written as
\begin{align*}
\widehat{\vbeta}_n(r,s) &= \argmin_{\vbeta\in\mathbb{R}^{k+1}} \sum_{t=\lfloor nr\rfloor+1}^{\lfloor ns\rfloor}\1_{\{Y_t>\mX_{t-1}^\prime\widehat{\valpha}_n(r,s)\}}\big[\rho_{\beta}(Z_t-\mX_{t-1}^\prime\vbeta) - \rho_{\beta}(\delta_{t})\big]\\
&= \argmin_{\vbeta\in\mathbb{R}^{k+1}} \sum_{t=\lfloor nr\rfloor+1}^{\lfloor ns\rfloor}\1_{\{Y_t>\mX_{t-1}^\prime\widehat{\valpha}_n(r,s)\}}\Big[\rho_{\beta}\big(Z_t-\mX_{t-1}^\prime\vbeta_0-\mX_{t-1}^\prime(\vbeta-\vbeta_0)\big) - \rho_{\beta}(\delta_{t})\Big]\\
&\overset{\eqref{eq:CoVaR model}}{=} \argmin_{\vbeta\in\mathbb{R}^{k+1}} \sum_{t=\lfloor nr\rfloor+1}^{\lfloor ns\rfloor}\1_{\{Y_t>\mX_{t-1}^\prime\widehat{\valpha}_n(r,s)\}}\Big[\rho_{\beta}\big(\delta_{t} - (\vbeta-\vbeta_0)^\prime\mX_{t-1}\big) - \rho_{\beta}(\delta_{t})\Big].
\end{align*}
Therefore, if we define 
\begin{equation}\label{eq:(p.16W)}
 f_n(\vw,r,s)=\sum_{t=\lfloor nr\rfloor+1}^{\lfloor ns\rfloor}\1_{\{Y_t>\mX_{t-1}^\prime\widehat{\valpha}_n(r,s)\}}\big[\rho_{\beta}(\delta_{t} - \vw^\prime\mD_n^{-1}\mX_{t-1}) - \rho_{\beta}(\delta_{t})\big],
\end{equation}
then the minimizer $\vw_n(r,s)$ of $f_n(\cdot,r,s)$ satisfies that 
\[
	\vw_n(r,s)=\mD_n\big[\widehat{\vbeta}_n(r,s) - \vbeta_0\big].
\]
As in the proof of Theorem~\ref{thm:std est}, we invoke Theorem~2 of \citet{Kat09} to derive the weak limit of $\vw_n(\cdot,\cdot)$.

Rewrite $f_n(\cdot, r,s)$ by noting that
\begin{align}
	\1_{\{Y_t>\mX_{t-1}^\prime\widehat{\valpha}_n(r,s)\}} &= \1_{\{Y_t-\mX_{t-1}^\prime\valpha_0>\mX_{t-1}^\prime[\widehat{\valpha}_n(r,s)-\valpha_0]\}}\notag\\
	&\overset{\eqref{eq:quantile model}}{=} \1_{\{\epsilon_t>\mX_{t-1}^\prime[\widehat{\valpha}_n(r,s)-\valpha_0]\}} - \1_{\{\epsilon_t>0\}} + \1_{\{\epsilon_t>0\}}\notag\\
	&= \1_{\{\mX_{t-1}^\prime[\widehat{\valpha}_n(r,s) - \valpha_0]<\epsilon_t\leq0\}} - \1_{\{0<\epsilon_t\leq \mX_{t-1}^\prime[\widehat{\valpha}_n(r,s) - \valpha_0]\}} + \1_{\{\epsilon_t>0\}}.\label{eq:ind trafo}
\end{align}
Observe that from Assumption~\ref{ass:innov CoVaR},
\[
	\P\big\{\exists\ t\in\mathbb{N}\colon \delta_t=0\big\}=\P\bigg\{\bigcup_{t\in\mathbb{N}}\{ \delta_t=0\}\bigg\}\leq\sum_{t\in\mathbb{N}}\P\big\{ \delta_t=0\big\}=0.
\]
Using \eqref{eq:(1)} and in a second step \eqref{eq:ind trafo}, we deduce from \eqref{eq:(p.16W)} that a.s.
\begin{align*}
	f_n(\vw,r,s) &= -\vw^\prime\sum_{t=\lfloor nr\rfloor+1}^{\lfloor ns\rfloor}\1_{\{Y_t>\mX_{t-1}^\prime\widehat{\valpha}_n(r,s)\}}\psi_{\beta}(\delta_{t})\mD_n^{-1}\mX_{t-1}\\
	& \hspace{0.5cm} + \sum_{t=\lfloor nr\rfloor+1}^{\lfloor ns\rfloor}\1_{\{Y_t>\mX_{t-1}^\prime\widehat{\valpha}_n(r,s)\}}\big(\delta_{t} - \vw^\prime\mD_n^{-1}\mX_{t-1}\big)\times\\
	&\hspace{6.8cm}\times\big[\1_{\{\vw^\prime\mD_n^{-1}\mX_{t-1}<\delta_{t}<0\}} - \1_{\{0<\delta_{t}<\vw^\prime\mD_n^{-1}\mX_{t-1}\}}\big]\\
	&= -\vw^\prime\sum_{t=\lfloor nr\rfloor+1}^{\lfloor ns\rfloor}\1_{\{\epsilon_t>0\}}\psi_{\beta}(\delta_{t})\mD_n^{-1}\mX_{t-1}\\
	& \hspace{0.5cm} + \sum_{t=\lfloor nr\rfloor+1}^{\lfloor ns\rfloor}\1_{\{\epsilon_t>0\}}\big(\delta_{t} - \vw^\prime\mD_n^{-1}\mX_{t-1}\big)\big[\1_{\{\vw^\prime\mD_n^{-1}\mX_{t-1}<\delta_{t}<0\}} - \1_{\{0<\delta_{t}<\vw^\prime\mD_n^{-1}\mX_{t-1}\}}\big]\\
	&\hspace{0.5cm} - \vw^\prime \sum_{t=\lfloor nr\rfloor+1}^{\lfloor ns\rfloor}\big[\1_{\{\mX_{t-1}^\prime[\widehat{\valpha}_n(r,s) - \valpha_0]<\epsilon_t\leq0\}} - \1_{\{0<\epsilon_t\leq \mX_{t-1}^\prime[\widehat{\valpha}_n(r,s) - \valpha_0]\}}\big]\psi_{\beta}(\delta_{t})\mD_n^{-1}\mX_{t-1}\\
	&\hspace{0.5cm} + \sum_{t=\lfloor nr\rfloor+1}^{\lfloor ns\rfloor}\big[\1_{\{\mX_{t-1}^\prime[\widehat{\valpha}_n(r,s) - \valpha_0]<\epsilon_t\leq0\}} - \1_{\{0<\epsilon_t\leq \mX_{t-1}^\prime[\widehat{\valpha}_n(r,s) - \valpha_0]\}}\big]\big(\delta_{t} - \vw^\prime\mD_n^{-1}\mX_{t-1}\big)\times\\
	&\hspace{6.8cm}\times\big[\1_{\{\vw^\prime\mD_n^{-1}\mX_{t-1}<\delta_{t}<0\}} - \1_{\{0<\delta_{t}<\vw^\prime\mD_n^{-1}\mX_{t-1}\}}\big]\\
	&= -\vw^\prime\bigg\{\sum_{t=\lfloor nr\rfloor+1}^{\lfloor ns\rfloor}\1_{\{\epsilon_t>0\}}\psi_{\beta}(\delta_{t})\mD_n^{-1}\mX_{t-1} \\
	&\hspace{0.5cm} + \sum_{t=\lfloor nr\rfloor+1}^{\lfloor ns\rfloor}\big[\1_{\{\mX_{t-1}^\prime[\widehat{\valpha}_n(r,s) - \valpha_0]<\epsilon_t\leq0\}} - \1_{\{0<\epsilon_t\leq \mX_{t-1}^\prime[\widehat{\valpha}_n(r,s) - \valpha_0]\}}\big]\psi_{\beta}(\delta_{t})\mD_n^{-1}\mX_{t-1}\bigg\}\\
	&\hspace{0.5cm}+ \sum_{t=\lfloor nr\rfloor+1}^{\lfloor ns\rfloor}\1_{\{\epsilon_t>0\}}\big(\delta_{t} - \vw^\prime\mD_n^{-1}\mX_{t-1}\big)\big[\1_{\{\vw^\prime\mD_n^{-1}\mX_{t-1}<\delta_{t}<0\}} - \1_{\{0<\delta_{t}<\vw^\prime\mD_n^{-1}\mX_{t-1}\}}\big]\\
	&\hspace{0.5cm} + \sum_{t=\lfloor nr\rfloor+1}^{\lfloor ns\rfloor}\big[\1_{\{\mX_{t-1}^\prime[\widehat{\valpha}_n(r,s) - \valpha_0]<\epsilon_t\leq0\}} - \1_{\{0<\epsilon_t\leq \mX_{t-1}^\prime[\widehat{\valpha}_n(r,s) - \valpha_0]\}}\big]\big(\delta_{t} - \vw^\prime\mD_n^{-1}\mX_{t-1}\big)\times\\
	&\hspace{6.8cm}\times\big[\1_{\{\vw^\prime\mD_n^{-1}\mX_{t-1}<\delta_{t}<0\}} - \1_{\{0<\delta_{t}<\vw^\prime\mD_n^{-1}\mX_{t-1}\}}\big].
\end{align*}
Define
\begin{equation*}
	g_n(\vw,r,s) = -\vw^\prime\mW_n(r,s) +\frac{1}{2}\vw^\prime(s-r)\mK_{\ast}\vw,
\end{equation*}
where
\begin{multline*}
	\mW_n(r,s):=\sum_{t=\lfloor nr\rfloor+1}^{\lfloor ns\rfloor}\1_{\{\epsilon_t>0\}}\psi_{\beta}(\delta_{t})\mD_n^{-1}\mX_{t-1}  \\
		+\sum_{t=\lfloor nr\rfloor+1}^{\lfloor ns\rfloor}\big[\1_{\{\mX_{t-1}^\prime[\widehat{\valpha}_n(r,s) - \valpha_0]<\epsilon_t\leq0\}} - \1_{\{0<\epsilon_t\leq \mX_{t-1}^\prime[\widehat{\valpha}_n(r,s) - \valpha_0]\}}\big]\psi_{\beta}(\delta_{t})\mD_n^{-1}\mX_{t-1}.
\end{multline*}
By Propositions~\ref{lem:LLN2} and \ref{lem:LLN3}, as $n\to\infty$,
\[
	\sup_{(r,s)\in\mathcal{D}_{\iota}}\big|f_n(\vw,r,s) - g_n(\vw,r,s)\big|=o_{\P}(1).
\]
Also, by Propositions~\ref{lem:CLT2} and \ref{lem:LLN CLT}, 
\begin{multline}\label{eq:WN2}
	\mW_n(r,s)\overset{d}{\longrightarrow}\mW_{\infty}(r,s): = \sqrt{(1-\alpha)(1-\beta)\beta}\mOmega^{1/2}\big[\mW_{\dagger}(s)-\mW_{\dagger}(r)\big]\\
	+ \big[(1-\beta)\mK- \mK_{\dagger}\big]\mSigma^{1/2}\big[\mW(s)-\mW(r)\big]\qquad\text{in }(\ell^{\infty}(\mathcal{D}_{\iota}))^{k+1}.
\end{multline}
Therefore, the CMT implies that for every $\eta>0$ there exists $M>0$, such that
\[
	\limsup_{n\to\infty}\P\bigg\{\sup_{(r,s)\in\mathcal{D}_{\iota}}\big\Vert\mW_n(r,s)\big\Vert>M\bigg\}\leq\eta.
\]

Hence, we may apply Theorem~2 in \citet{Kat09} to deduce that
\[
	\vw_n(r,s)=\frac{1}{s-r}\mK_{\ast}^{-1}\mW_n(r,s)+\vr_n(r,s),
\]
where $\sup_{(r,s)\in\mathcal{D}_{\iota}}\big\Vert\vr_n(r,s)\big\Vert=o_{\P}(1)$.
Therefore, it follows by \eqref{eq:WN2} that, as $n\to\infty$,
\begin{equation}\label{eq:w conv CoVaR}
	\vw_n(r,s)\overset{d}{\longrightarrow}\vw_{\infty}(r,s)\qquad\text{in }(\ell^{\infty}(\mathcal{D}_{\iota}))^{k+1},
\end{equation}
where
\begin{align}
	\vw_{\infty}(r,s)&= \frac{1}{s-r}\mK_{\ast}^{-1}\Big\{\sqrt{(1-\alpha)(1-\beta)\beta}\mOmega^{1/2}\big[\mW_{\dagger}(s)-\mW_{\dagger}(r)\big]\notag\\
	&\hspace{5cm} + \big[(1-\beta)\mK- \mK_{\dagger}\big]\mSigma^{1/2}\big[\mW(s)-\mW(r)\big]\Big\}.\label{eq:w def}
\end{align}

It remains to show that $(s-r)\vw_{\infty}(r,s)$ is distributionally equivalent to the limit in \eqref{eq:thm2 simple}.
The key step is to show that $\mW_{\infty}(r,s)$ is Brownian motion. 
To do so, we show that the Brownian motions $\mW_{\dagger}(\cdot)$ and $\mW(\cdot)$ in $\mW_{\infty}(r,s)$ are independent of each other.
Here, $\mW(\cdot)$ is the Brownian motion from Proposition~\ref{lem:CLT} (and Proposition~\ref{lem:LLN CLT}) and $\mW_{\dagger}(\cdot)$ that of Proposition~\ref{lem:CLT2}.
To show independence, observe that mimicking the proofs of Propositions~\ref{lem:CLT} and \ref{lem:CLT2} yields that
\begin{equation}\label{eq:(p.15.2.1)}
	\sum_{t=1}^{\lfloor ns\rfloor}\vzeta_{nt}:=\sum_{t=1}^{\lfloor ns\rfloor}\begin{pmatrix}\psi_{\alpha}(\epsilon_t)\mD_n^{-1}\mX_{t-1}\\ \1_{\{\epsilon_t>0\}}\psi_{\beta}(\delta_t)\mD_n^{-1}\mX_{t-1} \end{pmatrix}\overset{d}{\longrightarrow}\underline{\mOmega}^{1/2}\underline{\mW}(s)\qquad\text{in }(D[0,1])^{2k+2},
\end{equation}
where $\underline{\mW}(s)=\big(\underline{\mW}_1^\prime(s), \underline{\mW}_2^\prime(s)\big)^\prime$ is a $(2k+2)$-variate standard Brownian motion, and $\underline{\mOmega}$ is the probability limit of
\begin{multline}
	\sum_{t=1}^{n}\E_{t-1}\big[\vzeta_{nt} \vzeta_{nt}^\prime\big]\\
	= \sum_{t=1}^{n}\begin{pmatrix} \E_{t-1}\big[\psi_{\alpha}^2(\epsilon_t)\big] & \E_{t-1}\big[\1_{\{\epsilon_t>0\}}\psi_{\alpha}(\epsilon_t)\psi_{\beta}(\delta_t)\big]\\
	\E_{t-1}\big[\1_{\{\epsilon_t>0\}}\psi_{\alpha}(\epsilon_t)\psi_{\beta}(\delta_t)\big] & \E_{t-1}\big[\1_{\{\epsilon_t>0\}}\psi_{\beta}^2(\delta_t)\big]\end{pmatrix}\otimes \mD_n^{-1}\mX_{t-1}\mX_{t-1}^\prime\mD_n^{-1},\label{eq:(C.3p)}
\end{multline}
where $\otimes$ denotes the Kronecker product.

To simplify \eqref{eq:(C.3p)}, exploit the assumptions on the errors in \eqref{eq:(QRalt)}--\eqref{eq:(CoVaRalt)} to deduce that
\begin{align}
	\E_{t-1}\big[\1_{\{\epsilon_t>0\}}\1_{\{\delta_t\leq0\}}\big] &= \P_{t-1}\big\{\epsilon_t>0\big\}\E_{t-1}\big[\1_{\{\delta_t\leq0\}}\mid\epsilon_t>0\big]\notag\\
	&= \P_{t-1}\big\{\epsilon_t>0\big\}\P_{t-1}\big\{\delta_t\leq0\mid\epsilon_t>0\big\}\notag\\
	&= (1-\alpha)\beta.\label{eq:(p.16.2)}
\end{align}
Using this and $Q_{\alpha}(\epsilon_t\mid\mathcal{F}_{t-1})=0$, we obtain that
\begin{align}
	\E_{t-1}\big[\1_{\{\epsilon_t>0\}}\psi_{\beta}^2(\delta_t)\big] &= \E_{t-1}\big[\1_{\{\epsilon_t>0\}}(\beta-\1_{\{\delta_t\leq0\}})^2\big]\notag\\
	&= \E_{t-1}\big[\1_{\{\epsilon_t>0\}}(\beta^2-2\beta\1_{\{\delta_t\leq0\}} + \1_{\{\delta_t\leq0\}})\big]\notag\\
	&= \beta^2\E_{t-1}\big[\1_{\{\epsilon_t>0\}}\big] - 2\beta\E_{t-1}\big[\1_{\{\epsilon_t>0\}}\1_{\{\delta_t\leq0\}}\big] + \E_{t-1}\big[\1_{\{\epsilon_t>0\}}\1_{\{\delta_t\leq0\}}\big] \notag\\
	&=\beta^2(1-\alpha) -2\beta(1-\alpha)\beta + (1-\alpha)\beta\notag\\
	&=-\beta^2(1-\alpha) + (1-\alpha)\beta\notag\\
	&=(1-\alpha)(1-\beta)\beta.\label{eq:(p.16)}
\end{align}
Moreover,
\begin{align}
	\E_{t-1}\big[\1_{\{\epsilon_t>0\}}\psi_{\alpha}(\epsilon_t)\psi_{\beta}(\delta_t)\big] &= \E_{t-1}\big[\1_{\{\epsilon_t>0\}}(\alpha-\1_{\{\epsilon_t\leq0\}})(\beta-\1_{\{\delta_t\leq0\}})\big]\notag\\
	&=\alpha\E_{t-1}\big[\1_{\{\epsilon_t>0\}}(\beta-\1_{\{\delta_t\leq0\}})\big]\notag\\
	&=\alpha\beta\P_{t-1}\big\{\epsilon_t>0\big\} - \alpha \E_{t-1}\big[\1_{\{\epsilon_t>0\}}\1_{\{\delta_t\leq0\}}\big]\notag\\
	&=\alpha\beta(1-\alpha) - \alpha(1-\alpha)\beta\notag\\
	&=0\label{eq:(p.18)}
\end{align}
from $Q_{\alpha}(\epsilon_t\mid\mathcal{F}_{t-1})=0$ and, in the penultimate step, \eqref{eq:(p.16.2)}.
Also, from $Q_{\alpha}(\epsilon_t\mid\mathcal{F}_{t-1})=0$,
\begin{align}
	\E_{t-1}\big[\psi_{\alpha}^2(\epsilon_t)\big] &= \E_{t-1}\big[(\alpha-\1_{\{\epsilon_t\leq 0\}})^2\big]\notag\\
	&= \E_{t-1}\big[\alpha^2-2\alpha \1_{\{\epsilon_t\leq 0\}} + \1_{\{\epsilon_t\leq 0\}}^2\big]\notag\\
	&= \alpha^2-2\alpha\P_{t-1}\{\epsilon_t\leq 0\} + \P_{t-1}\{\epsilon_t\leq 0\}\notag\\
	&= \alpha^2-2\alpha^2 + \alpha\notag\\
	&= \alpha(1-\alpha). \label{eq:(pp.4)}
\end{align}
Hence, using \eqref{eq:(p.16)}--\eqref{eq:(pp.4)} and Lemma~\ref{lem:SUM} (from Appendix~\ref{sec:QRest Lemmas}) in connection with \eqref{eq:(C.3p)}, we deduce that 
\[
	\sum_{t=1}^{n}\E_{t-1}\big[\vzeta_{nt} \vzeta_{nt}^\prime\big]\overset{\P}{\longrightarrow}\begin{pmatrix}\alpha(1-\alpha)\mOmega & \vzero \\
	\vzero & (1-\alpha)(1-\beta)\beta\mOmega\end{pmatrix}=\underline{\mOmega}.
\]
From \eqref{eq:(p.15.2.1)} and Propositions~\ref{lem:CLT} and \ref{lem:CLT2}, it must be the case that
\[
	\underline{\mOmega}^{1/2}\underline{\mW}(s)=\begin{pmatrix}\sqrt{\alpha(1-\alpha)}\mOmega^{1/2}\underline{\mW}_{1}(s)\\
	\sqrt{(1-\alpha)(1-\beta)\beta}\mOmega^{1/2}\overline{\mW}_{2}(s)\end{pmatrix}\overset{d}{=}\begin{pmatrix}\sqrt{\alpha(1-\alpha)}\mOmega^{1/2}\mW(s)\\
	\sqrt{(1-\alpha)(1-\beta)\beta}\mOmega^{1/2}\mW_{\dagger}(s)\end{pmatrix}.
\]
Since $\overline{\mW}_{1}(s)$ and $\overline{\mW}_{2}(s)$ are independent Brownian motions, $\mW(s)$ and $\mW_{\dagger}(s)$ must also be independent Brownian motions, such that $\mW_{\infty}(r,s)$ is a Brownian motion with covariance matrix
\begin{align}
\Cov\big(\mW_{\infty}(0,1), \mW_{\infty}(0,1)\big) &= \E\bigg[\Big\{\sqrt{(1-\alpha)(1-\beta)\beta}\mOmega^{1/2}\mW_{\dagger}(1) + \big[(1-\beta)\mK- \mK_{\dagger}\big]\mSigma^{1/2}\mW(1)\Big\}\times\notag\\
&\hspace{0.4cm}\times \Big\{\sqrt{(1-\alpha)(1-\beta)\beta}\mOmega^{1/2}\mW_{\dagger}(1) + \big[(1-\beta)\mK- \mK_{\dagger}\big]\mSigma^{1/2}\mW(1)\Big\}^\prime\bigg]\notag\\
&= (1-\alpha)(1-\beta)\beta\mOmega + \big[(1-\beta)\mK - \mK_{\dagger}\big]\mSigma\big[(1-\beta)\mK - \mK_{\dagger}\big]\notag\\
&=\alpha(1-\alpha)\mOmega_{\ast},\label{eq:(p.19a)}
\end{align}
where the final step uses \eqref{eq:(E.2)}.
Hence, $\mW_{\infty}(r,s)\overset{d}{=}\sqrt{\alpha(1-\alpha)}\mOmega_{\ast}^{1/2}\big[\mW_{\ast}(s)-\mW_{\ast}(r)\big]$ for a $(k+1)$-variate standard Brownian motion $\mW_{\ast}(\cdot)$.
This completes the proof of \eqref{eq:thm2 simple} by \eqref{eq:w conv CoVaR}--\eqref{eq:w def}.

Now, in the second and final part of this proof, we establish the joint (functional) convergence of the QR and CoVaR parameter estimator.
The above shows that the convergences in \eqref{eq:thm1} and \eqref{eq:w conv CoVaR} hold jointly, implying that, as $n\to\infty$,
\begin{align*}
	&(s-r)\begin{pmatrix}
		\mD_n \big[\widehat{\valpha}_n(r,s) - \valpha_0\big]\\
		\mD_n \big[\widehat{\vbeta}_n(r,s) - \vbeta_0\big]
	\end{pmatrix}\\
	&\overset{d}{\longrightarrow}\begin{pmatrix}
		\sqrt{\alpha(1-\alpha)}\mK^{-1}\mOmega^{1/2}\big[\mW(s)-\mW(r)\big]\\
		\sqrt{(1-\alpha)(1-\beta)\beta}\mK_{\ast}^{-1}\mOmega^{1/2}\big[\mW_{\dagger}(s)-\mW_{\dagger}(r)\big] + \mK_{\ast}^{-1}\big[(1-\beta)\mK - \mK_{\dagger}\big]\mSigma^{1/2}\big[\mW(s)-\mW(r)\big]
	\end{pmatrix}\\
	&=\begin{pmatrix}
		\mK^{-1} & \vzero\\
		\vzero & \mK_{\ast}^{-1}
	\end{pmatrix}
	\begin{pmatrix}
		\sqrt{\alpha(1-\alpha)}\mOmega^{1/2} & \vzero\\
		\big[(1-\beta)\mK-\mK_{\dagger}\big]\mSigma^{1/2} & \sqrt{(1-\alpha)(1-\beta)\beta}\mOmega^{1/2} 
	\end{pmatrix}
	\begin{pmatrix}
		\mW(s)-\mW(r)\\
		\mW_{\dagger}(s)-\mW_{\dagger}(r)
	\end{pmatrix}\\
	&=:\overline{\va}(r,s)\qquad\text{in }(\ell^{\infty}(\mathcal{D}_{\iota}))^{2k+2},
\end{align*}
where $\mW(\cdot)$ and $\mW_{\dagger}(\cdot)$ are $(k+1)$-variate standard Brownian motions that are independent of each other.
The covariance matrix of the above limiting Brownian motion is given by
\begin{align*}
	&\Cov\big(\overline{\va}(0,1),\overline{\va}(0,1)\big)\\
	&	=\begin{pmatrix}
		\mK^{-1} & \vzero\\
		\vzero & \mK_{\ast}^{-1}
	\end{pmatrix}
	\begin{pmatrix}
		\sqrt{\alpha(1-\alpha)}\mOmega^{1/2} & \vzero\\
		\big[(1-\beta)\mK-\mK_{\dagger}\big]\mSigma^{1/2} & \sqrt{(1-\alpha)(1-\beta)\beta}\mOmega^{1/2}
	\end{pmatrix}\times \\
	&\hspace{3cm}\times
	\begin{pmatrix}
		\sqrt{\alpha(1-\alpha)}(\mOmega^{1/2})^\prime & (\mSigma^{1/2})^\prime\big[(1-\beta)\mK-\mK_{\dagger}\big]\\
		\vzero & \sqrt{(1-\alpha)(1-\beta)\beta}(\mOmega^{1/2})^\prime
	\end{pmatrix}
	\begin{pmatrix}
		\mK^{-1} & \vzero\\
		\vzero & \mK_{\ast}^{-1}
	\end{pmatrix}\\
	&= \overline{\mK}^{-1}
	\begin{pmatrix}
		\alpha(1-\alpha)\mOmega & \alpha(1-\alpha)\mOmega\mK^{-1}\big[(1-\beta)\mK-\mK_{\dagger}\big]\\
		\alpha(1-\alpha)\big[(1-\beta)\mK-\mK_{\dagger}\big]\mK^{-1}\mOmega & \alpha(1-\alpha)\mOmega_{\ast}
	\end{pmatrix}
	\overline{\mK}^{-1}\\
	&= \alpha(1-\alpha)\overline{\mK}^{-1}
	\begin{pmatrix}
		\mOmega & \mOmega\mK^{-1}\big[(1-\beta)\mK-\mK_{\dagger}\big]\\
		\big[(1-\beta)\mK-\mK_{\dagger}\big]\mK^{-1}\mOmega & \mOmega_{\ast}
	\end{pmatrix}\overline{\mK}^{-1}\\
	&=\alpha(1-\alpha)\overline{\mK}^{-1}\cdot \overline{\mOmega}\cdot \overline{\mK}^{-1},
\end{align*}
where we have used in the second equality that $\mSigma^{1/2}=\sqrt{\alpha(1-\alpha)}\mK^{-1}\mOmega^{1/2}$ (by definition of $\mSigma$ in Theorem~\ref{thm:std est}), such that
\[
	\sqrt{\alpha(1-\alpha)}\big[(1-\beta)\mK-\mK_{\dagger}\big]\mSigma^{1/2}(\mOmega^{1/2})^\prime = \alpha(1-\alpha)\big[(1-\beta)\mK-\mK_{\dagger}\big]\mK^{-1}\mOmega.
\]
This concludes the proof.
\end{proof}

\begin{proof}[{\textbf{Proof of Corollary~\ref{cor:SBT CoVaR}:}}]
Define
\begin{equation}\label{eq:overline D}
	\overline{\mD}_n=\begin{pmatrix}\mD_n & \vzero\\ \vzero & \mD_n\end{pmatrix}.
\end{equation}
Observing that $\overline{\mD}_n$ is invertible, we can rewrite the test statistic under $\mathcal{H}_{0}^{\CoVaR}$ as
\begin{multline*}
	\mathcal{U}_{n,\vgamma} = \sup_{s\in[\iota,1-\iota]} s^2(1-s)^2\Big\{\overline{\mD}_n\big[\widehat{\vgamma}_n(0,s)-\vgamma_0\big]- \overline{\mD}_n\big[\widehat{\vgamma}_n(s,1)-\vgamma_0\big]\Big\}^\prime \overline{\bm{\mathcal{N}}}_{n,\vgamma}^{-1}(s)\times\\
	\times\Big\{\overline{\mD}_n\big[\widehat{\vgamma}_n(0,s)-\vgamma_0\big]- \overline{\mD}_n\big[\widehat{\vgamma}_n(s,1)-\vgamma_0\big]\Big\}
\end{multline*}
with $\vgamma_0=(\valpha_0^\prime, \vbeta_0^\prime)^\prime$ and normalizer
\begin{align*}
	\overline{\bm{\mathcal{N}}}_{n,\vgamma}(s) &=\int_{\iota}^{s}r^2\Big\{\overline{\mD}_n\big[\widehat{\vgamma}_n(0,r)-\vgamma_0\big] - \overline{\mD}_n\big[\widehat{\vgamma}_n(0,s)-\vgamma_0\big]\Big\}\times\\
		&\hspace{1.2cm}\times\Big\{\overline{\mD}_n\big[\widehat{\vgamma}_n(0,r)-\vgamma_0\big] - \overline{\mD}_n\big[\widehat{\vgamma}_n(0,s)-\vgamma_0\big]\Big\}^\prime\D r \\
		& \hspace{0.5cm}+ \int_{s}^{1-\iota}(1-r)^2\Big\{\overline{\mD}_n\big[\widehat{\vgamma}_n(r,1)-\vgamma_0\big] - \overline{\mD}_n\big[\widehat{\vgamma}_n(s,1)-\vgamma_0\big]\Big\}\times\\
		&\hspace{2.9cm}\times\Big\{\overline{\mD}_n\big[\widehat{\vgamma}_n(r,1)-\vgamma_0\big] - \overline{\mD}_n\big[\widehat{\vgamma}_n(s,1)-\vgamma_0\big]\Big\}^\prime\D r.
\end{align*}
The conclusion then follows immediately from Theorem~\ref{thm:CoVaR est} and the CMT.
\end{proof}

\section{Proofs of Propositions~\ref{lem:CLT2}--\ref{lem:LLN3}}\label{sec:CoVaRest Lemmas}

\begin{proof}[{\textbf{Proof of Proposition~\ref{lem:CLT2}:}}] 
\textit{\textbf{(NS) case:}} We first show that, as $n\to\infty$,
\begin{equation}\label{eq:(p.30.0)}
	\sum_{t=1}^{\lfloor ns\rfloor}\1_{\{\epsilon_t>0\}}\psi_{\beta}(\delta_t)\mD_n^{-1}\mX_{t-1}
	\overset{d}{\longrightarrow}\sqrt{(1-\alpha)(1-\beta)\beta}\mOmega^{1/2}\mW_{\dagger}(s)\qquad\text{in }(D[0,1])^{k+1}.
\end{equation}
From similar arguments as in the proof of Proposition~\ref{lem:CLT}, it follows that we only have to show that, as $n\to\infty$,
\begin{equation}\label{eq:To show CLT2}
	\sum_{t=1}^{\lfloor ns\rfloor} \vzeta_{nt}:=\sum_{t=1}^{\lfloor ns\rfloor} \1_{\{\epsilon_t>0\}}\psi_{\beta}(\delta_t)\begin{pmatrix} n^{-1/2}\\ n^{-1/2}\vxi_{n,t-1}\end{pmatrix}\overset{d}{\longrightarrow}\sqrt{(1-\alpha)(1-\beta)\beta}\mOmega^{1/2}\mW_{\dagger}(s)\qquad\text{in }(D[0,1])^{k+1},
\end{equation}
where $\vxi_{n,t-1}$ is defined in \eqref{eq:(p.4.decomp)}.
To apply Theorem~3.33 of \citet[Chapter~VIII]{JS87}, we first consider the conditional variances: As $n\to\infty$,
\begin{align*}
	\sum_{t=1}^{\lfloor ns\rfloor} \E_{t-1}\big[\vzeta_{nt}\vzeta_{nt}^\prime\big] &= \sum_{t=1}^{\lfloor ns\rfloor}\E_{t-1}\big[\1_{\{\epsilon_t>0\}}\psi_{\beta}^2(\delta_t)\big]\mD_n^{-1}\mX_{t-1}(\vzero,\vzero)\mX_{t-1}^\prime(\vzero,\vzero)\mD_n^{-1}\\
	&= (1-\alpha)(1-\beta)\beta\sum_{t=1}^{\lfloor ns\rfloor}\mD_n^{-1}\mX_{t-1}(\vzero,\vzero)\mX_{t-1}^\prime(\vzero,\vzero)\mD_n^{-1}\\
	&\overset{\P}{\longrightarrow}s(1-\alpha)(1-\beta)\beta\mOmega,
\end{align*}
where we used Lemma~\ref{lem:SUM} and \eqref{eq:(p.16)}.

As the CLC for $\vzeta_{nt}$ also follows similarly as in the proof of Proposition~\ref{lem:CLT}, \eqref{eq:To show CLT2} follows from Theorem~3.33 in \citet[Chapter~VIII]{JS87}, establishing \eqref{eq:(p.30.0)}.

In view of the continuity of the limit process, the claim of the proposition then follows once more from standard arguments \citep{VS14}.

\textit{\textbf{(I0) case:}} The proof follows as in the treatment of the (I0) case in Proposition~\ref{lem:CLT} with similar modifications as just outlined for the (NS) case.
\end{proof}

\begin{proof}[{\textbf{Proof of Proposition~\ref{lem:LLN2}:}}] 
The proof is similar to that of Proposition~\ref{lem:LLN}.
By similar arguments used there, it suffices to show that
\begin{multline*}
	\sup_{0\leq s\leq 1}\bigg|\sum_{t=1}^{\lfloor ns\rfloor}\1_{\{\epsilon_{t}>0\}}\big(\delta_{t} - \vw^\prime\mD_n^{-1}\mX_{t-1}\big)\big[\1_{\{\vw^\prime\mD_n^{-1}\mX_{t-1}<\delta_{t}<0\}} - \1_{\{0<\delta_{t}<\vw^\prime\mD_n^{-1}\mX_{t-1}\}}\big]
\\
-\frac{1}{2}s\vw^\prime\mK_\ast \vw\bigg|=o_{\P}(1).
\end{multline*}
We slightly overload notation introduced in the proof of Proposition~\ref{lem:LLN} by redefining
\begin{align*}
\nu_t(\vw) &:=\1_{\{\epsilon_{t}>0\}}\big(\delta_{t} - \vw^\prime\mD_n^{-1}\mX_{t-1}\big)\big[\1_{\{\vw^\prime\mD_n^{-1}\mX_{t-1}<\delta_{t}<0\}} - \1_{\{0<\delta_{t}<\vw^\prime\mD_n^{-1}\mX_{t-1}\}}\big],\\
\overline{\nu}_{t}(\vw) &:= \E_{t-1}\big[\nu_t(\vw)\big],\\
V_n(\vw,s) &:= \sum_{t=1}^{\lfloor ns\rfloor} \nu_t(\vw),\\
\overline{V}_n(\vw,s) &:= \sum_{t=1}^{\lfloor ns\rfloor} \overline{\nu}_t(\vw).
\end{align*}
We establish the proposition by showing that, uniformly in $s\in[0,1]$,
\begin{align}
	\overline{V}_n(\vw,s) &\overset{\P}{\longrightarrow}\frac{1}{2}s\vw^\prime\mK_{\ast} \vw,\label{eq:(pp.6.1)}\\
	V_n(\vw,s) - \overline{V}_n(\vw,s) &=o_{\P}(1).\label{eq:(pp.6.2)}
\end{align}

We first show \eqref{eq:(pp.6.1)}. 
As in the proof of Proposition~\ref{lem:LLN}, it suffices to show this convergence on the set $\big\{\max_{t=1,\ldots,n}|\vw^\prime\mD_n^{-1}\mX_{t-1}|\leq d\big\}$, which ensures that the density of $(\epsilon_t,\delta_t)^\prime\mid\mathcal{F}_{t-1}$ satisfies the properties set out in Assumption~\ref{ass:innov CoVaR}.
Decompose
\begin{align*}
	\overline{V}_n(\vw,s) &= \sum_{t=1}^{\lfloor ns\rfloor}\E_{t-1}\Big[\1_{\{\epsilon_t>0\}}(\delta_t - \vw^\prime\mD_n^{-1}\mX_{t-1})\1_{\{\vw^\prime\mD_n^{-1}\mX_{t-1}<\delta_t<0\}}\Big]\\
	&\hspace{0.4cm} + \sum_{t=1}^{\lfloor ns\rfloor}\E_{t-1}\Big[\1_{\{\epsilon_t>0\}}(\vw^\prime\mD_n^{-1}\mX_{t-1}-\delta_t)\1_{\{0<\delta_t< \vw^\prime\mD_n^{-1}\mX_{t-1}\}}\Big]\\
	&=:\overline{V}_{1n}(\vw,s) + \overline{V}_{2n}(\vw,s).
\end{align*}
Both terms can be treated similarly, so we only deal with $\overline{V}_{1n}(\vw,s)$. 
For this, note that
\begin{align*}
	\overline{V}_{1n}(\vw,s) &= 
	\sum_{t=1}^{\lfloor ns\rfloor}\1_{\{\vw^\prime\mD_n^{-1}\mX_{t-1}<0\}}\int_{0}^{\infty} \bigg(\int_{\vw^\prime\mD_n^{-1}\mX_{t-1}}^{0}(y-\vw^\prime\mD_n^{-1}\mX_{t-1})f_{(\epsilon_t,\delta_t)^\prime\mid\mathcal{F}_{t-1}}(x,y)\D y\bigg)\D x\\
	&=\sum_{t=1}^{\lfloor ns\rfloor}\1_{\{\vw^\prime\mD_n^{-1}\mX_{t-1}<0\}} \int_{\vw^\prime\mD_n^{-1}\mX_{t-1}}^{0}(y-\vw^\prime\mD_n^{-1}\mX_{t-1})\bigg(\int_{0}^{\infty}f_{(\epsilon_t,\delta_t)^\prime\mid\mathcal{F}_{t-1}}(x,y)\D x\bigg)\D y\\
	&= \sum_{t=1}^{\lfloor ns\rfloor}\1_{\{\vw^\prime\mD_n^{-1}\mX_{t-1}<0\}} \int_{\vw^\prime\mD_n^{-1}\mX_{t-1}}^{0}(y-\vw^\prime\mD_n^{-1}\mX_{t-1})\bigg(\int_{0}^{\infty}f_{(\epsilon_t,\delta_t)^\prime\mid\mathcal{F}_{t-1}}(x,0)\D x\bigg)\D y\\
	&\hspace{2cm} + \sum_{t=1}^{\lfloor ns\rfloor}\1_{\{\vw^\prime\mD_n^{-1}\mX_{t-1}<0\}} \int_{\vw^\prime\mD_n^{-1}\mX_{t-1}}^{0}(y-\vw^\prime\mD_n^{-1}\mX_{t-1})\times\\
	&\hspace{4cm} \times\bigg(\int_{0}^{\infty}f_{(\epsilon_t,\delta_t)^\prime\mid\mathcal{F}_{t-1}}(x,y)\D x - \int_{0}^{\infty}f_{(\epsilon_t,\delta_t)^\prime\mid\mathcal{F}_{t-1}}(x,0)\D x\bigg)\D y\\
	&=:\overline{V}_{11n}(\vw,s) + \overline{V}_{12n}(\vw,s).
\end{align*}
Simple integration yields that
\[
	\overline{V}_{11n}(\vw,s)=\frac{1}{2}\sum_{t=1}^{\lfloor ns\rfloor}\1_{\{\vw^\prime\mD_n^{-1}\mX_{t-1}<0\}} \bigg(\int_{0}^{\infty}f_{(\epsilon_t,\delta_t)^\prime\mid\mathcal{F}_{t-1}}(x,0)\D x\bigg)\big(\vw^\prime\mD_n^{-1}\mX_{t-1}\big)^2.
\]
Also, uniformly in $s\in[0,1]$,
\begin{align*}
	\big|\overline{V}_{12n}(\vw,s)\big| &\leq \sum_{t=1}^{\lfloor ns\rfloor}\1_{\{\vw^\prime\mD_n^{-1}\mX_{t-1}<0\}} \int_{\vw^\prime\mD_n^{-1}\mX_{t-1}}^{0}(y-\vw^\prime\mD_n^{-1}\mX_{t-1})|y-0|\times\\
	&\hspace{4cm} \times\frac{\Big|\int_{0}^{\infty}f_{(\epsilon_t,\delta_t)^\prime\mid\mathcal{F}_{t-1}}(x,y)\D x - \int_{0}^{\infty}f_{(\epsilon_t,\delta_t)^\prime\mid\mathcal{F}_{t-1}}(x,0)\D x\Big|}{|y-0|}\D y\\
	&\leq L\sum_{t=1}^{\lfloor ns\rfloor}\1_{\{\vw^\prime\mD_n^{-1}\mX_{t-1}<0\}} \int_{\vw^\prime\mD_n^{-1}\mX_{t-1}}^{0}(y-\vw^\prime\mD_n^{-1}\mX_{t-1})|y-0|\D y\\
	&= L\sum_{t=1}^{\lfloor ns\rfloor}\1_{\{\vw^\prime\mD_n^{-1}\mX_{t-1}<0\}} (-1/6)\big(\vw^\prime\mD_n^{-1}\mX_{t-1}\big)^3\\
	&\leq K\sum_{t=1}^{\lfloor ns\rfloor}\big|\vw^\prime\mD_n^{-1}\mX_{t-1}\big|^3\\
	&\leq K\max_{t=1,\ldots,n}\Vert\mD_n^{-1}\mX_{t-1}\Vert \vw^\prime\bigg(\sum_{t=1}^{n}\mD_n^{-1}\mX_{t-1}\mX_{t-1}^\prime\mD_n^{-1}\bigg)\vw\\
	&=Ko_{\P}(1)O_{\P}(1)\\
	&=o_{\P}(1),
\end{align*}
where we used Assumption~\ref{ass:innov CoVaR}~\eqref{it:Lipschitz CoVaR} in the second step, and Lemmas~\ref{lem:MAX}--\ref{lem:SUM} in the penultimate step.
The previous two displays imply that
\[
	\overline{V}_{1n}(\vw,s)=\frac{1}{2}\sum_{t=1}^{\lfloor ns\rfloor}\1_{\{\vw^\prime\mD_n^{-1}\mX_{t-1}<0\}} \bigg(\int_{0}^{\infty}f_{(\epsilon_t,\delta_t)^\prime\mid\mathcal{F}_{t-1}}(x,0)\D x\bigg)\big(\vw^\prime\mD_n^{-1}\mX_{t-1}\big)^2 + o_{\P}(1)
\]
uniformly in $s\in[0,1]$.
By similar arguments,
\[
	\overline{V}_{2n}(\vw,s)=\frac{1}{2}\sum_{t=1}^{\lfloor ns\rfloor}\1_{\{\vw^\prime\mD_n^{-1}\mX_{t-1}>0\}} \bigg(\int_{0}^{\infty}f_{(\epsilon_t,\delta_t)^\prime\mid\mathcal{F}_{t-1}}(x,0)\D x\bigg)\big(\vw^\prime\mD_n^{-1}\mX_{t-1}\big)^2 + o_{\P}(1)
\]
uniformly in $s\in[0,1]$, such that
\begin{align*}
	\overline{V}_{n}(\vw,s)&=\overline{V}_{1n}(\vw,s) + \overline{V}_{2n}(\vw,s)\\
	&=\frac{1}{2}\sum_{t=1}^{\lfloor ns\rfloor} \bigg(\int_{0}^{\infty}f_{(\epsilon_t,\delta_t)^\prime\mid\mathcal{F}_{t-1}}(x,0)\D x\bigg)\big(\vw^\prime\mD_n^{-1}\mX_{t-1}\big)^2 + o_{\P}(1)\\
	&= \frac{1}{2}\vw^\prime\Bigg[\sum_{t=1}^{\lfloor ns\rfloor} \bigg(\int_{0}^{\infty}f_{(\epsilon_t,\delta_t)^\prime\mid\mathcal{F}_{t-1}}(x,0)\D x\bigg)\mD_n^{-1}\mX_{t-1}\mX_{t-1}^\prime\mD_n^{-1}\Bigg]\vw + o_{\P}(1)\\
	&= \frac{1}{2}s\vw^\prime \mK_{\ast}\vw+o_{\P}(1)
\end{align*}
by Assumption~\ref{ass:K ast}.

It remains to show \eqref{eq:(pp.6.2)}.
Clearly,
\[
	V_n(\vw,s) - \overline{V}_n(\vw,s) = \sum_{t=1}^{\lfloor ns\rfloor}\big[\nu_t(\vw) - \overline{\nu}_t(\vw)\big]
\]
is a sum of MDAs.
We again invoke Theorem~3.33 of \citet[Chapter~VIII]{JS87}.
That 
\begin{align*}
	\sum_{t=1}^{\lfloor ns\rfloor}& \E_{t-1}\big[\{\nu_t(\vw) - \overline{\nu}_t(\vw)\}^2\big]\\
	&\leq \sum_{t=1}^{\lfloor ns\rfloor} \E_{t-1}\big[\nu_t^2(\vw)\big]\\
	&= \sum_{t=1}^{\lfloor ns\rfloor} \E_{t-1}\Big[\1_{\{\epsilon_t>0\}}(\delta_t - \vw^\prime\mD_n^{-1}\mX_{t-1})^2 \big(\1_{\{\vw^\prime\mD_n^{-1}\mX_{t-1}<\delta_t<0\}}+\1_{\{0<\delta_t<\vw^\prime\mD_n^{-1}\mX_{t-1}\}}\big)\Big]\\
	&=o_{\P}(1)
\end{align*}
follows as in the proof of Proposition~\ref{lem:LLN}, as does the CLC.
Hence, said Theorem~3.33 implies that 
\[	
	\sup_{s\in[0,1]}\big|V_n(\vw,s)-\overline{V}_n(\vw,s)\big|=o_{\P}(1),
\]
as desired.
\end{proof}

The proof of Proposition~\ref{lem:LLN CLT} requires the three preliminary Lemmas~\ref{lem:7}--\ref{lem:9}.
To introduce these, define
\begin{align*}
	\vnu_t(\vv) &:= \big[\1_{\{\vv^\prime\mD_n^{-1}\mX_{t-1}<\epsilon_t\leq0\}} - \1_{\{0<\epsilon_t\leq \vv^\prime\mD_n^{-1}\mX_{t-1}\}}\big]\psi_{\beta}(\delta_{t})\mD_n^{-1}\mX_{t-1},\\
	\overline{\vnu}_t(\vv) &:= \E_{t-1}\big[\vnu_t(\vv)\big],\\
	\mV_n(\vv,r,s) &:= \sum_{t=\lfloor nr\rfloor+1}^{\lfloor ns\rfloor} \vnu_t(\vv),\\
	\overline{\mV}_n(\vv,r,s) &:= \sum_{t=\lfloor nr\rfloor+1}^{\lfloor ns\rfloor} \overline{\vnu}_t(\vv).
\end{align*}

\begin{lem}\label{lem:7}
Under the assumptions of Theorem~\ref{thm:CoVaR est} it holds for any $K>0$ that, as $n\to\infty$,
\[
	\sup_{\Vert\vv\Vert\leq K}\sup_{0\leq r < s\leq1}\big\Vert\overline{\mV}_n(\vv,r,s) - (s-r)\big[(1-\beta)\mK-\mK_{\dagger}\big]\vv\big\Vert=o_{\P}(1).
\]
\end{lem}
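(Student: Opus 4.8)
\textbf{Proof proposal for Lemma~\ref{lem:7}.}
The plan is to compute the conditional mean $\overline{\vnu}_t(\vv)$ explicitly using the joint density of $(\varepsilon_t,\delta_t)^\prime\mid\mathcal{F}_{t-1}$ (which exists on the relevant range by Assumption~\ref{ass:innov CoVaR}), show that its leading term sums to $s[(1-\beta)\mK-\mK_{\dagger}]\vv$ up to an asymptotically negligible remainder, and finally upgrade the pointwise convergence to uniformity over $\{\Vert\vv\Vert\leq K\}$ and $s\in[0,1]$. Throughout, one first restricts to the event $\{\max_{t\le n}\Vert\mD_n^{-1}\mX_{t-1}\Vert \le d/K\}$, which has probability tending to one by Lemma~\ref{lem:MAX}; on this event $|\vv^\prime\mD_n^{-1}\mX_{t-1}|\le d$ for all $\Vert\vv\Vert\le K$, so the density is available.

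First I would condition on $\varepsilon_t$ inside the indicator and write, for $\vv^\prime\mD_n^{-1}\mX_{t-1}=:c_t<0$,
\[
\E_{t-1}\big[\1_{\{c_t<\varepsilon_t\le0\}}\psi_{\beta}(\delta_t)\big]
=\int_{c_t}^{0}\Big(\beta - \P_{t-1}\{\delta_t\le 0\mid \varepsilon_t = x\}\Big) f_{\varepsilon_t\mid\mathcal{F}_{t-1}}(x)\,\D x,
\]
and similarly for $c_t>0$ with a sign flip and limits $0$ to $c_t$. Writing $g_t(x):=\int_{0}^{\infty} f_{(\varepsilon_t,\delta_t)^\prime\mid\mathcal{F}_{t-1}}(x,y)\,\D y = \P_{t-1}\{\delta_t>0\mid\varepsilon_t=x\}f_{\varepsilon_t\mid\mathcal{F}_{t-1}}(x)$ and $f_t(x):=f_{\varepsilon_t\mid\mathcal{F}_{t-1}}(x)$, the integrand becomes $\beta f_t(x) - (f_t(x)-g_t(x)) = g_t(x)-(1-\beta)f_t(x)$. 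Using the mean value theorem together with the Lipschitz bounds of Assumption~\ref{ass:innov}~\eqref{it:Lipschitz} (for $f_t$) and Assumption~\ref{ass:innov CoVaR}~\eqref{it:Lipschitz CoVaR} (for $g_t$), I would replace $g_t(x)$ by $g_t(0)$ and $f_t(x)$ by $f_t(0)$ at the cost of an $O(|c_t|)$ error per summand, so that
\[
\overline{\vnu}_t(\vv) = \big[(1-\beta)f_{\varepsilon_t\mid\mathcal{F}_{t-1}}(0) - g_t(0)\big]\big(\vv^\prime\mD_n^{-1}\mX_{t-1}\big)\mD_n^{-1}\mX_{t-1} + \vr_t,
\]
where $\Vert\vr_t\Vert \le K\Vert\mD_n^{-1}\mX_{t-1}\Vert^3$. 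Summing, the remainder is bounded by $K\max_{t\le n}\Vert\mD_n^{-1}\mX_{t-1}\Vert\cdot \vv^\prime\big(\sum_t \mD_n^{-1}\mX_{t-1}\mX_{t-1}^\prime\mD_n^{-1}\big)\vv = o_{\P}(1)O_{\P}(1)$ by Lemmas~\ref{lem:MAX} and \ref{lem:SUM}, uniformly in $s$ and in $\Vert\vv\Vert\le K$.

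The main term is $\big(\sum_{t=1}^{\lfloor ns\rfloor}[(1-\beta)f_{\varepsilon_t\mid\mathcal{F}_{t-1}}(0) - g_t(0)]\mD_n^{-1}\mX_{t-1}\mX_{t-1}^\prime\mD_n^{-1}\big)\vv$, and here I note $f_{\varepsilon_t\mid\mathcal{F}_{t-1}}(0)=\int_{\mathbb{R}}f_{(\varepsilon_t,\delta_t)^\prime\mid\mathcal{F}_{t-1}}(0,y)\,\D y$, while $g_t(0)=\int_{0}^{\infty}f_{(\varepsilon_t,\delta_t)^\prime\mid\mathcal{F}_{t-1}}(x,0)\,\D x$ is exactly the integrand appearing in Assumption~\ref{ass:K ast}; wait—more carefully, $g_t(0)=\int_0^\infty f_{(\varepsilon_t,\delta_t)^\prime\mid\mathcal{F}_{t-1}}(0,y)\,\D y$ is the ``$\int_0^\infty f(0,y)\,\D y$'' object, so by Assumption~\ref{ass:K ast} the coefficient sums converge uniformly in $s$: $\sum_{t=1}^{\lfloor ns\rfloor} f_{\varepsilon_t\mid\mathcal{F}_{t-1}}(0)\mD_n^{-1}\mX_{t-1}\mX_{t-1}^\prime\mD_n^{-1}\toP s\mK$ (Assumption~\ref{ass:K}) and $\sum_{t=1}^{\lfloor ns\rfloor} g_t(0)\mD_n^{-1}\mX_{t-1}\mX_{t-1}^\prime\mD_n^{-1}\toP s\mK_{\dagger}$ (Assumption~\ref{ass:K ast}, second line). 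Hence the main term converges uniformly in $s$ to $s[(1-\beta)\mK-\mK_{\dagger}]\vv$.

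Finally, to get uniformity in $\vv$ over the compact set $\{\Vert\vv\Vert\le K\}$: the limit $s\mapsto s[(1-\beta)\mK-\mK_{\dagger}]\vv$ is linear (hence Lipschitz) in $\vv$ uniformly in $s$, and the finite sum $\overline{\mV}_n(\vv,s)$, being a sum of terms each continuous and (after the density expansion) locally Lipschitz in $\vv$ with a modulus controlled by $\sum_t\Vert\mD_n^{-1}\mX_{t-1}\Vert^2 = O_{\P}(1)$, is stochastically equicontinuous in $\vv$; a standard covering argument (pick a finite $\eta$-net of $\{\Vert\vv\Vert\le K\}$, apply the pointwise-in-$\vv$, uniform-in-$s$ convergence at each net point, control the oscillation between net points) then yields the claim. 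The main obstacle I anticipate is bookkeeping the conditional-density manipulation correctly—making sure the conditioning-on-$\varepsilon_t$ step is valid and that the objects $g_t(0)$ line up precisely with the integrals in Assumptions~\ref{ass:K} and~\ref{ass:K ast}—rather than any deep technical difficulty; the equicontinuity-in-$\vv$ upgrade is routine given Lemmas~\ref{lem:MAX} and~\ref{lem:SUM}.
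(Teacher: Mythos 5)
Your proposal is correct and follows essentially the same route as the paper: restrict to the high-probability event from Lemma~\ref{lem:MAX}, expand the conditional expectation via the joint density, replace the density at $0$ using the Lipschitz conditions in Assumptions~\ref{ass:innov}~\eqref{it:Lipschitz} and~\ref{ass:innov CoVaR}~\eqref{it:Lipschitz CoVaR}, control the remainder with Lemmas~\ref{lem:MAX} and~\ref{lem:SUM}, and invoke Assumptions~\ref{ass:K} and~\ref{ass:K ast} for the uniform-in-$s$ limits of the coefficient sums; the only presentational difference is that you combine $(\beta-1)f_{\varepsilon_t\mid\mathcal{F}_{t-1}}(x)$ and $g_t(x)$ into a single integrand whereas the paper's display \eqref{eq:bV decomp} tracks the four pieces separately, and you (correctly) self-correct the identification of $g_t(0)$ with the $\mK_\dagger$-integrand. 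One small economy: the final covering/equicontinuity step is unnecessary — once the leading term is extracted as $\mM_n(s)\vv$ with $\sup_s\Vert\mM_n(s)-s[(1-\beta)\mK-\mK_\dagger]\Vert=o_{\P}(1)$ and the remainder is bounded uniformly over $\{\Vert\vv\Vert\le K\}$, linearity in $\vv$ gives the uniform-in-$\vv$ claim directly, which is exactly how the paper handles it implicitly.
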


\begin{proof}
See Appendix~\ref{sec:CoVaRest Lemmas help}.
\end{proof}

\begin{lem}\label{lem:8}
Under the assumptions of Theorem~\ref{thm:CoVaR est} it holds for any fixed $\vv\in\mathbb{R}^{k+1}$ that, as $n\to\infty$,
\[
	\sup_{0\leq r < s\leq1}\big\Vert\mV_n(\vv,r,s) - \overline{\mV}_n(\vv,r,s)\big\Vert=o_{\P}(1).
\]
\end{lem}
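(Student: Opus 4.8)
\textbf{Proof proposal for Lemma~\ref{lem:8}.}

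The plan is to show that the partial-sum process $\mV_n(\vv,s) - \overline{\mV}_n(\vv,s) = \sum_{t=1}^{\lfloor ns\rfloor}\big[\vnu_t(\vv) - \overline{\vnu}_t(\vv)\big]$ converges weakly to the zero process in $D_{k+1}[0,1]$, and then invoke the continuous mapping theorem to upgrade this to the uniform $o_{\P}(1)$ statement (exactly as done at the end of the proofs of Propositions~\ref{lem:LLN} and \ref{lem:LLN2}). Since $\overline{\vnu}_t(\vv)=\E_{t-1}[\vnu_t(\vv)]$ by definition, the summands $\vnu_t(\vv)-\overline{\vnu}_t(\vv)$ form a martingale difference array with respect to $\{\mathcal{F}_t\}$, so the natural tool is again Theorem~3.33 in \citet[Chapter~VIII]{JS87}. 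I would therefore verify its two hypotheses: convergence of the conditional (co)variances to the zero matrix, and the conditional Lindeberg condition (which, as in the earlier proofs, reduces to an unconditional Lindeberg condition via Markov's inequality and iterated expectations).

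First I would bound the conditional second moments. Writing $\vnu_t(\vv)=\big[\1_{\{\vv^\prime\mD_n^{-1}\mX_{t-1}<\varepsilon_t\leq0\}} - \1_{\{0<\varepsilon_t\leq \vv^\prime\mD_n^{-1}\mX_{t-1}\}}\big]\psi_{\beta}(\delta_t)\mD_n^{-1}\mX_{t-1}$, and using $|\psi_\beta(\delta_t)|\leq1$ together with the fact that the two indicators are mutually exclusive, I get
\[
\E_{t-1}\big[\Vert\vnu_t(\vv)\Vert^2\big] \leq \Vert\mD_n^{-1}\mX_{t-1}\Vert^2\, \P_{t-1}\big\{|\varepsilon_t|\leq|\vv^\prime\mD_n^{-1}\mX_{t-1}|\big\}.
\]
On the event $\{\max_{t}|\vv^\prime\mD_n^{-1}\mX_{t-1}|\leq d\}$ — which has probability tending to one by Lemma~\ref{lem:MAX} — Assumption~\ref{ass:innov}~\eqref{it:dens bound} gives $\P_{t-1}\{|\varepsilon_t|\leq|\vv^\prime\mD_n^{-1}\mX_{t-1}|\}\leq 2\overline{f}\,|\vv^\prime\mD_n^{-1}\mX_{t-1}|\leq 2\overline f\Vert\vv\Vert\Vert\mD_n^{-1}\mX_{t-1}\Vert$. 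Hence
\[
\sum_{t=1}^{\lfloor ns\rfloor}\E_{t-1}\big[\Vert\vnu_t(\vv)\Vert^2\big]\leq 2\overline f\Vert\vv\Vert \max_{t=1,\ldots,n}\Vert\mD_n^{-1}\mX_{t-1}\Vert\;\vv\text{-uniform}\;\sum_{t=1}^{n}\Vert\mD_n^{-1}\mX_{t-1}\Vert^2,
\]
and since $\sum_{t=1}^n\Vert\mD_n^{-1}\mX_{t-1}\Vert^2\leq\tr\big(\sum_{t=1}^n\mD_n^{-1}\mX_{t-1}\mX_{t-1}^\prime\mD_n^{-1}\big)=O_{\P}(1)$ by Lemma~\ref{lem:SUM} while $\max_t\Vert\mD_n^{-1}\mX_{t-1}\Vert=o_{\P}(1)$ by Lemma~\ref{lem:MAX}, the whole sum is $o_{\P}(1)$ uniformly in $s$. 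Because $\E_{t-1}[\{\vnu_t(\vv)-\overline{\vnu}_t(\vv)\}\{\vnu_t(\vv)-\overline{\vnu}_t(\vv)\}^\prime]\preceq\E_{t-1}[\vnu_t(\vv)\vnu_t(\vv)^\prime]$ in the positive-semidefinite order, the conditional covariance sum converges to the zero matrix.

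For the Lindeberg condition I would use the crude bound $\Vert\vnu_t(\vv)-\overline{\vnu}_t(\vv)\Vert^2\leq K\Vert\mD_n^{-1}\mX_{t-1}\Vert^2$, which (as in display~\eqref{eq:(p.12.LC)} of the proof of Proposition~\ref{lem:LLN}) is dominated by $2K\max\{n^{-1},n^{-(1+\kappa)}\Vert\vx_{t-1}\Vert^2\}$; the Lindeberg sums for the deterministic term vanish trivially and for the stochastic term they vanish by Proposition~A1~(ii) of \citet{MP20}, just as in the earlier proofs. Having verified both hypotheses, Theorem~3.33 of \citet[Chapter~VIII]{JS87} yields $\mV_n(\vv,\cdot)-\overline{\mV}_n(\vv,\cdot)\overset{d}{\to}\bm 0$ in $D_{k+1}[0,1]$, and the CMT gives $\sup_{s\in[0,1]}\Vert\mV_n(\vv,s)-\overline{\mV}_n(\vv,s)\Vert=o_{\P}(1)$, completing the proof. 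The only mildly delicate point is the reduction of the conditional Lindeberg condition to its unconditional counterpart, but this is identical to the argument already spelled out in the proof of Proposition~\ref{lem:CLT}, so it requires no new idea; the bulk of the work is simply assembling the moment bounds from Lemmas~\ref{lem:MAX} and \ref{lem:SUM}.
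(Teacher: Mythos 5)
Your proof is correct and takes essentially the same route as the paper's: you verify the two hypotheses of Theorem~3.33 in \citet[Chapter~VIII]{JS87} for the martingale-difference array $\vnu_t(\vv)-\overline{\vnu}_t(\vv)$ — conditional second moments vanishing via the density bound $\overline f$ together with Lemmas~\ref{lem:MAX} and \ref{lem:SUM}, and the Lindeberg condition via the crude bound $K\Vert\mD_n^{-1}\mX_{t-1}\Vert^2$ and Proposition~A1~(ii) of \citet{MP20} — and then upgrade the functional weak convergence to $\vzero$ to a uniform $o_{\P}(1)$ by the CMT. The only cosmetic difference is that you bound the scalar trace $\sum_t\E_{t-1}[\Vert\vnu_t(\vv)\Vert^2]$ whereas the paper bounds the conditional covariance matrices in Loewner order; these are equivalent here.
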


\begin{proof}
See Appendix~\ref{sec:CoVaRest Lemmas help}.
\end{proof}

\begin{lem}\label{lem:9}
Under the assumptions of Theorem~\ref{thm:CoVaR est} it holds for any $K>0$ that, as $n\to\infty$,
\[
	\sup_{\Vert\vv\Vert\leq K}\sup_{0\leq r < s\leq1}\big\Vert\mV_n(\vv,r,s) - \overline{\mV}_n(\vv,r,s)\big\Vert=o_{\P}(1).
\]
\end{lem}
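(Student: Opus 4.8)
The plan is to upgrade the fixed-$\vv$ convergence of Lemma~\ref{lem:8} to a uniform-in-$\vv$ statement on the compact ball $\{\Vert\vv\Vert\le K\}$ by a standard stochastic-equicontinuity plus finite-net argument, exactly as one does when strengthening a pointwise LLN/tightness result to a functional one over a parameter set. The workhorses are Lemma~\ref{lem:8} (pointwise control of the centered process $\mV_n(\vv,\cdot)-\overline{\mV}_n(\vv,\cdot)$) and Lemma~\ref{lem:7} (which shows $\overline{\mV}_n(\vv,s)\to s\big[(1-\beta)\mK-\mK_\dagger\big]\vv$ uniformly over $\Vert\vv\Vert\le K$ and $s\in[0,1]$, hence $\overline{\mV}_n$ is itself asymptotically uniformly Lipschitz in $\vv$).

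First I would fix $\eta>0$ and choose a finite $\eta$-net $\{\vv_1,\dots,\vv_m\}$ of the ball $B_K:=\{\vv:\Vert\vv\Vert\le K\}$, with $m=m(\eta)$. For each net point, Lemma~\ref{lem:8} gives $\sup_{s\in[0,1]}\Vert\mV_n(\vv_j,s)-\overline{\mV}_n(\vv_j,s)\Vert=o_{\P}(1)$, so by a union bound $\max_{j\le m}\sup_{s}\Vert\mV_n(\vv_j,s)-\overline{\mV}_n(\vv_j,s)\Vert=o_{\P}(1)$. It then remains to control the oscillation of $\mV_n(\cdot,s)$ over each net cell. For the centered part I would bound, for $\Vert\vv-\vv_j\Vert\le\eta$,
\[
	\big\Vert\mV_n(\vv,s)-\mV_n(\vv_j,s)\big\Vert\le \sum_{t=1}^{n}\big\Vert\vnu_t(\vv)-\vnu_t(\vv_j)\big\Vert,
\]
and observe that $\vnu_t(\vv)-\vnu_t(\vv_j)$ is nonzero only when $\varepsilon_t$ lies in the (small) interval between $\vv^\prime\mD_n^{-1}\mX_{t-1}$ and $\vv_j^\prime\mD_n^{-1}\mX_{t-1}$, whose length is at most $\Vert\vv-\vv_j\Vert\cdot\Vert\mD_n^{-1}\mX_{t-1}\Vert\le\eta\,\Vert\mD_n^{-1}\mX_{t-1}\Vert$; on that event $\Vert\vnu_t(\vv)-\vnu_t(\vv_j)\Vert\le\Vert\mD_n^{-1}\mX_{t-1}\Vert$ since $|\psi_\beta|\le1$. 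Taking conditional expectations and using Assumption~\ref{ass:innov CoVaR}~\eqref{it:dens bound CoVaR} (the bounded density $\overline f$), the expected contribution of term $t$ is at most $\overline f\,\eta\,\Vert\mD_n^{-1}\mX_{t-1}\Vert^2$, so $\sum_t\E_{t-1}\Vert\vnu_t(\vv)-\vnu_t(\vv_j)\Vert\le\overline f\,\eta\sum_t\Vert\mD_n^{-1}\mX_{t-1}\Vert^2=O_{\P}(\eta)$ by Lemma~\ref{lem:SUM}; I would make this uniform in $s$ by monotonicity in the number of summands and handle the deviation of $\sum_t\Vert\vnu_t(\vv)-\vnu_t(\vv_j)\Vert$ from its compensator by Lenglart/maximal-inequality arguments as in the proofs of Propositions~\ref{lem:LLN}--\ref{lem:LLN2}. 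The same bracketing bound controls the oscillation of $\overline{\mV}_n(\cdot,s)$. Combining the net estimate with these oscillation bounds gives
\[
	\sup_{\Vert\vv\Vert\le K}\sup_{s\in[0,1]}\big\Vert\mV_n(\vv,s)-\overline{\mV}_n(\vv,s)\big\Vert\le o_{\P}(1)+O_{\P}(\eta),
\]
and letting $\eta\downarrow0$ after $n\to\infty$ yields the claim.

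The main obstacle is making the oscillation bound genuinely uniform in $s\in[0,1]$ and in $\vv$ within a cell simultaneously: the difference $\mV_n(\vv,s)-\mV_n(\vv_j,s)$ is a partial sum of a martingale-difference-like array, and one needs a maximal inequality over $s$ together with the conditional-mean control over the cell. This is routine given the envelope $\sum_t\Vert\mD_n^{-1}\mX_{t-1}\Vert^2=O_{\P}(1)$ (Lemma~\ref{lem:SUM}) and $\max_t\Vert\mD_n^{-1}\mX_{t-1}\Vert=o_{\P}(1)$ (Lemma~\ref{lem:MAX}), which also guarantees one may restrict to the set $\{\max_t|\vv^\prime\mD_n^{-1}\mX_{t-1}|\le d\}$ where the density exists, but it does require care in bookkeeping. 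A minor secondary point is that the limit $s\big[(1-\beta)\mK-\mK_\dagger\big]\vv$ is jointly continuous in $(\vv,s)$, so uniform convergence of $\overline{\mV}_n$ (Lemma~\ref{lem:7}) plus the above suffices; no additional identification argument is needed.
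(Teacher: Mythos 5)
Your proposal is correct and follows essentially the same route as the paper: a finite net over the $\vv$-ball, a union bound combined with Lemma~\ref{lem:8} at the grid points, and an oscillation bound whose compensator is $O_{\P}(\text{mesh})$ via the bounded density and $\sum_t\Vert\mD_n^{-1}\mX_{t-1}\Vert^2=O_{\P}(1)$, with the centered part handled by martingale machinery; one then sends $n\to\infty$ followed by the mesh to zero. The only cosmetic difference is that the paper's written-out argument restricts to scalar $\mX_t$ and exploits monotonicity of the indicator in $v$ to bracket the cell oscillation, whereas your support/envelope bound $\sup_{\Vert\vv-\vv_j\Vert\le\eta}\Vert\vnu_t(\vv)-\vnu_t(\vv_j)\Vert\le\Vert\mD_n^{-1}\mX_{t-1}\Vert\,\1_{\{|\varepsilon_t-\vv_j^\prime\mD_n^{-1}\mX_{t-1}|\le\eta\Vert\mD_n^{-1}\mX_{t-1}\Vert\}}$ handles the vector case directly, which is exactly the ``only notationally more complicated'' extension the paper alludes to.
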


\begin{proof}
See Appendix~\ref{sec:CoVaRest Lemmas help}.
\end{proof}

\begin{proof}[{\textbf{Proof of Proposition~\ref{lem:LLN CLT}:}}] 
Plugging in $\widehat{\vv}_n(r,s)=\mD_n\big[\widehat{\valpha}_n(r,s)-\valpha_0\big]$ for $\vv$ in Lemma~\ref{lem:7} yields by Theorem~\ref{thm:std est} that, as $n\to\infty$,
\[
	\overline{\mV}_n\big(\widehat{\vv}_n(r,s), r,s\big)\overset{d}{\longrightarrow}\big[(1-\beta)\mK - \mK_{\dagger}\big] \mSigma^{1/2}\big[\mW(s)-\mW(r)\big]\qquad\text{in}\ (\ell^{\infty}(\mathcal{D}_{\iota}))^{k+1}.
\]
It also holds that
\begin{align*}
	\P&\bigg\{\sup_{0\leq r<s\leq1}\Big\Vert \mV_n\big(\widehat{\vv}_n(r,s), r,s\big) - \overline{\mV}_n(\widehat{\vv}_n(r,s), r,s) \Big\Vert>\varepsilon\bigg\} \\
	&= \P\bigg\{\sup_{0\leq r<s\leq1}\Big\Vert \mV_n\big(\widehat{\vv}_n(r,s), r,s\big) - \overline{\mV}_n\big(\widehat{\vv}_n(r,s), r,s\big) \Big\Vert>\varepsilon, \sup_{0\leq r<s\leq1}\big\Vert\widehat{\vv}_n(r,s)\big\Vert\leq K\bigg\}\\
	&\hspace{1.5cm} + \P\bigg\{\sup_{0\leq r<s\leq1}\Big\Vert \mV_n\big(\widehat{\vv}_n(r,s), r,s\big) - \overline{\mV}_n\big(\widehat{\vv}_n(r,s), r,s\big) \Big\Vert>\varepsilon, \sup_{0\leq r<s\leq1}\big\Vert\widehat{\vv}_n(r,s)\big\Vert> K\bigg\}\\
	& \leq \P\bigg\{\sup_{\Vert\vv\Vert\leq K}\sup_{0\leq r<s\leq1}\big\Vert \mV_n(\vv, r,s) - \overline{\mV}_n(\vv, r,s) \big\Vert>\varepsilon\bigg\}+ \P\bigg\{ \sup_{0\leq r<s\leq1}\big\Vert\widehat{\vv}_n(r,s)\big\Vert> K\bigg\}\\
	&=o(1)+o(1),
\end{align*}
as $n\to\infty$, followed by $K\to\infty$, where the last line follows from Lemma~\ref{lem:9} and Theorem~\ref{thm:std est}.
Therefore, $\sup_{0\leq r<s\leq1}\big\Vert \mV_n(\widehat{\vv}_n(r,s), r,s) - \overline{\mV}_n(\widehat{\vv}_n(r,s), r,s) \big\Vert=o_{\P}(1)$, whence
\begin{align*}
	\mV_n\big(\widehat{\vv}_n(r,s), r,s\big) &= \overline{\mV}_n\big(\widehat{\vv}_n(r,s), r,s\big) + \Big[\mV_n\big(\widehat{\vv}_n(r,s), r,s\big) - \overline{\mV}_n\big(\widehat{\vv}_n(r,s), r,s\big)\Big]\\
	&=\overline{\mV}_n\big(\widehat{\vv}_n(r,s), r,s\big) + o_{\P}(1)\\
	&\overset{d}{\longrightarrow}\big[(1-\beta)\mK - \mK_{\dagger}\big] \mSigma^{1/2}\big[\mW(s)-\mW(r)\big]\qquad\text{in}\ (\ell^{\infty}(\mathcal{D}_{\iota}))^{k+1}.
\end{align*}
This is the claimed result.
\end{proof}

\begin{proof}[{\textbf{Proof of Proposition~\ref{lem:LLN3}:}}] 
We only show that
\[
	\sup_{(r,s)\in\mathcal{D}_{\iota}}\bigg|\sum_{t=\lfloor nr\rfloor+1}^{\lfloor ns\rfloor}\1_{\{0<\epsilon_t\leq \mX_{t-1}^\prime[\widehat{\valpha}_n(r,s) - \valpha_0]\}}\big(\vw^\prime\mD_n^{-1}\mX_{t-1} - \delta_{t}\big) \1_{\{0<\delta_{t}<\vw^\prime\mD_n^{-1}\mX_{t-1}\}}\bigg|=o_{\P}(1),
\]
as the convergences involving the other indicator functions follow similarly.
It holds that
\begin{align*}
	\P&\bigg\{\sup_{(r,s)\in\mathcal{D}_{\iota}}\bigg|\sum_{t=\lfloor nr\rfloor+1}^{\lfloor ns\rfloor}\1_{\{0<\epsilon_t\leq \mX_{t-1}^\prime[\widehat{\valpha}_n(r,s) - \valpha_0]\}}\big(\vw^\prime\mD_n^{-1}\mX_{t-1} - \delta_{t}\big) \1_{\{0<\delta_{t}<\vw^\prime\mD_n^{-1}\mX_{t-1}\}}\bigg|>\varepsilon\bigg\}\\
	&\leq \P\bigg\{\sup_{(r,s)\in\mathcal{D}_{\iota}}\bigg|\sum_{t=\lfloor nr\rfloor+1}^{\lfloor ns\rfloor}\1_{\{0<\epsilon_t\leq \mX_{t-1}^\prime[\widehat{\valpha}_n(r,s) - \valpha_0]\}}\big(\vw^\prime\mD_n^{-1}\mX_{t-1} - \delta_{t}\big) \1_{\{0<\delta_{t}<\vw^\prime\mD_n^{-1}\mX_{t-1}\}}\bigg|>\varepsilon,\\
	&\hspace{9cm}\sup_{(r,s)\in\mathcal{D}_{\iota}}\big\Vert\mD_n[\widehat{\valpha}_n(r,s) - \valpha_0]\big\Vert\leq K\bigg\}\\
	&\hspace{1cm} + \P\bigg\{\sup_{(r,s)\in\mathcal{D}_{\iota}}\big\Vert\mD_n[\widehat{\valpha}_n(r,s) - \valpha_0]\big\Vert> K\bigg\}\\
	&\leq \P\bigg\{\sup_{(r,s)\in\mathcal{D}_{\iota}}\bigg|\sum_{t=\lfloor nr\rfloor+1}^{\lfloor ns\rfloor}\1_{\{0<\epsilon_t\leq K\Vert\mD_n^{-1}\mX_{t-1}\Vert\}}\big(\vw^\prime\mD_n^{-1}\mX_{t-1} - \delta_{t}\big) \1_{\{0<\delta_{t}<\vw^\prime\mD_n^{-1}\mX_{t-1}\}}\bigg|>\varepsilon\bigg\} + o(1)\\
	&\leq \P\bigg\{\sum_{t=1}^{n}\1_{\{0<\epsilon_t\leq K\Vert\mD_n^{-1}\mX_{t-1}\Vert\}}\big(\vw^\prime\mD_n^{-1}\mX_{t-1} - \delta_{t}\big) \1_{\{0<\delta_{t}<\vw^\prime\mD_n^{-1}\mX_{t-1}\}}>\varepsilon\bigg\} + o(1),
\end{align*}
as $n\to\infty$, followed by $K\to\infty$, where the penultimate step follows from Theorem~\ref{thm:std est}, which implies that $\sup_{(r,s)\in\mathcal{D}_{\iota}}\big\Vert\mD_n[\widehat{\valpha}_n(r,s) - \valpha_0]\big\Vert=O_{\P}(1)$.

In light of this, it suffices to show that
\begin{equation}\label{eq:(p.21)}
	\sum_{t=1}^{n}\omega_t(\vw):=\sum_{t=1}^{n}\1_{\{0<\epsilon_t\leq K\Vert\mD_n^{-1}\mX_{t-1}\Vert\}}\big(\vw^\prime\mD_n^{-1}\mX_{t-1} - \delta_{t}\big) \1_{\{0<\delta_{t}<\vw^\prime\mD_n^{-1}\mX_{t-1}\}}=o_{\P}(1).
\end{equation}
To do so, use Assumption~\ref{ass:innov CoVaR}~\eqref{it:dens bound CoVaR} to deduce that
\begin{align*}
	\E_{t-1}\big[\omega_t(\vw)\big]&= \int_{0}^{K\Vert \mD_n^{-1}\mX_{t-1}\Vert}\bigg[\int_{0}^{\vw^\prime\mD_n^{-1}\mX_{t-1}}(\vw^\prime\mD_n^{-1}\mX_{t-1}-y)f_{(\epsilon_t,\delta_t)^\prime\mid\mathcal{F}_{t-1}}(x,y)\D y\bigg]\D x\\
	&\leq \int_{0}^{K\Vert \mD_n^{-1}\mX_{t-1}\Vert}\bigg[\frac{\overline{f}}{2}(\vw^\prime\mD_n^{-1}\mX_{t-1})^2\bigg]\D x\\
	&\leq K\Vert\mD_n^{-1}\mX_{t-1}\Vert \vw^\prime\big(\mD_n^{-1}\mX_{t-1}\mX_{t-1}^\prime\mD_n^{-1}\big)\vw.
\end{align*}
Therefore,
\begin{align*}
	\sum_{t=1}^{n}\E_{t-1}\big[\omega_t(\vw)\big]&\leq K \max_{t=1,\ldots,n}\Vert\mD_n^{-1}\mX_{t-1}\Vert\vw^\prime\bigg(\sum_{t=1}^{n}\mD_n^{-1}\mX_{t-1}\mX_{t-1}^\prime\mD_n^{-1}\bigg)\vw\\
	&=o_{\P}(1)O_{\P}(1)\\
	&=o_{\P}(1)
\end{align*}
by Lemmas~\ref{lem:MAX}--\ref{lem:SUM}.
In view of this, \eqref{eq:(p.21)} follows if we can show that, as $n\to\infty$,
\begin{equation}\label{eq:(p.4.L.7)}
	\sum_{t=1}^{n}\Big\{\omega_t(\vw) - \E_{t-1}\big[\omega_t(\vw)\big]\Big\}=o_{\P}(1).
\end{equation}
To prove this, we use Corollary~3.1 of \citet{HH80}.
Note for this that $\big\{\omega_t(\vw) - \E_{t-1}\big[\omega_t(\vw)\big]\big\}$ is a sequence of MDAs by construction.
First, the sum of the conditional variances is asymptotically negligible, because from by now familiar arguments,
\begin{align*}
	\sum_{t=1}^{n}&\E_{t-1}\Big[\big\{\omega_t(\vw) - \E_{t-1}[\omega_t(\vw)]\big\}^2\Big]\\
	&\leq \sum_{t=1}^{n}\E_{t-1}\big[\omega_t^2(\vw)\big]\\
	&=\sum_{t=1}^{n}\E_{t-1}\Big[\1_{\{0<\epsilon_t\leq K\Vert\mD_n^{-1}\mX_{t-1}\Vert\}}\big(\vw^\prime\mD_n^{-1}\mX_{t-1} - \delta_{t}\big)^2 \1_{\{0<\delta_{t}<\vw^\prime\mD_n^{-1}\mX_{t-1}\}}\Big]\\
	&=\sum_{t=1}^{n} \int_{0}^{K\Vert\mD_n^{-1}\mX_{t-1}\Vert}\bigg[\int_{0}^{\vw^\prime\mD_n^{-1}\mX_{t-1}}(\vw^\prime\mD_n^{-1}\mX_{t-1} - y)^2f_{(\epsilon_t,\delta_t)^\prime\mid\mathcal{F}_{t-1}}(x,y)\D y\bigg]\D x\\
	&\leq \sum_{t=1}^{n} \int_{0}^{K\Vert\mD_n^{-1}\mX_{t-1}\Vert}\bigg[ \int_{0}^{\vw^\prime\mD_n^{-1}\mX_{t-1}}\overline{f}(\vw^\prime\mD_n^{-1}\mX_{t-1})^2\D y\bigg]\D x\\
	&\leq K\max_{t=1,\ldots,n}\Vert\mD_n^{-1}\mX_{t-1}\Vert^2\vw^\prime\bigg(\sum_{t=1}^{n}\mD_n^{-1}\mX_{t-1}\mX_{t-1}^\prime\mD_n^{-1}\bigg)\vw\\
	&= o_{\P}(1) O_{\P}(1)\\
	&= o_{\P}(1).
\end{align*}
Second, the CLC follows from the LC. 
The LC follows similarly as below \eqref{eq:(p.12.LC)}, since
\begin{align*}
	\big|\omega_t(\vw) - \E_{t-1}[\omega_t(\vw)]\big|^2 &\leq \big|2\vw^\prime\mD_n^{-1}\mX_{t-1}\big|^2\\
	&\leq K\Vert\mD_n^{-1}\mX_{t-1}\Vert^2\\
	&\leq 2K\max\big\{n^{-1}, n^{-(1+\kappa)}\Vert\vx_{t-1}\Vert^2\big\}.
\end{align*}
Therefore, Corollary~3.1 of \citet{HH80} implies \eqref{eq:(p.4.L.7)}, concluding the proof.
\end{proof}

\subsection{Proofs of Lemmas~\ref{lem:7}--\ref{lem:9}}\label{sec:CoVaRest Lemmas help}

It suffices to prove the convergences in Lemmas~\ref{lem:7}--\ref{lem:9} on the set $\big\{\max_{t=1,\ldots,n}\Vert\mD_n^{-1}\mX_{t-1}\Vert\leq d^\ast\big\}$, where $d^\ast=d^{\ast}(d,\vv)$ is chosen to ensure that $\max_{t=1,\ldots,n}|\vv^\prime\mD_n^{-1}\mX_{t-1}|\leq d$.
This suffices because Lemma~\ref{lem:MAX} implies that $\P\big\{\max_{t=1,\ldots,n}\Vert\mD_n^{-1}\mX_{t-1}\Vert\leq d^\ast\big\}\longrightarrow1$, as $n\to\infty$.
Once again, working on the set $\big\{\max_{t=1,\ldots,n}\Vert\mD_n^{-1}\mX_{t-1}\Vert\leq d^\ast\big\}$ ensures that the densities of Assumptions~\ref{ass:innov} and \ref{ass:innov CoVaR} exist and satisfy the properties set out in Appendices~\ref{Assumptions on the Quantile Regression}--\ref{Assumptions on the CoVaR Regression}.

\begin{proof}[{\textbf{Proof of Lemma~\ref{lem:7}:}}] 
Since
\begin{align*}
\E_{t-1}&\Big[\big(\1_{\{\vv^\prime\mD_n^{-1}\mX_{t-1}<\epsilon_t\leq0\}} - \1_{\{0<\epsilon_t\leq\vv^\prime\mD_n^{-1}\mX_{t-1}\}}\big)\psi_{\beta}(\delta_t)\Big]\\
&=\E_{t-1}\Big[\big(\1_{\{\vv^\prime\mD_n^{-1}\mX_{t-1}<\epsilon_t\leq0\}} - \1_{\{0<\epsilon_t\leq\vv^\prime\mD_n^{-1}\mX_{t-1}\}}\big)\big(\beta-1+\1_{\{\delta_t>0\}})\Big]\\
&=(\beta-1)\1_{\{\vv^\prime\mD_n^{-1}\mX_{t-1}<0\}}\int_{\vv^\prime\mD_n^{-1}\mX_{t-1}}^{0}f_{\epsilon_t\mid\mathcal{F}_{t-1}}(x)\D x \\
&\hspace{1cm}- (\beta-1)\1_{\{\vv^\prime\mD_n^{-1}\mX_{t-1}>0\}}\int_{0}^{\vv^\prime\mD_n^{-1}\mX_{t-1}}f_{\epsilon_t\mid\mathcal{F}_{t-1}}(x)\D x\\
&\hspace{0.2cm} + \1_{\{\vv^\prime\mD_n^{-1}\mX_{t-1}<0\}}\int_{\vv^\prime\mD_n^{-1}\mX_{t-1}}^{0}\bigg(\int_{0}^{\infty}f_{(\epsilon_t,\delta_t)^\prime\mid\mathcal{F}_{t-1}}(x,y)\D y\bigg)\D x\\
&\hspace{1cm} - \1_{\{\vv^\prime\mD_n^{-1}\mX_{t-1}>0\}}\int_{0}^{\vv^\prime\mD_n^{-1}\mX_{t-1}}\bigg(\int_{0}^{\infty}f_{(\epsilon_t,\delta_t)^\prime\mid\mathcal{F}_{t-1}}(x,y)\D y\bigg)\D x,
\end{align*}
we may write
\begin{align}
	\overline{\mV}_n(\vv,r,s) &= \sum_{t=\lfloor nr\rfloor+1}^{\lfloor ns\rfloor}\E_{t-1}\Big[\big(\1_{\{\vv^\prime\mD_n^{-1}\mX_{t-1}<\epsilon_t\leq0\}} - \1_{\{0<\epsilon_t\leq\vv^\prime\mD_n^{-1}\mX_{t-1}\}}\big)\psi_{\beta}(\delta_t)\Big]\mD_n^{-1}\mX_{t-1}\notag\\	
	&=\bigg[(\beta-1)\sum_{t=\lfloor nr\rfloor+1}^{\lfloor ns\rfloor} \1_{\{\vv^\prime\mD_n^{-1}\mX_{t-1}<0\}}\int_{\vv^\prime\mD_n^{-1}\mX_{t-1}}^{0}f_{\epsilon_t\mid\mathcal{F}_{t-1}}(x)\D x\,\mD_n^{-1}\mX_{t-1}\notag\\
	&\hspace{1cm} - (\beta-1)\sum_{t=\lfloor nr\rfloor+1}^{\lfloor ns\rfloor} \1_{\{\vv^\prime\mD_n^{-1}\mX_{t-1}>0\}}\int_{0}^{\vv^\prime\mD_n^{-1}\mX_{t-1}}f_{\epsilon_t\mid\mathcal{F}_{t-1}}(x)\D x\,\mD_n^{-1}\mX_{t-1}\bigg]\notag\\
	&\hspace{0.2cm} + \bigg[\sum_{t=\lfloor nr\rfloor+1}^{\lfloor ns\rfloor}\1_{\{\vv^\prime\mD_n^{-1}\mX_{t-1}<0\}}\int_{\vv^\prime\mD_n^{-1}\mX_{t-1}}^{0}\bigg(\int_{0}^{\infty}f_{(\epsilon_t,\delta_t)^\prime\mid\mathcal{F}_{t-1}}(x,y)\D y\bigg)\D x\,\mD_n^{-1}\mX_{t-1}\notag\\
	&\hspace{1cm} - \sum_{t=\lfloor nr\rfloor+1}^{\lfloor ns\rfloor}\1_{\{\vv^\prime\mD_n^{-1}\mX_{t-1}>0\}}\int_{0}^{\vv^\prime\mD_n^{-1}\mX_{t-1}}\bigg(\int_{0}^{\infty}f_{(\epsilon_t,\delta_t)^\prime\mid\mathcal{F}_{t-1}}(x,y)\D y\bigg)\D x\,\mD_n^{-1}\mX_{t-1}\bigg]\notag\\
	&=: \big[\overline{\mV}_{1n}(\vv,r,s) - \overline{\mV}_{2n}(\vv,r,s)\big] + \big[\overline{\mV}_{3n}(\vv,r,s) - \overline{\mV}_{4n}(\vv,r,s)\big].\label{eq:bV decomp}
\end{align}
Consider each $\overline{\mV}_{in}(\vv,r,s)$ $(i=1,\ldots,4)$ separately. First,
\begin{align*}
	\overline{\mV}_{1n}(\vv,r,s) &= (\beta-1)\sum_{t=\lfloor nr\rfloor+1}^{\lfloor ns\rfloor} \1_{\{\vv^\prime\mD_n^{-1}\mX_{t-1}<0\}}\int_{\vv^\prime\mD_n^{-1}\mX_{t-1}}^{0}f_{\epsilon_t\mid\mathcal{F}_{t-1}}(0)\D x\,\mD_n^{-1}\mX_{t-1}\\
	&\hspace{1cm} + (\beta-1)\sum_{t=\lfloor nr\rfloor+1}^{\lfloor ns\rfloor} \1_{\{\vv^\prime\mD_n^{-1}\mX_{t-1}<0\}}\int_{\vv^\prime\mD_n^{-1}\mX_{t-1}}^{0}\big[f_{\epsilon_t\mid\mathcal{F}_{t-1}}(x) - f_{\epsilon_t\mid\mathcal{F}_{t-1}}(0)\big]\D x\,\mD_n^{-1}\mX_{t-1}\\
	&=: \overline{\mV}_{11n}(\vv,r,s) + \overline{\mV}_{12n}(\vv,r,s).
\end{align*}
Simple integration yields that
\begin{align*}
	\overline{\mV}_{11n}(\vv,r,s) &= (1-\beta)\sum_{t=\lfloor nr\rfloor+1}^{\lfloor ns\rfloor} \1_{\{\vv^\prime\mD_n^{-1}\mX_{t-1}<0\}}f_{\epsilon_t\mid\mathcal{F}_{t-1}}(0)\big(\vv^\prime\mD_n^{-1}\mX_{t-1}\big)\big(\mD_n^{-1}\mX_{t-1}\big)\\
	&= (1-\beta)\sum_{t=\lfloor nr\rfloor+1}^{\lfloor ns\rfloor} \1_{\{\vv^\prime\mD_n^{-1}\mX_{t-1}<0\}}f_{\epsilon_t\mid\mathcal{F}_{t-1}}(0)\big(\mD_n^{-1}\mX_{t-1}\mX_{t-1}^\prime\mD_n^{-1}\big)\vv.
\end{align*}
Moreover, using Assumption~\ref{ass:innov}~\eqref{it:Lipschitz} and Lemmas~\ref{lem:MAX} and \ref{lem:SUM},
\begin{align*}
	\big\Vert\overline{\mV}_{12n}(\vv,r,s)\big\Vert &\leq |\beta-1|\sum_{t=\lfloor nr\rfloor+1}^{\lfloor ns\rfloor} \1_{\{\vv^\prime\mD_n^{-1}\mX_{t-1}<0\}}\int_{\vv^\prime\mD_n^{-1}\mX_{t-1}}^{0}|x-0|\frac{\big|f_{\epsilon_t\mid\mathcal{F}_{t-1}}(x) - f_{\epsilon_t\mid\mathcal{F}_{t-1}}(0)\big|}{|x-0|}\D x\times\\
	&\hspace{11cm}\times \big\Vert\mD_n^{-1}\mX_{t-1}\big\Vert\\
	&\leq L|\beta-1|\sum_{t=1}^{n}\frac{1}{2}\big(\vv^\prime\mD_n^{-1}\mX_{t-1}\big)^2\big\Vert\mD_n^{-1}\mX_{t-1}\big\Vert\\
	&\leq \frac{L|\beta-1|}{2}\max_{t=1,\ldots,n}\Vert\mD_n^{-1}\mX_{t-1}\Vert \vv^\prime\bigg(\sum_{t=1}^{n}\mD_n^{-1}\mX_{t-1}\mX_{t-1}^\prime\mD_n^{-1}\bigg)\vv\\
	&=o_{\P}(1)
\end{align*}
uniformly in $0\leq r<s\leq 1$ and $\Vert\vv\Vert\leq K$.
The previous three displays imply that
\begin{equation}\label{eq:bV1}
	\overline{\mV}_{1n}(\vv,r,s) = (1-\beta)\sum_{t=\lfloor nr\rfloor+1}^{\lfloor ns\rfloor} \1_{\{\vv^\prime\mD_n^{-1}\mX_{t-1}<0\}}f_{\epsilon_t\mid\mathcal{F}_{t-1}}(0)\big(\mD_n^{-1}\mX_{t-1}\mX_{t-1}^\prime\mD_n^{-1}\big)\vv+o_{\P}(1)
\end{equation}
uniformly in $0\leq r<s\leq 1$ and $\Vert\vv\Vert\leq K$.

By similar arguments we obtain that
\begin{equation}\label{eq:bV2}
	\overline{\mV}_{2n}(\vv,r,s) = (\beta-1)\sum_{t=\lfloor nr\rfloor+1}^{\lfloor ns\rfloor} \1_{\{\vv^\prime\mD_n^{-1}\mX_{t-1}>0\}}f_{\epsilon_t\mid\mathcal{F}_{t-1}}(0)\big(\mD_n^{-1}\mX_{t-1}\mX_{t-1}^\prime\mD_n^{-1}\big)\vv+o_{\P}(1)
\end{equation}
uniformly in $0\leq r < s \leq 1$ and $\Vert\vv\Vert\leq K$.

For $\overline{\mV}_{3n}(\vv,r,s)$ we get the decomposition
\begin{align*}
	\overline{\mV}_{3n}(\vv,r,s) &= \sum_{t=\lfloor nr\rfloor+1}^{\lfloor ns\rfloor}\1_{\{\vv^\prime\mD_n^{-1}\mX_{t-1}<0\}}\int_{\vv^\prime\mD_n^{-1}\mX_{t-1}}^{0}\bigg(\int_{0}^{\infty}f_{(\epsilon_t,\delta_t)^\prime\mid\mathcal{F}_{t-1}}(0,y)\D y\bigg)\D x\,\mD_n^{-1}\mX_{t-1}\\
	&\hspace{0.5cm} + \sum_{t=\lfloor nr\rfloor+1}^{\lfloor ns\rfloor}\1_{\{\vv^\prime\mD_n^{-1}\mX_{t-1}<0\}}\int_{\vv^\prime\mD_n^{-1}\mX_{t-1}}^{0}\bigg(\int_{0}^{\infty}f_{(\epsilon_t,\delta_t)^\prime\mid\mathcal{F}_{t-1}}(x,y)\D y\\
	&\hspace{7cm}- \int_{0}^{\infty}f_{(\epsilon_t,\delta_t)^\prime\mid\mathcal{F}_{t-1}}(0,y)\D y\bigg)\D x\,\mD_n^{-1}\mX_{t-1}\\
	&=: \overline{\mV}_{31n}(\vv,r,s) + \overline{\mV}_{32n}(\vv,r,s).
\end{align*}
For the first right-hand side term, it follows that
\begin{align*}
\overline{\mV}_{31n}(\vv,r,s) &= -\sum_{t=\lfloor nr\rfloor+1}^{\lfloor ns\rfloor}\1_{\{\vv^\prime\mD_n^{-1}\mX_{t-1}<0\}}\bigg(\int_{0}^{\infty}f_{(\epsilon_t,\delta_t)^\prime\mid\mathcal{F}_{t-1}}(0,y)\D y\bigg)\big(\vv^\prime\mD_n^{-1}\mX_{t-1}\big)\mD_n^{-1}\mX_{t-1}\\
&= -\sum_{t=\lfloor nr\rfloor+1}^{\lfloor ns\rfloor}\1_{\{\vv^\prime\mD_n^{-1}\mX_{t-1}<0\}}\bigg(\int_{0}^{\infty}f_{(\epsilon_t,\delta_t)^\prime\mid\mathcal{F}_{t-1}}(0,y)\D y\bigg)\big(\mD_n^{-1}\mX_{t-1}\mX_{t-1}^\prime\mD_n^{-1}\big)\vv
\end{align*}
and, by Assumption~\ref{ass:innov CoVaR}~\eqref{it:Lipschitz CoVaR} and Lemmas~\ref{lem:MAX} and \ref{lem:SUM},
\begin{align*}
	\big\Vert\overline{\mV}_{32n}(\vv,r,s)\big\Vert &\leq \sum_{t=\lfloor nr\rfloor+1}^{\lfloor ns\rfloor}\1_{\{\vv^\prime\mD_n^{-1}\mX_{t-1}<0\}}\int_{\vv^\prime\mD_n^{-1}\mX_{t-1}}^{0}|x-0|\times\\
	&\hspace{2cm}\times\frac{\Big|\int_{0}^{\infty}f_{(\epsilon_t,\delta_t)^\prime\mid\mathcal{F}_{t-1}}(x,y)\D y- \int_{0}^{\infty}f_{(\epsilon_t,\delta_t)^\prime\mid\mathcal{F}_{t-1}}(0,y)\D y\Big|}{|x-0|}\D x\,\big\Vert\mD_n^{-1}\mX_{t-1}\big\Vert\\
	&\leq L \sum_{t=1}^{n} \frac{1}{2}\big(\vv^\prime\mD_n^{-1}\mX_{t-1}\big)^2 \big\Vert\mD_n^{-1}\mX_{t-1}\big\Vert\\
	&\leq \frac{L}{2}\max_{t=1,\ldots,n}\Vert\mD_n^{-1}\mX_{t-1}\Vert \vv^\prime\bigg(\sum_{t=1}^{n}\mD_n^{-1}\mX_{t-1}\mX_{t-1}^\prime\mD_n^{-1}\bigg)\vv\\
	&=o_{\P}(1)
\end{align*}
uniformly in $0\leq r < s\leq 1$ and $\Vert\vv\Vert\leq K$.
Therefore,
\begin{equation}\label{eq:bV3}
	\overline{\mV}_{3n}(\vv,s) = -\sum_{t=\lfloor nr\rfloor+1}^{\lfloor ns\rfloor}\1_{\{\vv^\prime\mD_n^{-1}\mX_{t-1}<0\}}\bigg(\int_{0}^{\infty}f_{(\epsilon_t,\delta_t)^\prime\mid\mathcal{F}_{t-1}}(0,y)\D y\bigg)\big(\mD_n^{-1}\mX_{t-1}\mX_{t-1}^\prime\mD_n^{-1}\big)\vv + o_{\P}(1)
\end{equation}
uniformly in $0\leq r < s\leq 1$ and $\Vert\vv\Vert\leq K$.

Similarly, 
\begin{equation}\label{eq:bV4}
	\overline{\mV}_{4n}(\vv,s) = \sum_{t=\lfloor nr\rfloor+1}^{\lfloor ns\rfloor}\1_{\{\vv^\prime\mD_n^{-1}\mX_{t-1}>0\}}\bigg(\int_{0}^{\infty}f_{(\epsilon_t,\delta_t)^\prime\mid\mathcal{F}_{t-1}}(0,y)\D y\bigg)\big(\mD_n^{-1}\mX_{t-1}\mX_{t-1}^\prime\mD_n^{-1}\big)\vv + o_{\P}(1)
\end{equation}
uniformly in $0\leq r < s\leq 1$ and $\Vert\vv\Vert\leq K$.

Plugging \eqref{eq:bV1}--\eqref{eq:bV4} into \eqref{eq:bV decomp} yields that
\begin{multline}\label{eq:(p.30)}
	\overline{\mV}_n(\vv,r,s) =(1-\beta)\sum_{t=\lfloor nr\rfloor+1}^{\lfloor ns\rfloor} f_{\epsilon_t\mid\mathcal{F}_{t-1}}(0)\big(\mD_n^{-1}\mX_{t-1}\mX_{t-1}^\prime\mD_n^{-1}\big)\vv\\
	-\sum_{t=\lfloor nr\rfloor+1}^{\lfloor ns\rfloor}\bigg(\int_{0}^{\infty}f_{(\epsilon_t,\delta_t)^\prime\mid\mathcal{F}_{t-1}}(0,y)\D y\bigg)\big(\mD_n^{-1}\mX_{t-1}\mX_{t-1}^\prime\mD_n^{-1}\big)\vv + o_{\P}(1)
\end{multline}
uniformly in $0\leq r < s\leq 1$ and $\Vert\vv\Vert\leq K$, such that the conclusion follows from Assumptions~\ref{ass:K} and \ref{ass:K ast}.
\end{proof}

\begin{proof}[{\textbf{Proof of Lemma~\ref{lem:8}:}}] 
It is easy to check that
\begin{align*}
	&\sup_{0\leq r<s\leq 1}\big\Vert \mV_n(\vv,r,s)-\overline{\mV}_n(\vv,r,s) \big\Vert\\
	&=\sup_{0\leq r<s\leq 1}\Big\Vert \mV_n(\vv,0,s)-\overline{\mV}_n(\vv,0,s) - \big[\mV_n(\vv,0,r)-\overline{\mV}_n(\vv,0,r)\big] \Big\Vert\\
	&\leq 2 \sup_{s\in[0,1]}\big\Vert \mV_n(\vv,0,s)-\overline{\mV}_n(\vv,0,s) \big\Vert.
\end{align*}
Therefore, it suffices to show that
\[
	\sup_{s\in[0,1]}\big\Vert \mV_n(\vv,0,s)-\overline{\mV}_n(\vv,0,s) \big\Vert=o_{\P}(1).
\]
The proof of this relation is similar to that of \eqref{eq:(P.7.1)}.
Recall the definitions of $\vnu_t(\vv)$ and $\overline{\vnu}_t(\vv)$ above Lemma~\ref{lem:7}.
To apply Theorem~3.33 of \citet[Chapter~VIII]{JS87} to the (vector) MDA $\big\{\vnu_t(\vv)-\overline{\vnu}_t(\vv)\big\}$, we first show asymptotic negligibility of the sum of the conditional variances:
\begin{align*}
	\sum_{t=1}^{\lfloor ns\rfloor}&\E_{t-1}\Big[\big\{\vnu_t(\vv)-\overline{\vnu}_t(\vv)\big\}\big\{\vnu_t(\vv)-\overline{\vnu}_t(\vv)\big\}^\prime\Big]\\
	&= \sum_{t=1}^{\lfloor ns\rfloor}\Big\{\E_{t-1}\big[\vnu_t(\vv)\vnu_t^\prime(\vv)\big] - \overline{\vnu}_t(\vv)\overline{\vnu}_t^\prime(\vv)\Big\}\\
	&\leq \sum_{t=1}^{\lfloor ns\rfloor}\E_{t-1}\big[\vnu_t(\vv)\vnu_t^\prime(\vv)\big]\\
	&= \sum_{t=1}^{\lfloor ns\rfloor}\E_{t-1}\Big[\big(\1_{\{\vv^\prime\mD_n^{-1}\mX_{t-1}<\epsilon_t\leq0\}} - \1_{\{0<\epsilon_t\leq\vv^\prime\mD_n^{-1}\mX_{t-1}\}}\big)^2\psi_{\beta}^2(\delta_t)\Big]\mD_n^{-1}\mX_{t-1}\mX_{t-1}^\prime\mD_n^{-1}\\
	&\leq \sum_{t=1}^{\lfloor ns\rfloor}\E_{t-1}\Big[\big(\1_{\{\vv^\prime\mD_n^{-1}\mX_{t-1}<\epsilon_t\leq0\}} + \1_{\{0<\epsilon_t\leq\vv^\prime\mD_n^{-1}\mX_{t-1}\}}\big)\Big]\mD_n^{-1}\mX_{t-1}\mX_{t-1}^\prime\mD_n^{-1}\\
	&= \sum_{t=1}^{\lfloor ns\rfloor}\bigg[\1_{\{\vv^\prime\mD_n^{-1}\mX_{t-1}<0\}}\int_{\vv^\prime\mD_n^{-1}\mX_{t-1}}^{0}f_{\epsilon_t\mid\mathcal{F}_{t-1}}(x)\D x\\
	&\hspace{3cm}+ \1_{\{\vv^\prime\mD_n^{-1}\mX_{t-1}>0\}}\int_{0}^{\vv^\prime\mD_n^{-1}\mX_{t-1}}f_{\epsilon_t\mid\mathcal{F}_{t-1}}(x)\D x\bigg]\mD_n^{-1}\mX_{t-1}\mX_{t-1}^\prime\mD_n^{-1}\\
	&\leq \sum_{t=1}^{\lfloor ns\rfloor}2\overline{f}|\vv^\prime\mD_n^{-1}\mX_{t-1}|\mD_n^{-1}\mX_{t-1}\mX_{t-1}^\prime\mD_n^{-1}\\
	&\leq K \max_{t=1,\ldots,n}\Vert\mD_n^{-1}\mX_{t-1}\Vert \sum_{t=1}^{\lfloor ns\rfloor}\mD_n^{-1}\mX_{t-1}\mX_{t-1}^\prime\mD_n^{-1}\\
	&=Ko_{\P}(1)O_{\P}(1)\\
	&=o_{\P}(1),
\end{align*}
where the above inequalities are to be understood with respect to the Loewner order, and the penultimate step follows from Lemmas~\ref{lem:MAX} and \ref{lem:SUM}.

Second, we verify the CLC
\[
	\sum_{t=1}^{n}\E_{t-1}\Big[\Vert\vnu_t(\vv) - \overline{\vnu}_t(\vv)\Vert^2\1_{\{\Vert\vnu_t(\vv) - \overline{\vnu}_t(\vv)\Vert^2>\delta^2\}}\Big]\overset{\P}{\underset{(n\to\infty)}{\longrightarrow}}0
\]
by verifying the LC
\[
	\sum_{t=1}^{n}\E\Big[\Vert\vnu_t(\vv) - \overline{\vnu}_t(\vv)\Vert^2\1_{\{\Vert\vnu_t(\vv) - \overline{\vnu}_t(\vv)\Vert^2>\delta^2\}}\Big]\underset{(n\to\infty)}{\longrightarrow}0.
\]
Since
\begin{align*}
	\Vert\vnu_t(\vv) - \overline{\vnu}_t(\vv)\Vert^2 &\leq K\Vert\mD_n^{-1}\mX_{t-1}\Vert^2\\
	&\leq K\max\big\{n^{-1}, n^{-(1+\kappa)}\Vert\vx_{t-1}\Vert^2\big\},
\end{align*}
where $\kappa=0$ corresponds to the (I0) case, we only have to verify the LC for $n^{-(1+\kappa)}\Vert\vx_{t-1}\Vert^2$.
Once again, this follows as in the proof of Proposition~\ref{lem:CLT} via Proposition~A1~(ii) of \citet{MP20}.

Thus, we may apply Theorem~3.33 of \citet[Chapter~VIII]{JS87} to conclude that
\[
	\mV_n(\vv,0,s) - \overline{\mV}_n(\vv,0,s)\overset{d}{\longrightarrow}\vzero\qquad\text{in }(D[0,1])^{k+1},
\]
from which $\sup_{s\in[0,1]}\Vert\mV_n(\vv,0,s) - \overline{\mV}_n(\vv,0,s)\Vert=o_{\P}(1)$ follows via the CMT.
\end{proof}

\begin{proof}[{\textbf{Proof of Lemma~\ref{lem:9}:}}] 
By a similar argument as in the proof of Lemma~\ref{lem:8}, it suffices to show that
\[
	\sup_{\Vert\vv\Vert\leq K}\sup_{0\leq s\leq1}\big\Vert\mV_n(\vv,0,s) - \overline{\mV}_n(\vv,0,s)\big\Vert=o_{\P}(1).
\]
We do so for real-valued $\mX_t$ here, as the vector-valued case is only notationally more complicated.
To reflect this in the notation, we write $X_t$ instead of $\mX_t$, $V_n(v,0,s) - \overline{V}_n(v,0,s)$ instead of $\mV_n(\vv,0,s) - \overline{\mV}_n(\vv,0,s)$, etc.
Decompose
\[
	V_n(v,0,s) - \overline{V}_n(v,0,s) = \sum_{t=1}^{\lfloor ns\rfloor}\big[\nu_{1t}(v)- \overline{\nu}_{1t}(v)\big] - \sum_{t=1}^{\lfloor ns\rfloor}\big[\nu_{2t}(v)- \overline{\nu}_{2t}(v)\big],
\]
where
\begin{align*}
	\nu_{1t}(v) &:= \1_{\{v^\prime D_n^{-1} X_{t-1}<\epsilon_t\leq 0\}}\psi_{\beta}(\delta_t)D_n^{-1}X_{t-1},\\
	\overline{\nu}_{1t}(v) &:= \E_{t-1}\big[\nu_{1t}(v)\big],\\
	\nu_{2t}(v) &:= \1_{\{0<\epsilon_t\leq v^\prime D_n^{-1} X_{t-1}\}}\psi_{\beta}(\delta_t)D_n^{-1}X_{t-1},\\
	\overline{\nu}_{2t}(v) &:= \E_{t-1}\big[\nu_{2t}(v)\big].
\end{align*}
In light of the above decomposition, it suffices to show that
\[
	\sup_{|v|\leq K}\sup_{s\in[0,1]}\bigg|\sum_{t=1}^{\lfloor ns\rfloor}\big[\nu_{it}(v)-\overline{\nu}_{it}(v)\big]\bigg|=o_{\P}(1),\qquad i=1,2.
\]
We only do so for $i=2$, as the case $i=1$ can be dealt with similarly.

Fix some $\rho>0$ and assume without loss of generality that $2K/\rho$ is an integer.
Since
\begin{align*}
&\sup_{|v|\leq K}\sup_{s\in[0,1]}\bigg|\sum_{t=1}^{\lfloor ns\rfloor} \big[\nu_{2t}(v)-\overline{\nu}_{2t}(v)\big]\bigg|\\
&\hspace{1cm} \leq \max_{\substack{j\in\mathbb{Z}\\ j\rho\in[-K,K]}}\sup_{s\in[0,1]}\bigg|\sum_{t=1}^{\lfloor ns\rfloor} \big[\nu_{2t}(j\rho)-\overline{\nu}_{2t}(j\rho)\big]\bigg|\\
&\hspace{2cm}+ \sup_{\substack{-K\leq v_1,v_2\leq K\\ |v_1-v_2|\leq\rho}}\sup_{s\in[0,1]}\bigg|\sum_{t=1}^{\lfloor ns\rfloor} \big[\nu_{2t}(v_1)-\overline{\nu}_{2t}(v_1)\big] - \sum_{t=1}^{\lfloor ns\rfloor} \big[\nu_{2t}(v_2)-\overline{\nu}_{2t}(v_2)\big]\bigg|,
\end{align*}
it follows that
\begin{align}
	\P&\bigg\{\sup_{|v|\leq K}\sup_{s\in[0,1]}\bigg|\sum_{t=1}^{\lfloor ns\rfloor} \big[\nu_{2t}(v)-\overline{\nu}_{2t}(v)\big]\bigg|>\varepsilon\bigg\}\notag\\
	& \leq \P\Bigg\{\max_{\substack{j\in\mathbb{Z}\\ j\rho\in[-K,K]}}\sup_{s\in[0,1]}\bigg|\sum_{t=1}^{\lfloor ns\rfloor} \big[\nu_{2t}(j\rho)-\overline{\nu}_{2t}(j\rho)\big]\bigg|>\frac{\varepsilon}{2}\Bigg\}\notag\\
	&\hspace{1cm} + \P\Bigg\{\sup_{\substack{-K\leq v_1,v_2\leq K\\ |v_1-v_2|\leq\rho}}\sup_{s\in[0,1]}\bigg|\sum_{t=1}^{\lfloor ns\rfloor} \big[\nu_{2t}(v_1)-\overline{\nu}_{2t}(v_1)\big] - \sum_{t=1}^{\lfloor ns\rfloor} \big[\nu_{2t}(v_2)-\overline{\nu}_{2t}(v_2)\big]\bigg|>\frac{\varepsilon}{2}\Bigg\}.\label{eq:decomp nu2}
\end{align}
By subadditivity and Lemma~\ref{lem:8},
\begin{multline*}
	\P\Bigg\{\max_{\substack{j\in\mathbb{Z}\\ j\rho\in[-K,K]}}\sup_{s\in[0,1]}\bigg|\sum_{t=1}^{\lfloor ns\rfloor} \big[\nu_{2t}(j\rho)-\overline{\nu}_{2t}(j\rho)\big]\bigg|>\frac{\varepsilon}{2}\Bigg\}\\
		\leq \sum_{\substack{j\in\mathbb{Z}\\ j\rho\in[-K,K]}}\P\bigg\{\sup_{s\in[0,1]}\bigg|\sum_{t=1}^{\lfloor ns\rfloor} \big[\nu_{2t}(j\rho)-\overline{\nu}_{2t}(j\rho)\big]\bigg|>\frac{\varepsilon}{2}\bigg\}=o(1).
\end{multline*}

Therefore, we only have to show that the final right-hand side term in \eqref{eq:decomp nu2} converges to zero.
Note that
\begin{align}
	\sup_{\substack{-K\leq v_1,v_2\leq K\\ |v_1-v_2|\leq\rho}}&\sup_{s\in[0,1]}\bigg|\sum_{t=1}^{\lfloor ns\rfloor} \big[\nu_{2t}(v_1)-\overline{\nu}_{2t}(v_1)\big] - \sum_{t=1}^{\lfloor ns\rfloor} \big[\nu_{2t}(v_2)-\overline{\nu}_{2t}(v_2)\big]\bigg|\notag\\
	&=\sup_{\substack{-K\leq v_1,v_2\leq K\\ |v_1-v_2|\leq\rho}}\sup_{s\in[0,1]}\bigg|\sum_{t=1}^{\lfloor ns\rfloor} \big[\nu_{2t}(v_1)-\nu_{2t}(v_2)\big] - \sum_{t=1}^{\lfloor ns\rfloor} \big[\overline{\nu}_{2t}(v_1)-\overline{\nu}_{2t}(v_2)\big]\bigg|\notag\\
	&\leq \sup_{\substack{-K\leq v_1,v_2\leq K\\ |v_1-v_2|\leq\rho}}\sum_{t=1}^{n} \big|\nu_{2t}(v_1)-\nu_{2t}(v_2)\big| + \sup_{\substack{-K\leq v_1,v_2\leq K\\ |v_1-v_2|\leq\rho}}\sum_{t=1}^{n} \big|\overline{\nu}_{2t}(v_1)-\overline{\nu}_{2t}(v_2)\big|\notag\\
	&=:V_{1n} + V_{2n}.\notag
\end{align}
For $V_{2n}$, we get by monotonicity of the indicator function in $\overline{\nu}_{2t}(\cdot)$ that
\begin{align}
	V_{2n} &\leq\max_{\ell}\sum_{t=1}^{n}\big|\overline{\nu}_{2t}((\ell+2)\rho)-\overline{\nu}_{2t}(\ell\rho)\big|\notag\\
	&\leq\max_{\ell}\sum_{t=1}^{n} \E_{t-1}\Big[\big(\1_{\{0<\epsilon_t\leq(\ell+2)\rho D_n^{-1}X_{t-1}\}}-\1_{\{0<\epsilon_t\leq\ell\rho D_n^{-1}X_{t-1}\}}\big)\big|\psi_{\beta}(\delta_t)\big|\Big]|D_n^{-1}X_{t-1}|,\label{eq:(p.19)}
\end{align}
where the maximum is taken over the integers $\ell$ satisfying $[\ell\rho,(\ell+2)\rho]\subset[-K,K]$.
The conditional expectation in the above expression may be bounded as follows:
\begin{align*}
\E_{t-1}\Big[\big(\1_{\{0<\epsilon_t\leq(\ell+2)\rho D_n^{-1}X_{t-1}\}}-\1_{\{0<\epsilon_t\leq\ell\rho D_n^{-1}X_{t-1}\}}\big)\big|\psi_{\beta}(\delta_t)\big|\Big]
&\leq \int_{|\ell\rho D_n^{-1}X_{t-1}|}^{|(\ell+2)\rho D_n^{-1}X_{t-1}|}f_{\epsilon_t\mid\mathcal{F}_{t-1}}(x)\D x\\
&\leq \overline{f}2\rho|D_n^{-1}X_{t-1}|.
\end{align*}
Insert this into \eqref{eq:(p.19)} to get that
\begin{equation}\label{eq:A4n B4n}
	V_{2n} \leq K\rho\sum_{t=1}^{n}|D_n^{-1}X_{t-1}X_{t-1}^\prime D_n^{-1}|=\rho O_{\P}(1)
\end{equation}
by Lemma~\ref{lem:SUM}.
This implies that
\begin{equation}\label{eq:(p.20)}
	\lim_{\rho\downarrow0}\limsup_{n\to\infty} \P\Bigg\{\sup_{\substack{-K\leq v_1,v_2\leq K\\ |v_1-v_2|\leq\rho}}\sum_{t=1}^{n}\big|\overline{\nu}_{2t}(v_1) - \overline{\nu}_{2t}(v_2)\big|>\varepsilon\Bigg\}=0.
\end{equation}

For $V_{1n}$ we obtain a similar bound as in \eqref{eq:(p.19)}:
\begin{equation*}
	V_{1n} \leq\max_{\ell} \sum_{t=1}^{n}\big[\1_{\{0<\epsilon_t\leq(\ell+2)\rho D_n^{-1}X_{t-1}\}}-\1_{\{0<\epsilon_t\leq\ell\rho D_n^{-1}X_{t-1}\}}\big]\big|\psi_{\beta}(\delta_t)D_n^{-1}X_{t-1}\big|.
\end{equation*}
By arguments similar to those in the proof of Lemma~\ref{lem:8}, we obtain for each fixed $\ell$ that
\begin{multline*}
	\sum_{t=1}^{n}\bigg\{\Big[\1_{\{0<\epsilon_t\leq(\ell+2)\rho D_n^{-1}X_{t-1}\}}\big|\psi_{\beta}(\delta_t)D_n^{-1}X_{t-1}\big| - \1_{\{0<\epsilon_t\leq\ell\rho D_n^{-1}X_{t-1}\}}\big|\psi_{\beta}(\delta_t)D_n^{-1}X_{t-1}\big|\Big]\\
	-\E_{t-1}\Big[\1_{\{0<\epsilon_t\leq(\ell+2)\rho D_n^{-1}X_{t-1}\}}\big|\psi_{\beta}(\delta_t)D_n^{-1}X_{t-1}\big| - \1_{\{0<\epsilon_t\leq\ell\rho D_n^{-1}X_{t-1}\}}\big|\psi_{\beta}(\delta_t)D_n^{-1}X_{t-1}\big|\Big]\bigg\}\\
	=o_{\P}(1).
\end{multline*}
Using this and the fact that
\begin{multline*}
	\sum_{t=1}^{n}\E_{t-1}\Big[\1_{\{0<\epsilon_t\leq(\ell+2)\rho D_n^{-1}X_{t-1}\}}\big|\psi_{\beta}(\delta_t)D_n^{-1}X_{t-1}\big| - \1_{\{0<\epsilon_t\leq\ell\rho D_n^{-1}X_{t-1}\}}\big|\psi_{\beta}(\delta_t)D_n^{-1}X_{t-1}\big|\Big]\\
	\leq \rho O_{\P}(1)
\end{multline*}
from the arguments leading up to \eqref{eq:A4n B4n}, we deduce that
\[
	\lim_{\rho\downarrow0}\limsup_{n\to\infty} \P\Bigg\{\sup_{\substack{-K\leq v_1,v_2\leq K\\ |v_1-v_2|\leq\rho}}\sum_{t=1}^{n}\big|\nu_{2t}(v_1) - \nu_{2t}(v_2)\big|>\varepsilon\Bigg\}=0.
\]
From this and \eqref{eq:(p.20)} it then follows that
\[
	\sup_{\substack{-K\leq v_1,v_2\leq K\\ |v_1-v_2|\leq\rho}}\sup_{s\in[0,1]}\bigg|\sum_{t=1}^{\lfloor ns\rfloor}\big[\nu_{2t}(v_1) - \overline{\nu}_{2t}(v_1)\big] - \sum_{t=1}^{\lfloor ns\rfloor}\big[\nu_{2t}(v_2) - \overline{\nu}_{2t}(v_2)\big]\bigg|=o_{\P}(1),
\]
as $n\to\infty$, followed by $\rho\downarrow0$. 
Therefore, the final right-hand side term in \eqref{eq:decomp nu2} can be made arbitrarily small, concluding the proof.
\end{proof}

\section{Functional Convergence of the QR Estimator Under Local Alternatives}
\label{sec:thm1 alt}

Here, we establish the functional convergence of the QR estimator under the local alternative
\[
	\mathcal{H}_1^{Q}\colon \valpha_{0,t}=\valpha_0 + \mD_n^{-1}\va(t/n),\qquad t=1,\ldots,n,
\]
where $\va(\cdot)$ is a $\mathbb{R}^{k+1}$-valued, componentwise step function on the interval $[0,1]$.
Again, we do so as a first stepping stone towards deriving local power for our structural break test in CoVaR regressions in Theorem~\ref{thm:CoVaR est alt}.
The analog result to Theorem~\ref{thm:std est} under the local alternative $\mathcal{H}_1^{Q}$ reads as follows.

\begin{thm}\label{thm:std est alt}
Suppose $\mathcal{H}_1^{Q}$ holds true for the model \eqref{eq:(QRalt)}.
If Assumptions~\ref{ass:N}--\ref{ass:K} are satisfied, then, as $n\to\infty$,
\begin{equation*}
(s-r)\mD_n[\widehat{\valpha}_n(r,s) - \valpha_0]\overset{d}{\longrightarrow}\mSigma^{1/2}\big[\mW(s)-\mW(r)\big]+\int_{r}^{s}\va(x)\D x\qquad\text{in }(\ell^{\infty}(\mathcal{D}_{\iota}))^{k+1},
\end{equation*}
where $\mW(\cdot)$ and $\mSigma$ are as in Theorem~\ref{thm:std est}.
\end{thm}

Before proving Theorem~\ref{thm:std est alt}, we require the following analog of Proposition~\ref{lem:LLN}.

\begin{prop}\label{lem:LLN alt}
Under the assumptions of Theorem~\ref{thm:std est alt} it holds for any $\vw\in\mathbb{R}^{k+1}$ that, as $n\to\infty$,
\begin{multline*}
	\sup_{0\leq r< s\leq 1}\bigg|\sum_{t=\lfloor nr\rfloor+1}^{\lfloor ns\rfloor}\Big(\epsilon_{t} - \big[\vw-\va(t/n)\big]^\prime\mD_n^{-1}\mX_{t-1}\Big)\times\\
	\times\big[\1_{\{\vw^\prime\mD_n^{-1}\mX_{t-1}<\epsilon_{t}+\va^\prime(t/n)\mD_n^{-1}\mX_{t-1}<0\}} - \1_{\{0<\epsilon_{t}+\va^\prime(t/n)\mD_n^{-1}\mX_{t-1}<\vw^\prime\mD_n^{-1}\mX_{t-1}\}}\big]
\\
-\frac{1}{2}(s-r)\vw^\prime\mK\vw\bigg|=o_{\P}(1).
\end{multline*}
\end{prop}

\begin{proof}
See Appendix~\ref{sec:Prop alt QR}.
\end{proof}

\begin{prop}\label{prop:1alt}
Under the assumptions of Theorem~\ref{thm:std est alt} it holds that, as $n\to\infty$,
\begin{multline*}
	\sup_{0\leq r<s\leq1} \bigg|\sum_{t=\lfloor nr\rfloor+1}^{\lfloor ns\rfloor}\big[\1_{\{-\va^\prime(t/n)\mD_n^{-1}\mX_{t-1}<\epsilon_t\leq0\}} - \1_{\{0<\epsilon_t\leq -\va^\prime(t/n)\mD_n^{-1}\mX_{t-1}\}}\big]\mD_n^{-1}\mX_{t-1} \\
	- \mK \Big(\int_{r}^{s}\va(x)\D x\Big)\bigg|=o_{\P}(1).
\end{multline*}
\end{prop}

\begin{proof}
See Appendix~\ref{sec:Prop alt QR}.
\end{proof}

\begin{proof}[{\textbf{Proof of Theorem~\ref{thm:std est alt}:}}]
The proof is roughly similar to that of Theorem~\ref{thm:std est}.
To highlight the analogy, we often overload notation by redefining quantities from the proof of Theorem~\ref{thm:std est}.
Under $\mathcal{H}_{1}^{Q}$ the estimator $\widehat{\valpha}_n(r,s)$ can equivalently be written as
\begin{align*}
\widehat{\valpha}_n(r,s) &= \argmin_{\valpha\in\mathbb{R}^{k+1}} \sum_{t=\lfloor nr\rfloor+1}^{\lfloor ns\rfloor}\Big[\rho_{\alpha}(Y_t-\mX_{t-1}^\prime\valpha) - \rho_{\alpha}\big(\epsilon_{t}+\va^\prime(t/n)\mD_n^{-1}\mX_{t-1}\big)\Big]\\
&= \argmin_{\valpha\in\mathbb{R}^{k+1}} \sum_{t=\lfloor nr\rfloor+1}^{\lfloor ns\rfloor}\Big[\rho_{\alpha}\big(Y_t-\mX_{t-1}^\prime\valpha_{0,t}-\mX_{t-1}^\prime(\valpha-\valpha_{0,t})\big) - \rho_{\alpha}\big(\epsilon_{t}+\va^\prime(t/n)\mD_n^{-1}\mX_{t-1}\big)\Big]\\
&\overset{\eqref{eq:(QRalt)}}{=} \argmin_{\valpha\in\mathbb{R}^{k+1}} \sum_{t=\lfloor nr\rfloor+1}^{\lfloor ns\rfloor}\Big[\rho_{\alpha}\big(\epsilon_{t} - (\valpha-\valpha_0)^\prime\mX_{t-1} + \va^\prime(t/n)\mD_n^{-1}\mX_{t-1}\big) - \rho_{\alpha}\big(\epsilon_{t}+\va^\prime(t/n)\mD_n^{-1}\mX_{t-1}\big)\Big],
\end{align*}
where we used in the last step that $\valpha_{0,t}=\valpha_0 + \mD_n^{-1}\va(t/n)$.
Therefore, if we define 
\[
 f_n(\vw,r,s)=\sum_{t=\lfloor nr\rfloor+1}^{\lfloor ns\rfloor}\Big[\rho_{\alpha}\big(\epsilon_{t} - \vw^\prime\mD_n^{-1}\mX_{t-1} + \va^\prime(t/n)\mD_n^{-1}\mX_{t-1}\big) - \rho_{\alpha}\big(\epsilon_{t}+\va^\prime(t/n)\mD_n^{-1}\mX_{t-1}\big)\Big],
\] 
then the minimizer $\vw_n(r,s)$ of $f_n(\cdot,r,s)$ satisfies that 
\[
	\vw_n(r,s)=\mD_n\big[\widehat{\valpha}_n(r,s) - \valpha_0\big].
\]

To derive the weak limit of $\vw_n(\cdot,\cdot)$, we again invoke Theorem~2 of \citet{Kat09}.
Note that from Assumption~\ref{ass:innov},
\begin{align*}
	\P&\big\{\exists\ t\in\mathbb{N}\colon \epsilon_t=-\va^\prime(t/n)\mD_n^{-1}\mX_{t-1},\ \max_{t=1,\ldots,n}\big|\va^\prime(t/n)\mD_n^{-1}\mX_{t-1}\big|\leq d\big\}\\
	&=\P\bigg\{\bigcup_{t\in\mathbb{N}}\{ \epsilon_t=-\va^\prime(t/n)\mD_n^{-1}\mX_{t-1}\},\ \max_{t=1,\ldots,n}\big|\va^\prime(t/n)\mD_n^{-1}\mX_{t-1}\big|\leq d\bigg\}\\
	& \leq\sum_{t\in\mathbb{N}}\P\big\{ \epsilon_t=-\va^\prime(t/n)\mD_n^{-1}\mX_{t-1},\ \max_{t=1,\ldots,n}\big|\va^\prime(t/n)\mD_n^{-1}\mX_{t-1}\big|\leq d\big\}\\
	&=0,
\end{align*}
such that, by Lemma~\ref{lem:MAX}, $\big\{\epsilon_t\neq-\va^\prime(t/n)\mD_n^{-1}\mX_{t-1}$ for all $t\in\mathbb{N}\big\}$ occurs with probability approaching 1 (w.p.a.~1), as $n\to\infty$.
Therefore, we may use \eqref{eq:(1)} to get that w.p.a.~1, as $n\to\infty$,
\begin{align}
\rho_{\alpha}&\big(\epsilon_{t} - \vw^\prime\mD_n^{-1}\mX_{t-1} + \va^\prime(t/n)\mD_n^{-1}\mX_{t-1}\big) - \rho_{\alpha}\big(\epsilon_{t}+\va^\prime(t/n)\mD_n^{-1}\mX_{t-1}\big)\notag\\
&=-\vw^\prime\psi_{\alpha}\big(\epsilon_{t}+\va^\prime(t/n)\mD_n^{-1}\mX_{t-1}\big)\mD_n^{-1}\mX_{t-1} + \big[\epsilon_t + \va^\prime(t/n)\mD_n^{-1}\mX_{t-1} - \vw^\prime\mD_n^{-1}\mX_{t-1}\big]\times\notag\\
&\hspace{2cm}\times\big[\1_{\{\vw^\prime\mD_n^{-1}\mX_{t-1}<\epsilon_{t}+\va^\prime(t/n)\mD_n^{-1}\mX_{t-1}<0\}} - \1_{\{0<\epsilon_{t}+\va^\prime(t/n)\mD_n^{-1}\mX_{t-1}<\vw^\prime\mD_n^{-1}\mX_{t-1}\}}\big].\label{eq:decomp loss}
\end{align}
Also note that
\begin{align}
	\psi_{\alpha}\big(\epsilon_{t}+\va^\prime(t/n)\mD_n^{-1}\mX_{t-1}\big) - \psi_{\alpha}\big(\epsilon_{t}\big)
	&=\big[\alpha - \1_{\{\epsilon_t+\va^\prime(t/n)\mD_n^{-1}\mX_{t-1}\leq0\}}\big] - (\alpha - \1_{\{\epsilon_t\leq0\}})\notag\\
	&= \1_{\{\epsilon_t\leq0\}} - \1_{\{\epsilon_t+\va^\prime(t/n)\mD_n^{-1}\mX_{t-1}\leq0\}}\notag\\
	&= \1_{\{-\va^\prime(t/n)\mD_n^{-1}\mX_{t-1}<\epsilon_t\leq0\}} - \1_{\{0<\epsilon_t\leq -\va^\prime(t/n)\mD_n^{-1}\mX_{t-1}\}}.\label{eq:diff psi}
\end{align}
Thus, using \eqref{eq:decomp loss} first and then \eqref{eq:diff psi},
\begin{align*}
	f_n(\vw,r,s)&= -\vw^\prime\sum_{t=\lfloor nr\rfloor+1}^{\lfloor ns\rfloor}\psi_{\alpha}\big(\epsilon_{t}\big)\mD_n^{-1}\mX_{t-1}\\
	&\hspace{0.5cm} - \vw^\prime\sum_{t=\lfloor nr\rfloor+1}^{\lfloor ns\rfloor}\Big[\psi_{\alpha}\big(\epsilon_{t}+\va^\prime(t/n)\mD_n^{-1}\mX_{t-1}\big)-\psi_{\alpha}\big(\epsilon_{t}\big)\Big]\mD_n^{-1}\mX_{t-1}\\
	&\hspace{0.5cm} + \sum_{t=\lfloor nr\rfloor+1}^{\lfloor ns\rfloor}\big[\epsilon_t + \va^\prime(t/n)\mD_n^{-1}\mX_{t-1} - \vw^\prime\mD_n^{-1}\mX_{t-1}\big]\times\\
	&\hspace{1.5cm}\times\big[\1_{\{\vw^\prime\mD_n^{-1}\mX_{t-1}<\epsilon_{t}+\va^\prime(t/n)\mD_n^{-1}\mX_{t-1}<0\}} - \1_{\{0<\epsilon_{t}+\va^\prime(t/n)\mD_n^{-1}\mX_{t-1}<\vw^\prime\mD_n^{-1}\mX_{t-1}\}}\big]\\
	&=-\vw^\prime\bigg\{\sum_{t=\lfloor nr\rfloor+1}^{\lfloor ns\rfloor}\psi_{\alpha}\big(\epsilon_{t}\big)\mD_n^{-1}\mX_{t-1}\\
	&\hspace{0.5cm} +\sum_{t=\lfloor nr\rfloor+1}^{\lfloor ns\rfloor}\Big[\1_{\{-\va^\prime(t/n)\mD_n^{-1}\mX_{t-1}<\epsilon_t\leq0\}} - \1_{\{0<\epsilon_t\leq -\va^\prime(t/n)\mD_n^{-1}\mX_{t-1}\}}\Big]\mD_n^{-1}\mX_{t-1}\bigg\}\\
	&\hspace{0.5cm} + \sum_{t=\lfloor nr\rfloor+1}^{\lfloor ns\rfloor}\big[\epsilon_t + \va^\prime(t/n)\mD_n^{-1}\mX_{t-1} - \vw^\prime\mD_n^{-1}\mX_{t-1}\big]\times\\
	&\hspace{1.5cm}\times\big[\1_{\{\vw^\prime\mD_n^{-1}\mX_{t-1}<\epsilon_{t}+\va^\prime(t/n)\mD_n^{-1}\mX_{t-1}<0\}} - \1_{\{0<\epsilon_{t}+\va^\prime(t/n)\mD_n^{-1}\mX_{t-1}<\vw^\prime\mD_n^{-1}\mX_{t-1}\}}\big].
\end{align*}

Define
\begin{equation*}
	g_n(\vw,r,s) = -\vw^\prime\mW_n(r,s) + \frac{1}{2}\vw^\prime(s-r)\mK\vw,
\end{equation*}
where
\begin{multline*}
	\mW_n(r,s) = \sum_{t=\lfloor nr\rfloor+1}^{\lfloor ns\rfloor}\psi_{\alpha}\big(\epsilon_{t}\big)\mD_n^{-1}\mX_{t-1}\\
	 +\sum_{t=\lfloor nr\rfloor+1}^{\lfloor ns\rfloor}\Big[\1_{\{-\va^\prime(t/n)\mD_n^{-1}\mX_{t-1}<\epsilon_t\leq0\}} - \1_{\{0<\epsilon_t\leq -\va^\prime(t/n)\mD_n^{-1}\mX_{t-1}\}}\Big]\mD_n^{-1}\mX_{t-1}.
\end{multline*}
Then, by Proposition~\ref{lem:LLN alt}, we obtain that for each $\vw\in\mathbb{R}^{k+1}$,
\[
	\sup_{(r,s)\in\mathcal{D}_{\iota}}\big|f_n(\vw,r,s)-g_n(\vw,r,s)\big|=o_{\P}(1).
\]
(This is the equivalent of equation (10) in \citet{Kat09}.)
Additionally, by Propositions~\ref{lem:CLT} and \ref{prop:1alt},
\begin{equation}\label{eq:convWN2}
	\mW_n(r,s) \overset{d}{\longrightarrow}\sqrt{\alpha(1-\alpha)}\mOmega^{1/2}\big[\mW(s)-\mW(r)\big] + \mK\Big(\int_{r}^{s}\va(x)\D x\Big)\qquad\text{in }(\ell^{\infty}(\mathcal{D}_{\iota}))^{k+1}.
\end{equation}
In particular, by the CMT,
\begin{multline*}
	\limsup_{n\to\infty}\P\bigg\{\sup_{(r,s)\in\mathcal{D}_{\iota}}\big\Vert\mW_n(r,s)\big\Vert>M\bigg\}\\
	=\P\bigg\{ \sup_{(r,s)\in\mathcal{D}_{\iota}}\Big\Vert\sqrt{\alpha(1-\alpha)}\mOmega^{1/2}\big[\mW(s)-\mW(r)\big] + \mK\Big(\int_{r}^{s}\va(x)\D x\Big)\Big\Vert>M\bigg\},
\end{multline*}
which can be made arbitrarily small by choosing $M>0$ sufficiently large. (This is the analog of equation (11) in \citet{Kat09}.)

We therefore may invoke \citet[Theorem~2]{Kat09} to get that
\[
	\vw_n(r,s)=\frac{1}{s-r}\mK^{-1}\mW_n(r,s) + \vr_{n}(r,s)
\]
with $\sup_{(r,s)\in\mathcal{D}_{\iota}}\Vert\vr_n(r,s)\Vert=o_{\P}(1)$.
Hence, it holds by \eqref{eq:convWN2} that
\begin{equation*}
	\vw_n(r,s)\overset{d}{\longrightarrow}\frac{\sqrt{\alpha(1-\alpha)}}{s-r}\mK^{-1}\mOmega^{1/2}\big[\mW(s)-\mW(r)\big]+\frac{1}{s-r}\Big(\int_{r}^{s}\va(x)\D x\Big)\qquad\text{in }(\ell^{\infty}(\mathcal{D}_{\iota}))^{k+1}.
\end{equation*}
From this, the conclusion easily follows.
\end{proof}

\section{Proofs of Propositions~\ref{lem:LLN alt}--\ref{prop:1alt}}\label{sec:Prop alt QR}

\begin{proof}[{\textbf{Proof of Proposition~\ref{lem:LLN alt}:}}]
The proof is very similar to that of Proposition~\ref{lem:LLN}.
To highlight the similarities, we frequently overload notation by redefining quantities already introduced in the proof of Proposition~\ref{lem:LLN}.

Once again, it suffices to show that
\begin{multline}\label{eq:(SC alt)}
	\sup_{0\leq s\leq 1}\bigg|\sum_{t=1}^{\lfloor ns\rfloor}\Big(\epsilon_{t} - \big[\vw-\va(t/n)\big]^\prime\mD_n^{-1}\mX_{t-1}\Big)\times\\
	\times\big[\1_{\{\vw^\prime\mD_n^{-1}\mX_{t-1}<\epsilon_{t}+\va^\prime(t/n)\mD_n^{-1}\mX_{t-1}<0\}} - \1_{\{0<\epsilon_{t}+\va^\prime(t/n)\mD_n^{-1}\mX_{t-1}<\vw^\prime\mD_n^{-1}\mX_{t-1}\}}\big]
\\
-\frac{1}{2}s\vw^\prime\mK\vw \bigg|=o_{\P}(1).
\end{multline}
To do so, define
\begin{align*}
	\nu_{t}(\vw) &:= \Big(\epsilon_{t} - \big[\vw-\va(t/n)\big]^\prime\mD_n^{-1}\mX_{t-1}\Big)\times\\
		&\hspace{1.5cm}\times\big(\1_{\{\vw^\prime\mD_n^{-1}\mX_{t-1}< \epsilon_{t}+\va^\prime(t/n)\mD_n^{-1}\mX_{t-1}<0\}} - \1_{\{0<\epsilon_{t}+\va^\prime(t/n)\mD_n^{-1}\mX_{t-1}< \vw^\prime\mD_n^{-1}\mX_{t-1}\}}\big),\\
	\overline{\nu}_{t}(\vw) &:= \E_{t-1}\Big[\big(\epsilon_{t} - [\vw-\va(t/n)]^\prime\mD_n^{-1}\mX_{t-1}\big)\times\\
	&\hspace{1.5cm}\times\big(\1_{\{\vw^\prime\mD_n^{-1}\mX_{t-1}< \epsilon_{t}+\va^\prime(t/n)\mD_n^{-1}\mX_{t-1}<0\}} - \1_{\{0<\epsilon_{t}+\va^\prime(t/n)\mD_n^{-1}\mX_{t-1}< \vw^\prime\mD_n^{-1}\mX_{t-1}\}}\big)\Big],\\
	V_n(\vw,s) &:= \sum_{t=1}^{\lfloor ns\rfloor}\nu_t(\vw),\\
	\overline{V}_n(\vw,s) &:= \sum_{t=1}^{\lfloor ns\rfloor}\overline{\nu}_t(\vw).
\end{align*}
We establish \eqref{eq:(SC alt)} by showing that, uniformly in $s\in[0,1]$,
\begin{align}
	\overline{V}_n(\vw,s) &\overset{\P}{\longrightarrow}\frac{1}{2}s\vw^\prime\mK\vw, \label{eq:(P.7.0 alt)}\\
	V_n(\vw,s) - \overline{V}_n(\vw,s) &= o_{\P}(1).  \label{eq:(P.7.1 alt)}
\end{align}

The relation \eqref{eq:(P.7.1 alt)} can be shown in a similar fashion as \eqref{eq:(P.7.1)} in the proof of Proposition~\ref{lem:LLN}.
Therefore, we only derive \eqref{eq:(P.7.0 alt)}, whose proof deviates slightly from that of \eqref{eq:(P.7.0)}.
Similarly as in the proof of Proposition~\ref{lem:LLN}, we may show the convergence on a set, where the conditional density of the QR errors $\epsilon_t$ exists.
Hence,
\begin{align*}
	&\overline{V}_n(\vw,s)\\
	&= \sum_{t=1}^{\lfloor ns\rfloor}\E_{t-1}\Big[\big(\epsilon_{t} - [\vw-\va(t/n)]^\prime\mD_n^{-1}\mX_{t-1}\big)\1_{\{\vw^\prime\mD_n^{-1}\mX_{t-1}< \epsilon_{t}+\va^\prime(t/n)\mD_n^{-1}\mX_{t-1}<0\}}\Big]\\
	&\hspace{0.5cm} + \sum_{t=1}^{\lfloor ns\rfloor}\E_{t-1}\Big[\big([\vw-\va(t/n)]^\prime\mD_n^{-1}\mX_{t-1} - \epsilon_{t}\big) \1_{\{0<\epsilon_{t}+\va^\prime(t/n)\mD_n^{-1}\mX_{t-1}< \vw^\prime\mD_n^{-1}\mX_{t-1}\}}\Big]\\
	&= \sum_{t=1}^{\lfloor ns\rfloor}\1_{\{\vw^\prime\mD_n^{-1}\mX_{t-1}<0\}}\int_{[\vw-\va(t/n)]^\prime\mD_n^{-1}\mX_{t-1}}^{-\va^\prime(t/n)\mD_n^{-1}\mX_{t-1}}\Big(x-\big[\vw-\va(t/n)\big]^\prime\mD_n^{-1}\mX_{t-1}\Big)f_{\epsilon_t\mid\mathcal{F}_{t-1}}(x)\D x\\
	&\hspace{0.5cm} +\sum_{t=1}^{\lfloor ns\rfloor}\1_{\{\vw^\prime\mD_n^{-1}\mX_{t-1}>0\}}\int_{-\va^\prime(t/n)\mD_n^{-1}\mX_{t-1}}^{[\vw-\va(t/n)]^\prime\mD_n^{-1}\mX_{t-1}}\Big(\big[\vw-\va(t/n)\big]^\prime\mD_n^{-1}\mX_{t-1} - x\Big)f_{\epsilon_t\mid\mathcal{F}_{t-1}}(x)\D x\\
	&=:\overline{V}_{1n}(\vw,s) + \overline{V}_{2n}(\vw,s).
\end{align*}	
Write
\begin{align*}
	\overline{V}_{1n}(\vw,s) &= \sum_{t=1}^{\lfloor ns\rfloor}\1_{\{\vw^\prime\mD_n^{-1}\mX_{t-1}<0\}}\int_{[\vw-\va(t/n)]^\prime\mD_n^{-1}\mX_{t-1}}^{-\va^\prime(t/n)\mD_n^{-1}\mX_{t-1}}\Big(x-\big[\vw-\va(t/n)\big]^\prime\mD_n^{-1}\mX_{t-1}\Big)f_{\epsilon_t\mid\mathcal{F}_{t-1}}(0)\D x\\
	&\hspace{0.5cm}+ \sum_{t=1}^{\lfloor ns\rfloor}\1_{\{\vw^\prime\mD_n^{-1}\mX_{t-1}<0\}}\int_{[\vw-\va(t/n)]^\prime\mD_n^{-1}\mX_{t-1}}^{-\va^\prime(t/n)\mD_n^{-1}\mX_{t-1}}\Big(x-\big[\vw-\va(t/n)\big]^\prime\mD_n^{-1}\mX_{t-1}\Big)\times\\
	&\hspace{9cm}\times\big[f_{\epsilon_t\mid\mathcal{F}_{t-1}}(x)-f_{\epsilon_t\mid\mathcal{F}_{t-1}}(0)\big]\D x\\
	&=:\overline{V}_{11n}(\vw,s) + \overline{V}_{12n}(\vw,s).
\end{align*}	
Integrating out yields that
\begin{align*}
	\overline{V}_{11n}(\vw,s) &= \sum_{t=1}^{\lfloor ns\rfloor}\1_{\{\vw^\prime\mD_n^{-1}\mX_{t-1}<0\}}f_{\epsilon_t\mid\mathcal{F}_{t-1}}(0)\frac{1}{2}(\vw^\prime\mD_n^{-1}\mX_{t-1})^2\\
	&=\frac{1}{2}\vw^\prime\bigg(\sum_{t=1}^{\lfloor ns\rfloor}\1_{\{\vw^\prime\mD_n^{-1}\mX_{t-1}<0\}}f_{\epsilon_t\mid\mathcal{F}_{t-1}}(0)\mD_n^{-1}\mX_{t-1}\mX_{t-1}^\prime\mD_n^{-1}\bigg)\vw.
\end{align*}
For the other term, Assumption~\ref{ass:innov}~\eqref{it:Lipschitz} implies that
\begin{align*}
	\big|\overline{V}_{12n}(\vw,s)\big| &\leq\sum_{t=1}^{\lfloor ns\rfloor}\1_{\{\vw^\prime\mD_n^{-1}\mX_{t-1}<0\}}\int_{[\vw-\va(t/n)]^\prime\mD_n^{-1}\mX_{t-1}}^{-\va^\prime(t/n)\mD_n^{-1}\mX_{t-1}}\Big(x-\big[\vw-\va(t/n)\big]^\prime\mD_n^{-1}\mX_{t-1}\Big)L|x|\D x\\
	&\leq K\max_{t=1,\ldots,n}\big\Vert\mD_n^{-1}\mX_{t-1}\big\Vert\sum_{t=1}^{\lfloor ns\rfloor}\1_{\{\vw^\prime\mD_n^{-1}\mX_{t-1}<0\}}\times\\
	&\hspace{4cm}\times\int_{[\vw-\va(t/n)]^\prime\mD_n^{-1}\mX_{t-1}}^{-\va^\prime(t/n)\mD_n^{-1}\mX_{t-1}}\Big(x-\big[\vw-\va(t/n)\big]^\prime\mD_n^{-1}\mX_{t-1}\Big)\D x\\
	&= K\max_{t=1,\ldots,n}\big\Vert\mD_n^{-1}\mX_{t-1}\big\Vert\sum_{t=1}^{\lfloor ns\rfloor}\1_{\{\vw^\prime\mD_n^{-1}\mX_{t-1}<0\}}\frac{1}{2}(\vw^\prime\mD_n^{-1}\mX_{t-1})^2\\
	&\leq K\max_{t=1,\ldots,n}\big\Vert\mD_n^{-1}\mX_{t-1}\big\Vert\vw^\prime\bigg(\sum_{t=1}^{n}\mD_n^{-1}\mX_{t-1}\mX_{t-1}^\prime\mD_n^{-1}\bigg)\vw\\
	&=o_{\P}(1)O_{\P}(1)\\
	&=o_{\P}(1)
\end{align*}
uniformly in $s\in[0,1]$ by Lemmas~\ref{lem:MAX}--\ref{lem:SUM}.
Therefore,
\begin{equation*} 
\overline{V}_{1n}(\vw,s) =\frac{1}{2}\vw^\prime\bigg(\sum_{t=1}^{\lfloor ns\rfloor}\1_{\{\vw^\prime\mD_n^{-1}\mX_{t-1}<0\}}f_{\epsilon_t\mid\mathcal{F}_{t-1}}(0)\mD_n^{-1}\mX_{t-1}\mX_{t-1}^\prime\mD_n^{-1}\bigg)\vw + o_{\P}(1)
\end{equation*}
uniformly in $s\in[0,1]$.

We can deduce using almost identical arguments that
\begin{equation*} 
\overline{V}_{2n}(\vw,s) =\frac{1}{2}\vw^\prime\bigg(\sum_{t=1}^{\lfloor ns\rfloor}\1_{\{\vw^\prime\mD_n^{-1}\mX_{t-1}>0\}}f_{\epsilon_t\mid\mathcal{F}_{t-1}}(0)\mD_n^{-1}\mX_{t-1}\mX_{t-1}^\prime\mD_n^{-1}\bigg)\vw + o_{\P}(1)
\end{equation*}
uniformly in $s\in[0,1]$.

Combining the previous two displays then yields, using Assumption~\ref{ass:K}, that
\begin{align*}
	\overline{V}_{n}(\vw,s) &= \overline{V}_{1n}(\vw,s) + \overline{V}_{2n}(\vw,s)\\
	&=\frac{1}{2}\vw^\prime\bigg(\sum_{t=1}^{\lfloor ns\rfloor}f_{\epsilon_t\mid\mathcal{F}_{t-1}}(0)\mD_n^{-1}\mX_{t-1}\mX_{t-1}^\prime\mD_n^{-1}\bigg)\vw + o_{\P}(1)\\
	&\overset{\P}{\longrightarrow}\frac{1}{2}s\vw^\prime\mK\vw
\end{align*}
uniformly in $s\in[0,1]$, which finishes the proof.
\end{proof}

\begin{proof}[{\textbf{Proof of Proposition~\ref{prop:1alt}:}}]
This proof also follows along similar lines as that of Proposition~\ref{lem:LLN}.
Define
\begin{align*}
	\vnu_t &:= \big[\1_{\{-\va^\prime(t/n)\mD_n^{-1}\mX_{t-1}<\epsilon_t\leq0\}} - \1_{\{0<\epsilon_t\leq -\va^\prime(t/n)\mD_n^{-1}\mX_{t-1}\}}\big]\mD_n^{-1}\mX_{t-1},\\
	\overline{\vnu}_t &:= \E_{t-1}[\vnu_t],\\
	\mV_n(s) &:= \sum_{t=1}^{\lfloor ns\rfloor}\vnu_t,\\
	\overline{\mV}_n(s) &:= \sum_{t=1}^{\lfloor ns\rfloor}\overline{\vnu}_t.
\end{align*}
An argument similar to that in \eqref{eq:(p.9)} shows that for the proposition to be established it suffices to prove 
\begin{align}
\overline{\mV}_n(s) &\overset{\P}{\longrightarrow} \mK \Big(\int_{0}^{s}\va(x)\D x\Big),\label{eq:(1.1)}\\
\mV_n(s) - \overline{\mV}_n(s) &\overset{\P}{\longrightarrow}\vzero\label{eq:(1.2)}
\end{align}
uniformly in $s\in[0,1]$.
Since \eqref{eq:(1.2)} follows along the lines of \eqref{eq:(P.7.1)}, we only show \eqref{eq:(1.1)}.
Write
\begin{align*}
	&\overline{\mV}_n(s)\\
	&= \sum_{t=1}^{\lfloor ns\rfloor}\mD_n^{-1}\mX_{t-1}\Big[F_{\epsilon_t\mid\mathcal{F}_{t-1}}(0) - F_{\epsilon_t\mid\mathcal{F}_{t-1}}\big(-\va^\prime(t/n)\mD_n^{-1}\mX_{t-1}\big)\Big]\1_{\{\va^\prime(t/n)\mD_n^{-1}\mX_{t-1}>0\}}\\
	&\hspace{0.5cm} -\sum_{t=1}^{\lfloor ns\rfloor}\mD_n^{-1}\mX_{t-1}\Big[F_{\epsilon_t\mid\mathcal{F}_{t-1}}\big(-\va^\prime(t/n)\mD_n^{-1}\mX_{t-1}\big) - F_{\epsilon_t\mid\mathcal{F}_{t-1}}(0)\Big]\1_{\{\va^\prime(t/n)\mD_n^{-1}\mX_{t-1}<0\}}\\
	&=\sum_{t=1}^{\lfloor ns\rfloor}\mD_n^{-1}\mX_{t-1}\mX_{t-1}^\prime\mD_n^{-1}\va(t/n)\frac{F_{\epsilon_t\mid\mathcal{F}_{t-1}}(0) - F_{\epsilon_t\mid\mathcal{F}_{t-1}}\big(-\va^\prime(t/n)\mD_n^{-1}\mX_{t-1}\big)}{0-\big(-\va^\prime(t/n)\mD_n^{-1}\mX_{t-1}\big)}\1_{\{\va^\prime(t/n)\mD_n^{-1}\mX_{t-1}>0\}}\\
	&\hspace{0.5cm} +\sum_{t=1}^{\lfloor ns\rfloor}\mD_n^{-1}\mX_{t-1}\mX_{t-1}^\prime\mD_n^{-1}\va(t/n)\frac{F_{\epsilon_t\mid\mathcal{F}_{t-1}}\big(-\va^\prime(t/n)\mD_n^{-1}\mX_{t-1}\big) - F_{\epsilon_t\mid\mathcal{F}_{t-1}}(0)}{-\va^\prime(t/n)\mD_n^{-1}\mX_{t-1}-0}\1_{\{\va^\prime(t/n)\mD_n^{-1}\mX_{t-1}<0\}}\\
	&=:\overline{\mV}_{1n}(s) + \overline{\mV}_{2n}(s).
\end{align*}
Consider $\overline{\mV}_{1n}(s)$ and use \eqref{eq:MVT} to deduce that
\begin{align*}
	\overline{\mV}_{1n}(s) &= \sum_{t=1}^{\lfloor ns\rfloor}\mD_n^{-1}\mX_{t-1}\mX_{t-1}^\prime\mD_n^{-1}\va(t/n)f_{\epsilon_t\mid\mathcal{F}_{t-1}}(0)\1_{\{\va^\prime(t/n)\mD_n^{-1}\mX_{t-1}>0\}}\\
	&\hspace{0.5cm} +\sum_{t=1}^{\lfloor ns\rfloor}\mD_n^{-1}\mX_{t-1}\mX_{t-1}^\prime\mD_n^{-1}\va(t/n)\big[f_{\epsilon_t\mid\mathcal{F}_{t-1}}(x^{\ast})-f_{\epsilon_t\mid\mathcal{F}_{t-1}}(0)\big]\1_{\{\va^\prime(t/n)\mD_n^{-1}\mX_{t-1}>0\}},
\end{align*}
where $x^\ast$ is some mean value between $-\va^\prime(t/n)\mD_n^{-1}\mX_{t-1}$ and $0$.
Using Assumption~\ref{ass:innov}~\eqref{it:Lipschitz}, the norm of the second term on the right-hand side can be bounded by
\begin{align*}
	\bigg\Vert\sum_{t=1}^{\lfloor ns\rfloor}&\mD_n^{-1}\mX_{t-1}\mX_{t-1}^\prime\mD_n^{-1}\va(t/n)\big[f_{\epsilon_t\mid\mathcal{F}_{t-1}}(x^{\ast})-f_{\epsilon_t\mid\mathcal{F}_{t-1}}(0)\big]\1_{\{\va^\prime(t/n)\mD_n^{-1}\mX_{t-1}>0\}}\bigg\Vert\\
	& \leq \sum_{t=1}^{n}\big\Vert\mD_n^{-1}\mX_{t-1}\mX_{t-1}^\prime\mD_n^{-1}\va(t/n)\big\Vert \cdot\big|f_{\epsilon_t\mid\mathcal{F}_{t-1}}(x^{\ast})-f_{\epsilon_t\mid\mathcal{F}_{t-1}}(0)\big|\\
	& \leq K\sum_{t=1}^{n}\big\Vert\mD_n^{-1}\mX_{t-1}\mX_{t-1}^\prime\mD_n^{-1}\big\Vert\cdot \big\Vert\mD_n^{-1}\mX_{t-1}\big\Vert\\
	&\leq K\max_{t=1,\ldots,n}\big\Vert\mD_n^{-1}\mX_{t-1}\big\Vert \sum_{t=1}^{n}\big\Vert\mD_n^{-1}\mX_{t-1}\mX_{t-1}^\prime\mD_n^{-1}\big\Vert\\
	&=o_{\P}(1)O_{\P}(1)\\
	&=o_{\P}(1)
\end{align*}
uniformly in $s\in[0,1]$.
Here, we have used that, with $\mD_n=\diag(D_{0,n},D_{1,n},\ldots,D_{k,n})$ and $\mX_{t-1}=(X_{0,t-1},X_{1,t-1},\ldots,X_{k,t-1})^\prime$,
\begin{align}
	\sum_{t=1}^{n}\big\Vert\mD_n^{-1}\mX_{t-1}\mX_{t-1}^\prime\mD_n^{-1}\big\Vert &\leq K \sum_{i,j=0}^{k} \sum_{t=1}^{n}\big|D_{i,n}^{-1}X_{i,t-1}X_{j,t-1}D_{j,n}^{-1}\big|\notag\\
	&\leq K  \sum_{i,j=0}^{k} \sum_{t=1}^{n}\big[D_{i,n}^{-2}X_{i,t-1}^2 + D_{j,n}^{-2}X_{j,t-1}^2\big]\notag\\
	&= O_{\P}(1),\label{eq:Op1 cross}
\end{align}
where the first line follows from norm equivalence in finite-dimensional spaces, the second from the inequality $ab\leq\frac{1}{2}(a^2 + b^2)$, and the third from the fact that all the diagonal elements of
\[
	\sum_{t=1}^{n}\mD_n^{-1}\mX_{t-1}\mX_{t-1}^\prime\mD_n^{-1}=\sum_{t=1}^{n}\begin{pmatrix}
		D_{n,1}^{-2}X_{1,t-1}^2 &  & \ast \\
		&                        \ddots & \\
		\ast && D_{n,k+1}^{-2}X_{k+1,t-1}^2
	\end{pmatrix}
\]
converge in probability by virtue of Lemma~\ref{lem:SUM}.
We get that, as $n\to\infty$,
\[
	\overline{\mV}_{1n}(s)=\sum_{t=1}^{\lfloor ns\rfloor}\mD_n^{-1}\mX_{t-1}\mX_{t-1}^\prime\mD_n^{-1}\va(t/n)f_{\epsilon_t\mid\mathcal{F}_{t-1}}(0)\1_{\{\va^\prime(t/n)\mD_n^{-1}\mX_{t-1}>0\}} + o_{\P}(1)
\]
uniformly in $s\in[0,1]$ and, by similar arguments,
\[
	\overline{\mV}_{2n}(s)=\sum_{t=1}^{\lfloor ns\rfloor}\mD_n^{-1}\mX_{t-1}\mX_{t-1}^\prime\mD_n^{-1}\va(t/n)f_{\epsilon_t\mid\mathcal{F}_{t-1}}(0)\1_{\{\va^\prime(t/n)\mD_n^{-1}\mX_{t-1}<0\}} + o_{\P}(1).
\]
Combining these two convergences, we obtain from Assumption~\ref{ass:K} and the fact that $\va(\cdot)$ is a step function that
\begin{align*}
	\overline{\mV}_{n}(s)&=\sum_{t=1}^{\lfloor ns\rfloor}\mD_n^{-1}\mX_{t-1}\mX_{t-1}^\prime\mD_n^{-1}\va(t/n)f_{\epsilon_t\mid\mathcal{F}_{t-1}}(0) + o_{\P}(1)\\
	&\overset{\P}{\longrightarrow} \mK \Big(\int_{0}^{s}\va(x)\D x\Big)
\end{align*}
uniformly in $s\in[0,1]$.
\end{proof}

\section{Proofs of Theorem~\ref{thm:CoVaR est alt} and Corollary~\ref{cor:one-break}}\label{CoVaR alt}

The proof of Theorem~\ref{thm:CoVaR est alt} requires the following preliminary propositions.
The first is the analog of Proposition~\ref{prop:1alt}.

\begin{prop}\label{prop:1alt CoVaR}
Under the assumptions of Theorem~\ref{thm:CoVaR est alt} it holds that, as $n\to\infty$,
\begin{multline*}
	\sup_{0\leq r< s\leq 1}\bigg|\sum_{t=\lfloor nr\rfloor+1}^{\lfloor ns\rfloor}\1_{\{\epsilon_t>0\}}\Big[\1_{\{-\vb^\prime(t/n)\mD_n^{-1}\mX_{t-1}<\delta_t\leq0\}} - \1_{\{0<\delta_t\leq -\vb^\prime(t/n)\mD_n^{-1}\mX_{t-1}\}}\Big]\mD_n^{-1}\mX_{t-1}\\	
-\mK_{\ast}\Big(\int_{r}^{s}\vb(x)\D x\Big)\bigg|=o_{\P}(1).
\end{multline*}
\end{prop}

\begin{proof}
See Appendix~\ref{sec:Prop alt CoVaR}.
\end{proof}

The next proposition plays a role analogous to that of Propositions~\ref{lem:LLN2} and \ref{lem:LLN alt}.

\begin{prop}\label{lem:LLN alt CoVaR}
Under the assumptions of Theorem~\ref{thm:CoVaR est alt} it holds for fixed $\vw\in\mathbb{R}^{k+1}$ that, as $n\to\infty$,
\begin{multline*}
	\sup_{0\leq r< s\leq 1}\bigg|\sum_{t=\lfloor nr\rfloor+1}^{\lfloor ns\rfloor}\1_{\{\epsilon_t>0\}}\big[\delta_t + \vb^\prime(t/n)\mD_n^{-1}\mX_{t-1} - \vw^\prime\mD_n^{-1}\mX_{t-1}\big]\times\\
\times\big[\1_{\{\vw^\prime\mD_n^{-1}\mX_{t-1}<\delta_{t}+\vb^\prime(t/n)\mD_n^{-1}\mX_{t-1}<0\}} - \1_{\{0<\delta_{t}+\vb^\prime(t/n)\mD_n^{-1}\mX_{t-1}<\vw^\prime\mD_n^{-1}\mX_{t-1}\}}\big]\\	
-\frac{1}{2}(s-r)\vw^\prime\mK_{\ast}\vw \bigg|=o_{\P}(1).
\end{multline*}
\end{prop}

\begin{proof}
See Appendix~\ref{sec:Prop alt CoVaR}.
\end{proof}

The next proposition is similar to Proposition~\ref{lem:LLN CLT}.

\begin{prop}\label{lem:LLN CLT Alt}
Under the assumptions of Theorem~\ref{thm:CoVaR est alt} it holds that, as $n\to\infty$,
\begin{multline*}
	\sum_{t=\lfloor nr\rfloor+1}^{\lfloor ns\rfloor}\big[\1_{\{\mX_{t-1}^\prime[\widehat{\valpha}_n(r,s) - \valpha_{0,t}]<\epsilon_t\leq0\}} - \1_{\{0<\epsilon_t\leq \mX_{t-1}^\prime[\widehat{\valpha}_n(r,s) - \valpha_{0,t}]\}}\big]\psi_{\beta}(\delta_{t})\mD_n^{-1}\mX_{t-1}\\	
\overset{d}{\longrightarrow}\big[(1-\beta)\mK - \mK_{\dagger}\big]\mSigma^{1/2}\big[\mW(s)-\mW(r)\big]\qquad\text{in }(\ell^{\infty}(\mathcal{D}_{\iota}))^{k+1},
\end{multline*}
\end{prop}

\begin{proof}
See Appendix~\ref{sec:Prop alt CoVaR}.
\end{proof}

The next two propositions are similar to Proposition~\ref{lem:LLN3}.

\begin{prop}\label{lem:LLN3 Alt1}
Under the assumptions of Theorem~\ref{thm:CoVaR est alt} it holds that, as $n\to\infty$,
\begin{multline*}
	\sup_{(r,s)\in\mathcal{D}_{\iota}}\bigg|\sum_{t=\lfloor nr\rfloor+1}^{\lfloor ns\rfloor}\big[\1_{\{\mX_{t-1}^\prime[\widehat{\valpha}_n(r,s) - \valpha_{0,t}]<\epsilon_t\leq0\}} - \1_{\{0<\epsilon_t\leq \mX_{t-1}^\prime[\widehat{\valpha}_n(r,s) - \valpha_{0,t}]\}}\big]\times\\
	\times\Big[\1_{\{0<\delta_t\leq -\vb^\prime(t/n)\mD_n^{-1}\mX_{t-1}\}} - \1_{\{-\vb^\prime(t/n)\mD_n^{-1}\mX_{t-1}<\delta_t\leq0\}}\Big]\mD_n^{-1}\mX_{t-1}\bigg|=o_{\P}(1).
\end{multline*}
\end{prop}

\begin{proof}
See Appendix~\ref{sec:Prop alt CoVaR}.
\end{proof}

\begin{prop}\label{lem:LLN3 Alt2}
Under the assumptions of Theorem~\ref{thm:CoVaR est alt} it holds that, as $n\to\infty$,
\begin{multline*}
	\sup_{(r,s)\in\mathcal{D}_{\iota}}\bigg|\sum_{t=\lfloor nr\rfloor+1}^{\lfloor ns\rfloor}\big[\1_{\{\mX_{t-1}^\prime[\widehat{\valpha}_n(r,s) - \valpha_{0,t}]<\epsilon_t\leq0\}} - \1_{\{0<\epsilon_t\leq \mX_{t-1}^\prime[\widehat{\valpha}_n(r,s) - \valpha_{0,t}]\}}\big]\times\\
	\times\big[\delta_t + \vb^\prime(t/n)\mD_n^{-1}\mX_{t-1} - \vw^\prime\mD_n^{-1}\mX_{t-1}\big]\times\notag\\
\times\big[\1_{\{\vw^\prime\mD_n^{-1}\mX_{t-1}<\delta_{t}+\vb^\prime(t/n)\mD_n^{-1}\mX_{t-1}<0\}} - \1_{\{0<\delta_{t}+\vb^\prime(t/n)\mD_n^{-1}\mX_{t-1}<\vw^\prime\mD_n^{-1}\mX_{t-1}\}}\big]	
\bigg|=o_{\P}(1).
\end{multline*}
\end{prop}

\begin{proof}
See Appendix~\ref{sec:Prop alt CoVaR}.
\end{proof}

\begin{proof}[{\textbf{Proof of Theorem~\ref{thm:CoVaR est alt}:}}]
The first part of the proof is roughly similar to that of Theorem~\ref{thm:CoVaR est}.
To highlight the analogy, we often overload notation by redefining quantities from that proof.
In the first part of this proof, we show that, as $n\to\infty$,
\begin{equation}\label{eq:(p.42)}
(s-r)\mD_n\big[\widehat{\vbeta}_n(r,s) - \vbeta_0\big]\overset{d}{\longrightarrow}\mSigma_{\ast}^{1/2}\big[\mW_{\ast}(s)-\mW_{\ast}(r)\big]+ \int_{r}^{s}\vb(x)\D x\qquad\text{in }(\ell^{\infty}(\mathcal{D}_{\iota}))^{k+1},
\end{equation}
where $\mW_{\ast}(\cdot)$ is a $(k+1)$-variate standard Brownian motion, and $\mSigma_{\ast}=\alpha(1-\alpha)\mK_{\ast}^{-1}\mOmega_{\ast}\mK_{\ast}^{-1}$ with 
\[
	\mOmega_{\ast}=\alpha^{-1}\beta(1-\beta)\mOmega + \alpha^{-1}(1-\alpha)^{-1}\big[(1-\beta)\mK - \mK_{\dagger}\big]\mK^{-1}\mOmega\mK^{-1}\big[(1-\beta)\mK - \mK_{\dagger}\big]
\]
as defined in Appendix~\ref{Asymptotic Variance-Covariance Matrices}.
The estimator $\widehat{\vbeta}_n(r,s)$ can equivalently be written as
\begin{align*}
\widehat{\vbeta}_n(r,s) &= \argmin_{\vbeta\in\mathbb{R}^{k+1}} \sum_{t=\lfloor nr\rfloor+1}^{\lfloor ns\rfloor}\1_{\{Y_t>\mX_{t-1}^\prime\widehat{\valpha}_n(r,s)\}}\Big[\rho_{\beta}(Z_t-\mX_{t-1}^\prime\vbeta) - \rho_{\beta}\big(\delta_{t}+\vb^\prime(t/n)\mD_n^{-1}\mX_{t-1}\big)\Big]\\
&= \argmin_{\vbeta\in\mathbb{R}^{k+1}} \sum_{t=\lfloor nr\rfloor+1}^{\lfloor ns\rfloor}\1_{\{Y_t>\mX_{t-1}^\prime\widehat{\valpha}_n(r,s)\}}\Big[\rho_{\beta}\big(Z_t-\mX_{t-1}^\prime\vbeta_{0,t}-\mX_{t-1}^\prime(\vbeta-\vbeta_{0,t})\big)\\
&\hspace{9cm} - \rho_{\beta}\big(\delta_{t}+\vb^\prime(t/n)\mD_n^{-1}\mX_{t-1}\big)\Big]\\
&\overset{\eqref{eq:(CoVaRalt)}}{=} \argmin_{\vbeta\in\mathbb{R}^{k+1}} \sum_{t=\lfloor nr\rfloor+1}^{\lfloor ns\rfloor}\1_{\{Y_t>\mX_{t-1}^\prime\widehat{\valpha}_n(r,s)\}}\Big[\rho_{\beta}\big(\delta_{t} - (\vbeta-\vbeta_{0})^\prime\mX_{t-1} + \vb^\prime(t/n)\mD_n^{-1}\mX_{t-1}\big) \\
&\hspace{9cm}- \rho_{\beta}\big(\delta_{t}+\vb^\prime(t/n)\mD_n^{-1}\mX_{t-1}\big)\Big],
\end{align*}
where we used in the last step that $\vbeta_{0,t}=\vbeta_{0}+\mD_n^{-1}\vb(t/n)$ under $\mathcal{H}_1^{\CoVaR}$.
Therefore, if we define 
\begin{multline}\label{eq:(p.16W) alt}
	f_n(\vw,r,s)=\sum_{t=\lfloor nr\rfloor+1}^{\lfloor ns\rfloor}\1_{\{Y_t>\mX_{t-1}^\prime\widehat{\valpha}_n(r,s)\}}\times\\
\times\Big[\rho_{\beta}\big(\delta_{t} - \vw^\prime\mD_n^{-1}\mX_{t-1} + \vb^\prime(t/n)\mD_n^{-1}\mX_{t-1}\big) - \rho_{\beta}\big(\delta_{t} + \vb^\prime(t/n)\mD_n^{-1}\mX_{t-1}\big)\Big],
\end{multline}
then the minimizer $\vw_n(r,s)$ of $f_n(\cdot,r,s)$ satisfies that 
\[
	\vw_n(r,s)=\mD_n\big[\widehat{\vbeta}_n(r,s) - \vbeta_0\big].
\]
As in the proof of Theorem~\ref{thm:CoVaR est}, we invoke \citet[Theorem~2]{Kat09} to derive the weak limit of $\vw_n(\cdot,\cdot)$.

We first rewrite $f_n(\cdot, r,s)$. 
Note that by Assumption~\ref{ass:innov CoVaR},
\begin{align*}
	\P&\big\{\exists\ t\in\mathbb{N}\colon \delta_t=-\vb^\prime(t/n)\mD_n^{-1}\mX_{t-1}\big\}\\
	&=\P\bigg\{\bigcup_{t\in\mathbb{N}}\{ \delta_t=-\vb^\prime(t/n)\mD_n^{-1}\mX_{t-1}\}\bigg\}\\
	& \leq\sum_{t\in\mathbb{N}}\P\big\{ \delta_t=-\vb^\prime(t/n)\mD_n^{-1}\mX_{t-1}\big\}\\
	&=0,
\end{align*}
such that, by \eqref{eq:(1)}, almost surely (a.s.),
\begin{align}
\rho_{\beta}&\big(\delta_{t} - \vw^\prime\mD_n^{-1}\mX_{t-1} + \vb^\prime(t/n)\mD_n^{-1}\mX_{t-1}\big) - \rho_{\beta}\big(\delta_{t}+\vb^\prime(t/n)\mD_n^{-1}\mX_{t-1}\big)\notag\\
&=-\vw^\prime\psi_{\beta}\big(\delta_{t}+\vb^\prime(t/n)\mD_n^{-1}\mX_{t-1}\big)\mD_n^{-1}\mX_{t-1} + \big[\delta_t + \vb^\prime(t/n)\mD_n^{-1}\mX_{t-1} - \vw^\prime\mD_n^{-1}\mX_{t-1}\big]\times\notag\\
&\hspace{2cm}\times\big[\1_{\{\vw^\prime\mD_n^{-1}\mX_{t-1}<\delta_{t}+\vb^\prime(t/n)\mD_n^{-1}\mX_{t-1}<0\}} - \1_{\{0<\delta_{t}+\vb^\prime(t/n)\mD_n^{-1}\mX_{t-1}<\vw^\prime\mD_n^{-1}\mX_{t-1}\}}\big].\label{eq:pbeta}
\end{align}
Moreover, similarly as in \eqref{eq:ind trafo}, we get that
\begin{equation}\label{eq:(H.3)}
	\1_{\{Y_t>\mX_{t-1}^\prime\widehat{\valpha}_n(r,s)\}} =\1_{\{\mX_{t-1}^\prime[\widehat{\valpha}_n(r,s) - \valpha_{0,t}]<\epsilon_t\leq0\}} - \1_{\{0<\epsilon_t\leq \mX_{t-1}^\prime[\widehat{\valpha}_n(r,s) - \valpha_{0,t}]\}} + \1_{\{\epsilon_t>0\}}.
\end{equation}
Also, from \eqref{eq:diff psi},
\begin{equation}\label{eq:diff psi beta}
	\psi_{\beta}\big(\delta_{t}+\vb^\prime(t/n)\mD_n^{-1}\mX_{t-1}\big) - \psi_{\beta}\big(\delta_{t}\big)	= \1_{\{-\vb^\prime(t/n)\mD_n^{-1}\mX_{t-1}<\delta_t\leq0\}} - \1_{\{0<\delta_t\leq -\vb^\prime(t/n)\mD_n^{-1}\mX_{t-1}\}}.
\end{equation}
Using \eqref{eq:pbeta} first, \eqref{eq:(H.3)} in the next step, and finally \eqref{eq:diff psi beta}, we deduce from \eqref{eq:(p.16W) alt} that a.s.,
\begin{align*}
	f_n(\vw,r,s) &= -\vw^\prime\sum_{t=\lfloor nr\rfloor+1}^{\lfloor ns\rfloor}\1_{\{Y_t>\mX_{t-1}^\prime\widehat{\valpha}_n(r,s)\}}\psi_{\beta}(\delta_{t})\mD_n^{-1}\mX_{t-1}\\
	& \hspace{0.5cm}-\vw^\prime\sum_{t=\lfloor nr\rfloor+1}^{\lfloor ns\rfloor}\1_{\{Y_t>\mX_{t-1}^\prime\widehat{\valpha}_n(r,s)\}}\Big[\psi_{\beta}\big(\delta_{t}+\vb^\prime(t/n)\mD_n^{-1}\mX_{t-1}\big) - \psi_{\beta}(\delta_{t})\Big]\mD_n^{-1}\mX_{t-1}\\
	& \hspace{0.5cm} + \sum_{t=\lfloor nr\rfloor+1}^{\lfloor ns\rfloor}\1_{\{Y_t>\mX_{t-1}^\prime\widehat{\valpha}_n(r,s)\}}\big[\delta_t + \vb^\prime(t/n)\mD_n^{-1}\mX_{t-1} - \vw^\prime\mD_n^{-1}\mX_{t-1}\big]\times\notag\\
&\hspace{2cm}\times\big[\1_{\{\vw^\prime\mD_n^{-1}\mX_{t-1}<\delta_{t}+\vb^\prime(t/n)\mD_n^{-1}\mX_{t-1}<0\}} - \1_{\{0<\delta_{t}+\vb^\prime(t/n)\mD_n^{-1}\mX_{t-1}<\vw^\prime\mD_n^{-1}\mX_{t-1}\}}\big]\\
	&= -\vw^\prime\sum_{t=\lfloor nr\rfloor+1}^{\lfloor ns\rfloor}\1_{\{\epsilon_t>0\}}\psi_{\beta}(\delta_{t})\mD_n^{-1}\mX_{t-1}\\
	& \hspace{0.5cm}-\vw^\prime\sum_{t=\lfloor nr\rfloor+1}^{\lfloor ns\rfloor}\1_{\{\epsilon_t>0\}}\Big[\psi_{\beta}\big(\delta_{t}+\vb^\prime(t/n)\mD_n^{-1}\mX_{t-1}\big) - \psi_{\beta}(\delta_{t})\Big]\mD_n^{-1}\mX_{t-1}\\
	& \hspace{0.5cm} + \sum_{t=\lfloor nr\rfloor+1}^{\lfloor ns\rfloor}\1_{\{\epsilon_t>0\}}\big[\delta_t + \vb^\prime(t/n)\mD_n^{-1}\mX_{t-1} - \vw^\prime\mD_n^{-1}\mX_{t-1}\big]\times\\
&\hspace{2cm}\times\big[\1_{\{\vw^\prime\mD_n^{-1}\mX_{t-1}<\delta_{t}+\vb^\prime(t/n)\mD_n^{-1}\mX_{t-1}<0\}} - \1_{\{0<\delta_{t}+\vb^\prime(t/n)\mD_n^{-1}\mX_{t-1}<\vw^\prime\mD_n^{-1}\mX_{t-1}\}}\big]\\
	&\hspace{0.5cm} - \vw^\prime \sum_{t=\lfloor nr\rfloor+1}^{\lfloor ns\rfloor}\big[\1_{\{\mX_{t-1}^\prime[\widehat{\valpha}_n(r,s) - \valpha_{0,t}]<\epsilon_t\leq0\}} - \1_{\{0<\epsilon_t\leq \mX_{t-1}^\prime[\widehat{\valpha}_n(r,s) - \valpha_{0,t}]\}}\big]\psi_{\beta}(\delta_{t})\mD_n^{-1}\mX_{t-1}\\
	&\hspace{0.5cm} + \vw^\prime \sum_{t=\lfloor nr\rfloor+1}^{\lfloor ns\rfloor}\big[\1_{\{\mX_{t-1}^\prime[\widehat{\valpha}_n(r,s) - \valpha_{0,t}]<\epsilon_t\leq0\}} - \1_{\{0<\epsilon_t\leq \mX_{t-1}^\prime[\widehat{\valpha}_n(r,s) - \valpha_{0,t}]\}}\big]\times\\
	&\hspace{5cm}\times\Big[\psi_{\beta}(\delta_{t})-\psi_{\beta}\big(\delta_{t}-\vb^\prime(t/n)\mD_n^{-1}\mX_{t-1}\big)\Big]\mD_n^{-1}\mX_{t-1}\\
	&\hspace{0.5cm} + \sum_{t=\lfloor nr\rfloor+1}^{\lfloor ns\rfloor}\big[\1_{\{\mX_{t-1}^\prime[\widehat{\valpha}_n(r,s) - \valpha_{0,t}]<\epsilon_t\leq0\}} - \1_{\{0<\epsilon_t\leq \mX_{t-1}^\prime[\widehat{\valpha}_n(r,s) - \valpha_{0,t}]\}}\big]\times\\
	&\hspace{5cm}\times\big[\delta_t + \vb^\prime(t/n)\mD_n^{-1}\mX_{t-1} - \vw^\prime\mD_n^{-1}\mX_{t-1}\big]\times\notag\\
&\hspace{2cm}\times\big[\1_{\{\vw^\prime\mD_n^{-1}\mX_{t-1}<\delta_{t}+\vb^\prime(t/n)\mD_n^{-1}\mX_{t-1}<0\}} - \1_{\{0<\delta_{t}+\vb^\prime(t/n)\mD_n^{-1}\mX_{t-1}<\vw^\prime\mD_n^{-1}\mX_{t-1}\}}\big]\\
	&= -\vw^\prime\bigg\{\sum_{t=\lfloor nr\rfloor+1}^{\lfloor ns\rfloor}\1_{\{\epsilon_t>0\}}\psi_{\beta}(\delta_{t})\mD_n^{-1}\mX_{t-1}\\
	& \hspace{1cm} + \sum_{t=\lfloor nr\rfloor+1}^{\lfloor ns\rfloor}\1_{\{\epsilon_t>0\}}\Big[\1_{\{-\vb^\prime(t/n)\mD_n^{-1}\mX_{t-1}<\delta_t\leq0\}} - \1_{\{0<\delta_t\leq -\vb^\prime(t/n)\mD_n^{-1}\mX_{t-1}\}}\Big]\mD_n^{-1}\mX_{t-1}\\
	&\hspace{1cm} + \sum_{t=\lfloor nr\rfloor+1}^{\lfloor ns\rfloor}\big[\1_{\{\mX_{t-1}^\prime[\widehat{\valpha}_n(r,s) - \valpha_{0,t}]<\epsilon_t\leq0\}} - \1_{\{0<\epsilon_t\leq \mX_{t-1}^\prime[\widehat{\valpha}_n(r,s) - \valpha_{0,t}]\}}\big]\psi_{\beta}(\delta_{t})\mD_n^{-1}\mX_{t-1}\\
	&\hspace{1cm} - \sum_{t=\lfloor nr\rfloor+1}^{\lfloor ns\rfloor}\big[\1_{\{\mX_{t-1}^\prime[\widehat{\valpha}_n(r,s) - \valpha_{0,t}]<\epsilon_t\leq0\}} - \1_{\{0<\epsilon_t\leq \mX_{t-1}^\prime[\widehat{\valpha}_n(r,s) - \valpha_{0,t}]\}}\big]\times\\
	&\hspace{5cm}\times\Big[\psi_{\beta}(\delta_{t})-\psi_{\beta}\big(\delta_{t}-\vb^\prime(t/n)\mD_n^{-1}\mX_{t-1}\big)\Big]\mD_n^{-1}\mX_{t-1}\bigg\}\\
	& \hspace{0.5cm} + \sum_{t=\lfloor nr\rfloor+1}^{\lfloor ns\rfloor}\1_{\{\epsilon_t>0\}}\big[\delta_t + \vb^\prime(t/n)\mD_n^{-1}\mX_{t-1} - \vw^\prime\mD_n^{-1}\mX_{t-1}\big]\times\\
&\hspace{2cm}\times\big[\1_{\{\vw^\prime\mD_n^{-1}\mX_{t-1}<\delta_{t}+\vb^\prime(t/n)\mD_n^{-1}\mX_{t-1}<0\}} - \1_{\{0<\delta_{t}+\vb^\prime(t/n)\mD_n^{-1}\mX_{t-1}<\vw^\prime\mD_n^{-1}\mX_{t-1}\}}\big]\\
	&\hspace{0.5cm} + \sum_{t=\lfloor nr\rfloor+1}^{\lfloor ns\rfloor}\big[\1_{\{\mX_{t-1}^\prime[\widehat{\valpha}_n(r,s) - \valpha_{0,t}]<\epsilon_t\leq0\}} - \1_{\{0<\epsilon_t\leq \mX_{t-1}^\prime[\widehat{\valpha}_n(r,s) - \valpha_{0,t}]\}}\big]\times\\
	&\hspace{5cm}\times\big[\delta_t + \vb^\prime(t/n)\mD_n^{-1}\mX_{t-1} - \vw^\prime\mD_n^{-1}\mX_{t-1}\big]\times\\
&\hspace{2cm}\times\big[\1_{\{\vw^\prime\mD_n^{-1}\mX_{t-1}<\delta_{t}+\vb^\prime(t/n)\mD_n^{-1}\mX_{t-1}<0\}} - \1_{\{0<\delta_{t}+\vb^\prime(t/n)\mD_n^{-1}\mX_{t-1}<\vw^\prime\mD_n^{-1}\mX_{t-1}\}}\big].
\end{align*}

Define
\[
	g_n(\vw,r,s) := -\vw^\prime\mW_n(r,s) + \frac{1}{2}\vw^\prime(s-r)\mK_{\ast}\vw,
\]
where
\begin{align*}
	\mW_n(r,s) &:= \sum_{t=\lfloor nr\rfloor+1}^{\lfloor ns\rfloor}\1_{\{\epsilon_t>0\}}\psi_{\beta}(\delta_{t})\mD_n^{-1}\mX_{t-1}\\
	& \hspace{1cm} + \sum_{t=\lfloor nr\rfloor+1}^{\lfloor ns\rfloor}\1_{\{\epsilon_t>0\}}\Big[\1_{\{-\vb^\prime(t/n)\mD_n^{-1}\mX_{t-1}<\delta_t\leq0\}} - \1_{\{0<\delta_t\leq -\vb^\prime(t/n)\mD_n^{-1}\mX_{t-1}\}}\Big]\mD_n^{-1}\mX_{t-1}\\
	&\hspace{1cm} + \sum_{t=\lfloor nr\rfloor+1}^{\lfloor ns\rfloor}\big[\1_{\{\mX_{t-1}^\prime[\widehat{\valpha}_n(r,s) - \valpha_{0,t}]<\epsilon_t\leq0\}} - \1_{\{0<\epsilon_t\leq \mX_{t-1}^\prime[\widehat{\valpha}_n(r,s) - \valpha_{0,t}]\}}\big]\psi_{\beta}(\delta_{t})\mD_n^{-1}\mX_{t-1}\\
	&\hspace{1cm} - \sum_{t=\lfloor nr\rfloor+1}^{\lfloor ns\rfloor}\big[\1_{\{\mX_{t-1}^\prime[\widehat{\valpha}_n(r,s) - \valpha_{0,t}]<\epsilon_t\leq0\}} - \1_{\{0<\epsilon_t\leq \mX_{t-1}^\prime[\widehat{\valpha}_n(r,s) - \valpha_{0,t}]\}}\big]\times\\
	&\hspace{5cm}\times\Big[\psi_{\beta}(\delta_{t})-\psi_{\beta}\big(\delta_{t}-\vb^\prime(t/n)\mD_n^{-1}\mX_{t-1}\big)\Big]\mD_n^{-1}\mX_{t-1}.
\end{align*}
By Propositions~\ref{lem:LLN alt CoVaR} and \ref{lem:LLN3 Alt2}, 
\[
	\sup_{(r,s)\in\mathcal{D}_{\iota}}\big|f_n(\vw,r,s) - g_n(\vw,r,s)\big|=o_{\P}(1)
\]
for each $\vw\in\mathbb{R}^{k+1}$ (which is the equivalent of equation (10) in \citet{Kat09}).
Moreover, by Propositions~\ref{lem:CLT2}, \ref{prop:1alt CoVaR}, \ref{lem:LLN CLT Alt} and \ref{lem:LLN3 Alt1},
\begin{multline}\label{eq:convWN3}
	\mW_n(r,s)\overset{d}{\longrightarrow}\mW_{\infty}(r,s):=\sqrt{(1-\alpha)(1-\beta)\beta}\mOmega^{1/2}\big[\mW_{\dagger}(s)-\mW_{\dagger}(r)\big] + \mK_{\ast}\int_{r}^{s}\vb(x)\D x\\
	 +\big[(1-\beta)\mK - \mK_{\dagger}\big]\mSigma^{1/2}\big[\mW(s)-\mW(r)\big]\qquad\text{in }(\ell^{\infty}(\mathcal{D}_{\iota}))^{k+1}.
\end{multline}
As in the proof of Theorem~\ref{thm:CoVaR est}, the CMT hence implies that for every $\eta>0$ there exists $M>0$, such that
\[
	\limsup_{n\to\infty}\P\bigg\{\sup_{(r,s)\in\mathcal{D}_{\iota}}\big\Vert\mW_n(r,s)\big\Vert>M\bigg\}\leq\eta.
\]
(This is the analog of equation (11) in \citet{Kat09}.)
We obtain from Theorem~2 in \citet{Kat09} that
\[
	\vw_n(r,s)=\frac{1}{s-r}\mK_{\ast}^{-1}\mW_n(r,s) + \vr_n(r,s),
\]
where $\sup_{(r,s)\in\mathcal{D}_{\iota}}\Vert\vr_n(r,s)\Vert=o_{\P}(1)$.
Therefore, by \eqref{eq:convWN3},
\begin{equation*}
	\vw_n(r,s)\overset{d}{\longrightarrow}\vw_{\infty}(r,s)\qquad\text{in }(\ell^{\infty}(\mathcal{D}_{\iota}))^{k+1},
\end{equation*}
where
\begin{multline*}
	\vw_{\infty}(r,s) = \frac{1}{s-r}\mK_{\ast}^{-1}\Big(\sqrt{(1-\alpha)(1-\beta)\beta}\mOmega^{1/2}\big[\mW_{\dagger}(s)-\mW_{\dagger}(r)\big]+ \mK_{\ast}\int_{r}^{s}\vb(x)\D x \\
	 +\big[(1-\beta)\mK - \mK_{\dagger}\big]\mSigma^{1/2}\big[\mW(s)-\mW(r)\big]\Big).
\end{multline*}

By similar arguments leading up to \eqref{eq:(p.19a)} in the proof of Theorem~\ref{thm:CoVaR est}, it may be shown that $\mW(\cdot)$ and $\mW_{\dagger}(\cdot)$ are independent Brownian motions and, hence,
\[
	\mW_{\infty}(r,s)\overset{d}{=}\sqrt{\alpha(1-\alpha)}\mOmega_{\ast}^{1/2}\big[\mW_{\ast}(s)-\mW_{\ast}(r)\big] + \mK_{\ast}\int_{r}^{s}\vb(x)\D x
\]
for a $(k+1)$-variate standard Brownian motion $\mW_{\ast}(\cdot)$.
Thus, \eqref{eq:(p.42)} follows.

It remains to establish the joint convergence of $\widehat{\valpha}_n(r,s)$ and $\widehat{\vbeta}_n(r,s)$.
For this, we follow similar steps as in the proof of Theorem~\ref{thm:CoVaR est} to obtain that, as $n\to\infty$,
\begin{equation*}
	(s-r)\begin{pmatrix}
		\mD_n \big[\widehat{\valpha}_n(r,s) - \valpha_0\big]\\
		\mD_n \big[\widehat{\vbeta}_n(r,s) - \vbeta_0\big]
	\end{pmatrix}\overset{d}{\longrightarrow}\overline{\mSigma}^{1/2}\big[\overline{\mW}(s)-\overline{\mW}(r)\big]\\
	 + \begin{pmatrix}\int_{r}^{s}\va(x)\D x\\ \int_{r}^{s}\vb(x)\D x\end{pmatrix}\qquad\text{in }(\ell^{\infty}(\mathcal{D}_{\iota}))^{2k+2}.
\end{equation*}
This finishes proof.
\end{proof}

\begin{proof}[{\textbf{Proof of Corollary~\ref{cor:one-break}:}}]
Consider the expression over which the supremum is taken in the definition of $\mathcal{U}_{n,\vgamma}$ evaluated at the break point $s=s^{\ast}$, i.e.,
\begin{equation}\label{eq:s ast eval}
	(s^{\ast})^2(1-s^{\ast})^2\big[\widehat{\vgamma}_n(0,s^{\ast}) - \widehat{\vgamma}_n(s^{\ast},1)\big]^\prime \bm{\mathcal{N}}_{n,\vgamma}^{-1}(s^{\ast})\big[\widehat{\vgamma}_n(0,s^{\ast}) - \widehat{\vgamma}_n(s^{\ast},1)\big].
\end{equation}
Recall the definition of $\overline{\mD}_n$ from \eqref{eq:overline D}.
Then, note that from the CMT and Theorem~\ref{thm:CoVaR est alt},
\begin{align*}
	s^{\ast}&(1-s^{\ast})\overline{\mD}_n\big[\widehat{\vgamma}_n(0,s^{\ast}) - \widehat{\vgamma}_n(s^{\ast},1)\big]\\
	&= (1-s^{\ast})s^{\ast}\begin{pmatrix}\mD_n\big[\widehat{\valpha}_n(0,s^\ast)-\valpha_0\big]\\ \mD_n\big[\widehat{\vbeta}_n(0,s^\ast)-\vbeta_0\big]\end{pmatrix}	
	- s^{\ast}(1-s^{\ast})\begin{pmatrix}\mD_n\big[\widehat{\valpha}_n(s^\ast,1)-\valpha_0\big]\\ \mD_n\big[\widehat{\vbeta}_n(s^\ast,1)-\vbeta_0\big]\end{pmatrix}\\
	&\overset{d}{\longrightarrow}(1-s^{\ast})\big\{\overline{\mSigma}^{1/2}\overline{\mW}(s^{\ast})+ s^{\ast}\vc_1 \big\}\\
	&\hspace{1.9cm}- s^{\ast}\big\{\overline{\mSigma}^{1/2}\big[\overline{\mW}(1)-\overline{\mW}(s^{\ast})\big] + (1-s^{\ast})\vc_2\big\}\\
	&= \overline{\mSigma}^{1/2}\Big\{\big[\overline{\mW}(s^{\ast})-s^{\ast}\overline{\mW}(1)\big] + s^\ast(1-s^{\ast})\overline{\mSigma}^{-1/2}\big(\vc_1-\vc_2\big)\Big\}.
\end{align*}
Therefore, the CMT and Theorem~\ref{thm:CoVaR est alt} imply that the random variable in \eqref{eq:s ast eval} converges weakly to
\begin{multline*}
	\Big\{\big[\overline{\mW}(s^{\ast})-s^{\ast}\overline{\mW}(1)\big] + s^{\ast}(1-s^{\ast})\overline{\mSigma}^{-1/2}\big(\vc_1-\vc_2\big)\Big\}^{\prime}\overline{\bm{\mathcal{\mW}}}^{-1}(s^{\ast})\times\\
	\times\Big\{\big[\overline{\mW}(s^{\ast})-s^{\ast}\overline{\mW}(1)\big] + s^{\ast}(1-s^{\ast})\overline{\mSigma}^{-1/2}\big(\vc_1-\vc_2\big)\Big\},
\end{multline*}
where $\overline{\bm{\mathcal{\mW}}}(\cdot)$ is defined in Corollary~\ref{cor:SBT CoVaR}.
Because $\overline{\bm{\mathcal{\mW}}}^{-1}(s^{\ast})$ does not depend on $\vc_i$ ($i=1,2$), this term diverges in probability to $\infty$, as $\Vert\vc_2-\vc_1\Vert\to\infty$.
Thus, consistency follows.
\end{proof}

\section{Proofs of Propositions~\ref{prop:1alt CoVaR}--\ref{lem:LLN3 Alt2}}\label{sec:Prop alt CoVaR}

\begin{proof}[{\textbf{Proof of Proposition~\ref{prop:1alt CoVaR}:}}]
The proof follows along similar lines as that of Proposition~\ref{prop:1alt}.
Define
\begin{align*}
	\vnu_t &:= \1_{\{\epsilon_t>0\}}\big[\1_{\{-\vb^\prime(t/n)\mD_n^{-1}\mX_{t-1}<\delta_t\leq0\}} - \1_{\{0<\delta_t\leq -\vb^\prime(t/n)\mD_n^{-1}\mX_{t-1}\}}\big]\mD_n^{-1}\mX_{t-1},\\
	\overline{\vnu}_t &:= \E_{t-1}[\vnu_t],\\
	\mV_n(s) &:= \sum_{t=1}^{\lfloor ns\rfloor}\vnu_t,\\
	\overline{\mV}_n(s) &:= \sum_{t=1}^{\lfloor ns\rfloor}\overline{\vnu}_t.
\end{align*}
As in the proof of Proposition~\ref{prop:1alt}, it suffices to show that, uniformly in $s\in[0,1]$, 
\begin{align}
\overline{\mV}_n(s) &\overset{\P}{\longrightarrow} \mK_{\ast}\Big(\int_{0}^{s}\vb(x)\D x\Big),\label{eq:(1.1) CoVaR}\\
\mV_n(s) - \overline{\mV}_n(s) &\overset{\P}{\longrightarrow}\vzero.\label{eq:(1.2) CoVaR}
\end{align}
The proof of \eqref{eq:(1.2) CoVaR} is analogous to that of \eqref{eq:(pp.6.2)}, so we only show \eqref{eq:(1.1) CoVaR}.
Write
\begin{align*}
\overline{\mV}_n(s)&= \sum_{t=1}^{\lfloor ns\rfloor}\mD_n^{-1}\mX_{t-1}\E_{t-1}\Big[\1_{\{\epsilon_t>0\}}\1_{\{-\vb^\prime(t/n)\mD_n^{-1}\mX_{t-1}<\delta_t\leq0\}}\Big]\\
	&\hspace{0.5cm} - \sum_{t=1}^{\lfloor ns\rfloor}\mD_n^{-1}\mX_{t-1}\E_{t-1}\Big[\1_{\{\epsilon_t>0\}}\1_{\{0<\delta_t\leq -\vb^\prime(t/n)\mD_n^{-1}\mX_{t-1}\}}\Big]\\	
	&=:\overline{\mV}_{1n}(s) - \overline{\mV}_{2n}(s).
\end{align*}
Both terms can be treated similarly, so we only deal with $\overline{\mV}_{2n}(s)$.
For this, write
\begin{align*}
	\overline{\mV}_{2n}(s)
	&= \sum_{t=1}^{\lfloor ns\rfloor}\1_{\{-\vb^\prime(t/n)\mD_n^{-1}\mX_{t-1}>0\}}\mD_n^{-1}\mX_{t-1}\int_{0}^{\infty}\Big(\int_{0}^{-\vb^\prime(t/n)\mD_n^{-1}\mX_{t-1}}f_{(\epsilon_t,\delta_t)^\prime\mid\mathcal{F}_{t-1}}(x,y) \D y\Big)\D x\\
	&= \sum_{t=1}^{\lfloor ns\rfloor}\1_{\{-\vb^\prime(t/n)\mD_n^{-1}\mX_{t-1}>0\}}\mD_n^{-1}\mX_{t-1}\int_{0}^{-\vb^\prime(t/n)\mD_n^{-1}\mX_{t-1}}\Big(\int_{0}^{\infty}f_{(\epsilon_t,\delta_t)^\prime\mid\mathcal{F}_{t-1}}(x,y) \D x\Big)\D y\\	
	&= \sum_{t=1}^{\lfloor ns\rfloor}\1_{\{-\vb^\prime(t/n)\mD_n^{-1}\mX_{t-1}>0\}}\mD_n^{-1}\mX_{t-1}\int_{0}^{-\vb^\prime(t/n)\mD_n^{-1}\mX_{t-1}}\Big(\int_{0}^{\infty}f_{(\epsilon_t,\delta_t)^\prime\mid\mathcal{F}_{t-1}}(x,0) \D x\Big)\D y\\	
	&\hspace{0.5cm} + \sum_{t=1}^{\lfloor ns\rfloor}\1_{\{-\vb^\prime(t/n)\mD_n^{-1}\mX_{t-1}>0\}}\mD_n^{-1}\mX_{t-1}\times\\
	&\hspace{1cm}\times\int_{0}^{-\vb^\prime(t/n)\mD_n^{-1}\mX_{t-1}}\Big(\int_{0}^{\infty}f_{(\epsilon_t,\delta_t)^\prime\mid\mathcal{F}_{t-1}}(x,y) \D x - \int_{0}^{\infty}f_{(\epsilon_t,\delta_t)^\prime\mid\mathcal{F}_{t-1}}(x,0) \D x\Big)\D y\\
	&=:\overline{\mV}_{21n}(s) + \overline{\mV}_{22n}(s).
\end{align*}
Integrating out yields that
\[
	\overline{\mV}_{21n}(s)=-\sum_{t=1}^{\lfloor ns\rfloor}\1_{\{-\vb^\prime(t/n)\mD_n^{-1}\mX_{t-1}>0\}}\mD_n^{-1}\mX_{t-1}\mX_{t-1}^\prime\mD_n^{-1}\vb(t/n)\Big(\int_{0}^{\infty}f_{(\epsilon_t,\delta_t)^\prime\mid\mathcal{F}_{t-1}}(x,0) \D x\Big).
\]
Also, Assumption~\ref{ass:innov CoVaR}~\eqref{it:Lipschitz CoVaR} implies that uniformly in $s\in[0,1]$, 
\begin{align*}
	\big|\overline{\mV}_{22n}(s)\big| &\leq \sum_{t=1}^{\lfloor ns\rfloor}\1_{\{-\vb^\prime(t/n)\mD_n^{-1}\mX_{t-1}>0\}}\big\Vert\mD_n^{-1}\mX_{t-1}\big\Vert\times\\
	&\hspace{1cm}\times\int_{0}^{-\vb^\prime(t/n)\mD_n^{-1}\mX_{t-1}}|y-0|\frac{\Big|\int_{0}^{\infty}f_{(\epsilon_t,\delta_t)^\prime\mid\mathcal{F}_{t-1}}(x,y) \D x - \int_{0}^{\infty}f_{(\epsilon_t,\delta_t)^\prime\mid\mathcal{F}_{t-1}}(x,0) \D x\Big|}{|y-0|}\D y\\
	&\leq L\max_{t=1,\ldots,n}\big\Vert\mD_n^{-1}\mX_{t-1}\big\Vert \sum_{t=1}^{n}\1_{\{-\vb^\prime(t/n)\mD_n^{-1}\mX_{t-1}>0\}}\int_{0}^{-\vb^\prime(t/n)\mD_n^{-1}\mX_{t-1}}|y|\D y\\
	&\leq L\max_{t=1,\ldots,n}\big\Vert\mD_n^{-1}\mX_{t-1}\big\Vert\sum_{t=1}^{n}\frac{1}{2}\big(-\vb^\prime(t/n)\mD_n^{-1}\mX_{t-1}\big)^2\\
	&= \frac{L}{2}\max_{t=1,\ldots,n}\big\Vert\mD_n^{-1}\mX_{t-1}\big\Vert\sum_{t=1}^{n}\vb^\prime(t/n)\mD_n^{-1}\mX_{t-1}\mX_{t-1}^\prime\mD_n^{-1}\vb(t/n)\\
	&\leq K\max_{t=1,\ldots,n}\big\Vert\mD_n^{-1}\mX_{t-1}\big\Vert\sum_{t=1}^{n}\big\Vert\mD_n^{-1}\mX_{t-1}\mX_{t-1}^\prime\mD_n^{-1}\big\Vert\\
	&=o_{\P}(1)O_{\P}(1)\\
	&= o_{\P}(1),
\end{align*}
where the penultimate step follows from Lemma~\ref{lem:MAX} and \eqref{eq:Op1 cross}.
Taken together, we obtain that
\[
	\overline{\mV}_{2n}(s) = -\sum_{t=1}^{\lfloor ns\rfloor}\1_{\{-\vb^\prime(t/n)\mD_n^{-1}\mX_{t-1}>0\}}\mD_n^{-1}\mX_{t-1}\mX_{t-1}^\prime\mD_n^{-1}\vb(t/n)\Big(\int_{0}^{\infty}f_{(\epsilon_t,\delta_t)^\prime\mid\mathcal{F}_{t-1}}(x,0) \D x\Big) + o_{\P}(1)
\]
uniformly in $s\in[0,1]$.

Similar arguments give that
\[
	\overline{\mV}_{1n}(s) = \sum_{t=1}^{\lfloor ns\rfloor}\1_{\{-\vb^\prime(t/n)\mD_n^{-1}\mX_{t-1}<0\}}\mD_n^{-1}\mX_{t-1}\mX_{t-1}^\prime\mD_n^{-1}\vb(t/n)\Big(\int_{0}^{\infty}f_{(\epsilon_t,\delta_t)^\prime\mid\mathcal{F}_{t-1}}(x,0) \D x\Big) + o_{\P}(1).
\]
Combining the previous two displays and then using Assumption~\ref{ass:K ast} yields that
\begin{align*}
	\overline{\mV}_{n}(s) &= \sum_{t=1}^{\lfloor ns\rfloor}\mD_n^{-1}\mX_{t-1}\mX_{t-1}^\prime\mD_n^{-1}\vb(t/n)\Big(\int_{0}^{\infty}f_{(\epsilon_t,\delta_t)^\prime\mid\mathcal{F}_{t-1}}(x,0) \D x\Big) + o_{\P}(1)\\
	&\overset{\P}{\longrightarrow} \mK_{\ast}\Big(\int_{0}^{s}\vb(x)\D x\Big)
\end{align*}
uniformly in $s\in[0,1]$, finishing the proof.
\end{proof}

\begin{proof}[{\textbf{Proof of Proposition~\ref{lem:LLN alt CoVaR}:}}]
The proof follows along similar lines as that of Proposition~\ref{lem:LLN2}.
Once more, it suffices to prove that
\begin{multline*}
	\sup_{0\leq s\leq 1}\bigg|\sum_{t=1}^{\lfloor ns\rfloor}\1_{\{\epsilon_t>0\}}\big[\delta_t + \vb^\prime(t/n)\mD_n^{-1}\mX_{t-1} - \vw^\prime\mD_n^{-1}\mX_{t-1}\big]\times\\
\times\big[\1_{\{\vw^\prime\mD_n^{-1}\mX_{t-1}<\delta_{t}+\vb^\prime(t/n)\mD_n^{-1}\mX_{t-1}<0\}} - \1_{\{0<\delta_{t}+\vb^\prime(t/n)\mD_n^{-1}\mX_{t-1}<\vw^\prime\mD_n^{-1}\mX_{t-1}\}}\big]\\	
-\frac{1}{2}s\vw^\prime\mK_{\ast}\vw \bigg|=o_{\P}(1).
\end{multline*}
To do so, define
\begin{align*}
	\nu_t(\vw) &:= \1_{\{\epsilon_t>0\}}\big[\delta_t + \vb^\prime(t/n)\mD_n^{-1}\mX_{t-1} - \vw^\prime\mD_n^{-1}\mX_{t-1}\big]\times\\
&\hspace{1cm}\times\big[\1_{\{\vw^\prime\mD_n^{-1}\mX_{t-1}<\delta_{t}+\vb^\prime(t/n)\mD_n^{-1}\mX_{t-1}<0\}} - \1_{\{0<\delta_{t}+\vb^\prime(t/n)\mD_n^{-1}\mX_{t-1}<\vw^\prime\mD_n^{-1}\mX_{t-1}\}}\big],\\
	\overline{\nu}_t(\vw) &:= \E_{t-1}\big[\nu_t(\vw)\big],\\
	V_n(\vw,s) &:= \sum_{t=1}^{\lfloor ns\rfloor}\nu_t(\vw),\\
	\overline{V}_n(\vw,s) &:= \sum_{t=1}^{\lfloor ns\rfloor}\overline{\nu}_t(\vw).
\end{align*}
Then, we have to show that, uniformly in $s\in[0,1]$, 
\begin{align}
\overline{V}_n(\vw,s) &\overset{\P}{\longrightarrow} \frac{1}{2}s\vw^\prime\mK_{\ast}\vw,\label{eq:(1.1) CoVaR1}\\
V_n(\vw,s) - \overline{V}_n(\vw,s) &\overset{\P}{\longrightarrow}0.\label{eq:(1.2) CoVaR1}
\end{align}
Since the claim \eqref{eq:(1.2) CoVaR1} follows along similar lines as that of \eqref{eq:(pp.6.2)}, it only remains to show \eqref{eq:(1.1) CoVaR1}.
Write
\begin{align*}
	\overline{V}_n(\vw,s)&= \sum_{t=1}^{\lfloor ns\rfloor}\E_{t-1}\Big[\1_{\{\epsilon_t>0\}}\big(\delta_t + \vb^\prime(t/n)\mD_n^{-1}\mX_{t-1} - \vw^\prime\mD_n^{-1}\mX_{t-1}\big)\times\\
	&\hspace{7cm}\times\1_{\{\vw^\prime\mD_n^{-1}\mX_{t-1}<\delta_{t}+\vb^\prime(t/n)\mD_n^{-1}\mX_{t-1}<0\}}\Big]\\
	&\hspace{0.5cm} - \sum_{t=1}^{\lfloor ns\rfloor}\E_{t-1}\Big[\1_{\{\epsilon_t>0\}}\big(\delta_t + \vb^\prime(t/n)\mD_n^{-1}\mX_{t-1} - \vw^\prime\mD_n^{-1}\mX_{t-1}\big)\times\\
	&\hspace{7cm}\times\1_{\{0<\delta_{t}+\vb^\prime(t/n)\mD_n^{-1}\mX_{t-1}<\vw^\prime\mD_n^{-1}\mX_{t-1}\}}\Big]\\	
	&=:\overline{V}_{1n}(\vw,s) - \overline{V}_{2n}(\vw,s).
\end{align*}

Both terms can be treated similarly, so we only deal with $\overline{V}_{1n}(\vw,s)$.
For this, decompose
\begin{align*}
	&\overline{V}_{1n}(\vw,s)\\
	&= \sum_{t=1}^{\lfloor ns\rfloor}\1_{\{\vw^\prime\mD_n^{-1}\mX_{t-1}<0\}}\int_{0}^{\infty}\bigg(\int_{[\vw-\vb(t/n)]^\prime\mD_n^{-1}\mX_{t-1}}^{-\vb^\prime(t/n)\mD_n^{-1}\mX_{t-1}}\Big(y-\big[\vw-\vb(t/n)\big]^\prime\mD_n^{-1}\mX_{t-1}\Big)\\
	&\hspace{8cm}\times f_{(\epsilon_t,\delta_t)^\prime\mid\mathcal{F}_{t-1}}(x,y) \D y\bigg)\D x\\
	&= \sum_{t=1}^{\lfloor ns\rfloor}\1_{\{\vw^\prime\mD_n^{-1}\mX_{t-1}<0\}}\int_{[\vw-\vb(t/n)]^\prime\mD_n^{-1}\mX_{t-1}}^{-\vb^\prime(t/n)\mD_n^{-1}\mX_{t-1}}\Big(y-\big[\vw-\vb(t/n)\big]^\prime\mD_n^{-1}\mX_{t-1}\Big)\times\\
	&\hspace{8cm}\times\bigg(\int_{0}^{\infty}f_{(\epsilon_t,\delta_t)^\prime\mid\mathcal{F}_{t-1}}(x,y) \D x\bigg)\D y\\
	&= \sum_{t=1}^{\lfloor ns\rfloor}\1_{\{\vw^\prime\mD_n^{-1}\mX_{t-1}<0\}}\int_{[\vw-\vb(t/n)]^\prime\mD_n^{-1}\mX_{t-1}}^{-\vb^\prime(t/n)\mD_n^{-1}\mX_{t-1}}\Big(y-\big[\vw-\vb(t/n)\big]^\prime\mD_n^{-1}\mX_{t-1}\Big)\times\\
	&\hspace{8cm}\times\bigg(\int_{0}^{\infty}f_{(\epsilon_t,\delta_t)^\prime\mid\mathcal{F}_{t-1}}(x,0) \D x\bigg)\D y\\
	&\hspace{0.5cm}+\sum_{t=1}^{\lfloor ns\rfloor}\1_{\{\vw^\prime\mD_n^{-1}\mX_{t-1}<0\}}\int_{[\vw-\vb(t/n)]^\prime\mD_n^{-1}\mX_{t-1}}^{-\vb^\prime(t/n)\mD_n^{-1}\mX_{t-1}}\Big(y-\big[\vw-\vb(t/n)\big]^\prime\mD_n^{-1}\mX_{t-1}\Big)\times\\
	&\hspace{4cm}\times\bigg(\int_{0}^{\infty}f_{(\epsilon_t,\delta_t)^\prime\mid\mathcal{F}_{t-1}}(x,y) \D x - \int_{0}^{\infty}f_{(\epsilon_t,\delta_t)^\prime\mid\mathcal{F}_{t-1}}(x,0) \D x\bigg)\D y\\	
	&=:\overline{V}_{11n}(\vw,s) + \overline{V}_{12n}(\vw,s).
\end{align*}
Integrating out yields that
\begin{align*}
	\overline{V}_{11n}(\vw,s)&=\frac{1}{2}\sum_{t=1}^{\lfloor ns\rfloor}\1_{\{\vw^\prime\mD_n^{-1}\mX_{t-1}<0\}}\big(\vw^\prime\mD_n^{-1}\mX_{t-1}\big)^2\bigg(\int_{0}^{\infty}f_{(\epsilon_t,\delta_t)^\prime\mid\mathcal{F}_{t-1}}(x,0) \D x\bigg)\\
	&= \frac{1}{2}\sum_{t=1}^{\lfloor ns\rfloor}\1_{\{\vw^\prime\mD_n^{-1}\mX_{t-1}<0\}}\bigg(\int_{0}^{\infty}f_{(\epsilon_t,\delta_t)^\prime\mid\mathcal{F}_{t-1}}(x,0) \D x\bigg)\vw^\prime\mD_n^{-1}\mX_{t-1}\mX_{t-1}^\prime\mD_n^{-1}\vw.
\end{align*}
Also, Assumption~\ref{ass:innov CoVaR}~\eqref{it:Lipschitz CoVaR} implies that uniformly in $s\in[0,1]$, 
\begin{align*}
	\big|\overline{V}_{12n}(\vw,s)\big| &\leq \sum_{t=1}^{\lfloor ns\rfloor}\1_{\{\vw^\prime\mD_n^{-1}\mX_{t-1}<0\}}\int_{[\vw-\vb(t/n)]^\prime\mD_n^{-1}\mX_{t-1}}^{-\vb^\prime(t/n)\mD_n^{-1}\mX_{t-1}}\Big(y-\big[\vw-\vb(t/n)\big]^\prime\mD_n^{-1}\mX_{t-1}\Big)\times\\
	&\hspace{3.5cm}\times|y-0|\frac{\Big|\int_{0}^{\infty}f_{(\epsilon_t,\delta_t)^\prime\mid\mathcal{F}_{t-1}}(x,y) \D x - \int_{0}^{\infty}f_{(\epsilon_t,\delta_t)^\prime\mid\mathcal{F}_{t-1}}(x,0) \D x\Big|}{|y-0|}\D y\\
		&\leq L\sum_{t=1}^{\lfloor ns\rfloor}\1_{\{\vw^\prime\mD_n^{-1}\mX_{t-1}<0\}}\int_{[\vw-\vb(t/n)]^\prime\mD_n^{-1}\mX_{t-1}}^{-\vb^\prime(t/n)\mD_n^{-1}\mX_{t-1}}\Big(y-\big[\vw-\vb(t/n)\big]^\prime\mD_n^{-1}\mX_{t-1}\Big)|y-0|\D y\\
		&\leq K\max_{t=1,\ldots,n}\big\Vert\mD_n^{-1}\mX_{t-1}\big\Vert\sum_{t=1}^{\lfloor ns\rfloor}\1_{\{\vw^\prime\mD_n^{-1}\mX_{t-1}<0\}}\times\\
		&\hspace{3.5cm}\times\int_{[\vw-\vb(t/n)]^\prime\mD_n^{-1}\mX_{t-1}}^{-\vb^\prime(t/n)\mD_n^{-1}\mX_{t-1}}\Big(y-\big[\vw-\vb(t/n)\big]^\prime\mD_n^{-1}\mX_{t-1}\Big)\D y\\
		&\leq K\max_{t=1,\ldots,n}\big\Vert\mD_n^{-1}\mX_{t-1}\big\Vert\sum_{t=1}^{\lfloor ns\rfloor}\1_{\{\vw^\prime\mD_n^{-1}\mX_{t-1}<0\}}\big(\vw^\prime\mD_n^{-1}\mX_{t-1}\big)^2\\
	&\leq K\max_{t=1,\ldots,n}\big\Vert\mD_n^{-1}\mX_{t-1}\big\Vert\vw^\prime\bigg(\sum_{t=1}^{n}\mD_n^{-1}\mX_{t-1}\mX_{t-1}^\prime\mD_n^{-1}\bigg)\vw\\
	&=Ko_{\P}(1)O_{\P}(1)\\
	&=o_{\P}(1),
\end{align*}
where we used Assumption~\ref{ass:innov CoVaR}~\eqref{it:Lipschitz CoVaR} in the second step, and Lemmas~\ref{lem:MAX} and \ref{lem:SUM} in the penultimate step.
Taken together, we obtain that
\begin{multline*}
	\overline{V}_{1n}(\vw,s) = \frac{1}{2}\sum_{t=1}^{\lfloor ns\rfloor}\1_{\{\vw^\prime\mD_n^{-1}\mX_{t-1}<0\}}\bigg(\int_{0}^{\infty}f_{(\epsilon_t,\delta_t)^\prime\mid\mathcal{F}_{t-1}}(x,0) \D x\bigg)\vw^\prime\mD_n^{-1}\mX_{t-1}\mX_{t-1}^\prime\mD_n^{-1}\vw  + o_{\P}(1)
\end{multline*}
uniformly in $s\in[0,1]$.

Similar arguments yield that
\begin{multline*}
	\overline{V}_{2n}(\vw,s) = -\frac{1}{2}\sum_{t=1}^{\lfloor ns\rfloor}\1_{\{\vw^\prime\mD_n^{-1}\mX_{t-1}>0\}}\bigg(\int_{0}^{\infty}f_{(\epsilon_t,\delta_t)^\prime\mid\mathcal{F}_{t-1}}(x,0) \D x\bigg)\times\\
	\times\vw^\prime\mD_n^{-1}\mX_{t-1}\mX_{t-1}^\prime\mD_n^{-1}\vw  + o_{\P}(1).
\end{multline*}
Combining the previous two displays and then using Assumption~\ref{ass:K ast} implies that
\begin{align*}
	\overline{V}_{n}(\vw,s) &= \frac{1}{2}\sum_{t=1}^{\lfloor ns\rfloor}\bigg(\int_{0}^{\infty}f_{(\epsilon_t,\delta_t)^\prime\mid\mathcal{F}_{t-1}}(x,0) \D x\bigg)\vw^\prime\mD_n^{-1}\mX_{t-1}\mX_{t-1}^\prime\mD_n^{-1}\vw + o_{\P}(1)\\
	&\overset{\P}{\longrightarrow} \frac{1}{2}s\vw^\prime\mK_{\ast}\vw
\end{align*}
uniformly in $s\in[0,1]$. 
This establishes \eqref{eq:(1.1) CoVaR1}, finishing the proof.
\end{proof}

\begin{proof}[{\textbf{Proof of Proposition~\ref{lem:LLN CLT Alt}:}}]
Exploiting $\mathcal{H}_1^{\CoVaR}$, we may write
\begin{align*}
	&\1_{\{\mX_{t-1}^\prime[\widehat{\valpha}_n(r,s) - \valpha_{0,t}]<\epsilon_t\leq0\}} - \1_{\{0<\epsilon_t\leq \mX_{t-1}^\prime[\widehat{\valpha}_n(r,s) - \valpha_{0,t}]\}}\\
	&	=	\1_{\{\mX_{t-1}^\prime[\widehat{\valpha}_n(r,s) - \valpha_{0}]-\va^\prime(t/n)\mD_n^{-1}\mX_{t-1}<\epsilon_t\leq0\}} - \1_{\{0<\epsilon_t\leq \mX_{t-1}^\prime[\widehat{\valpha}_n(r,s) - \valpha_{0}] - \va^\prime(t/n)\mD_n^{-1}\mX_{t-1}\}}\\
	&= \1_{\{[\mD_n(\widehat{\valpha}_n(r,s) - \valpha_{0}) -\va(t/n)]^\prime\mD_n^{-1}\mX_{t-1}<\epsilon_t\leq0\}} - \1_{\{0<\epsilon_t\leq [\mD_n(\widehat{\valpha}_n(r,s) - \valpha_{0}) -\va(t/n)]^\prime\mD_n^{-1}\mX_{t-1}\}}.
\end{align*}
Define
\begin{align*}
	\vnu_t^{1}(\vv) &:= \big[\1_{\{[\vv-\va(t/n)]^\prime\mD_n^{-1}\mX_{t-1}<\epsilon_t\leq0\}} - \1_{\{0<\epsilon_t\leq[\vv-\va(t/n)]^\prime\mD_n^{-1}\mX_{t-1}\}}\big]\psi_{\beta}(\delta_{t})\mD_n^{-1}\mX_{t-1},\\
	\overline{\vnu}_t^{1}(\vv) &:= \E_{t-1}\big[\vnu_t^{1}(\vv)\big],\\
	\mV_n^{1}(\vv,r,s) &:= \sum_{t=\lfloor nr\rfloor +1}^{\lfloor ns\rfloor}\vnu_t^{1}(\vv),\\
	\overline{\mV}_n^{1}(\vv,r,s) &:= \sum_{t=\lfloor nr\rfloor +1}^{\lfloor ns\rfloor}\overline{\vnu}_t^{1}(\vv).
\end{align*}
Then, similarly as in Lemma~\ref{lem:7}, we may show that, as $n\to\infty$,
\begin{equation}\label{eq:lem7 ext}
	\sup_{\Vert\vv\Vert\leq K}\sup_{0\leq r < s\leq1}\bigg\Vert\overline{\mV}_n^{1}(\vv,r,s) - (s-r)\big[(1-\beta)\mK-\mK_{\dagger}\big]\vv + \big[(1-\beta)\mK-\mK_{\dagger}\big]\int_{r}^{s}\va(x)\D x\bigg\Vert=o_{\P}(1).
\end{equation}
To see this, replace $\vv$ with $\vv-\va(t/n)$ throughout the proof of Lemma~\ref{lem:7} to obtain that (cf.~\eqref{eq:(p.30)})
\begin{align*}
	\overline{\mV}_n^{1}(\vv,r,s) &=(1-\beta)\sum_{t=\lfloor nr\rfloor+1}^{\lfloor ns\rfloor} f_{\epsilon_t\mid\mathcal{F}_{t-1}}(0)\big(\mD_n^{-1}\mX_{t-1}\mX_{t-1}^\prime\mD_n^{-1}\big)\big[\vv-\va(t/n)\big]\\
	&\hspace{1cm}-\sum_{t=\lfloor nr\rfloor+1}^{\lfloor ns\rfloor}\bigg(\int_{0}^{\infty}f_{(\epsilon_t,\delta_t)^\prime\mid\mathcal{F}_{t-1}}(0,y)\D y\bigg)\big(\mD_n^{-1}\mX_{t-1}\mX_{t-1}^\prime\mD_n^{-1}\big)\big[\vv-\va(t/n)\big] + o_{\P}(1)
\end{align*}
uniformly in $0\leq r<s\leq1$ and $\Vert\vv\Vert\leq K$.
Therefore, Assumptions~\ref{ass:K} and \ref{ass:K ast} imply \eqref{eq:lem7 ext}.
Moreover, it is easy to show that the conclusions of Lemmas~\ref{lem:8}--\ref{lem:9} continue to hold with $\mV_n(\vv,r,s)$ and $\overline{\mV}_n(\vv,r,s)$ replaced by $\mV_n^{1}(\vv,r,s)$ and $\overline{\mV}_n^{1}(\vv,r,s)$, respectively.

With these results at hand, the remainder of the proof now follows along similar lines as that of Proposition~\ref{lem:LLN CLT}.
Plugging in $\widehat{\vv}_n^1(r,s)=\mD_n\big[\widehat{\valpha}_n(r,s)-\valpha_0\big]$ for $\vv$ in \eqref{eq:lem7 ext}, yields by Theorem~\ref{thm:std est alt} that
\begin{align*}
	\overline{\mV}_n^1\big(\widehat{\vv}_n^1(r,s),r,s\big) &\overset{d}{\longrightarrow}\big[(1-\beta)\mK-\mK_{\dagger}\big]\Big\{\mSigma^{1/2}\big[\mW(s)-\mW(r)\big]+ \int_{r}^{s}\va(x)\D x\Big\}\\
	&\hspace{2cm} - \big[(1-\beta)\mK-\mK_{\dagger}\big] \int_{r}^{s}\va(x)\D x\\
	&=\big[(1-\beta)\mK-\mK_{\dagger}\big]\mSigma^{1/2}\big[\mW(s)-\mW(r)\big]\qquad\text{in }(\ell^{\infty}(\mathcal{D}_{\iota}))^{k+1}.
\end{align*}
The remainder of the proof now follows virtually line-by-line that of Proposition~\ref{lem:LLN CLT}.
We omit details to save space.
\end{proof}

\begin{proof}[{\textbf{Proof of Proposition~\ref{lem:LLN3 Alt1}:}}]
The proof closely follows that of Proposition~\ref{lem:LLN3}.
We only show that
\[
	\sup_{(r,s)\in\mathcal{D}_{\iota}}\bigg|\sum_{t=\lfloor nr\rfloor+1}^{\lfloor ns\rfloor}\1_{\{0<\epsilon_t\leq \mX_{t-1}^\prime[\widehat{\valpha}_n(r,s) - \valpha_{0,t}]\}}\1_{\{0<\delta_{t}\leq-\vb^\prime(t/n)\mD_n^{-1}\mX_{t-1}\}}\mD_n^{-1}\mX_{t-1}\bigg|=o_{\P}(1),
\]
as the convergences involving the other indicator functions follow similarly.
It holds under $\mathcal{H}_1^{\CoVaR}$ that
\begin{align*}
	\P&\bigg\{\sup_{(r,s)\in\mathcal{D}_{\iota}}\bigg\Vert\sum_{t=\lfloor nr\rfloor+1}^{\lfloor ns\rfloor}\1_{\{0<\epsilon_t\leq \mX_{t-1}^\prime[\widehat{\valpha}_n(r,s) - \valpha_{0,t}]\}} \1_{\{0<\delta_{t}\leq-\vb^\prime(t/n)\mD_n^{-1}\mX_{t-1}\}}\mD_n^{-1}\mX_{t-1}\bigg\Vert>\varepsilon\bigg\}\\
	&\leq \P\bigg\{\sup_{(r,s)\in\mathcal{D}_{\iota}}\bigg\Vert\sum_{t=\lfloor nr\rfloor+1}^{\lfloor ns\rfloor}\1_{\{0<\epsilon_t\leq [\mD_n(\widehat{\valpha}_n(r,s) - \valpha_0)-\va(t/n)]^\prime\mD_n^{-1}\mX_{t-1}\}}\times\\
	&\hspace{3cm}\times\1_{\{0<\delta_{t}\leq-\vb^\prime(t/n)\mD_n^{-1}\mX_{t-1}\}}\mD_n^{-1}\mX_{t-1}\bigg\Vert>\varepsilon,\ \sup_{(r,s)\in\mathcal{D}_{\iota}}\big\Vert\mD_n[\widehat{\valpha}_n(r,s) - \valpha_0]\big\Vert\leq K\bigg\}\\
	&\hspace{1cm} + \P\bigg\{\sup_{(r,s)\in\mathcal{D}_{\iota}}\big\Vert\mD_n[\widehat{\valpha}_n(r,s) - \valpha_0]\big\Vert> K\bigg\}\\
	&\leq \P\bigg\{\sup_{(r,s)\in\mathcal{D}_{\iota}}\sum_{t=\lfloor nr\rfloor+1}^{\lfloor ns\rfloor}\1_{\{0<\epsilon_t\leq K\Vert\mD_n^{-1}\mX_{t-1}\Vert\}} \1_{\{0<\delta_{t}\leq-\vb^\prime(t/n)\mD_n^{-1}\mX_{t-1}\}}\big\Vert\mD_n^{-1}\mX_{t-1}\big\Vert>\varepsilon\bigg\} + o(1)\\
	&\leq \P\bigg\{\sum_{t=1}^{n}\1_{\{0<\epsilon_t\leq K\Vert\mD_n^{-1}\mX_{t-1}\Vert\}} \1_{\{0<\delta_{t}\leq-\vb^\prime(t/n)\mD_n^{-1}\mX_{t-1}\}}\big\Vert\mD_n^{-1}\mX_{t-1}\big\Vert>\varepsilon\bigg\} + o(1),
\end{align*}
as $n\to\infty$, followed by $K\to\infty$, where the penultimate step follows from Lemma~\ref{lem:MAX} and Theorem~\ref{thm:std est alt}, which implies that $\sup_{(r,s)\in\mathcal{D}_{\iota}}\big\Vert\mD_n[\widehat{\valpha}_n(r,s) - \valpha_0]\big\Vert=O_{\P}(1)$.

In light of this, it suffices to show that
\begin{equation}\label{eq:(p.21) alt}
	\sum_{t=1}^{n}\omega_t:=\sum_{t=1}^{n}\1_{\{0<\epsilon_t\leq K\Vert\mD_n^{-1}\mX_{t-1}\Vert\}} \1_{\{0<\delta_{t}\leq-\vb^\prime(t/n)\mD_n^{-1}\mX_{t-1}\}}\big\Vert\mD_n^{-1}\mX_{t-1}\big\Vert=o_{\P}(1).
\end{equation}
To do so, use Assumption~\ref{ass:innov CoVaR}~\eqref{it:dens bound CoVaR} to deduce that
\begin{align*}
	\E_{t-1}\big[\omega_t\big]&= \1_{\{0<-\vb^\prime(t/n)\mD_n^{-1}\mX_{t-1}\}}\big\Vert\mD_n^{-1}\mX_{t-1}\big\Vert \times \\
	&\hspace{4cm}\times\int_{0}^{K\Vert \mD_n^{-1}\mX_{t-1}\Vert}\bigg[\int_{0}^{-\vb^\prime(t/n)\mD_n^{-1}\mX_{t-1}}f_{(\epsilon_t,\delta_t)^\prime\mid\mathcal{F}_{t-1}}(x,y)\D y\bigg]\D x\\
	&\leq \1_{\{0<-\vb^\prime(t/n)\mD_n^{-1}\mX_{t-1}\}}\big\Vert\mD_n^{-1}\mX_{t-1}\big\Vert\int_{0}^{K\Vert \mD_n^{-1}\mX_{t-1}\Vert}\Big[\overline{f}\big\{-\vb^\prime(t/n)\mD_n^{-1}\mX_{t-1}\big\}\Big]\D x\\
	&\leq K\Vert\mD_n^{-1}\mX_{t-1}\Vert \big\Vert\mD_n^{-1}\mX_{t-1}\mX_{t-1}^\prime\mD_n^{-1}\vb(t/n)\big\Vert,
\end{align*}
where the final step follows because $-\vb^\prime(t/n)\mD_n^{-1}\mX_{t-1}>0$ is simply a positive scalar.
Therefore,
\begin{align*}
	\sum_{t=1}^{n}\E_{t-1}\big[\omega_t\big]&\leq K \max_{t=1,\ldots,n}\Vert\mD_n^{-1}\mX_{t-1}\Vert\sum_{t=1}^{n} \big\Vert\mD_n^{-1}\mX_{t-1}\mX_{t-1}^\prime\mD_n^{-1}\big\Vert\\
	&=o_{\P}(1)O_{\P}(1)\\
	&=o_{\P}(1)
\end{align*}
by Lemma~\ref{lem:MAX} and \eqref{eq:Op1 cross}.
In view of this, \eqref{eq:(p.21) alt} follows if we can show that, as $n\to\infty$,
\begin{equation}\label{eq:(p.4.L.7) alt}
	\sum_{t=1}^{n}\big\{\omega_t - \E_{t-1}[\omega_t]\big\}=o_{\P}(1).
\end{equation}
We prove this using Corollary~3.1 of \citet{HH80}.
The details are as in the proof of \eqref{eq:(p.4.L.7)} and, hence, are omitted.
\end{proof}

\begin{proof}[{\textbf{Proof of Proposition~\ref{lem:LLN3 Alt2}:}}]
The proof is similar to that of Proposition~\ref{lem:LLN3}, with similar required changes as in the proof of Proposition~\ref{lem:LLN3 Alt1}.
We omit details for brevity.
\end{proof}

\section{Proofs of Corollaries~\ref{thm:CoVaR est unsupervised}--\ref{thm:CoVaR est unsupervised loc alt}}\label{sec:proofs of cors}

\begin{proof}[{\textbf{Proof of Corollary~\ref{thm:CoVaR est unsupervised}:}}]
Under the stated conditions, Theorem~\ref{thm:CoVaR est} implies that, as $n\to\infty$,
\[
	\overline{\mD}_n\widehat{\mS}_n(r,s)= (s-r)\begin{pmatrix}
		\mD_n\big[\widehat{\valpha}_n(r,s)-\valpha_0\big]\\
		\mD_n\big[\widehat{\vbeta}_n(r,s)-\vbeta_0\big]
	\end{pmatrix}	
	\overset{d}{\longrightarrow}\overline{\mSigma}^{1/2}\Delta\overline{\mW}(r,s)\qquad\text{in }(\ell^{\infty}(\mathcal{D}_{\iota}))^{2k+2},
\]
where $\overline{\mW}(\cdot)$ is a $(2k+2)$-variate standard Brownian motion and $\overline{\mD}_n$ is from \eqref{eq:overline D}.
From this and the CMT, we obtain the desired result that
\begin{align*}
	\mathcal{V}_{n,\vgamma} &= \sup_{(r_1,r_2)\in\mathcal{D}_{\iota}}\vd^\prime(\overline{\mD}_n\widehat{\mS}_n, 0,r_1,r_2)\mXi^{-1}(\overline{\mD}_n\widehat{\mS}_n, 0,r_1,r_2)\vd(\overline{\mD}_n\widehat{\mS}_n, 0,r_1,r_2)\\
	& \hspace{0.7cm}+\sup_{(s_1,s_2)\in\mathcal{D}_{\iota}}\vd^\prime(\overline{\mD}_n\widehat{\mS}_n, s_1,s_2,1)\mXi^{-1}(\overline{\mD}_n\widehat{\mS}_n,s_1,s_2,1)\vd(\overline{\mD}_n\widehat{\mS}_n, s_1,s_2,1)\\
	&\overset{d}{\longrightarrow}\sup_{(r_1,r_2)\in\mathcal{D}_{\iota}}\vd^\prime(\overline{\mSigma}^{1/2}\Delta\overline{\mW}, 0,r_1,r_2)\mXi^{-1}(\overline{\mSigma}^{1/2}\Delta\overline{\mW}, 0,r_1,r_2)\vd(\overline{\mSigma}^{1/2}\Delta\overline{\mW}, 0,r_1,r_2)\\
	& \hspace{0.7cm} +\sup_{(s_1,s_2)\in\mathcal{D}_{\iota}}\vd^\prime(\overline{\mSigma}^{1/2}\Delta\overline{\mW}, s_1,s_2,1)\mXi^{-1}(\overline{\mSigma}^{1/2}\Delta\overline{\mW},s_1,s_2,1)\vd(\overline{\mSigma}^{1/2}\Delta\overline{\mW}, s_1,s_2,1)=\mathcal{V}_{2k+2}.
\end{align*}
The two equalities in the above display follow by construction of the test statistic and the limiting distribution, where the pre-multiplied invertible matrices in the first argument of $\vd(\cdot,\cdot,\cdot,\cdot)$ (that is, $\overline{\mD}_n$ and $\overline{\mSigma}^{1/2}$) cancel each other out.
\end{proof}

\begin{proof}[{\textbf{Proof of Corollary~\ref{thm:CoVaR est unsupervised loc alt}:}}]
The proof follows along similar lines as that of Theorem~3.2 in \citet{ZL18}.
Consider the first part of the test statistic in \eqref{eq:Unsupervised T} evaluated at $(r_1,r_2) = (s_{1}^{\ast}, s_{2}^{\ast})$, i.e., the first two break times.
Then, the normalizing matrix $\mXi^{-1}(\widehat{\mS}_n, 0,s_{1}^{\ast}, s_{2}^{\ast})$ remains bounded in probability by similar arguments used in the proof of Corollary~\ref{thm:CoVaR est unsupervised}.
Moreover, from \eqref{eq:(L.2)},
\begin{align*}
	\vd(\widehat{\mS}_n, 0,s_{1}^{\ast}, s_{2}^{\ast}) &= \frac{s_{1}^{\ast}(s_{2}^{\ast}-s_{1}^{\ast})}{(s_{2}^{\ast})^{3/2}}\big[\widehat{\vgamma}_n(0,s_{1}^{\ast}) - \widehat{\vgamma}_n(s_{1}^{\ast},s_{2}^{\ast})\big]\\
	&= \frac{s_{1}^{\ast}(s_{2}^{\ast}-s_{1}^{\ast})}{(s_{2}^{\ast})^{3/2}}\Big\{\big[\widehat{\vgamma}_n(0,s_{1}^{\ast})-\vgamma_0\big] - \big[\widehat{\vgamma}_n(s_{1}^{\ast},s_{2}^{\ast})-\vgamma_0\big]\Big\},
\end{align*}
where $\vgamma_0=(\valpha_0^\prime, \vbeta_0^\prime)^\prime$.
The remainder of the proof now follows similarly as that of Corollary~\ref{cor:one-break}.
\end{proof}

\section{Monte Carlo Simulations}\label{sec:Main Simulation}

This section investigates the finite-sample size and power of the tests based on $\mathcal{U}_{n,\valpha}$ and $\mathcal{U}_{n,\vgamma}$.
We do not present simulation results for $\mathcal{V}_{n,\gamma}$ because of the excessive computational time this would require.
Indeed, the computation of $\mathcal{V}_{n,\gamma}$ takes several orders of magnitude longer than the computation of $\mathcal{U}_{n,\gamma}$.
All simulations are run in \texttt{R} \citep{R.4.3.3} and the estimators $\widehat{\valpha}_n(\cdot,\cdot)$ and $\widehat{\vbeta}_n(\cdot,\cdot)$ are computed via the \texttt{quantreg} package \citep{quantreg}.
We use 5,000 replications throughout.
In the following, we denote an $(m\times1)$-vector of ones by $\vone_{m}$.

We proceed as follows.
To start, Section~\ref{DGP} presents the main data-generating process (DGP).
Then, in Sections~\ref{Size} and \ref{Power} we report results on size and power, respectively.
Section~\ref{Comparison with Qu08} compares our persistence-robust quantile regression (QR) stability test with the non-robust test of \citet{Qu08}.
There, we also investigate size of our QR stability test for smaller sample sizes (as often encountered in QRs in macroeconomics) than those considered in Sections~\ref{Size} and \ref{Power}.
The final two sections extend the simulations of Sections~\ref{Size} and \ref{Power} by considering early and late breaks under the alternative (Section~\ref{Early and Late Breaks}) and by considering a larger number of predictors (Section~\ref{Additional Predictors}).

\subsection{Data-Generating Process}\label{DGP}

As the DGP we use the predictive CoVaR regression with $k=2$, $\alpha=\beta=0.9$ and 
\begin{align*}
	\begin{pmatrix}Y_t \\
	Z_t
	\end{pmatrix}
	&=\begin{pmatrix}\valpha_{0,t}^\prime \\ \vbeta_{0,t}^\prime\end{pmatrix}
	\begin{pmatrix}
	1\\ \vx_{t-1}\end{pmatrix}+
		\begin{pmatrix}
			\epsilon_t\\ \delta_t
		\end{pmatrix},\qquad t=1,\ldots,n.
\end{align*}
Here, the predictors are generated by
\begin{equation*}
	\vx_t=\vmu_x + \vxi_t,\qquad \vxi_t = \mR_n\vxi_{t-1} + \vu_t,\qquad t\in\mathbb{N},
\end{equation*}
with $\vmu_x=\vzero$ and initialization at $\vxi_0=\vzero$.
In the (I0) setting, we use $\mR_n=\mR=\mI_{k} + \mC$ for $\mC=-c\mI_{k}$ with $c$ varying in the interval $(0,\, 0.5]$.
These choices of $\mC$ ensure that the spectral radius of $\mR$ satisfies that $\rho(\mR)=1-c<1$, as required for stationarity.
Therefore, $\mR_n=r_n\mI_{k}$ with $r_n:=r:=1-c$ under (I0).
Similarly, to generate (NS) predictors, we put $r_n=1-n^{-1/2}$ in the autoregressive matrix $\mR_n=r_n\mI_{k}$, such that $\kappa=1/2$ and $\mC_n=n^{1/2}(\mR_n - \mI_k)=-\mI_k=:\mC$ is negatively stable.

The predictor innovations are generated from the VAR(1) model
\begin{equation}\label{eq:PIsim}
	\vu_t=\mPhi\vu_{t-1} + \vvarepsilon_t,
\end{equation}
and the CoVaR errors are 
\[
\begin{pmatrix}
			\epsilon_t\\ \delta_t
		\end{pmatrix}=\begin{pmatrix}
			v_{1t} - Q_{\alpha}(v_{1t})\\ v_{2t} - \CoVaR_{\beta\mid\alpha}\big((v_{2t}, v_{1t})^\prime\big)
		\end{pmatrix},
\]
where $\vv_t=(v_{1t},v_{2t})^\prime$ and $\vvarepsilon_t$ are independent, identically distributed (i.i.d.) draws from a joint normal distribution.
More specifically, 
\[
	\begin{pmatrix} \vv_t\\ \vvarepsilon_t\end{pmatrix}\overset{\text{i.i.d.}}{\sim} N(\vzero,\mPsi),\qquad\mPsi=\begin{pmatrix}\mPsi_{\vv} & \mPsi_{\vv\vvarepsilon}^\prime\\
	\mPsi_{\vv\vvarepsilon} & \mPsi_{\vvarepsilon}
		\end{pmatrix},\quad
		\begin{cases}
\mPsi_{\vv} &= (1-\rho_{\vv})\mI_{2} + \rho_{\vv} \vone_{2}\vone_{2}^\prime,\\
	\mPsi_{\vvarepsilon} &= (1-\rho_{\vvarepsilon})\mI_{k} + \rho_{\vvarepsilon} \vone_{k}\vone_{k}^\prime,\\
	\mPsi_{\vv\vvarepsilon} &= \rho_{\vv\vvarepsilon}\vone_{k}\vone_{2}^\prime,
	\end{cases}
\]
such that $\mPsi_{\vv}$ and $\mPsi_{\vvarepsilon}$ are equicorrelation matrices and $\mPsi$ is a multiple-block equicorrelation matrix.
For the parameters of the normal distribution of $(\vv_t^\prime, \vvarepsilon_t^\prime)^\prime$, we choose $\rho_{\vv}=\rho_{\vvarepsilon}=0$ and $\rho_{\vv\vvarepsilon}=-0.25$, ensuring that $\mPsi$ is positive definite \citep[Lemma~2.3~(iii)]{EK12}.
For the autoregressive matrix of the predictor innovations in \eqref{eq:PIsim} we put $\mPhi=-0.95\mI_k$.

In the spirit of \citet{Sta99}, the above setup leads to regression disturbances $(\epsilon_t,\delta_t)^\prime$ that correlate with the regressors' innovation $\vu_t$, because
\[
	\Cov\big((\epsilon_t,\delta_t)^\prime,\,\vu_t\big) = \Cov\big(\vv_t,\mPhi\vu_{t-1}+\vvarepsilon_t\big)=\Cov(\vv_t,\,\vvarepsilon_t)=\mPsi_{\vv\vvarepsilon}=-0.25\cdot\vone_{2}\vone_{2}^\prime.
\]

\subsection{Size}\label{Size}

\begin{figure}[t!]
	\centering
	\includegraphics[width=\linewidth]{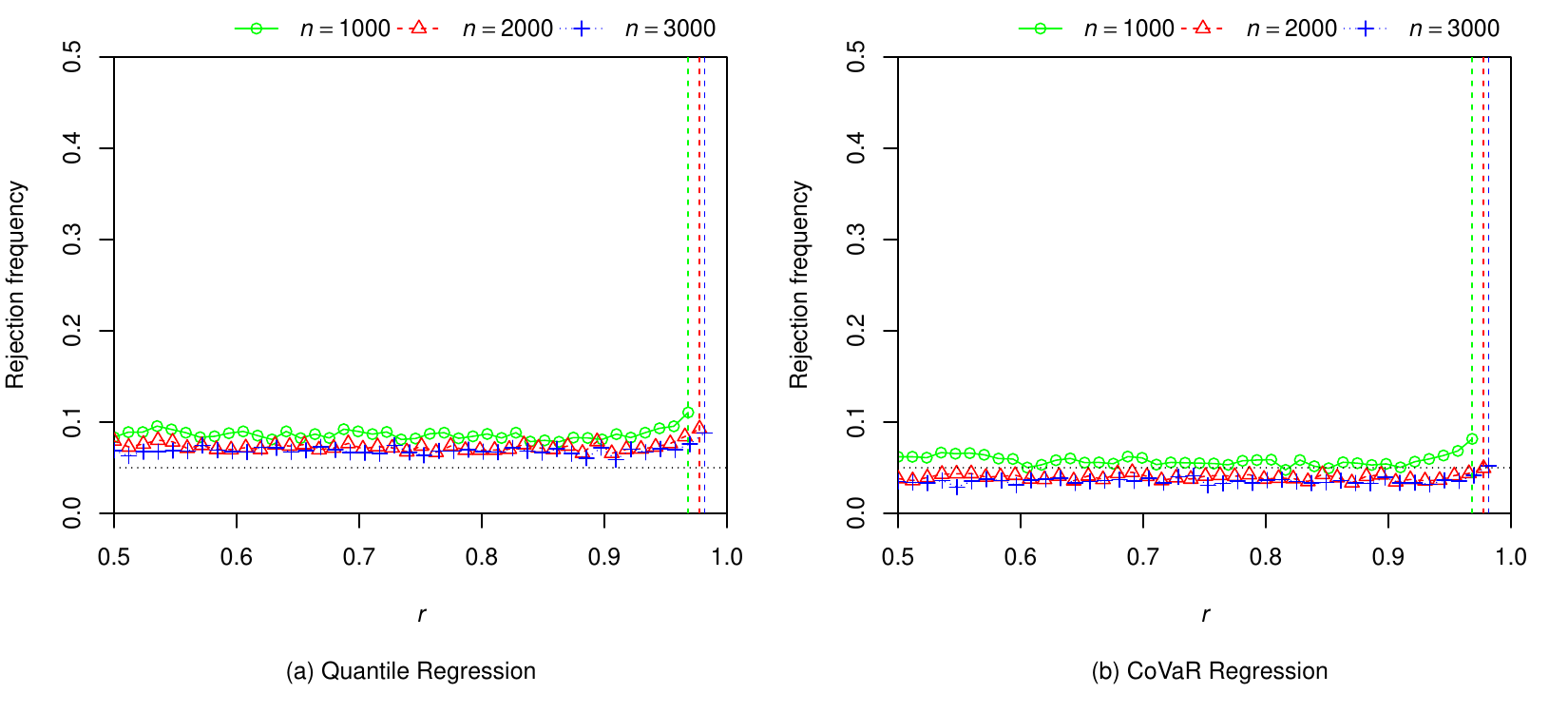}
	\caption{Panel (a): Empirical size of test based on $\mathcal{U}_{n,\valpha}$ for the quantile regression. Panel (b): Empirical size of test based on $\mathcal{U}_{n,\vgamma}$ for the CoVaR regression. 
	In both panels, size is plotted as a function of the autoregressive parameter $r$ of the (I0) predictors.	
	The dashed vertical lines correspond to the values of $r_n=1-n^{-0.5}$ in the (NS) setting.
	The dotted horizontal lines indicate the tests' nominal level of 5\%. }
	\label{fig:Size}
\end{figure}

To simulate under the no-break null hypothesis, we put $\valpha_{0,t}=\vbeta_{0,t}=(0,\vone_k^\prime)^\prime$ for all $t\geq1$.
We set $\iota=0.1$ for our test statistics $\mathcal{U}_{n,\valpha}$ and $\mathcal{U}_{n,\vgamma}$.
Figure~\ref{fig:Size} shows the empirical rejection frequencies of our tests as a function of the persistence parameter $r$ for sample sizes $n\in\{1000,\ 2000,\ 3000\}$.
The break tests for the quantile regression are somewhat oversized for the smaller sample size. 
Yet, the size distortions diminish as $n$ gets larger.
Interestingly, size is even better for the CoVaR regression although the CoVaR is a more ``extreme'' quantity and, hence, harder to estimate than the quantiles.
Even for $r$ close to unity (i.e., for highly persistent predictors) do we see little remaining distortions in large samples. 
In particular, in the (NS) setting with $r_n=1-n^{-0.5}$ (such that $\kappa=1/2$ lies in the middle of the allowed interval $(0,1)$ in Assumption~\ref{ass:N}) size distortions of our self-normalized tests remain negligible, as predicted by Corollaries~\ref{cor:SBT CoVaR} and \ref{cor:SBT} (see the dashed vertical lines in Figure~\ref{fig:Size}).

\subsection{Power}\label{Power}

\begin{figure}[t!]
	\centering
	\includegraphics[width=\linewidth]{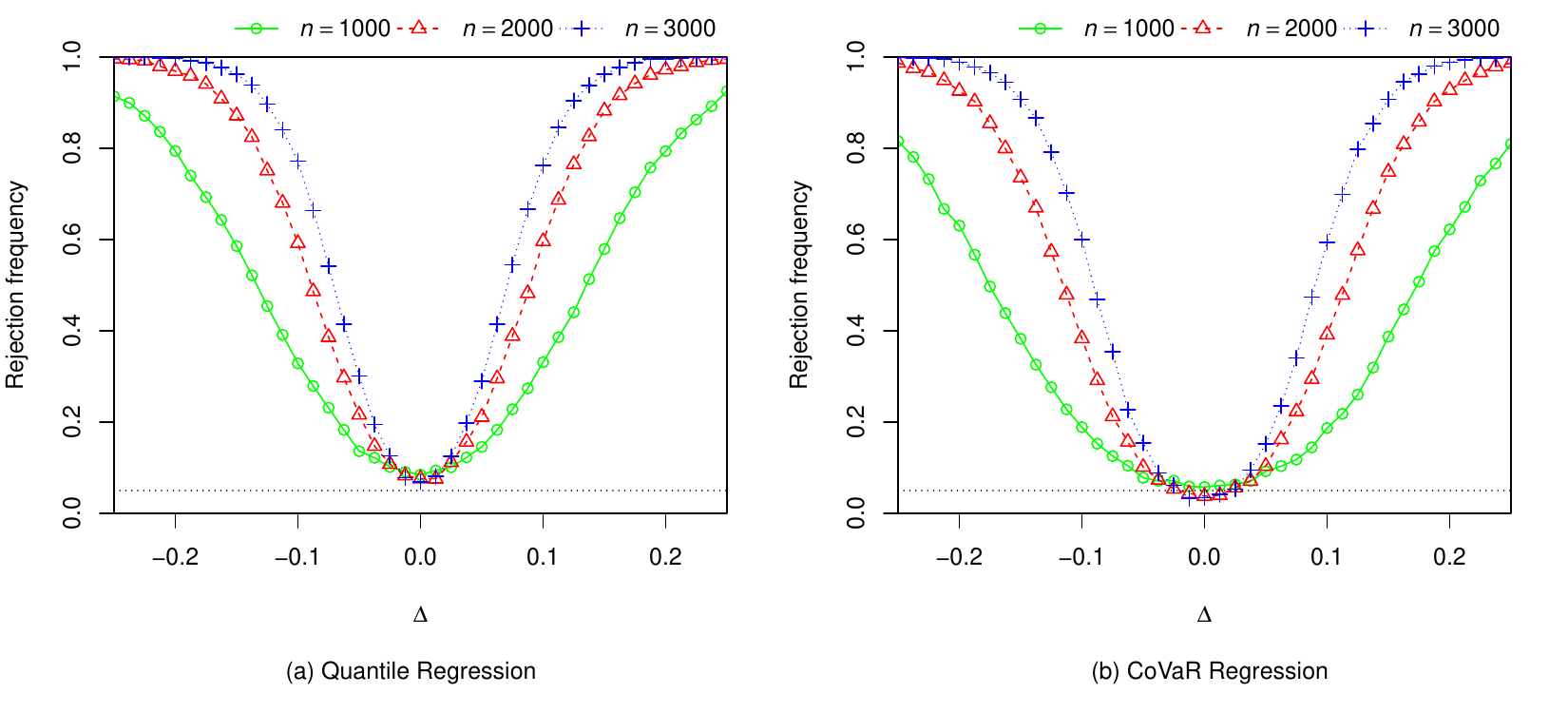}
	\caption{Panel (a): Empirical power of test based on $\mathcal{U}_{n,\valpha}$ for the quantile regression. Panel (b): Empirical power of test based on $\mathcal{U}_{n,\vgamma}$ for the CoVaR regression. 
	In both panels, power is plotted as a function of $\Delta$, i.e., the deviation from the null.
 The dotted horizontal lines indicate the tests' nominal level of 5\%. Predictors are stationary with $r=0.5$ (such that $\kappa=0$).}
	\label{fig:Power1}
\end{figure}

To study the finite-sample power of our tests, we vary the predictive content of the covariates in the DGP.
Specifically, we consider the single-break setting with $\valpha_{0,t}=\vbeta_{0,t}=(0,\vone_k^\prime)^\prime+\mDelta_t$, where
\[
	\mDelta_t=\begin{cases}
							\vzero,& t=1,\ldots, \lfloor ns^\ast\rfloor,\\
							\Delta\vone_{k+1},& t=\lfloor ns^\ast\rfloor+1,\ldots,n,\\
						\end{cases}
\]
and vary $\Delta$ in the interval $[-1/4, 1/4]$. 
We fix the timing of the break in the middle of the sample ($s^\ast=0.5$).
Power is somewhat lower for earlier and later breaks ($s^\ast\in\{0.25,\, 0.75\}$), but the results are qualitatively similar; see Section~\ref{Early and Late Breaks}.

\begin{figure}[t!]
	\centering
	\includegraphics[width=\linewidth]{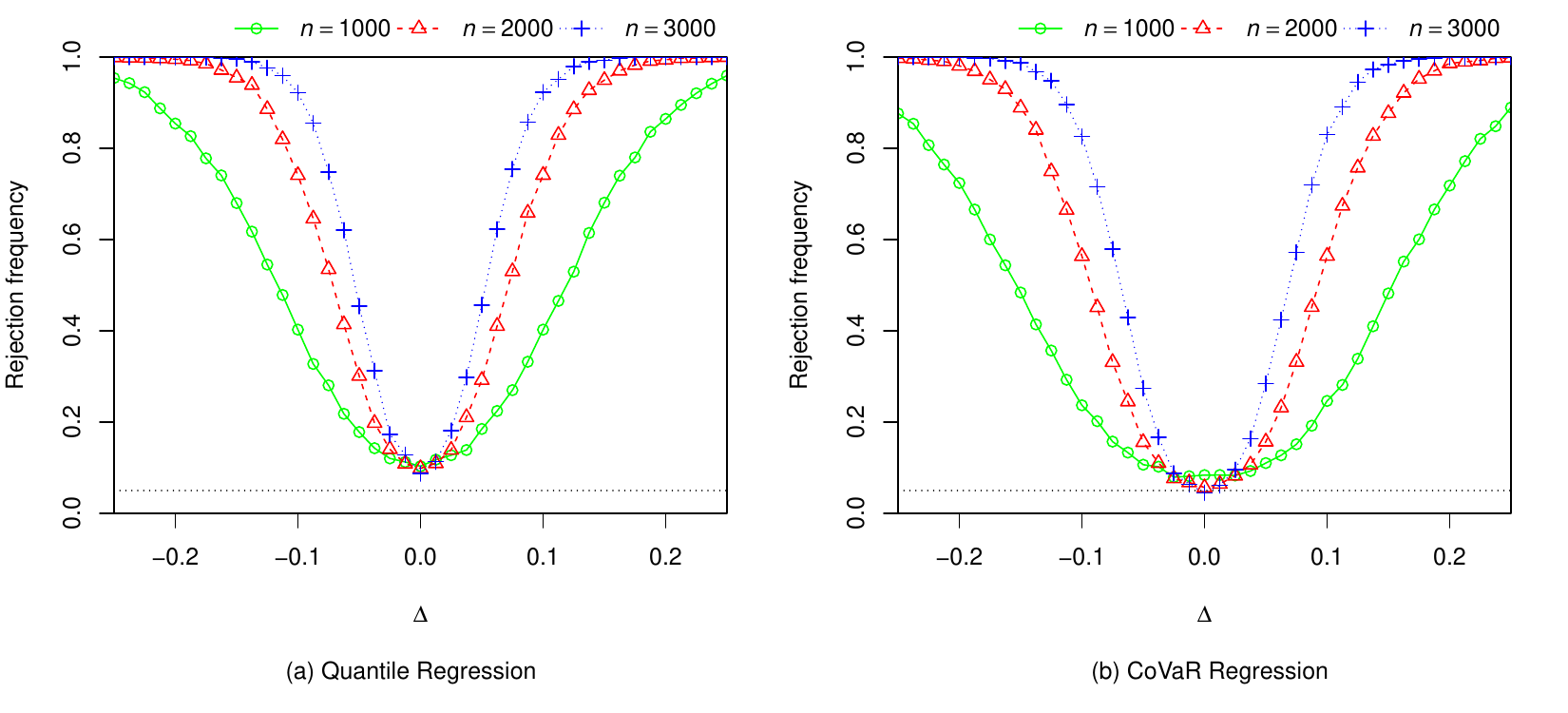}
	\caption{Panel (a): Empirical power of test based on $\mathcal{U}_{n,\valpha}$ for the quantile regression. Panel (b): Empirical power of test based on $\mathcal{U}_{n,\vgamma}$ for the CoVaR regression. 
	In both panels, power is plotted as a function of $\Delta$, i.e., the deviation from the null.
	The dotted horizontal lines indicate the tests' nominal level of 5\%. Predictors are near-stationary with $r_n=1-n^{-0.5}$ (such that $\kappa=1/2$).}
	\label{fig:Power2}
\end{figure}

Figure~\ref{fig:Power1} plots the empirical rejection frequencies as a function of $\Delta$ for $\kappa=0$ and $r=0.5$ (i.e., for stationary predictors) and Figure~\ref{fig:Power2} those for $r_n=1-n^{-\kappa}$ for $\kappa=1/2$ (i.e., for near-stationary predictors).
As predicted by the discussion at the end of Section~\ref{Power Analysis}, a comparison of Figures~\ref{fig:Power1} and \ref{fig:Power2} reveals that power is somewhat higher for near-stationary predictors.
As also expected, the rejection frequency increases in the sample size $n$, and in the distance of the alternative from the null (i.e., in $|\Delta|$).
Power is also roughly symmetric in $\Delta$.
Of course, for $\Delta=0$ the results of Figure~\ref{fig:Power1} correspond to size (already plotted in Figure~\ref{fig:Size} for $r=0.5$) and, likewise, $\Delta=0$ in Figure~\ref{fig:Power2} indicates size (already plotted in Figure~\ref{fig:Size} for $r_n=1-n^{-1/2}$).

\subsection{Comparison with \citet{Qu08} for Small and Large $n$}\label{Comparison with Qu08}

As pointed out in the main paper, for predictive quantile regressions one may use the stability test of \citet{Qu08}.
However, his test is only valid for stationary predictors.
Therefore, here we compare our $\mathcal{U}_{n,\valpha}$-based test with that of \citet{Qu08} to illustrate the advantages of the persistence-robustness.
In fact, \citet{Qu08} proposes several tests.
The one closest in spirit to ours is his $SW_{\tau}$ test, because---unlike his $SQ_{\tau}$ test---it is also based on comparing subsample estimates for the QR. 
Hence, we use \citeauthor{Qu08}'s \citeyearpar{Qu08} $SW_{\tau}$ test for the comparison with ours.\footnote{In computing the $SW_{\tau}$ test statistic, we follow the outline given in \citet[Section~6]{Qu08}.
In particular, we use the bandwidth choice from \citet{HS88}.}
In doing so, we use exactly the same simulation setups as described in Sections~\ref{Size} and \ref{Power}.

There is one exception however.
Our simulations in Sections~\ref{Size} and \ref{Power} only consider sample sizes of $n\in\{1000,\, 2000,\, 3000\}$. 
Such sample sizes are typically required to reliably estimate CoVaR regressions, where the effective sample size (due to the conditioning in the CoVaR) is reduced to $n(1-\alpha)$, which may be very small for $\alpha$ close to 1.
Yet, quantile regressions can be estimated already with smaller sample sizes.
For instance, in the QR application in Section~\ref{Quantile Predictability of the Equity Premium}, we have $n=586$.
Therefore, here we investigate size of our QR stability test for $n\in\{500,\, 750,\, 1000,\, 2000\}$.

\begin{figure}[t!]
	\centering
	\includegraphics[width=\linewidth]{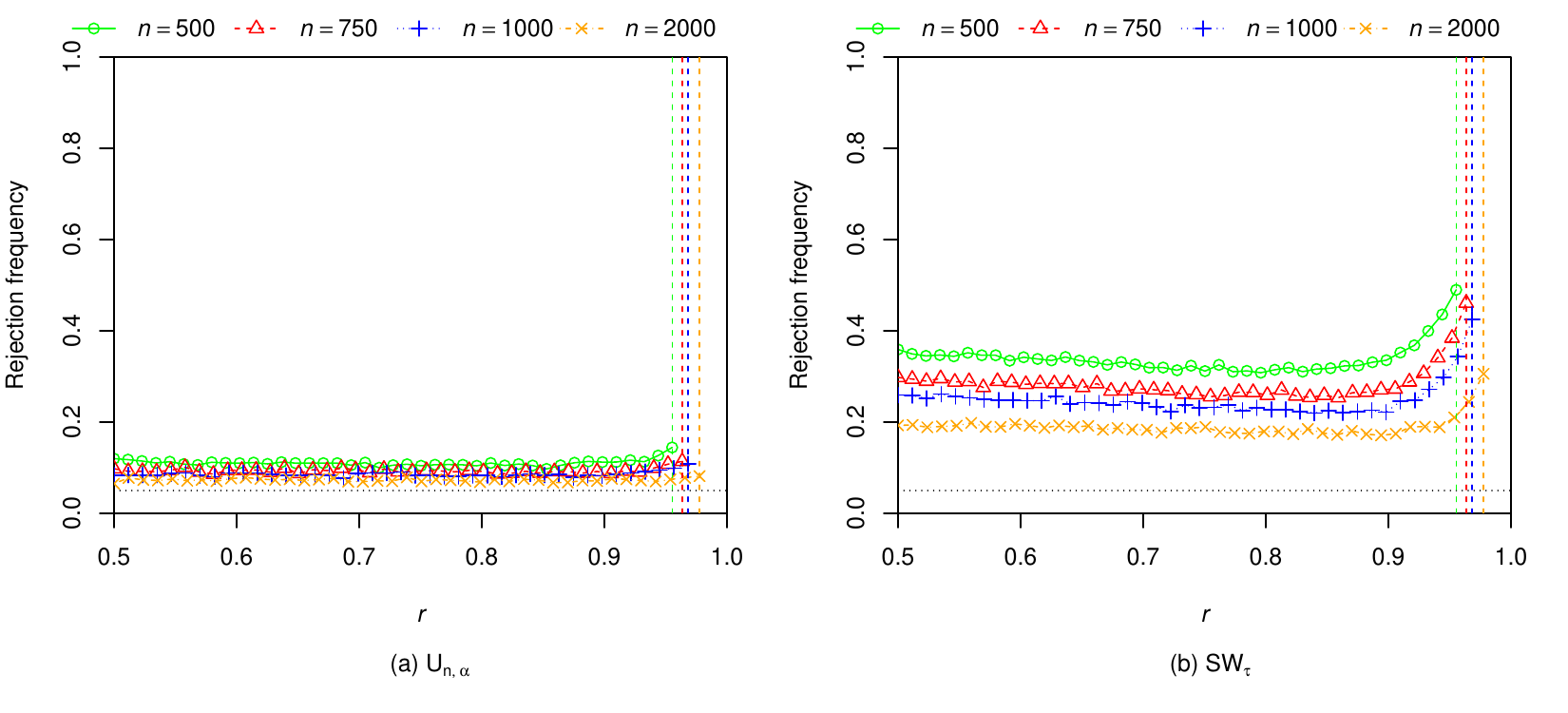}
	\caption{Panel (a): Empirical size of test based on $\mathcal{U}_{n,\valpha}$. 
	Panel (b): Empirical size of test based on $SW_{\tau}$.
	In both panels, size is plotted as a function of the autoregressive parameter $r$ of the (I0) predictors.
	The dashed vertical lines correspond to the values of $r_n=1-n^{-0.5}$ in the (NS) setting.
	The dotted horizontal lines indicate the tests' nominal level of 5\%.}
	\label{fig:SizeQu2008}
\end{figure}

Figure~\ref{fig:SizeQu2008} plots the size of the $\mathcal{U}_{n,\valpha}$ and the $SW_{\tau}$ test in panels (a) and (b), respectively. 
Our test has some liberal tendencies for these smaller sample sizes, yet still performs satisfactorily.
In particular, as the autoregressive coefficient $r_n=1-n^{-0.5}$ tends to unity in larger samples (indicated by the dashed vertical lines), the behavior of our test is almost unaffected, thus illustrating its persistence-robustness.
In contrast, the $SW_{\tau}$ test exhibits a clear upward spike in rejections for $r_n=1-n^{-0.5}$, indicating that it is not persistence-robust.
For the ``less extreme'' values of $r$, the $SW_{\tau}$ test is also markedly oversized, particularly for the small values of $n$.

Overall, the above comparison demonstrates the persistence-robustness of the self-normalized test based on $\mathcal{U}_{n,\valpha}$ and its excellent size relative to the $SW_{\tau}$ test.
Regarding the latter advantage, it is well known in the literature that SN-based tests, such as ours, typically possess very good finite-sample size \citep{SZ10,Sha15}.

\begin{figure}[t!]
	\centering
	\includegraphics[width=\linewidth]{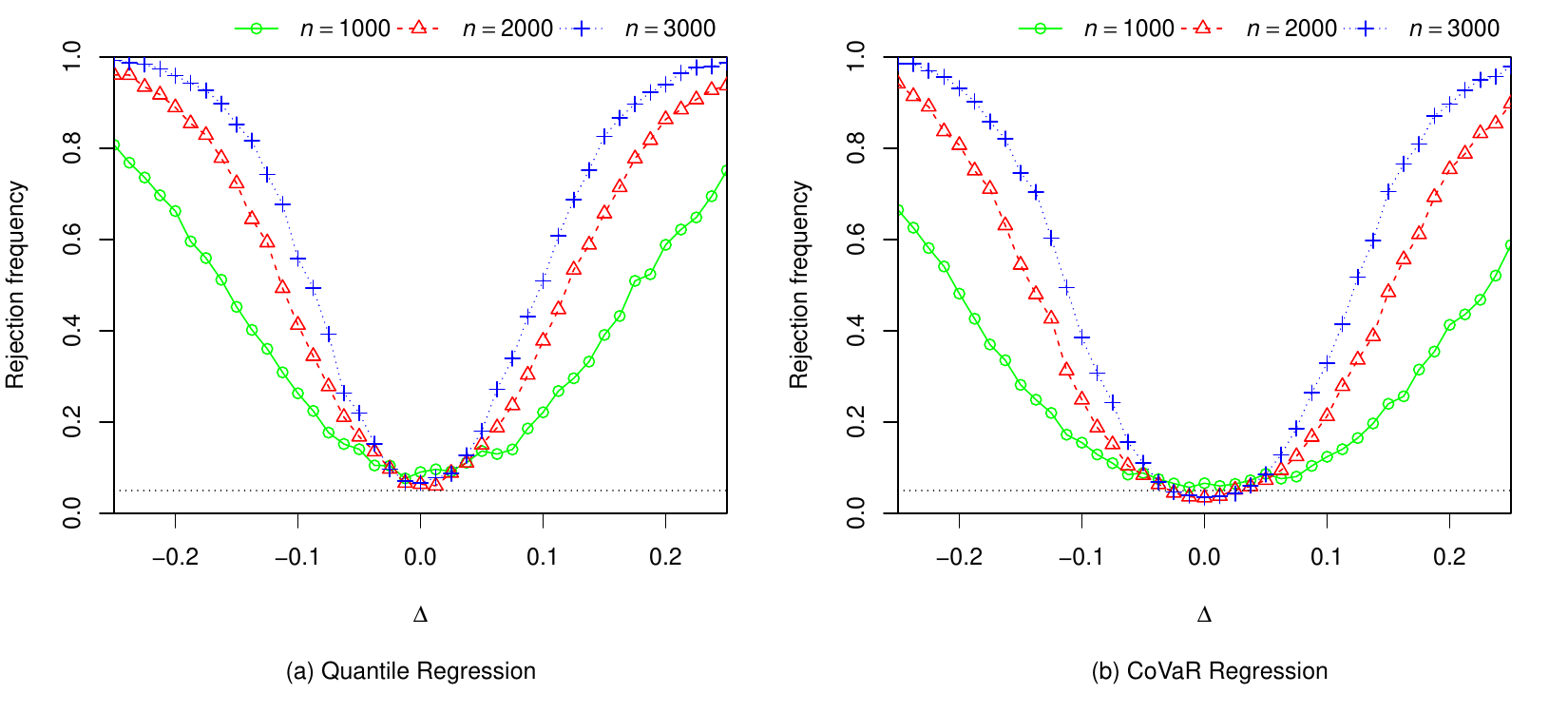}
	\caption{Panel (a): Empirical power of test based on $\mathcal{U}_{n,\valpha}$ for the quantile regression. Panel (b): Empirical power of test based on $\mathcal{U}_{n,\vgamma}$ for the CoVaR regression. 
	In both panels, power is plotted as a function of $\Delta$, i.e., the deviation from the null, and the break is located at $s^{\ast}=0.25$.
 The dotted horizontal lines indicate the tests' nominal level of 5\%. Predictors are stationary with $r=0.5$ (such that $\kappa=0$).}
	\label{fig:Power1Early}
\end{figure}

\begin{figure}[t!]
	\centering
	\includegraphics[width=\linewidth]{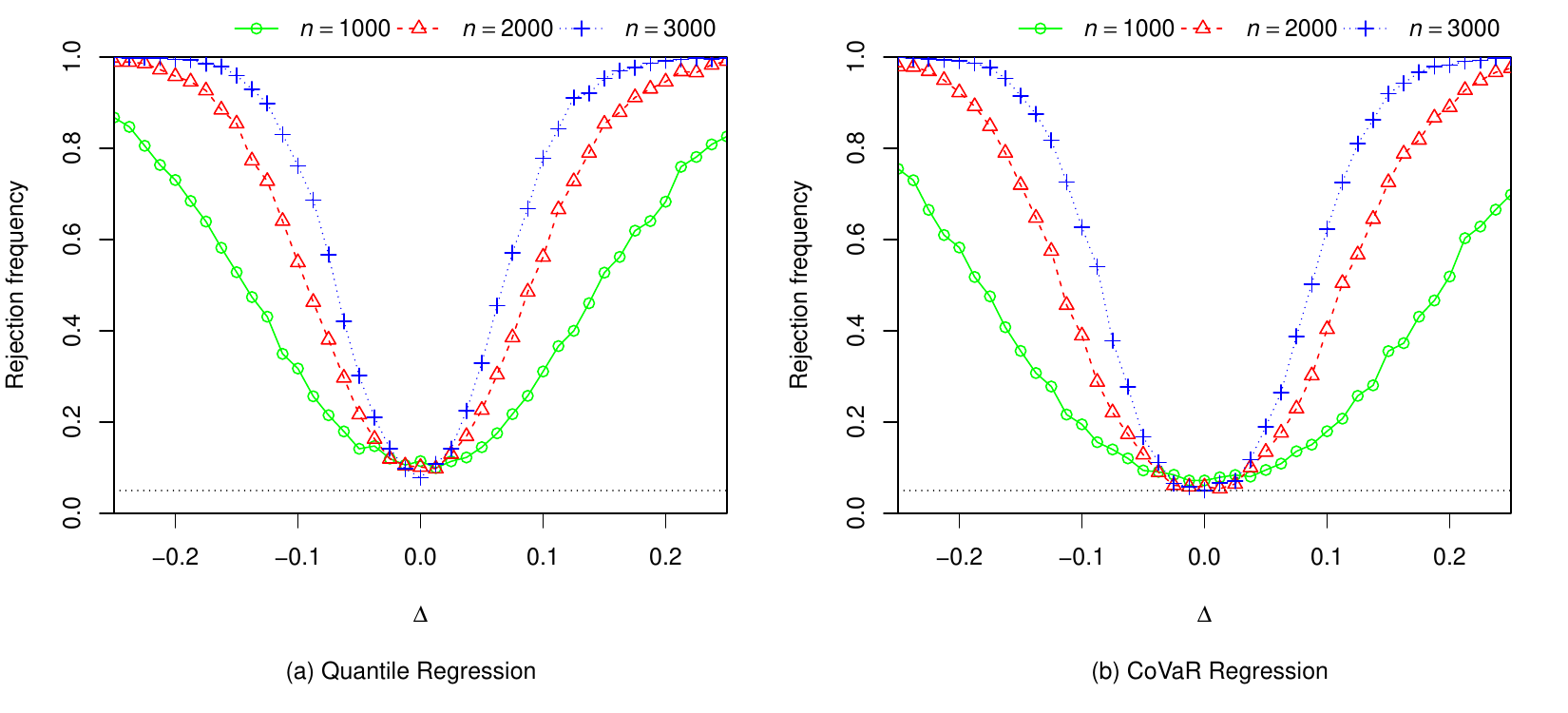}
	\caption{Panel (a): Empirical power of test based on $\mathcal{U}_{n,\valpha}$ for the quantile regression. Panel (b): Empirical power of test based on $\mathcal{U}_{n,\vgamma}$ for the CoVaR regression. 
	In both panels, power is plotted as a function of $\Delta$, i.e., the deviation from the null, and the break is located at $s^{\ast}=0.25$.
	The dotted horizontal lines indicate the tests' nominal level of 5\%. Predictors are near-stationary with $r_n=1-n^{-0.5}$ (such that $\kappa=1/2$).}
	\label{fig:Power2Early}
\end{figure}

\subsection{Early and Late Breaks}\label{Early and Late Breaks}

The simulations in Section~\ref{Power} investigate power when the break occurs in the middle of the sample, i.e., for $s^\ast=0.5$.
Here, we supplement these results by considering early breaks ($s^\ast=0.25$) and late breaks ($s^\ast=0.75$), but otherwise keep the setup unchanged.

Figures~\ref{fig:Power1Early} and \ref{fig:Power2Early} are the analogs of Figures~\ref{fig:Power1} and \ref{fig:Power2} for early breaks, i.e., for $s^\ast=0.25$.
They plot power as a function of the magnitude of the break $\Delta$.
Of course, for $\Delta=0$ the results correspond to size and, hence, are unchanged relative to Figures~\ref{fig:Power1}--\ref{fig:Power2}.
However, for $\Delta\neq0$, we see that power is reduced when the break occurs earlier.

\begin{figure}[t!]
	\centering
	\includegraphics[width=\linewidth]{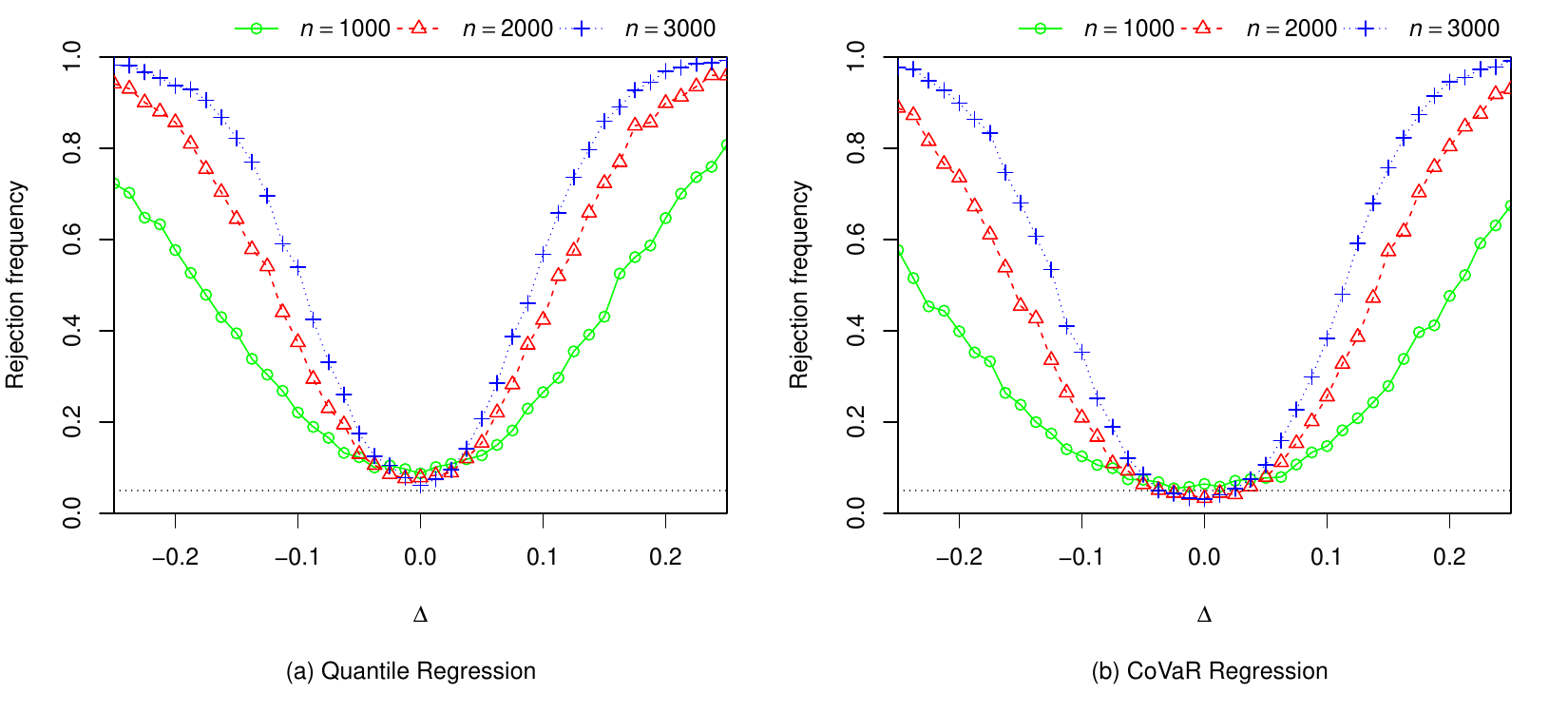}
	\caption{Panel (a): Empirical power of test based on $\mathcal{U}_{n,\valpha}$ for the quantile regression. Panel (b): Empirical power of test based on $\mathcal{U}_{n,\vgamma}$ for the CoVaR regression. 
	In both panels, power is plotted as a function of $\Delta$, i.e., the deviation from the null, and the break is located at $s^{\ast}=0.75$.
 The dotted horizontal lines indicate the tests' nominal level of 5\%. Predictors are stationary with $r=0.5$ (such that $\kappa=0$).}
	\label{fig:Power1Late}
\end{figure}

\begin{figure}[t!]
	\centering
	\includegraphics[width=\linewidth]{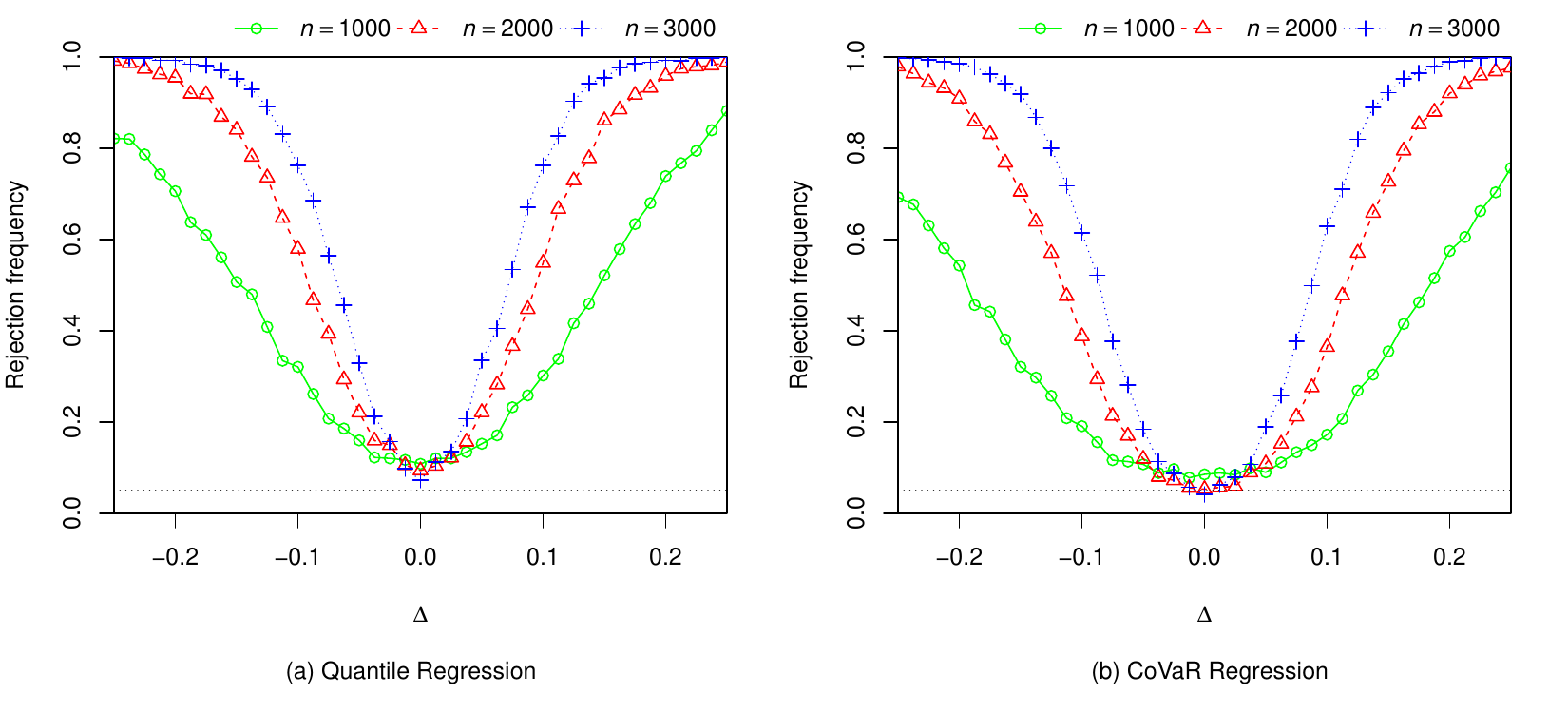}
	\caption{Panel (a): Empirical power of test based on $\mathcal{U}_{n,\valpha}$ for the quantile regression. Panel (b): Empirical power of test based on $\mathcal{U}_{n,\vgamma}$ for the CoVaR regression. 
	In both panels, power is plotted as a function of $\Delta$, i.e., the deviation from the null, and the break is located at $s^{\ast}=0.75$.
	The dotted horizontal lines indicate the tests' nominal level of 5\%. Predictors are near-stationary with $r_n=1-n^{-0.5}$ (such that $\kappa=1/2$).}
	\label{fig:Power2Late}
\end{figure}

Similarly, Figures~\ref{fig:Power1Late} and \ref{fig:Power2Late} are the equivalents of Figures~\ref{fig:Power1} and \ref{fig:Power2} for late breaks, i.e., for $s^{\ast}=0.75$.
Once again, power is lower compared with the case when the break occurs in the middle of the sample.
By symmetry of our test statistics it is not surprising to find that the detection probability for late and early breaks is roughly equal.

\subsection{Additional Predictors}\label{Additional Predictors}

The simulations in Sections~\ref{Size} and \ref{Power} only consider $k=2$ stochastic predictors.
Here, we examine the effect on our results of increasing this to $k=4$, such that the number of (stochastic) predictors is doubled.
Apart from the change in $k$, we keep all the parameters as in Sections~\ref{Size} and \ref{Power}.
In particular, under the alternative, we assume the break to occur in the middle of the sample ($s^\ast=0.5$), unlike in the previous section.

\begin{figure}[t!]
	\centering
	\includegraphics[width=\linewidth]{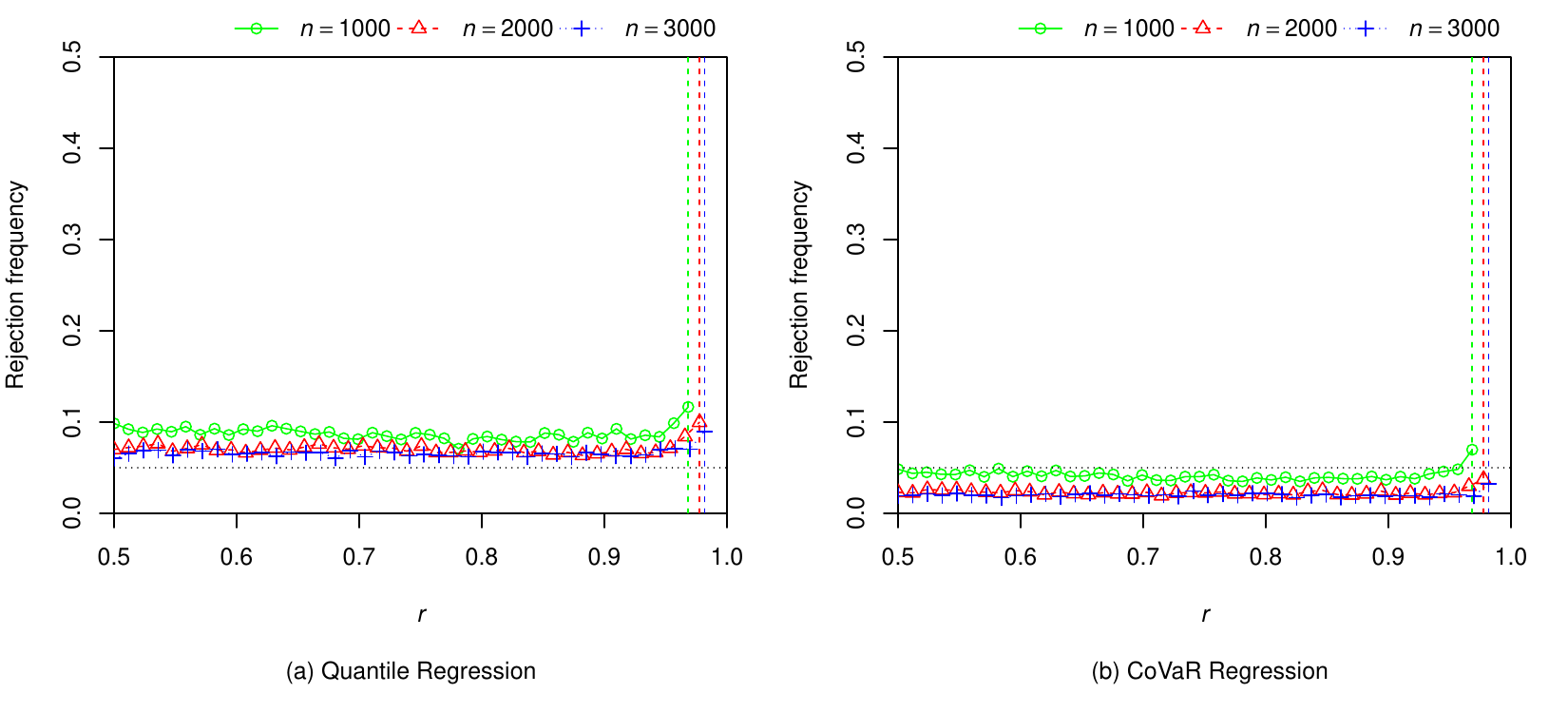}
	\caption{Panel (a): Empirical size of test based on $\mathcal{U}_{n,\valpha}$ for the quantile regression. Panel (b): Empirical size of test based on $\mathcal{U}_{n,\vgamma}$ for the CoVaR regression. 
	In both panels, size is plotted as a function of the autoregressive parameter $r$ of the (I0) predictors.	
	The dashed vertical lines correspond to the values of $r_n=1-n^{-0.5}$ in the (NS) setting.
	The dotted horizontal lines indicate the tests' nominal level of 5\%. }
	\label{fig:Sizek4}
\end{figure}

For $k=4$, Figures~\ref{fig:Sizek4}--\ref{fig:Power2k4} are the analogs of Figures~\ref{fig:Size}--\ref{fig:Power2} for $k=2$.
Comparing Figure~\ref{fig:Sizek4} with Figure~\ref{fig:Size}, we find that for $k=4$ the QR stability test is slightly more liberal, while the CoVaR test is somewhat more conservative.
There are also only minor differences between Figures~\ref{fig:Power1k4}--\ref{fig:Power2k4} and Figures~\ref{fig:Power1}--\ref{fig:Power2}.
Overall, the differences between the plots appear modest, such that the effect on our test of including more predictors are very minor.

\begin{figure}[t!]
	\centering
	\includegraphics[width=\linewidth]{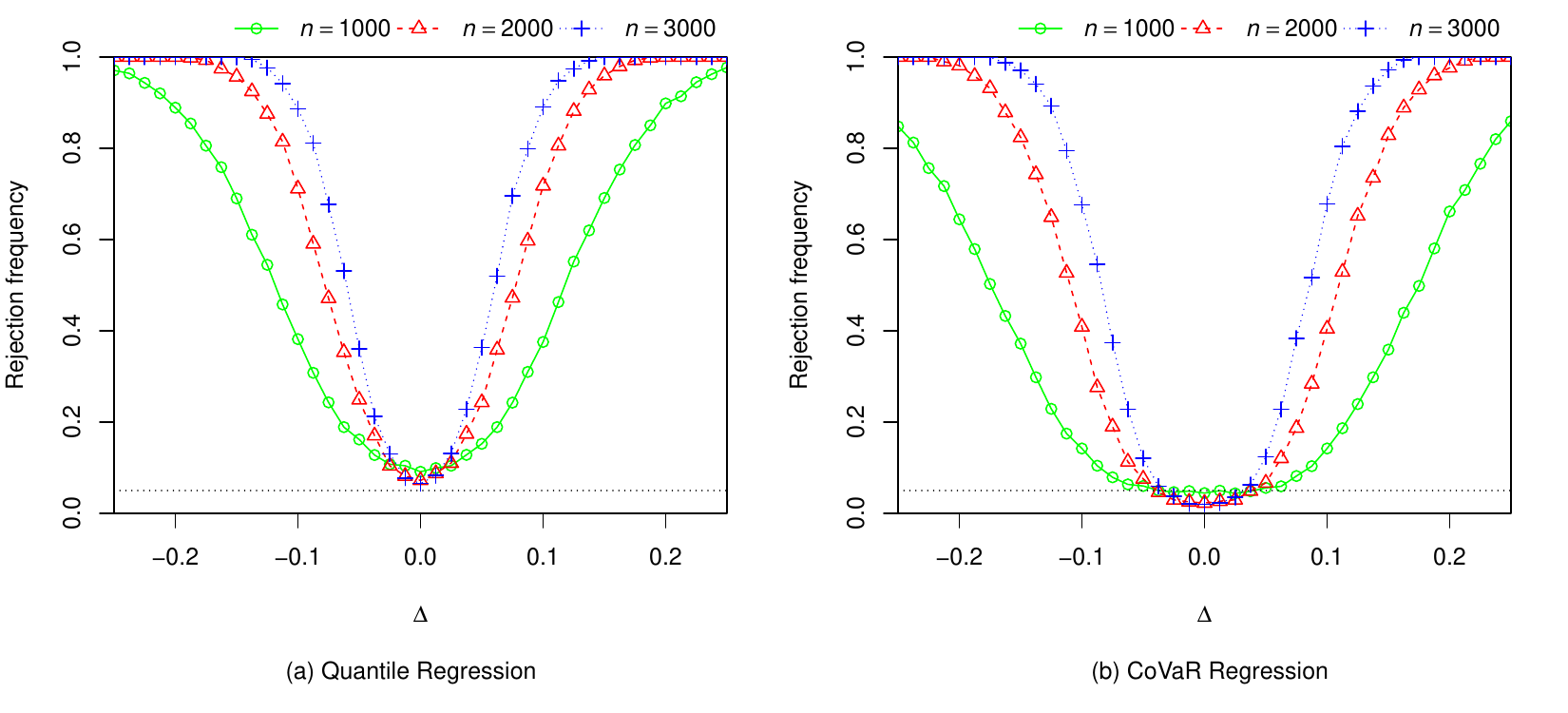}
	\caption{Panel (a): Empirical power of test based on $\mathcal{U}_{n,\valpha}$ for the quantile regression. Panel (b): Empirical power of test based on $\mathcal{U}_{n,\vgamma}$ for the CoVaR regression. 
	In both panels, power is plotted as a function of $\Delta$, i.e., the deviation from the null.
 The dotted horizontal lines indicate the tests' nominal level of 5\%. Predictors are stationary with $r=0.5$ (such that $\kappa=0$).}
	\label{fig:Power1k4}
\end{figure}

\begin{figure}[t!]
	\centering
	\includegraphics[width=\linewidth]{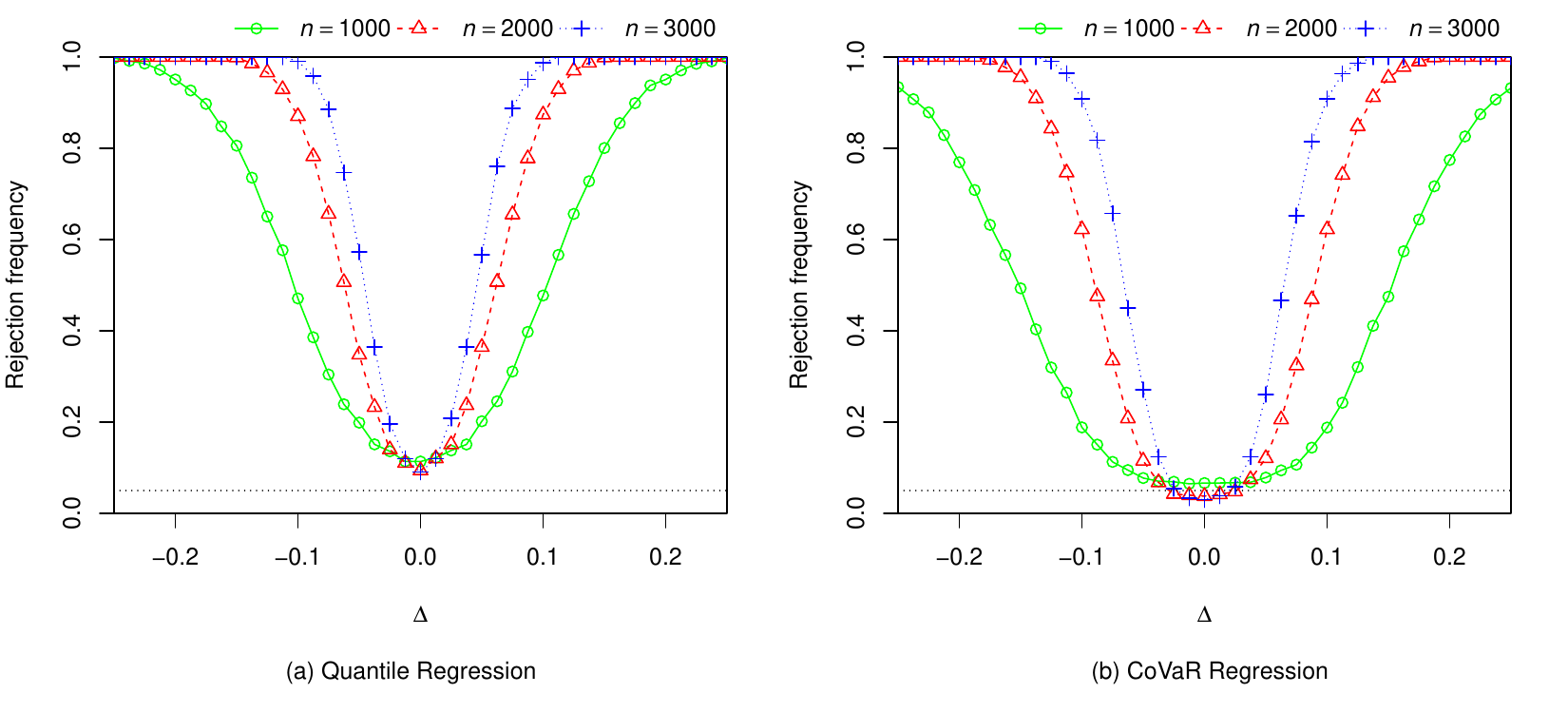}
	\caption{Panel (a): Empirical power of test based on $\mathcal{U}_{n,\valpha}$ for the quantile regression. Panel (b): Empirical power of test based on $\mathcal{U}_{n,\vgamma}$ for the CoVaR regression. 
	In both panels, power is plotted as a function of $\Delta$, i.e., the deviation from the null.
	The dotted horizontal lines indicate the tests' nominal level of 5\%. Predictors are near-stationary with $r_n=1-n^{-0.5}$ (such that $\kappa=1/2$).}
	\label{fig:Power2k4}
\end{figure}

\section{Quantile Predictability of the Equity Premium}\label{Quantile Predictability of the Equity Premium}

The equity premium, defined as the excess return of equities over riskless bonds, is on average around 6--9\% per year in developed countries \citep[Table~3]{MP08}.
Much work in recent decades has investigated whether the equity premium can be forecasted by a wide range of predictors \citep{WG08}.
The growing body of evidence, summarized by \citet{GWZ24}, suggests that the historical mean as a forecast of the average equity premium remains a hard-to-beat benchmark, implying that few---if any---variables truly contain predictive content for the mean of the equity premium going forward.

However, non-predictability of the \textit{mean} does not preclude predictability of equity premium \textit{quantiles} to exist. 
In fact, \citet{Lee16} finds evidence for quantile predictability of the US equity premium for a large set of predictors.
Generally speaking, he finds that predictability of the median is weakest, with stronger degrees of predictability emerging in the tails.
Subsequently, many authors have extended inference methods for predictive quantile regressions in several directions \citep{FL19,cai2022new,FLS23,Lea24a,MSK24}.
However, all the aforementioned authors assume structural stability of the forecasting models.

Therefore, here we investigate the stability of predictive quantile models for the equity premium.
Indeed, there may be a number of reasons why equity premium predictability may be liable to change over time. 
\citet{PT02} mention speculative bubbles, business cycles, rare disasters, time-varying risk aversion and changes in monetary policy as possible causes. 
For instance, monetary policy changes could shift the fundamental value of stocks due to its influence on economic growth.
Accordingly, \citet{PT06} and \citet{Tim08} find that return prediction models may work sometimes, but---more often than not---fail.
More recently, \citet{FST23} have explained this finding of ``pockets of predictability'' by a sticky expectations model with slow updating of investor beliefs about the long-lived component of future cash flows.

To formally study the stability of predictive QRs for the equity premium, we use our $\mathcal{U}_{n,\valpha}$ test from Corollary~\ref{cor:SBT}.
We consider the monthly data analyzed by \citet{WG08}, which (updated to 2023) is available from \href{https://sites.google.com/view/agoyal145}{sites.google.com/view/agoyal145}.
Specifically, we focus on the post-oil crisis period from 1975--2023.
The equity premium is defined as 
\[
	Y_t=\log\bigg(\frac{P_t + D_t}{P_{t-1}}\bigg)-\log\big(1+R^{f}_{t}\big),
\]
where $P_{t}$ denotes the index value of the S\&P~500, $D_t$ the dividends paid out by all S\&P~500 constituents in month $t$, and $R_{t}^{f}$ the Treasury-bill rate.

\begin{table}[t!]
	\centering
		\begin{tabular}{lccccccc}
			\toprule
	& \textit{dp} & \textit{dy} & \textit{ep} & \textit{bm} & \textit{tbl} & \textit{lty}  \\
	\cline{2-7} \noalign{\vspace{0.5ex}}
	AR(1)	  & 0.993 & 0.993 & 0.988 & 0.993 & 0.991 &  0.994 \\        
	KPSS	  & $2.22^{\ast\ast\ast}$ & $2.21^{\ast\ast\ast}$ & $1.71^{\ast\ast\ast}$ & $2.08^{\ast\ast\ast}$ & $2.22^{\ast\ast\ast}$ & $2.74^{\ast\ast\ast}$  \\
	ADF--GLS& $-0.065$ & $-0.045$ & $-1.685^{\ast}$ & $0.262$ & $-1.542$ & $-0.828$    \\
	\midrule\midrule
		& \textit{dfy} & \textit{dfr} & \textit{de} & \textit{ntis} & \textit{infl} & \textit{svar} &\\
	\cline{2-7} \noalign{\vspace{0.5ex}}
	AR(1)	  & 0.959 & 0.959 & 0.985 & 0.982 & 0.608 & 0.384 & \\
	KPSS	  & $0.52^{\ast\ast}$ & 0.07 & $0.13$ & $0.80^{\ast\ast\ast}$ & $1.17^{\ast\ast\ast}$ & $0.24$ & \\
	ADF--GLS& $-1.685^{\ast}$ & $-1.861^{\ast}$ & $-5.499^{\ast\ast\ast}$ & $-3.205^{\ast\ast\ast}$ & $-7.174^{\ast\ast\ast}$ & $-7.683^{\ast\ast\ast}$ & \\	
			\bottomrule
		\end{tabular}
	\caption{AR(1) coefficient estimates, KPSS test statistics, and ADF--GLS test statistics. For the KPSS test and the ADF--GLS test, significances at the 10\%, 5\% and 1\% level are indicated by $^\ast$, $^{\ast\ast}$ and $^{\ast\ast\ast}$, respectively.}
\label{tab:pred pers}
\end{table}

As predictors $\vx_{t-1}$, we use the dividend--price ratio (\textit{dp}), dividend yields (\textit{dy}), earnings--price ratio (\textit{ep}), book-to-market ratio (\textit{bm}), treasury-bill rate (\textit{tbl}), long-term government bond yield (\textit{lty}), default yield spread (\textit{dfy}), default return spread (\textit{dfr}), dividend-payout ratio (\textit{de}), net equity expansion (\textit{ntis}), inflation (\textit{infl}) and stock variance (\textit{svar}).
These variables are described in more detail by \citet[pp.~1457--1459]{WG08}.

As in \citet{CWW15} and in line with Assumption~\ref{ass:N}, we report AR(1) coefficient estimates for all predictors in Table~\ref{tab:pred pers}.
Except for inflation and stock market variance, the predictors all display a high level of persistence, with point estimates ranging from 0.959 to 0.994.
The null of stationarity can even be rejected for these series (with the exception of \textit{dfr} and \textit{de}), as indicated by the KPSS test of \citet{Kea92} also shown in Table~\ref{tab:pred pers}.
To shed more light on the serial dependence properties of the predictors, we also run the ADF--GLS test of \citet{ERS96} to test the null of a unit root.
While the evidence of the KPSS and the ADF--GLS test often points in the same direction, the tests produce conflicting results for four variables (\textit{ep}, \textit{dfy}, \textit{ntis}, \textit{infl}), where in each case the KPSS test rejects the null of stationarity and the ADF--GLS test rejects the null of non-stationarity.

Overall, we find the predictors employed by \citet{WG08} to exhibit quite different degrees of persistence where sometimes the exact degree is difficult to determine.
Therefore, applying existent structural break tests for QRs which are only valid for stationary predictors---such as those of \citet{Qu08} and \citet{OQ11}---may paint a misleading picture.
In contrast, our persistence-robust tests do provide more solid evidence.

\begin{table}[t!]
		\centering
		\begin{tabular}{lcd{3.4}d{4.4}d{4.4}}
			\toprule
		Classification &Predictor $\vx_{t-1}$ & \multicolumn{3}{c}{$\alpha$} 	\\
	\cline{3-5} \noalign{\vspace{0.5ex}}
		&& 0.1  & 0.5 &  0.9 \\
\midrule
Valuation ratios		&	\textit{dp}		&  312.5^{\ast\ast\ast}  &  675.8^{\ast\ast\ast} &   141.5^{\ast\ast} 	\\
										&	\textit{dy}		&  327.4^{\ast\ast\ast}  &   599.5^{\ast\ast\ast} &   182.8^{\ast\ast\ast} 	\\
										&	\textit{ep}		&  206.5^{\ast\ast\ast}  &   146.0^{\ast\ast} &    43.0 	\\
										&	\textit{bm}		&  1107.8^{\ast\ast\ast}  &  1956.1^{\ast\ast\ast} &  1853.6^{\ast\ast\ast} 	\\
										\midrule
Bond yield measures	&	\textit{tbl}	&  37.8   &    40.8 &    56.6 	\\
										&	\textit{lty}	&  476.7^{\ast\ast\ast}  &   444.8^{\ast\ast\ast}  &    23.0 	\\
										&	\textit{dfy}	&   13.5  &    86.8^{\ast} &    38.0 	\\
										&	\textit{dfr}	&   6.9  &   42.6  &  95.6^{\ast\ast} 	\\
										\midrule
Corporate finance 	&	\textit{de}		&   39.5  &    38.1 &    24.7 	\\
										&	\textit{ntis}	&   13.2  &   152.1^{\ast\ast\ast} &   16.9 	\\
										\midrule
Macro variables			&	\textit{infl}	&   26.9  &    33.9 &   25.0 	\\
										\midrule                        
Equity risk					&	\textit{svar}	&  138.9^{\ast\ast}  &   223.7^{\ast\ast\ast} &   14.0 	\\		
			\bottomrule
		\end{tabular}
\caption{Values of test statistic $\mathcal{U}_{n,\valpha}$ for predictive QR with probability level $\alpha$. Significances at the 10\%, 5\% and 1\% level are indicated by $^\ast$, $^{\ast\ast}$ and $^{\ast\ast\ast}$, respectively.}
\label{tab:EP pred}
\end{table}

We apply our structural break test for the linear predictive QR
\[
	Y_t=\alpha_0+\alpha_1 \vx_{t-1}+\epsilon_t,\qquad Q_{\alpha}(\epsilon_t\mid\mathcal{F}_{t-1}),
\]
where $\vx_{t-1}$ corresponds to a single one of the predictors mentioned above.
The use of such simple linear models with one predictor is standard in the literature on equity premium predictability \citep{WG08,Lee16,GWZ24}.
Table~\ref{tab:EP pred} shows the test results.
Roughly speaking, the forecasting models using valuation ratios and equity risk variables seem to be most prone to structural change. 
In contrast, bond yield measures, corporate finance variables and macro variables display much less variation in their predictive content for the equity premium. 

Our evidence that valuation ratios have time-varying forecasting power for the equity premium is consistent with the vast evidence in \textit{mean} (as opposed to \textit{quantile}) regressions \citep[e.g.,][]{HMN11,CWW15}.
Similarly, our findings on bond yield measures and corporate finance variables support the results of \citet{Gea18} obtained for mean regressions.
They detect no instabilities in the predictive power of dividend earnings (\textit{de}) and some bond yield measures.

We now compare the results of our persistence-robust test with the $SW_{\tau}$ test of \citet{Qu08}, which---among the tests proposed by \citet{Qu08}---is closest in spirit to our test.
Of course, the $SW_{\tau}$ test is only valid for stationary covariates; see \citet[Proposition~1]{Qu08} for details.
Therefore, it is particularly instructive to compare the results of the $SW_{\tau}$ test, displayed in Table~\ref{tab:EP pred Qu08}, with those of our robust test in Table~\ref{tab:EP pred}.

Overall, we find that the $SW_{\tau}$ test rejects the null of a stable predictive relationship much more frequently.
Broadly speaking, this is particularly true for the corporate finance variables and the bond yield measures, which are highly persistent as seen in Table~\ref{tab:pred pers}.
This is as expected, because the simulations in Section~\ref{Comparison with Qu08} show that the $SW_{\tau}$ test is particularly liberal for highly persistent predictors.
In contrast, for the less persistent variables \textit{infl} and \textit{svar}, the evidence of both tests points in a similar direction.
These results were to be expected, because both the $SW_{\tau}$ test and our test are valid for stationary predictors, yet when it comes to highly persistent predictors only our proposal guarantees valid inference.

\begin{table}[t!]
		\centering
		\begin{tabular}{lcd{3.4}d{4.4}d{4.4}}
			\toprule
		Classification &Predictor $\vx_{t-1}$ & \multicolumn{3}{c}{$\alpha$} 	\\
	\cline{3-5} \noalign{\vspace{0.5ex}}
		&& 0.1  & 0.5 &  0.9 \\
\midrule
Valuation ratios		&	\textit{dp}		&    85.5^{\ast\ast\ast}  &   171.9^{\ast\ast\ast}  &    71.4^{\ast\ast\ast} \\
										&	\textit{dy}		&    84.2^{\ast\ast\ast}  &   159.2^{\ast\ast\ast}  &    72.0^{\ast\ast\ast} \\
										&	\textit{ep}		&    43.2^{\ast\ast\ast}  &   115.9^{\ast\ast\ast}  &    21.9^{\ast\ast\ast} \\
										&	\textit{bm}		&   645.8^{\ast\ast\ast}  &   682.6^{\ast\ast\ast}  &   240.1^{\ast\ast\ast} \\
										\midrule           
Bond yield measures	&	\textit{tbl}	&    43.8^{\ast\ast\ast}  &    34.9^{\ast\ast\ast}  &    37.1^{\ast\ast\ast} \\
										&	\textit{lty}	&    89.9^{\ast\ast\ast}  &   146.3^{\ast\ast\ast}  &    30.7^{\ast\ast\ast} \\
										&	\textit{dfy}	&    12.5^{\ast\ast}  &    12.7^{\ast\ast}  &    19.5^{\ast\ast\ast} \\
										&	\textit{dfr}	&     5.2  &    13.4^{\ast\ast}  &    51.0^{\ast\ast\ast} \\
										\midrule           
Corporate finance 	&	\textit{de}		&     9.1  &    61.5^{\ast\ast\ast}  &   131.6^{\ast\ast\ast} \\
										&	\textit{ntis}	&     9.3  &    68.4^{\ast\ast\ast}  &    17.5^{\ast\ast\ast} \\
										\midrule           
Macro variables			&	\textit{infl}	&     8.6  &    29.5^{\ast\ast\ast}  &     5.9 \\
										\midrule           
Equity risk					&	\textit{svar}	&    15.8^{\ast\ast\ast}  &    13.7^{\ast\ast}  &    15.3^{\ast\ast} \\
			\bottomrule
		\end{tabular}
\caption{Values of test statistic $SW_{\tau}$ for predictive QR with probability level $\alpha$. Significances at the 10\%, 5\% and 1\% level are indicated by $^\ast$, $^{\ast\ast}$ and $^{\ast\ast\ast}$, respectively.}
\label{tab:EP pred Qu08}
\end{table}

Figure~\ref{fig:CPT QR} illustrates our findings for the stock return predictor with perhaps the longest history, namely the dividend--price ratio $\vx_t=\divp_t=\log\big(D_t/P_t\big)$ \citep{WG08}.
We focus here on the median of the equity premium, i.e., $\alpha=0.5$.
For this case, the top panel of Figure~\ref{fig:CPT QR} plots the function $s\mapsto s^2(1-s)^2\big[\widehat{\valpha}_n(0,s) - \widehat{\valpha}_n(s,1)\big]^\prime \bm{\mathcal{N}}_{n,\valpha}^{-1}(s)\big[\widehat{\valpha}_n(0,s) - \widehat{\valpha}_n(s,1)\big]$, which forms the basis of the test statistic $\mathcal{U}_{n,\valpha}$.
The dashed line indicates the 5\%-critical value.
We see that particularly large differences occur at the beginning and the end of the sample, where both peaks extend well above the critical value.
As a robustness check, we also apply our $\mathcal{V}_{n,\vgamma}$ test from Section~\ref{Multiple Breaks}, which is designed to handle multiple possible breaks.
This test also rejects the null of constancy at the 1\%-level, leaving little doubt about instabilities in the predictive content of dividend--prices for the equity premium median.

\begin{figure}[t!]
	\centering
	\includegraphics[width=\linewidth]{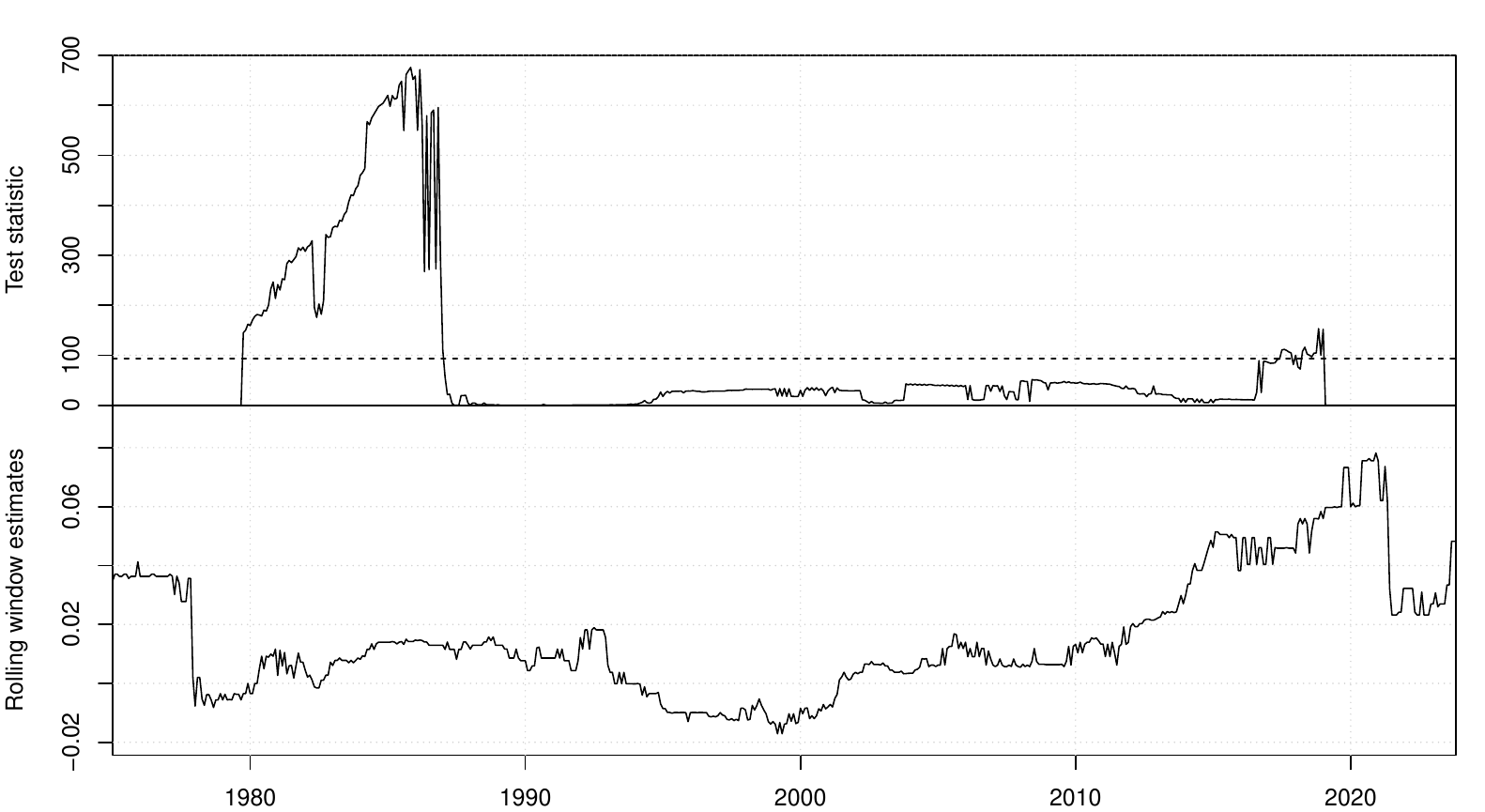}
	\caption{Top panel: Plot of the function $s\mapsto s^2(1-s)^2\big[\widehat{\valpha}_n(0,s) - \widehat{\valpha}_n(s,1)\big]^\prime \bm{\mathcal{N}}_{n,\valpha}^{-1}(s)\big[\widehat{\valpha}_n(0,s) - \widehat{\valpha}_n(s,1)\big]$ for $\alpha=0.5$. The 5\%-critical value is indicated by the dashed horizontal line. Bottom panel: Rolling window estimates of the slope coefficient of the dividend--price ratio in the linear predictive QR. The rolling window estimates are based on 240 months of data.}
	\label{fig:CPT QR}
\end{figure}

More informally, the bottom panel of Figure~\ref{fig:CPT QR} shows rolling window estimates of the coefficient of $\divp_{t-1}$ in the predictive quantile regression $Y_t=\alpha_0 + \alpha_1\divp_{t-1} + \epsilon_t$ (based on a window of 240 months).
From the late 1970s to the early 2010s, estimates of $\alpha_1$ hover around $\pm0$.
In particular, dividend--prices were neither able to predict the bust of the dot-com bubble in 2000 nor the ``lost decade'' of the 2000s for equities.
Only after 2010, when prices were still recovering from the global financial crisis and, hence, dividend--prices were high, did the predictive signal of $\divp_{t-1}$ pick up in strength.
This suggests that, to some extent, the high dividend--prices were able to forecast the long bull market that started afterward.
We see this increase in predictive content reflected in the test statistic in the top panel of Figure~\ref{fig:CPT QR}, which exceeds the 5\%-critical value during early 2019.
There are also strong indications for another break in the forecasting ability of dividend--prices in the earlier part of our sample.
Specifically, dividend--prices seem to have done well in predicting the recovery in stock prices after the oil shock of 1973--74.
Summing up the evidence, we find the predictive content of dividend--prices for the equity premium median to have changed over time.

The evidence for instabilities in mean prediction models of stock returns is overwhelming \citep{PT06,FST23}.
Overall, our results suggest that quantile prediction models of the equity premium are likewise liable to change over time. 
However, our evidence can be seen as more robust than that for mean predictions of the equity premium, as our test is valid both for stationary as well as near-stationary predictors.

\singlespacing

\bibliographystyle{APA}
\bibliography{thebib}

\end{document}